\newcommand{\comment}[1]{}
\newcommand\footnoteref[1]{\protected@xdef\@thefnmark{\ref{#1}}\@footnotemark}
\newcommand\bibalias[2]{%
  \@namedef{bibali@#1}{#2}%
}
\newcommand\biba@deblank[1]{\romannumeral\biba@deblank@#1/ /} 
\long\def\biba@deblank@#1 /{\biba@deblank@i#1/}
\long\def\biba@deblank@i#1/#2{\z@#1}
\newtoks\biba@toks
\let\bibalias@oldcite\cite
\def\cite{%
  \@ifnextchar[{%
    \biba@cite@optarg%
  }{%
    \biba@cite{}%
  }%
}
\newcommand\biba@cite@optarg[2][]{%
  \biba@cite{[#1]}{#2}%
}
\newcommand\biba@cite[2]{%
  \biba@toks{\bibalias@oldcite#1}%
  \def\biba@comma{}%
  \def\biba@all{}%
  \def\biba@argkeys{}%
  \@for\biba@one@:=#2\do{%
    \edef\biba@one{\expandafter\@firstofone\biba@one@\@empty}%
    \edef\biba@one{\expandafter\biba@deblank\expandafter{\biba@one}}
    \edef\biba@argkeys{\biba@argkeys\biba@comma\biba@one}%
    \@ifundefined{bibali@\biba@one}{%
      \edef\biba@all{\biba@all\biba@comma\biba@one}%
    }{%
      \PackageInfo{bibalias}{%
        Replacing citation `\biba@one' with `\@nameuse{bibali@\biba@one}'
      }%
      \edef\biba@all{\biba@all\biba@comma\@nameuse{bibali@\biba@one}}%
    }%
    \def\biba@comma{,}%
  }%
  %
  %
  \immediate\write\@auxout{\noexpand\bgroup\noexpand\renewcommand\noexpand\citation[1]{}\noexpand\citation{\biba@argkeys}\noexpand\egroup}%
  %
  %
  \edef\biba@tmp{\the\biba@toks{\biba@all}}%
  \biba@tmp%
}
\begin{document}
\title{Complexity-constrained quantum thermodynamics}

\author{Anthony~Munson} 
\affiliation{Joint Center for Quantum Information and Computer Science, NIST and University of Maryland, College Park, Maryland 20742, USA}
\affiliation{Department of Physics, University of Maryland, College Park, Maryland 20742, USA}
\author{Naga~Bhavya~Teja~Kothakonda} 
\affiliation{Grup d'Informaci\'o Qu\`antica, Departament de F\'isica, Universitat Aut\`onoma de Barcelona, 08193 Bellaterra (Barcelona), Spain}
\affiliation{Dahlem Center for Complex Quantum Systems, Freie Universit\"{a}t Berlin, 14195 Berlin, Germany}
\author{Jonas~Haferkamp} 
\affiliation{School of Engineering and Applied Sciences, Harvard University, Massachusetts 02134, USA}
\author{Nicole~Yunger~Halpern}
\affiliation{Joint Center for Quantum Information and Computer Science, NIST and University of Maryland, College Park, Maryland 20742, USA}
\affiliation{Department of Physics, University of Maryland, College Park, Maryland 20742, USA}
\affiliation{Institute for Physical Science and Technology, University of Maryland, College Park, Maryland 20742, USA}
\author{Jens~Eisert}
\affiliation{Dahlem Center for Complex Quantum Systems, Freie Universit\"{a}t Berlin, 14195 Berlin, Germany}
\author{Philippe~Faist}
\email{philippe.faist@fu-berlin.de}
\affiliation{Dahlem Center for Complex Quantum Systems, Freie Universit\"{a}t Berlin, 14195 Berlin, Germany}

\date{\today}
\begin{abstract}
  Quantum complexity measures the difficulty of realizing a quantum process, such as preparing a state or implementing a unitary.
  We present an approach to quantifying the thermodynamic resources required to implement a process if the process's complexity is restricted. 
  We focus on the prototypical task of information erasure, or Landauer erasure, wherein an $n$-qubit memory is reset to the all-zero state.
  We show that the minimum thermodynamic work required to reset an arbitrary state in our model, via a complexity-constrained process, is quantified by the state's \emph{complexity entropy}.
  The complexity entropy therefore quantifies a trade-off between the work cost and complexity cost of resetting a state.
  If the qubits have a nontrivial (but product) Hamiltonian, the optimal work cost is determined by the \emph{complexity relative entropy}.
  The complexity entropy quantifies the amount of randomness a system appears to have to a computationally limited observer.
  Similarly, the complexity relative entropy quantifies such an observer's ability to distinguish two states.
  We prove elementary properties of the complexity (relative) entropy.
  In a random circuit---a simple model for quantum chaotic dynamics---the complexity entropy transitions from zero to its maximal value around the time corresponding to the observer's computational-power limit.
  Also, we identify information-theoretic applications of the complexity entropy.
  The complexity entropy quantifies the resources required for data compression if the compression algorithm must use a restricted number of gates.
  We further introduce a \emph{complexity conditional entropy}, which arises naturally in a complexity-constrained variant of information-theoretic decoupling.
  Assuming that this entropy obeys a conjectured chain rule, we show that the entropy bounds the number of qubits that one can decouple from a reference system, as judged by a computationally bounded referee.
  Overall, our framework extends the resource-theoretic approach to thermodynamics to integrate a notion of \emph{time}, as quantified by \emph{complexity}.
\end{abstract}
\maketitle

\twocolumngrid

\section{Introduction}

Quantum complexity is drawing increasing interest across physics, from many-body physics to quantum gravity~\cite{Ahoronov_11_ComplexityCommutingLocalHamiltonians,%
  Caputa_22_QuantumComplexityAndTopologicalPhases,%
  Brown_18_Second,Susskind2018arXiv_ThreeLectures}.
For the purposes of our work, the quantum complexity of a unitary operation (respectively, a quantum state) is the minimum number of operations required to implement the unitary (respectively, to prepare the state).
Each operation is chosen from a given set of elementary operations (e.g., a universal set of two-qubit gates).
In condensed-matter physics, preparing a topologically ordered state requires a circuit of sufficient complexity to spread information throughout a system~\cite{Chen_11_Classification,%
  Ahoronov_11_ComplexityCommutingLocalHamiltonians,%
  Hastings_11_TopologicalOrder,%
  Schwarz_13_TopologicalPEPS,%
  Miller_18_Latent,%
  Ali_20_Post,%
  Liu_20_Circuit,%
  Xiong_20_Nonanalyticity,%
  Caputa_22_QuantumComplexityAndTopologicalPhases}.
Near-term quantum devices aim to prepare states of sufficient complexities to offer quantum advantages attributable, for example, to the hardness of sampling classically from such states~\cite{%
  Boixo,%
  Arute_19_Quantum,%
  Hangleiter2023RMP_computational,%
  Deshpande_18_PhaseTransitions}.
Quantum complexity recently gained significance in the context of the \emph{anti--de Sitter--space/conformal-field-theory} (AdS/CFT) correspondence in high-energy physics: In a prominent conjecture, the complexity of the field-theoretic state dual to a wormhole connecting two black holes is proportional to the wormhole’s length~\cite{Susskind2014arXiv_notenough,Brandao_21_Models,Haferkamp_21_Linear,Li_22_ShortProofsOfLinearGrowth,Brown_23_Quantum,Susskind_16_Computational,Stanford_14_Complexity,Brown_16_Complexity,Brown_18_Second,Susskind2018arXiv_ThreeLectures,Bouland_19_ComputationalPseudorandomness,Brown_16_Holographic,BigComplexity,PhysRevLett.127.020501,PhysRevLett.128.081602,Balasubramanian_21_Complexity,YungerHalpern_2022_Uncomplexity,Belin2022PRL_anything}.

The relevance of quantum complexity to AdS/CFT motivates connections to thermodynamics.
Brown and Susskind posited that the CFT state's complexity should tend to increase, formulating a ``second law of complexity''~\cite{Brown_18_Second}.
Bai \emph{et al.\@} extended the second law of complexity by proving fluctuation relations mirroring Jarzynski's equality in statistical mechanics~\cite{Bai_22_Towards}.
A resource theory of \emph{uncomplexity}---a state's closeness to a simple tensor product---was furthermore established in Ref.~\cite{YungerHalpern_2022_Uncomplexity}.
Quantum complexity also appears connected to ergodicity and quantum chaos: complexity is believed to grow linearly for long times under typical quantum chaotic dynamics;
complexity would thereby provide a universal measure for how long a chaotic system has evolved~\cite{Susskind2018arXiv_ThreeLectures,Brandao_21_Models,%
  Haferkamp_21_Linear,Li_22_ShortProofsOfLinearGrowth,%
  Brown_23_Quantum,chen2024incompressibility}.
In contrast, standard correlation functions and entanglement entropies typically reach their equilibrium values after short times~\cite{Susskind2018arXiv_ThreeLectures}.

Our main goal is to identify which state transformations in quantum thermodynamics can be effected by processes, and probed with measurement effects, utilizing limited computational resources, hence of limited complexity.
In particular, we seek to connect quantum complexity and entropy as follows.
In conventional thermodynamics, the minimum amount of work needed to transform one equilibrium state into another, via exchanges of heat at a fixed temperature, is determined by the energy and entropy differences between the initial and final states.
The complexity of a many-body system's evolution is upper-bounded by the evolution's duration.
An observer with little computational power typically cannot distinguish a highly complex pure state from a highly entropic mixed state.
A phenomenological description of possible state transformations under short-time evolutions, from the viewpoint of such an observer, should not distinguish highly entropic initial states from highly complex pure states.
The states become distinguishable if observed over sufficiently long timescales or through sufficiently complex observables.
This fact invites us to define thermodynamic potentials for determining which state transformations can be effected under complexity limitations.
The roles of these potentials mirror the role of entropy in standard thermodynamics.
Using the potentials, we address our general question: \emph{how does one formulate thermodynamics at a given complexity scale---for complexity-constrained processes and observers?}

Our analysis relies on recent information-theoretic frameworks for quantum thermodynamics.
The relevance of an observer's information, or knowledge, in thermodynamics was significantly clarified by the pioneering works of Szil\'ard~\cite{Szilard1929ZeitschriftFuerPhysik}, Landauer~\cite{Landauer1961_5392446Erasure}, and Bennett~\cite{Bennett1982IJTP_ThermodynOfComp}.
Landauer argued that erasing a bit of information dissipates an amount of heat $\geq k_{\mathrm B} T \log(2)$~\cite{Landauer1961_5392446Erasure}. The $k_{\mathrm B}$ denotes Boltzmann's constant, $T$ denotes an environment's temperature, and the logarithm is base-$e$.
This observation led to Bennett's resolution of Maxwell's demon paradox~\cite{Bennett1982IJTP_ThermodynOfComp,%
  Bennett2003_NotesLP} and helped extend quantum thermodynamics to far-from-equilibrium systems~\cite{Goold2016JPA_review,%
BookBinder2018_ThermoQuantumRegime,BookSagawa2022_entropy,%
Faist2018PRX_workcost}.
A common model for quantum thermodynamics is the resource theory of thermal operations~\cite{Horodecki2003PRA_NoisyOps,%
  Brandao2013_resource,Horodecki2013_ThermoMaj,%
  Brandao2015PNAS_secondlaws,%
  Goold2016JPA_review,BookBinder2018_ThermoQuantumRegime}:
one assumes that energy-conserving interactions with a fixed-temperature heat bath are the only operations performable without external resources, such as thermodynamic work.
Using this model, one can determine whether a state $\rho$ can transform into a state $\sigma$, for large classes of states. 
The answer can be cast in terms of a family of entropy measures termed the R\'enyi-$\alpha$ relative entropies~\cite{Brandao2015PNAS_secondlaws}.
A closely related model captures how many natural, or spontaneous, dynamics have a particular fixed point.
Using this model, one can quantify, e.g., the thermodynamic work required to implement a general quantum process~\cite{Janzing2000_cost,Faist2015NatComm,Faist2018PRX_workcost}.
A key observation is that a process's work cost is tied fundamentally to its logical irreversibility: information erasure is costly, but energy-conserving unitary operations can, in principle, be implemented at zero work cost~\cite{Landauer1961_5392446Erasure,Bennett1982IJTP_ThermodynOfComp,Bennett2003_NotesLP}.
We introduce a simple model that captures the main features of the above models and that forms the basis for our analysis. Our model centers on an $n$-qubit memory register governed by a completely degenerate Hamiltonian.
The following primitive operations are performable (\cref{fig:SimpleModelInformationThermo}): (i) the reset of one qubit from any state to a standard state $\ket0$, costing one unit of work; (ii) the preparation of one qubit in the maximally mixed state from $\ket0$, extracting one unit of work; and (iii) the implementation of one two-qubit unitary gate, costing one unit of complexity.
(Our model neglects any complexity cost of single-qubit erasure~\cite{Taranto2021arXiv_cooling}, to separate two types of costs: the thermodynamic work cost associated with reducing the entropy of a single qubit, and the computational cost of exchanging information between the qubits.
The former operations involve thermodynamic work and act on individual qubits. The latter operations involve two-qubit interactions that tend to spread quantum information throughout the system.
Our model generalizes to elementary operations that cost both work and complexity.)

\begin{figure}
  \centering
  \includegraphics{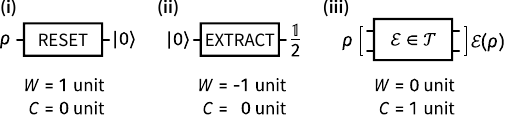}
  \caption{
    A simple model combining quantum-information thermodynamics and complexity:
    An $n$-qubit memory register is governed by a completely degenerate Hamiltonian.
    Processes consist of the primitive operations \labelcref{item:primitive-reset,item:primitive-extract,item:primitive-operation-computation}.
    Work and complexity costs are defined as the sums of the corresponding costs over the operations.
    \textbf{(a)}~A qubit can be reset from any state $\rho$ to a standard state $\ket0$.
    The resetting costs one unit of work ($W$), by Landauer's principle~\protect\cite{Landauer1961_5392446Erasure}.
    This primitive has no complexity cost ($C$).
    \textbf{(b)}~One unit of work can be extracted from a qubit in the state $\ket0$.
    The extraction leaves the qubit maximally mixed~\protect\cite{Szilard1929ZeitschriftFuerPhysik}.
    \textbf{(c)}~Operations $\mathcal{E}$ are chosen from a set $\mathcal{T}$ of elementary computations.
    Each $\mathcal{E}$ costs one unit of complexity but no work.
    A natural choice for $\mathcal{T}$ is the set of two-qubit unitary operations, potentially subject to connectivity constraints.
    Choices of $\mathcal{T}$ natural to thermodynamic applications preserve thermal states.
  }
  \label{fig:SimpleModelInformationThermo}
\end{figure}

We first revisit the standard setting of \emph{information erasure}, or \emph{Landauer erasure}.
Erasure has clarified information's role in thermodynamics; we now use erasure to clarify complexity's role in quantum thermodynamics.
We consider an $n$-qubit system with a product Hamiltonian $H$ and a zero-energy ground state.
We prove that the minimum work required to reset an arbitrary state $\rho$ to the ground state, using at most a fixed number of thermodynamic processes in our model, is given by the \emph{complexity relative entropy} of $\rho$ relative to a thermal state~\cite{YungerHalpern_2022_Uncomplexity}.
The complexity relative entropy quantifies two states' distinguishability, as judged by an observer with limited computational power.
If $H=0$, this quantity reduces to the \emph{complexity entropy}~\cite{YungerHalpern_2022_Uncomplexity} and quantifies how random a state appears to such an observer.
A variant of the complexity (relative) entropy appeared in our Ref.~\cite{YungerHalpern_2022_Uncomplexity} to quantify the number of pure qubits extractable from a state, in the resource theory of uncomplexity.
The present work further applies the complexity relative entropy to thermodynamics: we extend the pure-qubit extraction protocol of Ref.~\cite{YungerHalpern_2022_Uncomplexity} to general protocols for thermodynamic information erasure under complexity limitations.

Our result quantifies a trade-off between the work and complexity costs of erasing an $n$-qubit state $\rho$ (\cref{fig:TradeOffWorkComplexityCostErasure}).
\begin{figure}
  \centering
  \includegraphics{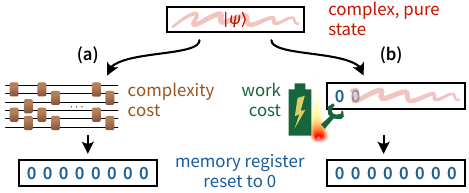}
  \caption{
    An example trade-off between the work cost and complexity cost of resetting an $n$-qubit memory register initiated in a highly complex pure state $\ket\psi$.
    \textbf{(a)}~A unitary operation resets the memory to the all-zero state. 
    In principle, the unitary costs no thermodynamic work.
    Many gates are necessary to implement the unitary, incurring a high complexity cost.
    \textbf{(b)}~To reset the memory without gates, one can erase every qubit [via primitive \ref{item:primitive-reset} in \cref{fig:SimpleModelInformationThermo}].
    This operation costs $n$ units of work total.
    The \emph{complexity entropy} quantifies the work--complexity trade-off of erasing an arbitrary state $\rho$.
    The complexity entropy was introduced in Ref.~\protect\cite{YungerHalpern_2022_Uncomplexity}; this work elucidates its properties and applications.
  }
  \label{fig:TradeOffWorkComplexityCostErasure}
\end{figure}
Suppose that $H=0$, and consider a highly complex pure state $\ket\psi$.
In standard models for information thermodynamics, one can implement unitary operations on an energy-degenerate memory register at zero work cost, since such operations are logically reversible~\cite{Bennett1973IBMJRD_LogRevComp,Egloff2015NJP_measure,Horodecki2013_ThermoMaj}.
Therefore, $\ket\psi$ can, in principle, be transformed into the all-zero state $\ket{0^n}$ at no work cost.
Yet, this transformation requires many gates, incurring a high \emph{complexity cost}.
An alternative procedure would reset each memory qubit with a thermodynamic reset operation.
This procedure costs $n\,k_{\mathrm B} T\log(2)$ units of work but no complexity.
The complexity entropy quantifies the trade-off between the work and complexity costs of erasing $\ket\psi$.

Bennett \emph{et al.\@} analyzed a trade-off between complexity and work in the context of classical bit erasure~\cite{Bennett_93_Thermodynamics}.
Kolmogorov complexity, rather than quantum complexity, is relevant to their problem.
The connection between Kolmogorov complexity and thermodynamics was further cemented in Zurek's work~\cite{Zurek1989_cost} and in Refs.~\cite{Caves1993PRE_information,Mora2006arXiv_qKolmogorov,Baez2010MSCS_algo}.
Kolmogorov complexity and quantum complexity quantify the size of a state's ``smallest description'' in different ways.
Kolmogorov complexity quantifies the size of the smallest program that generates the state (regardless of the program's runtime).
In contrast, quantum complexity measures the shortest runtime of a program that generates the state (regardless of whether the program has a compact representation).

The remainder of this work is dedicated to the analysis of more-general thermodynamic processes, including erasure in systems with nontrivial Hamiltonians, and to a deeper study of the complexity entropy's properties, uses, and extensions.
We study the work costs of general state transformations in quantum thermodynamics where complexity limitations restrict which processes and measurements one can implement.
Furthermore, we evidence the complexity entropy's broad relevance to quantum information theory.
First, we present properties of the complexity entropy and bound it using well-known complexity measures and an entanglement measure.
Then, we demonstrate the complexity entropy's relevance in random circuits, a simple model of quantum chaotic dynamics.
Last, we apply our complexity-entropy measures to information-theoretic tasks with complexity constraints.

What is the work cost of a general state transformation $\rho\to\rho'$, as in the model of \cref{fig:SimpleModelInformationThermo}?
We consider the minimum work cost of any process, of complexity at most $r \geq 0$, that maps $\rho$ to a state indistinguishable from $\rho'$.
Here, distinguishability is judged by an observer who possesses some bounded computational power $R \geq 0$.
This work cost extends the complexity relative entropy's use to general state transformations.
Quantifying the work cost addresses our main goal of establishing which state transformations are possible in quantum thermodynamics for complexity-constrained agents.
Considering regimes in which $R$ is either extremely high or extremely low, we identify cases in which bounding the work cost is sometimes possible.

Our complexity-entropy measures obey certain properties expected of entropies.
For instance, the complexity entropy achieves its maximum value on a maximally mixed state. Also, the complexity relative entropy decreases monotonically under partial traces.
Our measures lack other common properties, such as invariance under unitaries and monotonicity under completely positive, trace-preserving maps (a data-processing inequality): by construction, the complexity (relative) entropy is sensitive to a state's complexity, which can change under arbitrary quantum operations.
Moreover, the complexity entropy obeys bounds involving known complexity measures: one bound involves the \emph{strong complexity} of Ref.~\cite{Brandao2021PRXQ_models}; and the other, the \emph{approximate circuit complexity} (the minimum complexity of any unitary that approximately prepares a target state).
For a one-dimensional (1D) chain of qubits, the complexity entropy obeys also a bound involving an entanglement measure defined in terms of the quantum mutual information.

We argue that the complexity entropy can quantify chaotic behavior in a quantum many-body system.
Consider a suitably chaotic many-body system initialized in a pure product state.
The state's complexity entropy should remain low until the timescale matching an observer's computational power.
Beyond this timescale, the complexity entropy should be close to maximal.
We employ and extend the results of Refs.~\cite{Brandao2016CMP_local,Brandao2021PRXQ_models} to prove a corresponding statement about the output of a random quantum circuit applied to $\ket{0^n}$, an $n$-qubit system's all-zero state.
Suppose that an observer can measure only observables of complexities $\leq r$.
The complexity entropy, we show, transitions from zero to $n-O(1)$ when the number of gates reaches $\approx r$ (\cref{fig:ComplexityEntropyRandomCircuits}).
\begin{figure}
  \centering
  \includegraphics{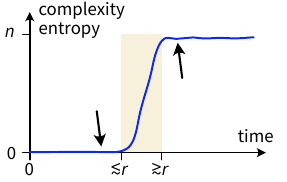}
  \caption{
    The complexity entropy of a state outputted by a random circuit, relative to an observer who can implement only measurement effects of complexities $\leq r$.
    We determine the complexity entropy's evolution using techniques from Refs.~\protect\cite{Brandao2016CMP_local,%
      Brandao2021PRXQ_models,haferkamp2022random}.
    After a time $t \lesssim r$, the state has a low enough complexity that the observer can verify the state's purity via measurement.
    After $t\gtrsim r$, the state is complex enough to be nearly indistinguishable from a highly entropic state, according to any measurement whose complexity is $\leq r$.
    }
  \label{fig:ComplexityEntropyRandomCircuits}
\end{figure}
For a random circuit of $\lesssim r$ gates, the observer can apply the inverse circuit to recover the all-zero state and so to ascertain that the random circuit's output has a low entropy.
For a random circuit of $\gtrsim r$ gates, the observer cannot distinguish the circuit's output from a maximally mixed state, using observables of complexities $\leq r$~\cite{Brandao2021PRXQ_models,chen2024incompressibility}.
Accordingly, the output has a complexity entropy of at least $n-O(1)$.
Reference~\cite{Brandao2021PRXQ_models} quantifies a random circuit's complexity by applying a powerful tool in random-circuit analysis: a \emph{unitary $k$-design}~\cite{Brandao2016CMP_local,Roberts2017JHEP_chaos}.
The lower bound's linear scaling comes from recently improved bounds for the design order achieved by random circuits~\cite{chen2024incompressibility} over existing bounds~\cite{brandao2016local,haferkamp2022random}.

We use our complexity-entropy measures to bound the optimal efficiencies of information-theoretic tasks performed under complexity limitations.
Thermodynamic erasure is related to data compression, the storage of information in the smallest possible register~\cite{BookFeynmanLecturesOnComputation1996,%
  Dahlsten2011NJP_inadequacy,delRio2011Nature}:
to erase a memory register, one can first perform data compression to reduce the number of qubits that need resetting.
The complexity entropy quantifies the resources required to compress a quantum state into the fewest qubits possible, via any limited-complexity procedure.

After addressing data compression, we study \emph{decoupling}, or ensuring that an agent's multiqubit system becomes maximally mixed and uncorrelated with a referee's system, $R$~\cite{Hayden2008OSID_decoupling,Abeyesinghe2009_MotherProtocols,%
  Dupuis2014CMP_decoupling,Szehr2011_decoupling}.
Assuming a conjecture about a chain rule for the complexity entropy, we lower-bound the number of qubits that an agent must discard to obtain a state indistinguishable, by a complexity-restricted referee, from a maximally mixed state uncorrelated with $R$.
The lower bound is given by a conditional variant of the complexity entropy. 
One can interpret the variant as a \emph{complexity-aware conditional min-entropy}.

Our paper is organized as follows.
In \cref{sec:setting}, we introduce our framework, background information, and the complexity entropy.
In \cref{sec:main-erasure}, we bound the work cost of information erasure subject to complexity restrictions.
In \cref{sec:main-general-state-transformations}, we generalize our analysis to arbitrary state transformations.
In \cref{sec:InformationTheoreticApplicationsComplexityEntropy}, we present the complexity entropy's information-theoretic properties and applications.
We conclude in \cref{sec:Discussion}.

\section{Setting}
\label{sec:setting}

In \cref{sec:setup-thermo-with-primitive-operations}, we introduce our setup and allowed operations.
In \cref{sec_One_shot}, we review \emph{one-shot information theory}, which quantifies the efficiencies of thermodynamic tasks performed on finite-size systems, with arbitrary success probabilities.
In \cref{sec_Comp_ent}, we introduce the complexity entropy and complexity relative entropy.

\subsection{Thermodynamic framework}
\label{sec:setup-thermo-with-primitive-operations}

We consider a system of $n$ noninteracting qubits.
Qubit $i$ evolves under a Hamiltonian $H_i$, and the entire system under the Hamiltonian $H = H_1 + H_2 + \cdots + H_n$.
For simplicity, we suppose that $H_i\ket{0}_i = 0$ and $H_i\ket{1}_i = E_i^{(1)} \ket{1}_i$, with $E_i^{(1)}\geq 0$.
The all-zero bit string $\ket{0^n}$ labels the zero-energy ground state.
For a fixed inverse temperature $\beta = 1/k_{\mathrm B} T$, qubit $i$ has the thermal state
\begin{align}
  \gamma_i = \Gamma_i/Z_i \ ,
  \label{eq:setting-gammai-Gammai-Zi}
\end{align}
wherein $\Gamma_i = \ee^{-\beta H_i}$ and $Z_i = \tr`*(\Gamma_i) = 1 + \ee^{-\beta E_i^{(1)}}$.
The thermal qubit's free energy is given by
\begin{align}
  F_i = -k_{\mathrm B} T \log`*(Z_i) \ .
  \label{eq:setting-Fi}
\end{align}

The \emph{thermal operations} model possible transformations in quantum thermodynamics, for a system in controlled contact with a heat bath~\cite{Brandao2013_resource,%
  Horodecki2013_ThermoMaj,%
  BookBinder2018_ThermoQuantumRegime}.
Let $S$ denote a system with a Hamiltonian $H_S$.
A thermal operation $\Phi_S$ is defined as any operation of the form
\begin{align}
 \rho \mapsto \Phi_S(\rho) = \tr_B`*( V_{SB} \, `*[ \rho \otimes \gamma_B ] \, V_{SB}^\dagger ) \ .
\end{align}
$B$ denotes any auxiliary system with a Hamiltonian $H_B$.
$V_{SB}$ denotes any unitary that strictly conserves energy: $[V_{SB}, H_S + H_B] = 0$.
For an arbitrary system $X$ governed by a Hamiltonian $H_X$, the thermal state $\gamma_X = \ee^{-\beta H_X} / \tr`*(\ee^{-\beta H_X})$ is defined similarly to~\eqref{eq:setting-gammai-Gammai-Zi}.

Another general class of operations consists of the \emph{Gibbs-preserving maps}~\cite{Janzing2000_cost}.
Let $S$ denote a system with a Hamiltonian $H_S$.
A Gibbs-preserving map $\mathcal{F}$ is any completely positive, trace-preserving map that satisfies $\mathcal{F}(\gamma_S) = \gamma_{S'}$.
The Gibbs-preserving maps form a larger class than the thermal operations~\cite{Faist2015NJP_Gibbs}.
It remains unclear what resource costs are necessary for implementing a general Gibbs-preserving map, using other standard thermodynamic operations~\cite{IQOQI_OpenProblem_GPM}.
When $H_S = 0$, $\gamma_S$ is proportional to the identity operator $\Ident_S$, so every Gibbs-preserving map $\mathcal{F}$ is a unital map $`*[\mathcal{F}(\Ident_S) = \Ident_S]$ and vice versa.

Inspired by the resource theory of thermodynamics, we use a model in which all processes consist of primitive operations.
A process is specified as a sequence $\mathcal{P} \coloneqq `*(\mathcal{E}_1, \ldots, \mathcal{E}_m)$ of primitive operations.
The sequence implements the operation $\mathcal{E}_{\mathcal{P}} \coloneqq \mathcal{E}_m \cdots \mathcal{E}_1$.
Each primitive operation $\mathcal{E}_i$ has a complexity cost $C`*(\mathcal{E}_i)$ and a work cost $W`*(\mathcal{E}_i)$.
Accordingly, each process $\mathcal{P}$ has a complexity cost $C`*(\mathcal{P})$ and a work cost $W`*(\mathcal{P})$, which are sums of the primitive operations' costs
\begin{align}
    C`*(\mathcal{P}) \coloneqq \sum_{i = 1}^m C`*(\mathcal{E}_i) \ ,
    \; \; \text{and} \; \;
    W`*(\mathcal{P}) \coloneqq \sum_{i = 1}^m W`*(\mathcal{E}_i)  \ .
\end{align}
One might refer to $C`(\mathcal{P})$ instead as the \emph{circuit size} of $\mathcal{P}$.
Suppose that $\mathcal{P}$ implements a unitary operation $\mathcal{U}$.
$C(\mathcal{P})$ might differ from the complexity $C`*(\mathcal{U})$ of $\mathcal{U}$: there might exist a process that also implements $\mathcal{U}$ but consists of fewer primitive operations.

The primitive operations on an $n$-qubit system, given a background inverse temperature $\beta=1/k_{\mathrm B} T$, are the following (\cref{fig:SimpleModelInformationThermo}):
\begin{enumerate}[label=(\roman*)]
    \item\label{item:primitive-reset} The \textsc{reset} operation: Qubit $i$ can be brought from an arbitrary state to the ground state.
    This operation has the work cost $W_{\textsc{reset},\,i} = -F_i = k_{\mathrm B} T\log(Z_i)$ and no complexity cost: $C_{\textsc{reset},\,i} = 0$.

    \item\label{item:primitive-extract} The \textsc{extract} operation: Work $\abs{F_i}$ can be extracted as qubit $i$ is brought from $\ket0_i$ to the thermal state $\gamma_i$.
    This operation has the work cost $W_{\textsc{extract},\,i} = F_i = - W_{\textsc{reset},i}$ and no complexity cost: $C_{\textsc{extract},\,i} = 0$.
    
    \item\label{item:primitive-operation-computation}
    Every operation $\mathcal{E}$ chosen from a fixed set $\mathcal{T}$ of elementary computations costs one unit of complexity, $C(\mathcal{E}) = 1$, and no work: $W(\mathcal{E}) = 0$.
    If $H_i=0$ for all $i$, then natural choices of $\mathcal{T}$ include arbitrary two-qubit unitary gates with arbitrary connectivities.
    In this case, $\mathcal{T} \simeq \bigcup_{(i,j)} \SU(4)_{i,j}$, wherein each $(i,j)$ denotes a pair of qubits.
    If $H_i \neq 0$, natural choices of $\mathcal{T}$ include the two-qubit thermal operations and the two-qubit Gibbs-preserving maps, with arbitrary connectivities.
\end{enumerate}

The work expended on the \textsc{reset}, and the work extracted via \textsc{extract}, naturally generalize Landauer's bound to noninteracting qubits.
The \textsc{reset} and \textsc{extract} operations' ideal work costs are $W_{\textsc{reset},\,i} = -W_{\textsc{extract},\,i} = -F_i$ in standard thermodynamic models, including the resource theory of thermal operations~\cite{Horodecki2013_ThermoMaj,Brandao2015PNAS_secondlaws,%
Alicki2004_hamiltonian,Frenzel2014PRE_revisited,Skrzypczyk2014NComm_individual}.
The \textsc{reset} operation can be applied to any input state, not only a thermal state.
Its deterministic work cost, $-F_i$, can be viewed as equal to the worst-case work cost of resetting any particular input state.
If a qubit has a completely degenerate Hamiltonian, $H_i = 0$, these work costs reduce to Landauer's bound: $-F_i = k_{\mathrm B} T\log(2)$.

The set $\mathcal{T}$ of elementary computations enables us to define a protocol's complexity cost.
To separate elementary thermodynamic operations from elementary computing operations, $\mathcal{T}$ should contain only operations to which one can reasonably assign no work costs.
Crucially, we choose the elementary computations to act nontrivially on just two qubits.
This property ensures that several elementary computations are needed to propagate information throughout the system and many more are needed to generate complex quantum states.

For energy-degenerate qubits ($H_i = 0$ for all $i$), a meaningful choice of $\mathcal{T}$ is a universal set of two-qubit unitary gates: the primitive operation~\ref{item:primitive-operation-computation} then enables quantum computation with unitary circuits.
This choice is undesirable if the qubits have nontrivial Hamiltonians, however: some two-qubit unitaries require work and so should be excluded from $\mathcal{T}$ (for example, a swap of two different-energy levels).
One could choose for $\mathcal{T}$ to consist
of two-qubit unitaries that commute with the total Hamiltonian.
Unfortunately, this choice typically leads to an unreasonably restricted set.
The energy-conserving unitaries on two qubits with distinct Hamiltonians is limited to unitaries that are diagonal with respect to the product energy eigenbasis. 
This choice could render impossible some natural operations that might otherwise be implementable with ancillary qubits and nondiagonal two-qubit unitaries.
One such operation is a partial thermalization of a qubit: $\rho \to \rho/2 + \gamma_i/2$.

When the qubits have nontrivial Hamiltonians, we include in $\mathcal{T}$ some nonunitary two-qubit operations to remedy the above problem.
We still ensure that the $\mathcal{T}$ elements cost no work in standard thermodynamic frameworks.
Yet $\mathcal{T}$ can include operations that are not implementable in practice.
For instance, they can require unreasonable control.
In such a case, the conclusions about the operations' work and complexity costs imply
lower bounds applicable to every setting in which
the physically implementable elementary computations form a smaller set $\mathcal{T}_0$.
(These bounds might not be tight.)
In fact, choosing a larger $\mathcal{T}$ ensures that any such lower bounds apply to a broader range of possibilities for $\mathcal{T}_0$.
In this spirit, we identify two $\mathcal{T}$s that are large and likely include all natural possibilities for $\mathcal{T}_0$:
two-qubit thermal operations and two-qubit Gibbs-preserving maps.

For degenerate Hamiltonians, the choice of $\mathcal{T}$ as the set of two-qubit unitaries does not suffer the same problems as those that apply to sets of unitary operations for nondegenerate Hamiltonians.
Alternative $\mathcal{T}$ choices include the sets of all the two-qubit noisy operations and of all the two-qubit unital maps. 
These operations are the two-qubit thermal operations and Gibbs-preserving maps, respectively, if the qubits have degenerate Hamiltonians.
However, the two-qubit noisy operations and unital maps likely include operations that demand unreasonable control requirements, making the resource-cost lower bounds loose.

\subsection{One-shot entropy measures in quantum thermodynamics}
\label{sec_One_shot}

A fundamental connection between thermodynamics and statistical mechanics is the identification, under suitable conditions, of Clausius' thermodynamic entropy with the von Neumann entropy $\HH{\rho}$ of a quantum state $\rho$, 
\begin{align}
    \HH{\rho} \coloneqq -\tr`(\rho\log\rho)\ .
\end{align}
In the information-theoretic approach to thermodynamics, beyond the traditional regime concerning many copies of a system, many thermodynamic tasks have work costs inaccurately represented by the von Neumann entropy and the standard free energy~\cite{Dahlsten2011NJP_inadequacy,delRio2011Nature,Dahlsten2013Non,Horodecki2013_ThermoMaj,lostaglio2015description,NYH2018Maximum,Wei_2017,YungerHalpern2016PRE_beyond,vanderMeer2017Smoothed,Ng_2017,Guarnieri2019Quantum}.  
Instead, these work costs are quantified with \emph{one-shot entropy measures}, such as the relative entropies~\cite{PhDRenner2005_SQKD,Datta2009IEEE_minmax,NYH2015Introducing} and R\'enyi-$\alpha$ entropies~\cite{Brandao2015PNAS_secondlaws}, including the min- and max-entropies.

We focus on the \emph{hypothesis-testing relative entropy}, which interpolates between the min- and max-relative entropies~\cite{Hiai_91_Proper,Dupuis_13_Generalized,HypothesisEntropy,Hayashi,WatrousScripts,Wang2012PRL_oneshot,Tomamichel2013_hierarchy}.
The hypothesis-testing relative entropy is defined for a quantum state $\rho$, a positive-semidefinite operator $\Gamma$, and an intolerance parameter $\eta \in (0,1]$:
\begin{align}
  \DHyp[\eta]{\rho}{\Gamma} \coloneqq
  -\log `*( \min_{\substack{0\leq Q\leq\Ident\\ \tr`(Q\rho) \geq \eta }}
  `*{ \frac{ \tr`*(Q\Gamma) }{\eta} } ) \ .
  \label{eq:setting-defn-DHyp}
\end{align}
The \emph{hypothesis-testing entropy} of a quantum state $\rho$, for $\eta \in (0,1]$, is defined as
\begin{align}
  \HHyp[\eta]{\rho} \coloneqq -\DHyp[\eta]{\rho}{\Ident} \ .
  \label{eq:setting-defn-HHyp}
\end{align}

The entropy measures here have units of nats, rather than bits, because our definitions have base-$e$, rather than base-2, logarithms.
One can convert between bits and nats via $\text{(no. of bits)} = \text{(no. of nats)} / \log(2)$.
Our convention introduces factors of $\log(2)$ in our results for qubit systems.
However, the convention yields familiar forms for thermodynamic relations between entropy and quantities such as the free energy.

The hypothesis-testing entropy quantifies how well one can distinguish between quantum states $\rho$ and $\sigma$ via a hypothesis test.
Suppose that we receive either $\rho$ or $\sigma$.
We must guess which state we obtained, based on the outcome of one measurement.
We may choose the measurement, specified by a two-outcome positive-operator-valued measure (POVM) $`{ Q, \Ident-Q }$.
The outcome $Q$ implies that we should guess ``$\rho$''; and $\Ident-Q$, that we should guess ``$\sigma$.''
Suppose that, when $\rho$ is provided, the measurement must identify $\rho$ correctly with a probability $\geq \eta$: $\tr`(Q\rho) \geq \eta$.
When $\sigma$ is provided, the optimal probability $\tr`(Q\sigma)$ of failing to identify $\sigma$ is $\eta\ee^{-\DHyp[\eta]{\rho}{\sigma}}$.
The interpretation of~\eqref{eq:setting-defn-DHyp} as a relative-entropy measure arises from certain properties.
For example,~\eqref{eq:setting-defn-DHyp} is non-negative and obeys a data-processing inequality~\cite{Dupuis2013_DH};~\eqref{eq:setting-defn-DHyp} approximates the min- and max-relative entropies when $\eta\approx 0$ and $\eta\approx 1$~\cite{Dupuis2013_DH}, respectively; and~\eqref{eq:setting-defn-DHyp} quantifies the resource costs of communication and thermodynamic tasks~\cite{BookKhatri_communication}.  
\cref{appx:hypothesis-testing-entropy} contains a more detailed discussion of the hypothesis-testing relative entropy and its properties.

The hypothesis-testing entropy and relative entropy have operational significances in quantum thermodynamics~\cite{YungerHalpern2016PRE_beyond,Buscemi2017PRA_relative,YungerHalpern2018JPA_beyond2}.
These quantities unify quantum thermodynamic results based on the min- and max- (relative) entropies~\cite{Dahlsten2011NJP_inadequacy,%
  Aberg2013_worklike,Horodecki2013_ThermoMaj,%
  Faist2015NatComm,Faist2018PRX_workcost}:
Consider resetting a state $\rho$ on a memory register $S$ to a fixed state $\ket{0}_S$.
Suppose that $\rho$ evolves under a completely degenerate Hamiltonian $H_S = 0$ and that the resetting must fail with a probability at most $\epsilon \geq 0$.
Absent restrictions on the resetting's complexity, the resetting has a work cost~\cite{Horodecki2013_ThermoMaj,Dupuis2013_DH}
\begin{align}
  W^{\epsilon}`*(\rho_S \to \ket{0}_S) \approx \HHyp[1-\epsilon]{\rho} \times k_{\mathrm B} T \ .
  \label{eq:WorkCostResetHypoTestingEntropy}
\end{align}
The approximation conceals technical details about how the failure probability is measured.
If $\rho_S = \Ident_2/2$ is the single-qubit maximally mixed state, then $\HHyp[1-\epsilon]{\rho} = \log(2)$, and we obtain the well-known formula for Landauer erasure~\cite{Landauer1961_5392446Erasure}
\begin{align}
  W^{\epsilon}`*(\Ident_2/2 \to \ket{0}_S) \approx k_{\mathrm B} T\log(2) \ .
\end{align}

\subsection{Complexity (relative) entropy}
\label{sec_Comp_ent}

The complexity entropy was introduced in the context of the resource theory of uncomplexity~\cite{YungerHalpern2022PRA_resource}.
There, a state's complexity entropy quantifies the qubits in the state $\ket{0}$ extractable via a limited number of unitary gates.
In this work, we apply the complexity entropy to quantum-information thermodynamics and detail the entropy's properties.
We present a version of the complexity entropy that is tailored to $n$-qubit quantum circuits wherein each gate is in the set $\mathcal{T}$ of elementary computations.
We introduce the complexity entropy's general form in \cref{appx:GeneralConstructionComplexityEntropy}, where we prove general bounds and monotonicity results.

A key motivation for defining the complexity entropy is the following.
Consider a hypothesis test between states $\rho$ and $\sigma$.
The hypothesis-testing relative entropy quantifies the optimal probability of wrongly rejecting $\sigma$ with any strategy that correctly accepts $\rho$ with a probability $\geq \eta$.
However, an optimal measurement $`{Q, \Ident-Q}$ might be too complex to be executed in a reasonable time.
Suppose, for instance, that $\rho = \proj{\psi}$ is a highly complex, pure $n$-qubit state and that $\sigma = \Ident_2^{\otimes n}/2^n$ is maximally mixed.
A measurement that distinguishes $\rho$ from $\sigma$ may require a complex circuit implementable only in an exponentially long time~\cite{Gross2009PRL_most,Bremner2009PRL_toorandom}.
It is natural to restrict $Q$ to be implementable in a reasonable time, with a circuit composed of $\leq r$ gates.
The \emph{complexity relative entropy} is defined similarly to the hypothesis-testing relative entropy. The former, however, has a complexity restriction in the optimization over measurement operators.

To define the complexity entropy, we must specify the set of POVM effects that a computationally limited observer can render.
Define $\Mr[r]$ as the set of POVM effects that one can implement by performing $\leq r$ gates and then applying certain single-qubit projectors.
All tensor products of those projectors constitute the set of complexity-zero POVM effects
\begin{align}
  \Mr[0] \coloneqq `*{ \bigotimes_{i=1}^n Q_i \,:\; Q_i \in `*{ \proj0, \Ident_2 } } \ .
  \label{eq:setting-defn-Mrzero}
\end{align}
For each $i$, an effect in $\Mr[0]$ projects qubit $i$ onto $\ket0$ or does nothing.
Fix any set $\mathcal{G}$ of elementary quantum operations (completely positive, trace-nonincreasing maps).
(We define complexity-restricted POVM elements in terms of $\mathcal{G}$, as an information-theoretic quantity independent of the thermodynamic framework in \cref{sec:setup-thermo-with-primitive-operations}.
Hence $\mathcal{G}$ is independent of the operations introduced there. 
If applying the POVM elements in that framework, however, one can choose for $\mathcal{G}$ to consist of operations introduced in \cref{sec:setup-thermo-with-primitive-operations}.)
For $r>0$, we define
\begin{align}
  \Mr[r] \coloneqq `*{
  \mathcal{E}_1^\dagger \cdots \mathcal{E}_r^\dagger ( P )
  \,:\; P\in\Mr[0]\,,\; \mathcal{E}_i\in\mathcal{G}
  }\ .
  \label{eq:setting-defn-Mr}
\end{align}

One natural choice for $\mathcal{G}$ might be the set of two-qubit unitary gates.
Another choice, in the context of complexity and thermodynamics, is $\mathcal{G} = \mathcal{T}$, wherein $\mathcal{T}$ denotes the set of elementary computations defined in \cref{sec:setup-thermo-with-primitive-operations}.

Even if a $Q$ belongs to $\Mr[r]$, the complementary effect $\Ident-Q$ might not.
Indeed, the definition of $\Mr[r]$ applies in settings where only one POVM effect is relevant in a hypothetical measurement.
For instance, suppose we wish to certify that some process outputs a state close to some pure state $\ket\psi$.
One may consider a hypothetical measurement of the output state with a POVM containing the effect $\proj\psi$, and ascertain that its outcome probability is close to unity.
In such a scenario, the POVM effect need not be implemented in practice, and other effects that complete the POVM may be ignored.

Having introduced complexity-restricted POVM effects, we can define complexity-restricted entropic quantities.
Let $\rho$ denote any quantum state; and $\Gamma$, any positive-semidefinite operator.
The \emph{complexity relative entropy} of $\rho$ relative to $\Gamma$, at a complexity scale $r\geq 0$ with respect to $\Mr[r]$, and for $\eta \in (0,1]$, is
\begin{align}
  \DHypr[r][\eta]{\rho}{\Gamma}
  \coloneqq -\log `*( \inf_{\substack{ Q \in \Mr[r] \\ \tr(Q\rho) \geq \eta }}
  `*{ \frac{\tr(Q\Gamma)}{\tr(Q\rho)} } ) \ .
  \label{eq:setting-defn-DHypr}
\end{align}
This definition extends the hypothesis-testing relative entropy~\eqref{eq:setting-defn-DHyp} by restricting the optimization to POVM effects implementable with $\leq r$ gates.
(See \cref{appx-topic:defn-complexity-relative-entropy} for a more general definition and for details.)
The definition~\eqref{eq:setting-defn-DHypr} mirrors a construction, based on the trace norm, in Ref.~\cite{Brandao2021PRXQ_models}.

The complexity relative entropy enjoys an operational interpretation similar to that of the hypothesis-testing relative entropy~\eqref{eq:setting-defn-DHyp}.
Consider a hypothesis test between states $\rho$ and $\sigma$.
We identify $\rho$ as the \emph{null hypothesis} and $\sigma$ as the \emph{alternative hypothesis}. 
Suppose that one can measure only a POVM $`*{Q, \Ident - Q}$ for which $Q\in \Mr[r]$.
It is useful to allow, apart from a measurement, one toss of a biased classical coin.
(We explain why below.)
One can freely choose the coin's probability $q \in (0,1]$ of landing heads up.
Consider guessing ``$\rho$'' if the coin shows ``heads'' and the POVM outcome is $Q$, guessing ``$\sigma$'' otherwise.
This strategy is equivalent to measuring the POVM $`*{ qQ , \Ident - qQ }$.
A type~I error---incorrectly rejecting the null hypothesis---occurs with a probability $1-\tr`(qQ\rho)$.
A type~II error---incorrectly rejecting the alternative hypothesis---occurs with a probability $\tr`(qQ\sigma)$. 
The following proposition shows how the complexity relative entropy characterizes hypothesis testing with complexity limitations:
\begin{mainproposition}[Hypothesis testing with complexity limitations]
    \label{mainthm:DHypr-interpretation-hypo-test}
    Let $\rho$ and $\sigma$ denote quantum states.
    Let $\eta \in (0,1]$ and $\delta \in (0,1]$.
    There exists a $Q\in \Mr[r]$ and $q \in (0,1]$ such that $\tr`*[(qQ)\rho] = \eta$ and $\tr`*[(qQ)\sigma] \leq \delta$ if and only if
    \begin{align}
        \DHypr[r][\eta]{\rho}{\sigma} \geq -\log`*(\delta/\eta) \ .
        \label{eq:DHypr-interpretation-hypo-test}
    \end{align}
\end{mainproposition}
The proposition guarantees that, if~\eqref{eq:DHypr-interpretation-hypo-test} holds, then, for any parameters $\eta, \delta \in(0,1]$, some hypothesis test of the sort just described has two properties:
a type~I error occurs with a probability $\leq 1-\eta$, and a type~II error occurs with a probability $\leq \delta$.
One can satisfy~\eqref{eq:DHypr-interpretation-hypo-test} only if $\delta \leq \eta$, since the complexity relative entropy is non-negative.
The coin toss's inclusion in the hypothesis test enables a precise relationship between the complexity relative entropy and hypothesis testing, for $\eta$ values far from 1.
If $\eta\approx 1$, then $q\approx 1$, and the agent need not toss the coin.
We prove the proposition in \cref{appx-topic:Dhypr-hypo-test}.

Having introduced the complexity relative entropy, we now define the complexity entropy.
Let $\rho$ denote any quantum state.
The \emph{complexity entropy} of $\rho$, at a complexity scale $r \geq 0$ with respect to $\Mr[r]$, and for $\eta \in (0,1]$, is
\begin{align}
  \HHypr[r][\eta]{\rho} \coloneqq - \DHypr[r][\eta]{\rho}{\Ident} \ .
  \label{eq:setting-defn-HHypr}
\end{align}
This definition mirrors~\eqref{eq:setting-defn-HHyp}.
We follow a standard procedure for defining an entropy using a relative entropy~\cite{BookTomamichel2016_Finite}.
We show that the complexity entropy is always non-negative $[\HHypr[r][\eta]{\rho}\geq0]$;
see~\cref{appx-topic:Dhypr-basic-properties} for a proof.

The normalization factor $\tr(Q\rho)$ in~\eqref{eq:setting-defn-DHypr} guarantees elementary properties of the complexity entropy, such as its having the range $[0, n\log(2)]$ for all $\eta$.
In \cref{appx:GeneralConstructionComplexityEntropy}, we define an alternative version of the complexity (relative) entropy without the normalization factor.
This alternative is technically more convenient, and we bound one version in terms of the other.

\section{Thermodynamic information erasure with complexity constraints}
\label{sec:main-erasure}

We now turn to the erasure of the $n$-qubit memory register introduced in \cref{sec:setting}.
What is the optimal work cost of resetting a state $\rho$ to $\ket{0^n}$, using the primitive operations \labelcref{item:primitive-reset,item:primitive-extract,item:primitive-operation-computation}?

\subsection{Erasure of qubits with a completely degenerate Hamiltonian}
\label{sec:main-erasure-Wcost-degenerate-H}

We first consider a simpler case: each memory qubit has a degenerate Hamiltonian ($H_i = 0$ for all $i$).
Let the elementary computations $\mathcal{T}$ be the two-qubit unitary gates with arbitrary connectivities: $\mathcal{T} \simeq \bigcup_{(i,j)} \SU(4)_{i,j}$.
The mixed-state fidelity between states $\sigma$ and $\tau$ is defined as~\cite{BookNielsenChuang2000}
\begin{equation}
    F`*(\sigma,\tau) \coloneqq \tr`*(\sqrt{\sigma^{1/2}\tau\sigma^{1/2}}) \ .
\end{equation}
We define an erasure of $\rho$ as any composition of the operations \labelcref{item:primitive-reset,item:primitive-extract,item:primitive-operation-computation} that transforms $\rho$ into a state $\rho'$ satisfying $F^2`*(\rho', \proj{0^n}) = \dmatrixel{0^n}{ \rho' } \geq \eta$, for some fixed $\eta \in (0,1]$ [\cref{fig:ThermodynamicErasureComplexityConstraints}(a)].
\begin{figure}
  \centering
  \includegraphics{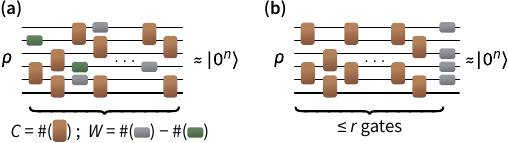}
  \caption{
    Protocols for thermodynamic information erasure with complexity constraints.
    The process consists of \textsc{reset} gates (gray circuit elements), \textsc{extract} gates (green), and computational gates (brown) [cf.\@ primitives~\labelcref{item:primitive-reset,%
      item:primitive-extract,item:primitive-operation-computation} of \cref{fig:SimpleModelInformationThermo}].
    \textbf{(a)}~A general protocol interleaves the primitive operations \labelcref{item:primitive-reset,item:primitive-extract,item:primitive-operation-computation}, preparing a state close to $\ket{0^n}$.
    The protocol's complexity cost $C$ and work cost $W$ follow from summing the primitive operations' costs (\cref{fig:SimpleModelInformationThermo}).
    \textbf{(b)}~In the simplified protocol, one applies $\leq r$ computational gates [primitive~\ref{item:primitive-operation-computation}].
    Then, one erases any qubits far from $\ket0$, using \textsc{reset} operations [primitive~\ref{item:primitive-reset}].
    The circuit compresses $\rho$ into fewer qubits, to be erased by \textsc{reset} operations.
    Such protocols extract pure $\ket{0}$ qubits in the resource theory of uncomplexity~\protect\cite{YungerHalpern2022PRA_resource}.}
  \label{fig:ThermodynamicErasureComplexityConstraints}
\end{figure}
One can interpret $F^2`*(\rho', \proj{0^n})$ as the probability of preparing $\ket{0^n}$.

We now quantify the trade-off between the work cost and complexity cost of an erasure protocol $\mathcal{E}$.
Being a process, $\mathcal{E}$ is a sequence of primitive operations.
Yet, we denote by $\mathcal{E}$ also the operation implemented by the sequence, in a slight abuse of notation.
Suppose that $\mathcal{E}$ has a complexity at most $r\geq 0$: $C(\mathcal{E}) \leq r$.
The optimal work cost for such a protocol is
\begin{align}
  W_r \coloneqq \min_{\mathcal{E}} `*{ W(\mathcal{E}) \,:\ 
  C(\mathcal{E})\leq r\,,\; F^2 \bm{(} \mathcal{E}(\rho),\proj{0^n} \bm{)} \geq\eta } \ .
  \label{eq:main-defn-optimal-work-cost-erasure-fixed-complexity}
\end{align}

\subsubsection{Protocols whose \textsc{reset} operations happen at the end}
\label{sec:main-erasure-Wcost-degenerate-H-simpleprotocol}

As an initial step, we consider an erasure protocol divided into two parts [\cref{fig:ThermodynamicErasureComplexityConstraints}(b)].
First, the state undergoes $\leq r$ computational gates [primitive~\ref{item:primitive-operation-computation}].
Then, \textsc{reset} operations [primitive~\ref{item:primitive-reset}] are applied.
Such a protocol's optimal work cost is
\begin{align}
    W_r^* = 
    & \min_{\mathcal{E}}  \bigl\{
    W(\mathcal{E}) \,:\ 
    \mathcal{E} = `*(\textstyle\prod_{i\in\mathcal{W}} \mathcal{E}_{\textsc{RESET},i}) \mathcal{E}_r \cdots \mathcal{E}_1 \,,\;
    \nonumber\\
    &\hspace{0.4cm} \mathcal{W} \subset `{1, 2, \ldots, n} \,,\; 
    F^2 \bm{(} \mathcal{E}(\rho),\proj{0^n} \bm{)} \geq\eta
    \bigr\} \ .
  \label{eq:main-defn-optimal-work-cost-erasure-fixed-complexity-simple-protocol}
\end{align}
Here $\mathcal{W}$ denotes the subset of the qubits that undergo \textsc{reset} operations.

We now identify how to erase $\rho$ using the least amount of work. We should compress $\rho$ into as few qubits as possible, using $\leq r$ gates, to apply as few \textsc{reset} operations as possible.
In other words, we should perform data compression with $\leq r$ gates.
This information-theoretic interpretation of erasure mirrors two earlier interpretations of erasures: the erasure of $n$ qubits without complexity restrictions~\cite{Dahlsten2011NJP_inadequacy} and the erasure of a quantum system using a quantum memory~\cite{delRio2011Nature}.

We analyzed a version of this compression task in the context of extracting pure qubits in the resource theory of uncomplexity~\cite{YungerHalpern2022PRA_resource}.
A variant of the complexity entropy, we showed, quantifies the qubits that cannot be reset to $\ket{0}$ states with $\leq r$ gates.
In this subsection, we adapt that argument to thermodynamic erasure.
(We discuss further applications of the complexity entropy to information-theoretic tasks in \cref{sec:InformationTheoreticApplicationsComplexityEntropy}.)

Let $\mathcal{E}$ denote any protocol that achieves the minimum in~\eqref{eq:main-defn-optimal-work-cost-erasure-fixed-complexity-simple-protocol}: $W(\mathcal{E}) = W_r^*$.
Assume that the work cost $W(\mathcal{E})$ corresponds to $\beta W(\mathcal{E})/\log(2) = \abs{\mathcal{W}} \eqqcolon w$ bits.
Define the projector $P \coloneqq \Ident_{\mathcal{W}} \otimes \proj{0^{n-w}}_{\mathcal{W}^{\rm c} }$, wherein $\mathcal{W}^{\rm c}$ denotes the complement of $\mathcal{W}$.
Define the POVM effect
\begin{align}
  Q \coloneqq \mathcal{E}_1^\dagger\cdots\mathcal{E}_r^\dagger`(P) \ ,
  \label{eq:main-erasure-degenerateH-candidate-operator-Q}
\end{align}
wherein each $\mathcal{E}_i^\dagger$ denotes the adjoint of the operation $\mathcal{E}_i$ $\big[$defined via $\tr \bm{(} \mathcal{E}_i^\dagger(A)\,B \bm{)} = \tr \bm{(} A\,\mathcal{E}_i(B) \bm{)}$ for all operators $A$ and $B$$\big]$.
$P$ projects onto $\ket0$ each qubit not subject to any \textsc{reset} operation, leaving all other qubits untouched.
We choose for the set $\mathcal{G}$ of elementary operations, used to define $\Mr[r]$ in~\eqref{eq:setting-defn-Mr}, to be the set $\mathcal{T}$ of elementary computations, used to define our model in \cref{sec:setup-thermo-with-primitive-operations}.
$Q$ belongs to $\Mr[r]$.
Furthermore, 
\begin{align}
    \tr`(Q\rho)
    &= \tr \left( P\, \mathcal{E}_r\cdots\mathcal{E}_1(\rho) \right)
    \nonumber\\
    &= \tr`*{\proj{0^{n-w}}\, \tr_{\mathcal{W}} \bm{(} \mathcal{E}_r\cdots\mathcal{E}_1(\rho) \bm{)} }
    \nonumber\\
    &= \tr \bm{(} \proj{0^n} \mathcal{E}(\rho) \bm{)}
    = F^2 \bm{(} \mathcal{E}(\rho),\proj{0^n} \bm{)}
    \geq \eta \ ,
\end{align}
so $Q$ is a candidate for the optimization~\eqref{eq:setting-defn-DHypr} defining the complexity entropy $\HHypr[r][\eta]{\rho} = -\DHypr[r][\eta]{\rho}{\Ident}$.
Moreover, since each $\mathcal{E}_i$ is unitary, $\tr`(Q) = \tr \bm{(} P\,\mathcal{E}_r\cdots\mathcal{E}_1(\Ident) \bm{)} = \tr`(P) = 2^w$.
Hence,
\begin{align}
  &\HHypr[r][\eta]{\rho} - \log`*( 1/\eta )  \leq \log \bm{(} \tr(Q) /\tr(Q\rho) \bm{)} - \log`*( 1/\eta )
  \nonumber\\
  &\leq \log \bm{(} \tr`(Q) \bm{)}
  = w\log(2)
  = \beta W(\mathcal{E})
  = \beta W_r^* \ .
\end{align}

A similar inequality points in the opposite direction, as we show by reversing the steps above.
Consider any candidate $Q\in\Mr[r]$ for the optimization~\eqref{eq:setting-defn-DHypr} defining $\HHypr[r][\eta]{\rho} = -\DHypr[r][\eta]{\rho}{\Ident}$.
Let $Q$ be optimal (or arbitrarily close to optimal).
By the definition~\eqref{eq:setting-defn-Mr}, $Q = \mathcal{E}_1^\dagger\cdots\mathcal{E}_r^\dagger`(P)$, wherein each $\mathcal{E}_i \in \mathcal{G}$ and wherein $P$ projects onto $\ket0$ each qubit in some subset $\mathcal{W}^{\rm c}$, leaving all other qubits untouched.
Let $\mathcal{E} \coloneqq `*(\textstyle\prod_{i\in\mathcal{W}}\mathcal{E}_{\textsc{RESET},i}) \mathcal{E}_r\cdots\mathcal{E}_1$.
$\mathcal{E}$ erases $\rho$, since
\begin{align}
  F^2 \bm{(} \mathcal{E}(\rho), \proj{0^n} \bm{)}
  = \tr \bm{(} \proj{0^n} \mathcal{E}(\rho) \bm{)}
  = \tr`*( Q \rho )
  \geq \eta \ .
\end{align}
Hence,
\begin{align}
  &\beta W_r^*
  \leq \beta W(\mathcal{E})
  = \abs{\mathcal{W}}\,\log(2)
  \nonumber\\
  &= \log \bm{(} \tr(Q) \bm{)}
  \leq \log \bm{(} \tr(Q) /\tr(Q\rho) \bm{)}
  = \HHypr[r][\eta]{\rho} \ .
\end{align}
The last equality holds if $Q$ is optimal.
(If $Q$ is arbitrarily close to optimal, then the last two quantities are arbitrarily close to each other.)
We have therefore proved the following theorem:
\begin{maintheorem}[Complexity-limited erasure with restricted protocols]
  \noproofref
  \label{mainthm:erasure-work-cost-complexity-entropy-simple}
  Consider erasing an $n$-qubit system governed by a fully degenerate Hamiltonian.
  Every optimal complexity-limited protocol that uses only primitive~\labelcref{item:primitive-operation-computation} (wherein $\mathcal{T}=\mathcal{G}$ consists of unitary operations) followed by primitive~\labelcref{item:primitive-reset}, expends an amount $W_r^*$ of work that obeys
  \begin{align}
    \HHypr[r][\eta]{\rho} - \log`*( 1/\eta ) \leq \beta W_r^* \leq \HHypr[r][\eta]{\rho} \ .
    \label{eq:main-optimal-Wrstar}
  \end{align}
\end{maintheorem}

Two points merit mentioning.
First, one might wonder whether the work cost $\beta^{-1} `*[\HHypr[r][\eta]{\rho} - \log(1/\eta) ]$ is achievable.
The answer depends on whether one can implement, with gates in $\mathcal{G}$, a POVM effect $Q\in \Mr[r]$ that satisfies $\tr(Q\rho) = \eta$ and is arbitrarily close to optimal for $\HHypr[r][\eta]{\rho}$.
In such a case, the first inequality in~\eqref{eq:main-optimal-Wrstar} saturates: $\HHypr[r][\eta]{\rho} - \log(1/\eta)$ equals a variant of the complexity entropy, and that variant, we show, equals $\beta W_r^*$.
(See \cref{appx-topic:reduced-complexity-entropy,%
appx-topic:erasure-exact-expression-reduced-Hhypr} for details.)
Considering the variant's definition, one can interpret the $\log(1/\eta)$ in~\eqref{eq:main-optimal-Wrstar} as a particularity of how we defined the complexity entropy.

Second, the $\log(1/\eta)$ is proportional to the work won in a successful bet on an event that occurs with a probability $\eta$~\cite{Alhambra2016PRX_probability}.
We can understand this point through a simple example.
Consider resetting the single-qubit maximally mixed state $\rho = \Ident_2/2$ to $\ket0$.
Suppose that our target success probability $\eta$ satisfies $\eta \leq 1/2$.
One successful erasure protocol does nothing: with the probability $1/2$, $\ket0$ is prepared.
Yet standard measures of entropy (including the R\'enyi entropies of one-shot information theory~\cite{PhDRenner2005_SQKD}) attribute to $\rho$ one bit of entropy.
The zero work cost of erasing a maximally mixed qubit with $\eta = 1/2$ can be understood as a sum of (i)~one unit of work expended to reduce the entropy of $\rho$ and (ii)~one unit of work extracted by a successful bet with the success probability $1/2$.

We now present a few textbook quantum states and the work costs of erasing them, in various regimes of $r$ and $\eta$.

\paragraph{Computational-basis states.} 
If $\rho = \proj{0^n}$, then $\HHypr[r][\eta]{\rho} = 0$ for all $r$ and $\eta$.
An optimal protocol is to do nothing---apply no \textsc{reset} operations and no work: $W_r^* = 0$.

If $\rho = \proj{1^n}$, then one resets $\rho$ to $\ket{0^n}$ by flipping each qubit to $\ket{0}$, incurring a minor complexity cost (minor compared to the exponential complexities expected of most $n$-qubit unitaries~\cite{Susskind2018arXiv_ThreeLectures}).
Assume that $n$ is even, for simplicity.
One can flip all $n$ qubits with one layer of $n/2$ two-qubit gates.
Therefore, whenever $r\geq n/2$, $\HHypr[r][\eta]{\rho} = 0$, and $W_r^* = 0$, for all $\eta$.
If, however, $r < n/2$, then one can flip only $2r < n$ qubits.
One must apply \textsc{reset} operations to the other $n - 2r$ qubits.
In this case, $\HHypr[r][\eta]{\rho} = (n - 2r) \log(2)$, and $W_r^* = `(n - 2r) \, k_{\mathrm B}T \log(2)$, for all $\eta$.

\paragraph{The maximally mixed state.}
If $\rho$ is maximally mixed, then $\HHypr[r][\eta]{\rho} = n \log(2)$ for all $r$ and $\eta$.
Suppose that the error tolerance is insignificant: $\eta \approx 1$.
An optimal protocol requires $n$ \textsc{reset} operations, costing the maximum amount of work: $W_r^* = n \, k_{\mathrm B} T \log(2)$.

\paragraph{Greenberger-Horne-Zeilinger (GHZ) state.}
Let
\begin{align}
    \ket{\mathrm{GHZ}} &\coloneqq \frac1{\sqrt 2}`*( \ket{0^n} + \ket{1^n} ) \ ,
\end{align}
and let $\rho = \proj{\mathrm{GHZ}}$.
One can prepare the GHZ state with a Hadamard gate followed by $n-1$ \textsc{CNOT} gates.  
Therefore, $\HHypr[r][\eta]{\rho} = 0$ if $r \geq n$.
If $r < n$, then one can apply $r$ \textsc{CNOT} gates to disentangle $r$ qubits from the other $n-r$ qubits.
The $n-r$ qubits require \textsc{reset} operations.
Absent any significant error tolerance (if $\eta \approx 1$), $\HHypr[r][\eta]{\rho} = (n-r) \log(2)$, and $W_r^* = `(n-r)\,k_{\mathrm B} T \log(2)$.

\paragraph{A Haar-random state.}
Let $\ket\psi$ denote a state chosen randomly according to the Haar measure on the pure $n$-qubit states.
The state $\rho = \proj{\psi}$ is indistinguishable from the maximally mixed state, according to $\poly(n)$-complex observables~\cite{Gross2009PRL_most,Bremner2009PRL_toorandom}.
Therefore, if $r \lesssim \poly(n)$, one cannot distinguish $\rho$ from the maximally mixed state using $\leq r$ gates.
All $n$ qubits must undergo \textsc{reset} operations.
Absent any significant error tolerance (if $\eta \approx 1$), $\HHypr[r][\eta]{\rho} = n \log(2)$, and $W_r^* = n\,k_{\mathrm B}T \log(2)$.

This example constitutes a special case of \cref{mainthm:ComplexityEntropyInRandomCircuits} in \cref{sec-topic:main-random-circuits}.
There, we prove a lower bound on the complexity entropy of a state generated by a random circuit.
Random circuits effect Haar-random unitaries in the large-circuit-depth limit.

\paragraph{Mixture of different-complexity states.}
Let $\rho$ denote a convex mixture of $\proj{0^n}$ and a high-complexity state $\proj{\psi}$: $\rho = (1-\epsilon) \proj{0^n} + \epsilon \proj{\psi}$, wherein $\epsilon \in [0, 1-\eta]$.
The POVM effect $Q = \proj{0^n}$ is a candidate for the optimization~\eqref{eq:setting-defn-DHypr} defining $\HHypr[r][\eta]{\rho} = -\DHypr[r][\eta]{\rho}{\Ident}$. Therefore, $\HHypr[r][\eta]{\rho} \leq -\log(1-\epsilon)$, and $W_r^* \leq -k_{\mathrm B} T \log(1-\epsilon) \approx \epsilon \, k_{\mathrm B} T$ (the approximation holds for small $\epsilon$).

\subsubsection{General protocols with midcircuit \textsc{reset} and \textsc{extract} operations}

Suppose that each qubit's Hamiltonian is degenerate.
Suppose further that a process can be any sequence of the primitives~\labelcref{item:primitive-reset,%
  item:primitive-extract,item:primitive-operation-computation} [\cref{fig:ThermodynamicErasureComplexityConstraints}(a)].
A process may interleave elementary computations with \textsc{reset} and \textsc{extract} operations on the same qubit.
These midcircuit nonunitary operations can reduce the complexity of information erasure.
The significance of midcircuit measurements in \emph{monitored circuits} has only recently been appreciated, both in quantum information and in condensed-matter theory. 
Experimentalists have recently used midcircuit measurements in quantum error correction and in \emph{monitored circuits}~\cite{Graham2023_midcircuit,Koh2023NP_measurementinduced,Zhu2023NP_interactive}.
Along similar lines, quantum phases of matter driven by measurements have been explored~\cite{PhysRevX.9.031009, PhysRevB.99.224307, PhysRevLett.131.200201,Yoshida2021arXiv_decoding, Suzuki2025_complexity}.
We seek to bound $W_r$, defined in~\eqref{eq:main-defn-optimal-work-cost-erasure-fixed-complexity}.
Our strategy is to map a general erasure protocol to a different protocol involving additional auxiliary systems and whose reset operations all happen at the end.
The work cost of the general protocol is then bounded using an argument adapted from the one in \cref{sec:main-erasure-Wcost-degenerate-H-simpleprotocol}.

We must transform a protocol $\mathcal{E}$ into a POVM effect.
Our strategy involves ancillary qubits, each initialized to $\ket0$.
Every midcircuit \textsc{reset} operation on qubit $i$ is performable with a final \textsc{reset} operation: we swap qubit $i$ with an ancilla, then \textsc{reset} the ancilla.
Similarly, every midcircuit \textsc{extract} operation on qubit $i$ is performable with an initial \textsc{extract} operation: we perform an \textsc{extract} operation on an ancilla, then swap the ancilla with qubit $i$.
In both processes, every ancilla begins and ends in $\ket0$ (recall that the \textsc{extract} operation is performable only on a qubit in $\ket0$).
We assume that the \textsc{swap} gate belongs to $\mathcal{T}$.
\begin{center}
  \includegraphics{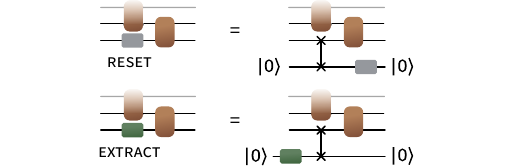}
\end{center}
Consider any candidate protocol $\mathcal{E}$ for the optimization~\eqref{eq:main-defn-optimal-work-cost-erasure-fixed-complexity}.
Suppose that $\mathcal{E}$ consists of $m_1$ \textsc{reset} operations, $m_2$ \textsc{extract} operations, and $\leq r$ gates.
$\mathcal{E}$ has the complexity cost $C(\mathcal{E}) \leq r$ and the work cost $W(\mathcal{E}) = (m_1 - m_2) \, k_{\mathrm B} T \log(2)$.
Let us transfer all the \textsc{reset} and \textsc{extract} operations in $\mathcal{E}$ to ancillas.
We obtain a protocol $\mathcal{E}'$ of $\leq r + m_1 + m_2$ gates (each \textsc{swap} contributes one gate), preceded by \textsc{extract} operations and followed by \textsc{reset} operations.
$\mathcal{E}'$ acts on $n + m_1 + m_2$ qubits and has the complexity cost $C(\mathcal{E}') \leq r + m_1 + m_2$.
Immediately after the initial \textsc{extract} operations, the input state is
\begin{align}
  \tilde{\rho}_{m_1,m_2} \coloneqq \rho \otimes \proj{0^{m_1}} \otimes \Ident_2^{\otimes m_2}/2^{m_2} \ .
\end{align}
Subsequently, $\mathcal{E}'$ outputs a state $\mathcal{E}'(\tilde{\rho}) \eqqcolon \tilde{\rho}'$ satisfying
\begin{align}
    F^2 \bm{(} \tr_{m_1}`( \tilde{\rho}' ), \proj{0^{n+m_2}} \bm{)} \geq \eta \ .
\end{align}

We now adapt the argument used for protocols whose \textsc{reset} operations happen at the end.
Suppose the set $\mathcal{G}$ of elementary operations, used to define $\Mr[r]$ in~\eqref{eq:setting-defn-Mr}, equals the set of elementary computations: $\mathcal{G} = \mathcal{T}$.
Define the POVM effect
\begin{align}
  Q \coloneqq \mathcal{E}'^\dagger `*( \proj{0^n} \otimes \Ident_2^{\otimes m_1} \otimes \proj{0^{m_2}} ) \ .
\end{align}
Then 
\begin{align}
  \tr`*(Q\tilde{\rho})
  &= \tr \bm{(} \proj{0^{n+m_2}} \, \tr_{m_1}`(\tilde\rho') \bm{)}
  \geq \eta \ .
\end{align}
Hence, $Q$ is a candidate for the optimization~\eqref{eq:setting-defn-DHypr} defining $\HHypr[r+m_1+m_2][\eta]{\tilde{\rho}} = - \DHypr[r+m_1+m_2][\eta]{\tilde{\rho}}{\Ident}$.
Therefore,
\begin{align}
  \HHypr[r+m_1+m_2][\eta]{\tilde{\rho}}
  &\leq \log \bm{(} \tr(Q) /\tr(Q\tilde{\rho}) \bm{)}
  \nonumber\\
  &\leq m_1\log(2) + \log `*( 1/\eta ) \ ,
\end{align}
so
\begin{align}
  \beta W`*(\mathcal{E}) \geq \HHypr[r+m_1+m_2][\eta]{\tilde{\rho}} - m_2\log(2) - \log`*( 1/\eta ) \ .
  \label{eq:work-cost-midcircuit-operations}
\end{align}
We use the shorthand notation
\begin{align}
  g^{r,\eta}(\rho) \coloneqq  \inf_{m_1,m_2\geq 0} `*{
  \HHypr[r+m_1+m_2][\eta]{\tilde{\rho}_{m_1,m_2}} - m_2\log(2) } 
\end{align}
and take the infimum of~\eqref{eq:work-cost-midcircuit-operations} over all protocols $\mathcal{E}$.
Combining the previous two equations yields
\begin{align}
  \beta W_r \geq g^{r,\eta}(\rho) - \log`*(1/\eta) \ .
\end{align}

The reverse direction might not hold, generally.
For any $m_1,m_2\geq 0$,~\eqref{eq:main-optimal-Wrstar} guarantees the existence of a protocol $\mathcal{E}$ that has two properties.
First, $\mathcal{E}$ consists of $\leq r+m_1+m_2$ computational gates followed by $w$ \textsc{reset} operations.
Second, $\mathcal{E}$ maps $\tilde{\rho}$ approximately to $\ket{0^{n+m_1+m_2}}$, at the work cost $w\,\log(2) \leq \HHypr[r+m_1+m_2][\eta]{\tilde{\rho}}$.
Yet, such a protocol may involve the $m_1 + m_2$ ancillas in a computation inequivalent to any computation on $n$ qubits, even if the latter computation includes midcircuit \textsc{reset} and \textsc{extract} operations.

Overall, we have bounded the optimal work cost $W_r$ of erasing $\rho$ using $\leq r$ computational gates.
\begin{maintheorem}[Bound of optimal work cost]
  \noproofref
  $W_r$ obeys
  \begin{align}
    g^{r,\eta}(\rho) - \log`*( 1/\eta ) \leq \beta W_r \leq \beta W_r^* \leq \HHypr[r][\eta]{\rho} \ .
  \end{align}
\end{maintheorem}

A large gap may separate $W_r$ and $W_r^*$ if, with few pure ancillas, one can uncompute a pure state $\ket\chi$ to $\ket{0^n}$, using a circuit much shorter than is possible without ancillas.
In other words, an erasure protocol may benefit from early \textsc{reset} operations: the resulting pure qubits may be used as ancillas for uncomputing the remaining state.
For instance, there might exist an $n$-qubit state $\ket\chi$, and an $\ell\ll n$, with the following two properties.
First, $\ket{\chi}\otimes\ket{0^\ell}$ is much less complex than 
$\ket\chi$: $C`*(\ket\chi\otimes\ket{0^\ell}) \ll C`*(\ket\chi)$.
Second, in the absence of ancillas, $r$ gates fail to extract any $\ket{0}$'s from $\ket\chi$, if $r = C`*(\ket\chi\otimes\ket{0^\ell})$.
Midcircuit \textsc{reset} operations would be able to lower the work cost of erasing $\proj\chi \otimes `*(\Ident_2^{\otimes \ell}/2^\ell)$ with at most $r=C`*(\ket\chi\otimes\ket{0^\ell})$ gates, we now show.
The following protocol would employ midcircuit \textsc{reset} operations and cost work $k_{\rm B} T \ell\log(2)$: \textsc{reset} the $\ell$ mixed qubits, paying $\ell$ units of work.
Then, uncompute the remaining state, $\ket\chi\otimes\ket{0^\ell}$, to $\ket{0^{n+\ell}}$, using $r$ gates.
The total work cost would be $\beta^{-1} \ell\log(2)$.
In contrast, consider a protocol whose \textsc{reset} operations happen at the end.
$r$ operations would not extract any $\ket{0}$'s. 
Such a protocol would require $n+\ell$ units of work: $\beta W_r^* = (n + \ell)\log(2) \gg \beta W_r$.

\subsection{Erasure of qubits with a general product Hamiltonian}
\label{sec:erasure-nontrivial-H}

We now consider a more general setup: each qubit has a not-necessarily-degenerate Hamiltonian $H_i$, as per \cref{sec:setting}.
The set $\mathcal{T}$ of elementary computations can no longer be the set of all two-qubit unitary gates: some gates would require work.

We require that $\mathcal{T}$ contain only completely positive, trace-preserving maps $\mathcal{E}$ that satisfy the 
technical property
\begin{align}
  \mathcal{E}(\Gamma)
  = \Gamma\ , \; \; \text{wherein} \; \;
    \Gamma = \ee^{-\beta \sum_i H_i} = \bigotimes_{i=1}^n \Gamma_i \ .
  \label{eq:main-erasure-nontrivialHi-condition-Gibbssubpres}
\end{align}
The operators $\Gamma_i$ are defined in~\eqref{eq:setting-gammai-Gammai-Zi}.
Examples of such operations include the two-qubit thermal operations and the two-qubit Gibbs-preserving maps (\cref{sec:setting}).
We assume~\eqref{eq:main-erasure-nontrivialHi-condition-Gibbssubpres} to ensure a thermodynamically consistent accounting of work costs: elementary computations should cost no work.

For concreteness, we describe another class of operations that satisfy condition~\eqref{eq:main-erasure-nontrivialHi-condition-Gibbssubpres}.
These operations can model computations on systems governed by product Hamiltonians, as well as crude control over interactions with a heat bath.
Consider an operation, on qubits $i$ and $j$, of the form
\begin{align}
\rho\mapsto
  \mathcal{E}(\rho)
  &= (1-q) U_{i,j} \, \rho \, U_{i,j}^\dagger + q \, \gamma_i \otimes \gamma_j \ .
    \label{eq:main-erasure-general-example-comp-operation}
\end{align}
$q$ denotes a probability.
$U_{i,j}$ denotes an energy-conserving unitary: $`*[U_{i,j}, H_i + H_j] = 0$.
$\gamma_i$ $`*(\gamma_j)$ denotes the thermal state of qubit $i$ $`*(j)$, defined in~\eqref{eq:setting-gammai-Gammai-Zi}.
With a probability $q$, $\mathcal{E}$ replaces its input with a thermal state. Otherwise, $\mathcal{E}$ effects an energy-conserving unitary.
$\mathcal{E}$ can model an interaction between qubits and a heat bath over short times, during which the bath partially thermalizes the qubits.
This interaction may be exploited to implement operations that do not strictly conserve energy, i.e., that do not commute with the total Hamiltonian.
Subjecting $\gamma_i \otimes \gamma_j$ to any operation~\eqref{eq:main-erasure-general-example-comp-operation} shows directly that the operation obeys the condition~\eqref{eq:main-erasure-nontrivialHi-condition-Gibbssubpres}.

Consider any protocol $\mathcal{E}$ in $\mathcal{T}$ that has two properties: first, $\mathcal{E}$ consists of $\leq r$ computational gates followed by \textsc{reset} operations.
Second, $\mathcal{E}$ satisfies $F^2 \bm{(} \mathcal{E}(\rho), \proj{0^n} \bm{)} \geq \eta$.
What is the optimal work cost $W_r^*$ of such a protocol?
The following theorem bounds $W_r^*$ in terms of the complexity relative entropy, generalizing~\eqref{eq:main-optimal-Wrstar} to product Hamiltonians.
\begin{maintheorem}[Optimal work cost for product Hamiltonians]
  \label{thm:not-necessarily-degenerate-Hamiltonians}
  The optimal work cost $W_r^*$ is quantified by the complexity relative entropy as
  \begin{align}
    - \DHypr[r][\eta][\mathcal{T}]{\rho}{\Gamma} -\log`*(1/\eta)
    \leq W_r^*
    \leq -\DHypr[r][\eta][\mathcal{T}]{\rho}{\Gamma} \ .
  \end{align}
  The subscript $\mathcal{T}$ signifies that the set $\Mr[r]$ in~\eqref{eq:setting-defn-Mr} is defined in terms of the gate set $\mathcal{G} = \mathcal{T}$.
\end{maintheorem}
\noindent The proof, provided in \cref{appx-topic:erasure-nontrivial-Ham}, resembles the proof for degenerate Hamiltonians.

\section{Work costs of general state transformations}
\label{sec:main-general-state-transformations}

We now turn to our paper's motivating problem.
Consider a system that begins in a state $\rho$ and then undergoes any evolution of complexity at most $r \geq 0$.
$\rho$ will transform into a possibly different state. 
Which transformations appear possible, according to an observer who can distinguish states using only some computational power $R \geq 0$?
Alternatively, how much work must one invest to effect an evolution whose complexity is $\leq r$ and whose output the observer cannot distinguish from a given state $\rho'$?

\paragraph{Setting.}

To formalize the questions above, we consider an $n$-qubit system initialized in a state $\rho$. The state undergoes an evolution $\mathcal{E}$ composed of primitive operations (\cref{sec:setting}).
$C(\mathcal{E})$ and $W(\mathcal{E})$ quantify the complexity cost and work cost of $\mathcal{E}$, respectively.
To establish whether a transformation $\rho \to \rho'$ appears to result from $\mathcal{E}$, imagine a referee who must distinguish the output $\mathcal{E}(\rho)$ from the reference $\rho'$.
The referee can effect only operations of complexities $\leq R$.
Using a measurement operator of complexity $\leq R$, the referee performs a hypothesis test of the form described in \cref{mainthm:DHypr-interpretation-hypo-test}, subject to the type~I and type~II error constraints specified with $\eta\in (0,1]$ and $\delta\in (0,1]$, respectively.
According to the referee, $\rho \to \rho'$ may result from $\mathcal{E}$ precisely if there exists no hypothesis test by which the referee can correctly accept $\mathcal{E}(\rho)$ with a probability $\geq \eta$ and incorrectly accept $\rho'$ with a probability $\leq \delta$.
Equivalently, by \cref{mainthm:DHypr-interpretation-hypo-test}, $\rho \to \rho'$ may result from $\mathcal{E}$ precisely if
\begin{align}
  \DHypr[R][\eta]`*{\mathcal{E}(\rho)}{\rho'} \leq -\log`*(\delta/\eta) \ .
  \label{eq:general-work-cost-thermo-process-condition-success}
\end{align}

Condition~\eqref{eq:general-work-cost-thermo-process-condition-success} can simplify, evoking the previous section's fidelity condition, if the referee lacks computational restrictions.
Depending on the POVMs performable, the referee could perform the optimal measurement for distinguishing between $\rho$ and $\rho'$.
The measurement's success probability equals the trace distance $\frac{1}{2} \onenorm{\rho - \rho'}$, which is small when the fidelity is large~\cite{BookNielsenChuang2010}.
Hence condition~\eqref{eq:general-work-cost-thermo-process-condition-success} effectively reduces to a constraint, as in the previous section, on $F$.

Let $\mathcal{E}$ denote any evolution, of a complexity $\leq r$, that transforms $\rho$ into any state $\mathcal{E}(\rho)$ indistinguishable from $\rho'$ by the referee.
The minimum work cost of any such evolution is
\begin{multline}
  \mathcal{W}^{r, \, R, \, \eta, \, \delta}`*[ \rho \to \rho' ] \coloneqq
  \\
 \inf`*{ W`*(\mathcal{E})\,:\;
   C`*(\mathcal{E}) \leq r \,,\;
   \DHypr[R][\eta]`*{\mathcal{E}(\rho)}{\rho'}
   \leq -\log`*(\delta/\eta) } \ .
 \label{eq:defn-general-work-cost-thermodynamic-process}
\end{multline}
This quantity, in its general form, does not appear amenable to simple analysis.
In the following, we specialize to situations where the referee's computational power is extremely high or low.
In these situations, we find examples where one of the conditions in~\eqref{eq:defn-general-work-cost-thermodynamic-process} is expected to be tractable, as is needed to upper-bound the work cost.

\paragraph{Referee with extremely high computational power.}
First, suppose that $R$ is extremely large.
The first condition in~\eqref{eq:defn-general-work-cost-thermodynamic-process} can be tractable: 
under certain conditions---see \cref{appx-sec:Dhypr-asymptotic-properties}---$\DHypr[R][\eta]`*{\mathcal{E}(\rho)}{\rho'}$ can approximate the hypothesis-testing relative entropy $\DHyp[\eta]`*{\mathcal{E}(\rho)}{\rho'}$ [defined in~\eqref{eq:setting-defn-DHyp}].
The latter can be evaluated via a semidefinite program (SDP), in principle~\cite{Dupuis2013_DH}.
Hence one can, in principle, evaluate each side of condition~\eqref{eq:general-work-cost-thermo-process-condition-success}, as is necessary to calculate the work cost~\eqref{eq:defn-general-work-cost-thermodynamic-process}.
(As a caveat, the SDP calculation's time and memory grow exponentially in $n$.)

\paragraph{Referee with extremely low computational power.}
Now, suppose that $R$ is extremely small.
The referee may have great difficulty distinguishing the output $\mathcal{E}(\rho)$ from the reference $\rho'$.
Consider the extreme case where $R=0$: the referee may perform only the local projective measurements in~\eqref{eq:setting-defn-Mrzero}.
We can apply a general upper bound: denote by $\sigma$ and $\tau$ any $n$-qubit states; and, by $\sigma_j$ and $\tau_j$, the respective reduced states of qubit $j$.
The complexity relative entropy obeys
\begin{align}
    \DHypr[R=0][\eta]`*{\sigma}{\tau} \lesssim \sum_{j=1}^n \DHyp[\eta]`*{\sigma_j}{\tau_j} \ .
    \label{eq:main-bound-cplx-rel-entr-r-eq-zero}
\end{align}
The approximation hides error terms that vanish in the limit as $\eta\to1$.
One can apply this inequality to evaluate~\eqref{eq:general-work-cost-thermo-process-condition-success} and thereby certify that $\mathcal{E}(\rho)$ is indistinguishable from $\rho'$ to a referee who lacks computational power and tolerates little error.
[Inequality~\eqref{eq:main-bound-cplx-rel-entr-r-eq-zero}~follows from \cref{thm:bound-DHypr-DHyp-tensor-products} in \cref{appx:GeneralConstructionComplexityEntropy}.]

\paragraph{Simple transformation.}
Suppose that the transformation $\mathscr{E}$ involves few primitive operations: $r < O(n)$.
Also, suppose that the system is 1D.
$\mathscr{E}$ can change the system's entanglement only by an amount proportional to $r$.
This entanglement is related to the complexity entropy at small $r$ (see \cref{sec:quantum-info--entgl-bounds} for details).
Hence we expect the first condition in~\eqref{eq:defn-general-work-cost-thermodynamic-process} to relax to a constraint on entanglement, which we expect to be more tractable.

\section{Information-theoretic features of the complexity entropy}
\label{sec:InformationTheoreticApplicationsComplexityEntropy}

The complexity relative entropy~\eqref{eq:setting-defn-DHypr} and the complexity entropy~\eqref{eq:setting-defn-HHypr} are additions to a large cast of information-theoretic entropy measures~\cite{BookTomamichel2016_Finite,PhfEntropyZoo}.
Intuitively, the complexity entropy quantifies a state's apparent randomness to an agent who can implement only limited-complexity measurement effects.
Unlike common entropies, the complexity entropy lacks unitary invariance.
Indeed, a computationally limited agent may need more work to erase a state after it has evolved under a complex unitary (cf.\@ examples in \cref{sec:main-erasure-Wcost-degenerate-H}).

In this section, we present the complexity entropy's properties, as well as its applications to information-theoretic tasks.
Most generally, the complexity (relative) entropy is defined for an abstract family $\{\Mr[r]\}$ of sets.
Each $\Mr[r]$ denotes the set of POVM effects of complexities at most $r \geq 0$.
This formalism generalizes the above, $n$-qubit constructions to arbitrary discrete quantum systems.
In \cref{appx:GeneralConstructionComplexityEntropy}, we present the general definition of the complexity (relative) entropy and prove the properties stated below.
Here, to achieve a simpler and more concrete presentation, we focus on an $n$-qubit system and on the family $`{\Mr[r]}$ defined by~\eqref{eq:setting-defn-Mr}.

For information-theoretic applications, it is convenient to measure entropy in units of bits, rather than in nats $[ 1\ \text{bit} = \log(2)\ \text{nats} ]$.
In this section, we denote with an overbar entropies expressed in units of bits; these entropies are related to their counterparts by the factor $\log(2)$:
\begin{align}
    \begin{split}
    \overline{H}(\rho) &= H(\rho)/\log(2)\ , \\
    \bDHyp[\eta]{\rho}{\sigma} &= \DHyp[\eta]{\rho}{\sigma} / \log(2)\ , \\
    \bHHyp[\eta]{\rho} &= \HHyp[\eta]{\rho} / \log(2)\ , \\
    \bDHypr[r][\eta]{\rho}{\sigma} &= \DHypr[r][\eta]{\rho}{\sigma} / \log(2)\ , \; \;
    \text{etc.}
    \end{split}
\end{align}
The factor $[\log(2)]^{-1}$ changes natural logarithms to base-2 logarithms in the entropies' definitions.
For instance, the von Neumann entropy $H(\rho) = -\tr`*(\rho\log\rho)$, expressed in units of bits, is $\overline{H}(\rho) = -\tr`*(\rho\log_2\rho)$.

\subsection{Overview of elementary properties}

In this subsection, $\rho$ denotes any quantum state, and $\Gamma$ any positive-semidefinite operator, defined on the $n$-qubit Hilbert space.
Let $r \geq 0$ and $\eta \in (0,1]$.
The complexity relative entropy inherits some properties from the hypothesis-testing relative entropy~\eqref{eq:setting-defn-DHyp}.
For example, $\bDHypr[r][\eta]{\rho}{\Gamma}$, like $\bDHyp[\eta]{\rho}{\Gamma}$, monotonically decreases as $\eta$ increases.
The greater an agent's error intolerance (the greater the $\eta$), the more mixed $\rho$ appears $[$the greater $\HHypr[r][\eta]{\rho}$ is$]$.
The complexity (relative) entropy also has properties that reflect its sensitivity to state complexity.
For example, $\bDHypr[r][\eta]{\rho}{\Gamma}$ monotonically increases as $r$ increases.
The greater an agent's computational power (the greater the $r$), the less mixed $\rho$ appears $[$the less $\HHypr[r][\eta]{\rho}$ is$]$.
These monotonicity properties imply that $\rho$ is more distinguishable from $\Gamma$ to an observer who has greater computational power and has a higher tolerance for guessing $\Gamma$ when given $\rho$.

Furthermore, the complexity (relative) entropy has the same range as the hypothesis-testing (relative) entropy:  
 \begin{subequations}
  \begin{align}
    \bDHypr[r][\eta]{\rho}{\Gamma}
    &\in `*[ -\log_2 \bm{(} \tr(\Gamma) \bm{)} \,,\; 
   - \log_2`*( \norm{\Gamma} ) ]
    \ , \; \; \text{and}
    \\
    \bHHypr[r][\eta]{\rho}
    &\in [0 \,,\; n] \ .
  \end{align}
 \end{subequations}
The complexity relative entropy, like standard relative entropies, enjoys a scaling property in its second argument: for all $\lambda>0$, $\bDHypr[r][\eta]{\rho}{\lambda\Gamma} = \bDHypr[r][\eta]{\rho}{\Gamma} - \log_2(\lambda)$.

By construction, the complexity (relative) entropy is not unitarily invariant.
On similar grounds, it does not satisfy a data-processing inequality.
Indeed, unitary evolution is a special case of data processing. Furthermore, a unitary can increase or decrease a pure state's complexity.
Therefore, a unitary can increase or decrease a state's complexity entropy.
Nevertheless, the complexity (relative) entropy monotonically increases (decreases) under partial traces.
Tracing out $k$ qubits from $\rho$ yields
\begin{subequations}
  \begin{align}
    \bDHypr[r][\eta]{\rho}{\Gamma}
    & \geq \bDHypr[r][\eta]`\big{\tr_k`(\rho)}{\tr_k`(\Gamma)} \ , \\
    \bHHypr[r][\eta]{\rho}
    &\leq \bHHypr[r][\eta]`\big{\tr_k`(\rho)} + k \ .
  \end{align}
\end{subequations}
The complexity (relative) entropy is greater (less) than or equal to the hypothesis-testing (relative) entropy: $\bHHypr[r][\eta]{\rho} \geq \bHHyp[\eta]{\rho}$ and $\bDHypr[r][\eta]{\rho}{\Gamma} \leq \bDHyp[\eta]{\rho}{\Gamma}$.
In the large-$r$ limit, one would expect the complexity (relative) entropy to converge to the hypothesis-testing (relative) entropy.
Yet, an exact convergence might not occur: some POVM effects might not be well approximated by any effect in $\Mr[r]$.
This is due to our choice of $\Mr[r]$ and to the lack of ancillas in our computational model. 
Nevertheless, we show in \cref{appx-sec:Dhypr-asymptotic-properties} that, under certain conditions, the complexity (relative) entropy approximates the hypothesis-testing (relative) entropy up to constant error terms.
Alternatively, one might hope to ensure convergence to the hypothesis-testing (relative) entropy by allowing the use of one ancillary qubit in implementations of measurement effects, as introduced in Ref.~\cite{Brandao_21_Models}.

\subsection{Bounds from well-known complexity measures}

To what extent does the complexity entropy $\bHHypr[r][\eta]{\psi}$ quantify the complexity of a pure state $\psi \equiv \proj{\psi}$?
We bound the complexity entropy in terms of two complexity measures: the \emph{strong complexity} of Ref.~\cite{Brandao2021PRXQ_models} and the \emph{approximate circuit complexity} (\cref{defn:approx-state-complexity} in \cref{appx:sec:complexity-of-measurement-effects}).
We describe the bounds qualitatively here, deferring rigorous statements, and proofs thereof, to \cref{appx-topic:complexity-entropy-relationship-to-state-complexity}.

The complexity entropy reflects approximate circuit complexity as follows.
Assume that the set $\mathcal{G}$, used in the definition~\eqref{eq:setting-defn-Mr}, contains only unitary gates.
Consider beginning with some reference state, then applying gates in $\mathcal{G}$ to prepare any state $\epsilon$-close to $\ket{\psi}$ in trace distance.
Let $C^{\epsilon}`*(\ket\psi)$ denote the least number of gates in any such process.
$C^{\epsilon}`*(\ket\psi)$ is the $\epsilon$-approximate circuit complexity of $\ket\psi$.
For all $r\geq C^{\epsilon}`*(\ket\psi)$, $\bHHypr[r][1-\epsilon^2]{\psi} \approx 0$.
Furthermore, $\bHHypr[r][1-\epsilon^2]{\psi} \not\approx 0$ for all $r < C^{2\epsilon}`*(\ket\psi)$ (see \cref{thm:complexity-entropy-pure-state-complexity}).
The complexity relative entropy also obeys an upper bound involving the \emph{strong complexity} of Ref.~\cite{Brandao2021PRXQ_models} (see \cref{thm:cplxrelentr-bound-strongcomplexity}).

\subsection{Bounds from entanglement}
\label{sec:quantum-info--entgl-bounds}

We lower-bound an $n$-qubit state's complexity entropy using a measure of the state's entanglement.
Consider a 1D chain $S$ of $n \geq 2$ qubits: $S = S_1 S_2 \ldots S_n$.
Let $\rho$ denote any state of $S$.
We define the following entanglement measure:
\begin{align}
  E(\rho) \coloneqq \frac1{n-1} \sum_{j=1}^{n-1} \II[\rho]{S_1 \ldots S_j}{S_{j+1} \ldots S_n} \ .
  \label{eq:entanglement-measure-1d-chain-mutual-infos}
\end{align}
$\II[\rho]{A}{B} \coloneqq \HH[\rho]{A} + \HH[\rho]{B} - \HH[\rho]{AB}$ denotes the quantum mutual information defined in terms of the von Neumann entropy $\HH[\rho]{X} \coloneqq -\tr`*(\rho\log\rho)$.
The mutual information quantifies all correlations---classical and quantum---including correlations due to entanglement.

Assume that the operations $\mathcal{G}$ in~\eqref{eq:setting-defn-Mr} are unitary gates that can act nontrivially only on two neighboring qubits (the gates are \emph{geometrically local}).
We upper-bound the change in $E(\rho)$ under one gate in $\mathcal{G}$.
\begin{mainproposition}[Change in $E(\rho)$ under a two-qubit unitary]
    \label{mainprop:change-in-entanglement-measure}
    Let $\rho$ denote any quantum state of a 1D chain $S$ of $n \geq 2$ qubits.
    Let $U$ denote any unitary that can act nontrivially only on two neighboring qubits in $S$.
    Then
    \begin{align}
        \abs`*{ E`*(U \rho U^\dagger) - E`*(\rho) } \leq \frac{8\log(2)}{n-1} \ .
        \label{eq:mainprop-change-in-entgl-measure---bound}
    \end{align}
\end{mainproposition}
To prove \cref{mainprop:change-in-entanglement-measure}, one observes that $U$ can alter only one mutual information in~\eqref{eq:entanglement-measure-1d-chain-mutual-infos}---the mutual information associated with (the bipartition that separates) the qubits on which $U$ can act nontrivially.
In \cref{appx-topic:entanglement-bounds}, we generalize~\eqref{eq:mainprop-change-in-entgl-measure---bound} to account for the \emph{potential entangling power}~\cite{Eisert2021PRL_entangling} of $\mathcal{G}$.
This power quantifies the entanglement generable by one gate in $\mathcal{G}$.

By repeatedly applying~\eqref{eq:mainprop-change-in-entgl-measure---bound}, we upper-bound the change in $E(\rho)$ under $\leq r$ gates in $\mathcal{G}$.
We then lower-bound the complexity entropy $\HHypr[r][\eta]{\rho}$ in terms of $E(\rho)$ as
\begin{align}
  \HHypr[r][\eta]{\rho} \geq \frac1{\eta} `*[ E(\rho) - r\,\frac{8\log(2)}{n-1} + H(\rho) + \text{error terms} ] \ .
  \label{eq:maintext-entanglement-bound-on-cplx-entropy}
\end{align}
The error terms depend on $\eta$ and $n$ and vanish in the limit as $\eta\to1$.
See \cref{appx-topic:entanglement-bounds} for details.
In \cref{appx-sec:Hamiltonian-dynamics}, we also investigate similar bounds that arise from natural dynamics under local Hamiltonians.
This setting is particularly relevant to locally interacting systems.

\subsection{Evolution of the complexity entropy under random circuits}
\label{sec-topic:main-random-circuits}

Consider an $n$-qubit circuit generated with gates sampled Haar-randomly from all two-qubit gates.
The circuit effects a unitary.
As the circuit depth grows, the unitary's complexity grows, until saturating at a value exponential in $n$.
The complexity growth is at least sublinear~\cite{Brandao2021PRXQ_models} and, for some nonrobust complexity measures~\cite{Haferkamp2022NPhys_linear,Li2022arXiv_short,Haferkamp2023arXiv_moments}, is linear.
We ask how well the complexity entropy tracks the complexity of a pure state evolving under a random circuit.

We consider \emph{brickwork circuits}~\cite{Haferkamp_21_Linear,Haferkamp2022Q_random}.
A brickwork circuit consists of staggered layers of nearest-neighbor two-qubit gates.

The following proposition reveals the behaviors of the complexity entropy at a fixed complexity scale $r$: under short circuits, the complexity entropy vanishes.
Around a circuit depth $t$ commensurate with $r$, the complexity entropy transitions to a near-maximal value (\cref{fig:ComplexityEntropyRandomCircuits}).
We bound the interval of $t$ values in which the transition happens to values satisfying $t\sim r$.
We prove the proposition using a strategy borrowed from Ref.~\cite{Brandao2021PRXQ_models}, which contains a similar proposition about the strong complexity.
\begin{mainproposition}[Transition of the complexity entropy under random circuits]
  \label{mainthm:ComplexityEntropyInRandomCircuits}
  Consider a depth-$t$, random $n$-qubit circuit $V$ whose gates act in a brickwork layout.
  Let $\ket\psi \coloneqq V\ket{0^n}$, $r\geq 0$, and $\eta \in (0,1]$.
  If the circuit is short, such that $t\leq O(r)$, then
  \begin{align}
    \bHHypr[r][\eta]`*{ \proj{\psi} } = 0 \ .
    \label{eq:mainthm-complexity-entropy-random-circuits-equals-zero}
  \end{align}
  If the circuit is long, such that $t \geq \Omega(r n^3 )$, then
  \begin{align}
    \bHHypr[r][\eta]`*{ \proj{\psi} } \geq n - \log_2`*(2/\eta) \ ,
    \label{eq:mainthm-complexity-entropy-random-circuits-lower-bound}
  \end{align}
  except with a probability $e^{-\Omega(n)}$ over the sampling of $V$.
\end{mainproposition}

The proof of~\eqref{eq:mainthm-complexity-entropy-random-circuits-equals-zero} is immediate: one can use the circuit $V$ to construct a candidate $Q = V^\dagger\proj{0^n} V \in\Mr[r]$ for the optimization~\eqref{eq:setting-defn-DHypr} that defines $\bHHypr[r][\eta]`*{ \proj{\psi} }$.
The proof of~\eqref{eq:mainthm-complexity-entropy-random-circuits-lower-bound} is presented in \cref{appx-topic:complexity-entropy-random-circuits}.

\subsection{Data compression under complexity limitations}
\label{sec:data-compression-under-computational-limitations}

The complexity entropy naturally quantifies the optimal efficiency of data compression under computational limitations.
Let $\rho$ denote any $n$-qubit state; $m$, a number of qubits; and $\epsilon \in [0,1]$, an error parameter.
To perform \emph{data compression under complexity limitations}, one seeks a unitary $U$, composed from $\leq r$ two-qubit gates, such that some subset $\mathcal{W}\subset `{1, 2, \ldots, n}$ of $\abs{\mathcal{W}} = m$ qubits satisfies
\begin{align}
  F^2 \bm{(} \tr_{\mathcal{W}}( U \rho U^\dagger ), \proj{0^{n-m}} \bm{)} \geq 1 - \epsilon \ .
\end{align}
One \emph{compresses} $\rho$ onto the qubits of $\mathcal{W}$, with an accuracy $1-\epsilon$ and using $\leq r$ gates.
For background, consider an agent who lacks complexity limitations (who operates in the limit as $r\to\infty$).
The hypothesis-testing entropy quantifies the optimal one-shot data-compression size of $\rho$~\cite{Dupuis2013_DH}
\begin{align}
  m_{\mathrm{optimal}} \approx \bHHyp[1-\epsilon]{\rho} \ .
\end{align}
Every protocol for data compression under complexity limitations can be mapped to a ``simple'' protocol for erasure.
The erasure protocol consists of computational gates followed by \textsc{reset} operations [see \cref{fig:ThermodynamicErasureComplexityConstraints}(b)].
The least number $m_{\mathrm{optimal}, r}$ of qubits onto which $\rho$ can be compressed, with an accuracy $1-\epsilon$ and using $\leq r$ gates, satisfies
\begin{align}
  \bHHypr[r][1-\epsilon]{\rho} - \log_2`*( 1/ `*[ 1-\epsilon ] )
  &\leq m_{\mathrm{optimal}, r}
  \leq \bHHypr[r][1-\epsilon]{\rho} \ .
  \label{eq:optimal-qubit-compression-number}
\end{align}
The complexity entropy $\bHHypr[r][1-\epsilon]{\rho}$ is defined with respect to the set $\mathcal{G}$ of all two-qubit gates.
In \cref{appx-topic:data-compression-complexity-limitations}, we prove~\eqref{eq:optimal-qubit-compression-number}.

\subsection{Complexity conditional entropy}

In information theory, conditional-entropy measures quantify the randomness of a system $A$, as apparent to an observer given access to a system $B$ that may share correlations with $A$.
The conditional entropy appears throughout classical and quantum information theory.
Applications include communication~\cite{Slepian1973TIT_noiseless,Abeyesinghe2009_MotherProtocols,Dupuis2014CMP_decoupling}, thermodynamic erasure with side information~\cite{delRio2011Nature}, and quantum entropic uncertainty relations~\cite{Coles2017RMP_entropic}.
Consider defining a conditional entropy $\Hbase{\bHSym}{*}[\rho]{A}[B]$ of a state $\rho_{AB}$.
A standard technique is to identify an appropriate relative entropy $\bDD_*{}{}$ and to set $\Hbase{\bHSym}{*}[\rho]{A}[B] \coloneqq -\bDD_*{\rho_{AB}}{\Ident_A\otimes\rho_B}$.
Here, $\rho_B \coloneqq \tr_A`(\rho_{AB})$ is the reduced state of $\rho$ on $B$~\cite{BookTomamichel2016_Finite}.

$\Hbase{\bHSym}{*}[\rho]{A}[B]$ reflects one's ability to distinguish $\rho_{AB}$ from a state that provides the same knowledge about $B$ ($B$ is in state $\rho_B$) but minimal knowledge about $A$ ($A$ is maximally mixed and uncorrelated with $B$).
In this interpretation, $\bDD_*{}{}$ quantifies the notion of distinguishability.
Conditional entropies defined as above include the conditional min- and max-entropies and the conditional R\'enyi-$\alpha$ entropies~\cite{BookTomamichel2016_Finite}.

We employ the complexity relative entropy~\eqref{eq:setting-defn-DHypr} to define the complexity conditional entropy.
Denote any arbitrary bipartite state by $\rho_{AB}$.
For all $r\geq 0$ and $\eta \in (0,1]$, the \emph{complexity conditional entropy} of $\rho_{AB}$ is
\begin{align}
  \bHHyprc[r][\eta][\rho]{A}[B] \coloneqq -\bDHypr[r][\eta]{\rho_{AB}}{\Ident_A\otimes\rho_B} \ .
  \label{eq:main-defn-conditional-complexity-entropy}
\end{align}

We prove the following properties of the complexity conditional entropy.
It, like the complexity entropy, decreases monotonically as $r$ increases and increases monotonically as $\eta$ increases.
Furthermore, the complexity conditional entropy is bounded in terms of the Hilbert-space dimensionality of $A$, $d_A$, as
\begin{align}
    -\log_2(d_A) \leq \bHHyprc[r][\eta][\rho]{A}[B]
    \leq \log_2(d_A) \ .
    \label{eq:cndtl-cplx-entropy-general-bounds}
\end{align}
The complexity conditional entropy also exhibits strong subadditivity: for every tripartite state $\rho_{ABC}$,
\begin{align}
    \bHHyprc[r][\eta][\rho]{A}[BC] \leq \bHHyprc[r][\eta][\rho]{A}[B] \ .
    \label{eq:cndtl-cplx-entropy-subadditivity}
\end{align}
The strong subadditivity follows from the complexity relative entropy's monotonicity under partial traces.
We prove~\eqref{eq:cndtl-cplx-entropy-general-bounds} and~\eqref{eq:cndtl-cplx-entropy-subadditivity} in \cref{appx-topic:complexity-conditional-entropy}.
The complexity conditional entropy has operational significance in a variant of the information-theoretic task of \emph{decoupling}, we show next.

\subsection{Decoupling from a reference system}
\label{sec:decoupling}

First, we define the decoupling of a system $A$ from a system $R$.
Define $d_A$ as the Hilbert-space dimensionality of $A$ and $\pi_A \coloneqq \Ident_A/d_A$ as a maximally mixed state.
We say that $A$ is maximally mixed and decoupled from $R$ if the joint state $\rho_{AR}$ is the product $\pi_A\otimes\rho_R$~\cite{BookWilde2013QIT}.
\emph{Decoupling $A$ from $R$} means transforming a general $\rho_{AR}$ into a state $\rho'_{AR}$ close to a state in which $A$ is maximally mixed and decoupled from $R$~\cite{Abeyesinghe2009_MotherProtocols,Dupuis2014CMP_decoupling}.
Decoupling is an information-theoretic primitive applied in randomness extraction~\cite{PhDRenner2005_SQKD,Tomamichel2011TIT_LeftoverHashing}, quantum communication~\cite{Abeyesinghe2009_MotherProtocols}, and thermodynamic erasure with side information~\cite{delRio2011Nature}.

The standard information-theoretic task of decoupling can be defined as follows.
Alice possesses an $n$-qubit system $A$ that might be correlated with a reference $R$.
The joint system begins in a state $\rho_{AR}$.
First, Alice applies to $A$ a unitary $U_0$, preparing $`*(U_0 \otimes \Ident_R) \rho_{AR} `*(U_0 \otimes \Ident_R)^\dagger \eqqcolon \rho_{AR}'$.
Then, Alice discards a subsystem $A_1$ formed from $k\geq 0$ qubits of $A$. A subsystem $A_2$, formed from $n-k$ qubits, remains.
Alice sends $A_2$ to a referee, who possesses $R$.
The referee attempts to distinguish the reduced state $\rho'_{A_2R}$ from $\pi_{A_2} \otimes \rho_R$.
Alice succeeds if the referee cannot distinguish the states to within some error tolerance.
Alice can ensure that $\rho_{A_2R}'$ approximates $\rho_{A_2}' \otimes \rho_R$ if~\cite{Dupuis2014CMP_decoupling}
\begin{align}
  k \gtrsim \frac12`*[ n - \bHmin[\rho][\epsilon]{A}[R] ] \ .
  \label{eq:decoupling-without-complexity-limitations}
\end{align}
$\epsilon > 0$ denotes a tolerance parameter.
$\bHmin[\rho][\epsilon]{A}[R]$ denotes the \emph{smooth conditional min-entropy} of $\rho$~\cite{Dupuis2014CMP_decoupling,BookTomamichel2016_Finite}.

Here, we introduce a complexity-constrained variant of decoupling.
Assume that $R$ consists of qubits.
Assume further that both Alice and the referee can implement two-qubit gates that form a set universal for quantum computation.
We define our decoupling task as the task above, except we impose two complexity constraints.
First, we constrain Alice's computational power, requiring that $U_0$ be implementable with at most $r_0 \geq 0$ gates.
Second, we constrain the referee's computational power.
Using at most $r_1\geq 0$ gates, the referee performs a hypothesis test of the form described in \cref{mainthm:DHypr-interpretation-hypo-test}, subject to type~I and type~II error constraints specified with $\eta\in (0,1]$ and $\delta\in (0,1]$, respectively.
Alice succeeds in our decoupling task if there exists no hypothesis test by which the referee can correctly accept $\rho_{A_2R}'$ with a probability $\geq \eta$ and incorrectly accept $\pi_{A_2} \otimes \rho_R$ with a probability $\leq \delta$.
Equivalently, by \cref{mainthm:DHypr-interpretation-hypo-test}, Alice is successful if
\begin{align}
  \bDHypr[r_1][\eta]`*{\rho'_{A_2 R}}{\pi_{A_2} \otimes \rho_R} \leq -\log_2`*( \delta/\eta ) \ .
  \label{eq:decoupling-condition-success-DHypr}
\end{align}
In terms of the complexity conditional entropy $\bHHyprc[r_1][\eta][\rho']{A_2}[R]$, the condition~\eqref{eq:decoupling-condition-success-DHypr} is
\begin{align}
    n - k - \bHHyprc[r_1][\eta][\rho']{A_2}[R] \leq - \log_2`*(\delta/\eta) \ .
\end{align}
Thus, Alice succeeds if $\bHHyprc[r_1][\eta][\rho']{A_2}[R]$ is sufficiently close to its maximum value, $n-k$.

\begin{figure}
  \centering
  \includegraphics{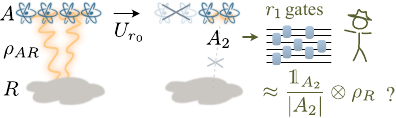}
  \caption{Decoupling with respect to a computationally bounded referee.  
    Alice's system $A$ might be correlated with a reference system $R$.
    Alice applies to $A$ a unitary $U_0$, then discards $k$ qubits of $A$.
    Alice retains an $(n-k)$-qubit system $A_2$.
    Alice succeeds in our decoupling task if a referee cannot, using a circuit of $\leq r_1$ gates, distinguish the final state of $A_2 R$ from $\pi_{A_2} \otimes\rho_R$.
    The success condition depends on the conditional complexity entropy $\bHHyprc[r_1][\eta][\rho']{A_2}[R]$.
    Conjecturing a property of that entropy, we bound, in terms of $\bHHyprc[r][\eta][\rho]{A}[R]$, the number of qubits that Alice can decouple from $R$.
    Our bound mirrors the bound~\eqref{eq:decoupling-without-complexity-limitations}, which holds in the absence of complexity constraints.
    }
  \label{fig:SetupTaskDecoupling}
\end{figure}

Before quantifying Alice's decoupling capabilities, we posit two expectations.
First, we do not expect Alice's complexity constraint to meaningfully change the number of qubits that she can decouple from $R$ (unless $r_0$ is very small).
Indeed, one may achieve near-optimal decoupling by choosing for $U_0$ to be a random circuit of only $O \bm{(} n \log^2(n) \bm{)}$ gates, in an all-to-all-coupling model~\cite{Brown2015CMP_decoupling}.

In contrast, we expect the referee's complexity constraint to substantially increase the number of qubits that Alice can decouple from $R$, by enabling Alice to succeed with strategies that would fail at standard decoupling.
For instance, consider a highly complex, maximally entangled state $\rho_{AR}$.
In the standard scenario, Alice cannot decouple any qubits from $R$.
[We can check this claim by substituting $\bHmin[\rho][\epsilon]{A}[R] = -n$ in~\eqref{eq:decoupling-without-complexity-limitations}.]
Now, suppose that the referee is computationally limited.
Alice's system is apparently decoupled from $R$ already: a complex state can be indistinguishable from a maximally mixed state, so the referee cannot distinguish $\rho_{AB}$ from a decoupled state.
More precisely, suppose that $\rho_{AR}$ is maximally entangled, that $\eta = 1$, and that the referee has unlimited computational power.
Alice cannot decouple $n-k$ qubits from $R$, if $\delta \leq 2^{-2(n-k)}$.
$[$If Alice discards $k$ qubits, then $\rho_{A_2R}'$ is a uniform mixture of $2^k$ pure states.
In an optimal strategy, the referee transforms $\rho_{A_2R}'$ into $\rho_{A_2R}'' \coloneqq \proj{0^{2(n-k)}} \otimes `*( \Ident_2^{\otimes k}/2^k )$, before implementing the measurement effect 
\begin{equation}
  Q \coloneqq \proj{0^{2(n-k)}} \otimes \Ident_2^{\otimes k} \ .
\end{equation}
Consequently, the referee can distinguish $\rho_{A_2R}'$ from $\pi_{A_2} \otimes \rho_R = \pi_{A_2 R}$: $\bDHypr[r_1][\eta]{\rho'_{A_2 R}}{\pi_{A_2} \otimes \rho_R} = -\log_2 \bm{(} \tr`(Q \pi_{A_2R}) / \tr`(Q\rho_{A_2R}'') \bm{)} = 2(n-k) \leq -\log_2`*( \delta/\eta )$.$]$
Now, however, suppose that the referee's computational power is substantially limited and that $\rho_{AR}$ is highly complex.
Alice may be able to decouple $n-k$ (or more) qubits even if $\delta \leq 2^{-2(n-k)}$: the referee may require more computational power to perform a unitary necessary for distinguishing $\rho'_{A_2R}$ from $\pi_{A_2} \otimes \rho_R$.

How many qubits $k$ must Alice discard to convince the referee that a maximally mixed, decoupled state was prepared?
We present a bound that generalizes~\eqref{eq:decoupling-without-complexity-limitations} to our decoupling task.
The bound holds if one assumes that the complexity conditional entropy obeys an inequality reminiscent of the R\'enyi-entropy chain rule~\cite{Vitanov2013_chainrules,Dupuis2015JMP_chain}.
\begin{mainconjecture}[Chain rule for the complexity conditional entropy]
  \label{mainconjecture:HHypr-chain-rule-like}
  Let $\rho_{ABR}$ denote a quantum state of systems $A$, $B$, and $R$ of $n_A$, $n_B$, and $n_R$ qubits.
  Let $r\geq 0$ and $\eta \in (0,1]$.
  We conjecture that
  \begin{align}
    \bHHyprc[r][\eta][\rho]{B}[R] \leq \bHHyprc[r][\eta][\rho]{AB}[R] + n_A \ .
    \label{eq:main-conjecture-for-decoupling-bound-HHypr-chain-rule-like}
  \end{align}
\end{mainconjecture}

In words, introducing a system $A$ can decrease the complexity conditional entropy by, at most, the size $n_A$ of $A$.
The von Neumann conditional entropy $\bHH{B}[R]$ obeys an analogous bound, due to a chain rule and the lower bound $\bHH{A}[BR] \geq - n_A$: $\bHH{B}[R] = \bHH{AB}[R] - \bHH{A}[BR] \leq \bHH{AB}[R] + n_A$.
This $\bHH{B}[R]$ bound reflects how introducing $A$ can resolve only as much randomness as $A$ can store.
Indeed, the bound saturates whenever $A$ is maximally entangled with $B$ ($R$). 
In such a case, $B$ and $R$ are uncorrelated, and $A$ purifies a maximally mixed state of $B$ ($R$). Hence $A$ resolves $n_A$ maximally mixed qubits' randomness.
Similar bounds follow from chain rules for R\'enyi entropies~\cite{Vitanov2013_chainrules,Dupuis2015JMP_chain}.

Using \cref{mainconjecture:HHypr-chain-rule-like}, we prove a bound on the number of qubits that Alice can decouple from $R$.
\begin{maintheorem}[Upper bound on the number of qubits that Alice can decouple]
    \label{mainthm:DecouplingLowerBound}
    Consider the complexity-restricted decoupling described above and depicted in \cref{fig:SetupTaskDecoupling}.
    Suppose $r_1 \geq r_0$.
    Assume \cref{mainconjecture:HHypr-chain-rule-like}.
    Under the condition~\eqref{eq:decoupling-condition-success-DHypr} (if Alice is successful), then
    \begin{align}
        k \geq \frac12`*[ n - \bHHyprc[r_1-r_0][\eta][\rho]{A}[R] + \log_2`*( \delta/\eta ) ] \ .
        \label{eq:main-thm-DHypr-randomness-extraction--lower-bound}
    \end{align}
\end{maintheorem}
We prove \cref{mainthm:DecouplingLowerBound} in \cref{sec:appendix-DHypr-randomness-extraction}.
The bound~\eqref{eq:main-thm-DHypr-randomness-extraction--lower-bound} mirrors the bound~\eqref{eq:decoupling-without-complexity-limitations}, which governs decoupling without complexity restrictions.
Indeed, if $\eta\approx 0$, the conditional hypothesis-testing entropy approximates the smooth conditional min-entropy~\cite{Dupuis2013_DH}.
As such, one can interpret the conditional complexity entropy, when $\eta\approx 0$, as a \emph{complexity-aware version of the conditional min-entropy}.

The general achievability of the
bound~\eqref{eq:main-thm-DHypr-randomness-extraction--lower-bound} is unknown.
A standard technique for proving standard decoupling's achievability involves
bounding the trace distance between a protocol's output and a decoupled target, using the Hilbert--Schmidt norm~\cite{Abeyesinghe2009_MotherProtocols,%
  Yard2009IEEETIT_source,Dupuis2014CMP_decoupling}.
One cannot readily extend this technique to \cref{mainthm:DecouplingLowerBound}: while the trace distance's operational definition naturally extends to complexity-restricted POVM effects~\cite{Brandao2021PRXQ_models}, such an extension is unknown for the Hilbert--Schmidt norm.

\cref{mainconjecture:HHypr-chain-rule-like} appears to preclude a kind of pseudorandom quantum state that we call a \emph{pseudomixed state}.
Let $A$ and $B$ denote systems of $n_A$ and $n_B$ qubits, respectively.
Let $\ket\chi_{AB}$ denote a pure state of $AB$.
Let $r \geq 0$ and $\eta \in (0,1]$.
$\ket\chi_{AB}$ is an \emph{efficiently preparable pseudomixed state} on $B$ if the complexity entropy of $\ket\chi$ is low, $\bHHypr[r][\eta]{\proj\chi} \approx 0$, while the complexity entropy of the reduced state $\chi_B \coloneqq \tr_A`*(\proj\chi)$ is high:
\begin{align}
    \bHHypr[r][\eta]{\chi_B} \geq \Omega \bm{(} \exp(n_A) \bm{)} \ .
    \label{eq:complexity-pseudorandom-quantum-state}
\end{align}
In other words, $B$ appears exponentially more mixed without the purifying $A$.
The condition~\eqref{eq:complexity-pseudorandom-quantum-state} directly contradicts the conjecture~\eqref{eq:main-conjecture-for-decoupling-bound-HHypr-chain-rule-like}.
Therefore, any efficiently preparable pseudomixed state would disprove \cref{mainconjecture:HHypr-chain-rule-like}.
We can therefore interpret \cref{mainconjecture:HHypr-chain-rule-like} as a \emph{no-efficient-pseudomixedness conjecture}.

A na\"ive approach to efficiently constructing a pseudomixed state from
pseudorandom
states~\cite{Ji2018CRYPTO_pseudorandom,Brakerski2019arXiv_pseudorandom} may
fail.
Standard constructions of pseudorandom states take as inputs
short, random bit strings $\kappa$.
The constructions yield efficient circuits for states $\ket{\phi_\kappa}$ that have the following property: no efficient quantum algorithm can distinguish $\poly(n)$ copies of $\ket{\phi_\kappa}$ from the same number of copies of a Haar-random state.
One could na\"ively expect to construct a pseudomixed state efficiently in the following way.
Consider an $n$-qubit system formed from two subsystems: an $O \bm{(} \log(n) \bm{)}$-qubit subsystem $A$ and an $O(n)$-qubit subsystem $B$.
One can efficiently prepare the state
\begin{equation}
  \ket\psi_{AB} \coloneqq \sum_\kappa \ket{\kappa}_A\otimes\ket{\phi_\kappa}_B 
\end{equation}
from $\ket{0^n} \, ,$
because pseudorandom states are efficiently preparable.
Let $r_{\mathrm{gen}}\leq\poly(n)$ denote the number of gates required to prepare $\ket\psi_{AB}$.
(One can generate $\ket\psi_{AB}$ by first preparing $A$ in $\sum_\kappa \ket{\kappa}_A$, by applying Hadamard gates.
Then, one implements a controlled unitary $\sum_\kappa \proj{\kappa}_A \otimes U_\kappa$.
$U_\kappa$ denotes an efficient unitary that prepares $\ket{\phi_\kappa}$.)
For all $\eta$, $\bHHypr[r_{\mathrm{gen}}][\eta]{\proj\psi} = 0$. Denote by $\psi_B \coloneqq \tr_A( \ketbra{\psi}{\psi})$ the reduced state of $B$. By discarding $A$, one forgets the seed $\kappa$ used to construct the pseudorandom state $\ket{\phi_\kappa}_B$. Therefore, no efficient algorithm can distinguish $\psi_B$ from a Haar-random state.
$\ket\psi_{AB}$ may appear to violate \cref{mainconjecture:HHypr-chain-rule-like}, looking highly mixed to a computationally limited observer.
Nevertheless, $\psi_B$ may have a low complexity entropy, since the pseudorandom states have low complexities.
($\psi_B$ appears mixed to a computationally limited observer because the state's preparation circuit, represented by the seed, is unknown.
Evaluating the complexity entropy, we imagine an observer who is given black-box access to a circuit that prepares the state.
Under such circumstances, pseudorandom states do not look random.
To appear highly mixed with respect to the complexity entropy, a pure state must have high complexity.)
Thus, $\ket\psi_{AB}$ does not necessarily constitute a counterexample to \cref{mainconjecture:HHypr-chain-rule-like}.

\section{Discussion}
\label{sec:Discussion}

Our framework highlights quantum complexity in thermodynamics.
Thermodynamics is operational: it concerns how efficiently an agent, given certain resources, can accomplish certain tasks.
We incorporate into a version of the theory complexity restrictions on (i) the agent's operations and (ii) the evaluation of the agent's output.
This ``agent'' could be human, could consist of a system's natural dynamics, etc. 

Our framework introduces a new role for time in thermodynamics: a complex process requires a long time.
Conventional thermodynamics distinguishes only what is possible and impossible.
 Augmented with complexity, our model of thermodynamics dictates what is practical and impractical. 
A spin glass illustrates the need to incorporate complexity restrictions into thermodynamics: according to the conventional thermodynamic laws, a spin glass can cool to low-temperature states.
However, this cooling process would require a very long time, or a highly complex process.

We illustrated the interplay between complexity and conventional thermodynamics through Landauer erasure.
Consider aiming to erase a highly complex pure state.
One could ``uncompute'' the state, paying in complexity.
Alternatively, one could erase every qubit, paying work.
More generally, consider resetting an arbitrary state $\rho$ with a success probability $\eta$.
The complexity entropy $\HHypr[r][\eta]{\rho}$ quantifies the trade-off between the complexity and work costs. 

Landauer erasure is one of many thermodynamic and information-processing tasks.
We analyzed data compression and decoupling, as well.
Many tasks merit analysis in the presence of complexity constraints.
We expect complexity-restricted entropies to quantify these tasks' optimal efficiencies. 

Our entropies quantify how mixed a state appears to a computationally limited agent.
For instance, even a pure state can have a high complexity entropy, if uncomputable by an agent.
Hence the complexity entropy quantifies a variation on pseudorandomness---more specifically, pseudomixedness: randomness apparent in determinstic phenomena, due to the observer's computational limitations. 

Unlike the standard hypothesis-testing entropy, the complexity entropy cannot be approximated via convex optimization.
We expect that computing the complexity entropy is typically hard, given strong evidence that computing the state complexity is hard~\cite{Gosset2014QIC_algorithm,VanDeWetering2023arXiv_tcount}.
Yet it might be possible to derive more-tractable bounds for the complexity entropy in specific settings.
Examples include low-complexity regimes as well as settings featuring random dynamics.
We derived such a bound for pure states evolving under random circuits (\cref{sec-topic:main-random-circuits}).

Other measures have been defined to quantify complexity or apparent randomness: the computational min-entropy~\cite{Chen2017arXiv_computational}, the coarse-grained entropy~\cite{GellMann2007PRA_quasiclassical}, the observational entropy \cite{Safranek2019PRA_observational}, the logical depth~\cite{Bennett1988UTMHS_logical}, and the quantum Kolmogorov complexity~\cite{Berthiaume2001JCSS_qKolmogorov,Mora2006arXiv_qKolmogorov}.
These measures reflect different approaches to quantifying complexity, as illustrated in the introduction.
Rigorous comparisons between these quantities and the complexity entropies merit future work.

The main text illustrates the complexity entropies' properties and usefulness; but generalizations are possible, and many are provided in \cref{appx:GeneralConstructionComplexityEntropy}.
For example, systems can be generalized beyond qubits to qudits.
We expect extensions to continuous-variable systems to be achievable.
Also, the complexity measure used can be replaced, as with Nielsen's complexity~\cite{Nielsen_05_Geometric,Nielsen_06_Quantum,Nielsen_06_Optimal,Dowling_06_Geometry}.
One could even replace $r$ with a matrix-product-state bond dimension---anything that constrains the POVM effects implementable.
One need only specify those POVM effects to construct general complexity entropies.

Another opportunity is to literally analyze---break apart---our \textsc{reset} and \textsc{extract} operations.
We have presented these operations as a computation's basic units.
Yet each operation may consist of multiple steps, entailing extra costs.
To incorporate these costs into our proofs, one could apply results from Ref.~\cite{Taranto2023PRXQ_landauer}.
Furthermore, we attributed to the \textsc{reset} the ideal work cost of $k_{\rm B} T \ln(2)$.
Any realistic \textsc{reset} will cost more work.
Such extra costs can be absorbed straightforwardly into our assumptions.
Alternatively, we could consider a different set of primitives $\mathcal{E}$ that have fixed work costs $W(\mathcal{E})$ and complexity costs $C(\mathcal{E})$.
(Any primitive operation could cost a nonzero amount of work and a nonzero amount of complexity.)
One possibility is to follow the approach in Ref.~\cite{Faist2018PRX_workcost}.
That is, allow as a primitive operation every completely positive, trace-preserving two-qubit map $\mathcal{E}$.
Then assign to $\mathcal{E}$ a complexity cost of 1 and a work cost quantified by
$\log \opnorm{ e^{\beta H/2} \mathcal{E}(e^{-\beta H})e^{\beta H/2} }$, wherein $H$ denotes the operated-on system's Hamiltonian.
Using this approach, one could lower-bound a process's thermodynamic costs~\cite{Faist2018PRX_workcost}.

Complexity was recently incorporated into thermodynamics alternatively, in a resource theory~\cite{YungerHalpern2022PRA_resource}.
In the resource theory of uncomplexity, an agent can perform arbitrarily many free operations.
Each operation is slightly noisy, and the agent will have a natural noise tolerance.
Hence the agent will naturally limit the number of operations they perform.
In the present paper, the agent's complexity restriction is a hard, external constraint.
Just as complexity has recently emerged in resource theories, pseudorandomness has emerged in studies of quantum gravity~\cite{Kim2020JHEP_ghost,Bouland2019arXiv_computational,Susskind2020arXiv_exptime}.
We therefore anticipate the complexity entropy's utility in understanding black holes in the context of the AdS/CFT correspondence, uniting tools of quantum information theory, high-energy physics, and statistical physics.

\begin{acknowledgments}
  The authors thank Matthew Coudron, Richard Kueng, Lorenzo Leone, Yi-Kai Liu, Carl Miller, Ralph Silva, and Twesh Upadhyaya for valuable discussions.
  This research has been supported by the National Institute of Standards and Technology and the Joint Center for Quantum Information and Computer Science under NIST Grant No. 70NANB21H055\textunderscore0, as well as by the Ford Foundation and by the National Science Foundation (QLCI Grant No. OMA-2120757).
  A.~M.~thanks Freie Universität Berlin for its hospitality and acknowledges The Ford Foundation for its support.
  The Berlin team has been funded by the DFG (FOR 2724 and CRC 183), the FQXi, and the ERC (DebuQC).
  N.B.T.K. acknowledges support by the European Space Agency (EISI Project No. 2021-01250-ESA), and by the Spanish MICIN (Project No. PID2022-141283NB-I00) with the support of FEDER funds.
  J.~H.~was funded by the Harvard Quantum Initiative.
\end{acknowledgments}







\clearpage
\newgeometry{hmargin=1.25in,vmargin=1in,marginparsep=12pt,marginparwidth=50pt}%
\onecolumngrid
\setstretch{1.12}%
\appendix

\section*{APPENDICES}

\section{Preliminaries, notation, and useful lemmas}

Throughout these Appendices, we use the following definitions and conventions.
We denote the set of non-negative reals by $\mathbb{R}_+ \coloneqq `{ x \in \mathbb{R}: x \geq 0}$.
We consider a system $S$ equipped with a Hilbert space $\Hs_S$ of a finite dimensionality $d_S$.
For any system $R$ distinct from $S$, we denote the composition of $S$ and $R$ by $SR$.
$\Hs_S$, as well as spaces of operators and of superoperators on $\Hs_S$,
are equipped with the standard topology of $\mathbb{C}^{d_S}$.
We denote the identity operator on $\Hs_S$ by $\Ident_S$.
We denote the identity superoperator on $\Hs_S$ by $\IdentProc[S]{}$.
We denote the single-qubit identity operator by $\Ident_2$.
For any Hermitian operators $A$ and $B$, we write $A \leq B$ if $B-A$ is positive-semidefinite.
A \textit{subnormalized quantum state} $\rho_S$ is a positive-semidefinite operator that satisfies $\tr(\rho_S) \leq 1$.
If $\tr(\rho_S) = 1$, then $\rho_S$ is normalized and represents a physical quantum state.
In the absence of ambiguity, we omit system subscripts from operators.
We denote the projector onto a pure state $\ket\psi$ by $\psi \coloneqq \proj{\psi}$.
We denote by $\pi_S \coloneqq \Ident_S/d_S$ the maximally mixed state of $S$.
For any sets $X$ and $Y$ of operators, we define $X \otimes Y \coloneqq \{ A \otimes B :\, A \in X,\, B \in Y \}$.
An operator $A$ has the operator norm $\opnorm{A}$, defined as the greatest singular value of $A$.
The trace norm $\onenorm{A}$ is the sum of the singular values of $A$.
We denote by $\ket{0}$ the eigenvalue-1 eigenvector of the Pauli-$z$ operator.
Tensor products are notated as $\ket{0^k} \coloneqq \ket{0}^{\otimes k}$.
A \emph{POVM effect} $Q$ on a system $S$ is a (Hermitian) positive-semidefinite operator satisfying $0 \leq Q \leq \Ident_S$.
Finally, all logarithms are base-$e$, unless otherwise indicated.

The \emph{diamond norm} of a superoperator $\mathcal{X}$ on $\Hs_S$ is defined as
\begin{align}
  \dianorm{\mathcal{X}} \coloneqq \max_{\rho_{SR}} `*{ \onenorm{ (\mathcal{X}\otimes\IdentProc[R]{})(\rho_{SR}) } } \ .
  \label{eq:def-dianorm}
\end{align}
$R$ denotes any system isomorphic to $S$, and the maximization is defined with respect to all subnormalized states $\rho_{SR}$.
For any subnormalized state $\rho$ on $S$, the diamond norm $\dianorm{\mathcal{X}}$ upper-bounds the trace norm $\onenorm{\mathcal{X}(\rho)}$ as
\begin{align}
    \onenorm{ \mathcal{X}(\rho) } = \onenorm{ (\mathcal{X}\otimes\IdentProc[R]{})(\rho_S \otimes \rho_R) } \leq \dianorm{\mathcal{X}} \ .
    \label{eq:dianorm-upper-bounds-onenorm}
\end{align}
For any completely positive, trace-preserving maps $\mathcal{E}$ and $\mathcal{F}$, the diamond distance $\frac{1}{2}\dianorm{ \mathcal{E} - \mathcal{F} }$ quantifies the one-shot distinguishability of $\mathcal{E}$ and $\mathcal{F}$~\cite{BookWilde2013QIT}.

For any state $\rho$ of $S$, the \textit{von Neumann entropy} of $\rho$ is $\HH{\rho} \equiv \HH[\rho]{S} \coloneqq -\tr \bm{(} \rho \log(\rho) \bm{)}$.
For any positive-semidefinite operator $\Gamma$ on $\Hs_S$, the (Umegaki) \textit{quantum relative entropy} of $\rho$ with respect to $\Gamma$ is $\DD{\rho}{\Gamma} \coloneqq \tr \bm{(} \rho [ \log(\rho) - \log(\Gamma) ] \bm{)}$.
If $S$ is a joint system, $S = XY$, the \textit{conditional quantum entropy} $\HH[\rho]{X}[Y]$ is $\HH[\rho]{X}[Y] \coloneqq \HH[\rho]{XY} - \HH[\rho]{Y}$.

The following inequality is used often in quantum information theory. The inequality is equivalent to a general lower bound on a version of the conditional min-entropy~\cite{PhDRenner2005_SQKD,BookTomamichel2016_Finite}.

\begin{lemma}[Partial order for joint states]
  \label{thm:state-partial-order-lemma}
  Let $X$ and $Y$ denote distinct quantum systems.
  If $\rho_{XY}$ denotes any subnormalized state on $XY$, then
  \begin{align}
      \frac{1}{d_X} \, \rho_{XY} \leq \Ident_X \otimes \rho_Y \ .
  \end{align}
\end{lemma}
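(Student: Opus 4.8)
The plan is to multiply the claim by $d_X > 0$ and read it as the operator inequality $\rho_{XY} \leq d_X\,(\Ident_X \otimes \rho_Y)$, where $\rho_Y = \tr_X(\rho_{XY})$. This is precisely the statement that a version of the conditional min-entropy satisfies $H_{\min}(X|Y)_\rho \geq -\log(d_X)$ evaluated with $\rho_Y$ on the $Y$ marginal, so one could simply cite that bound. To keep the argument self-contained, however, I would instead verify the inequality directly by evaluating both quadratic forms on an arbitrary vector of $\Hs_{XY}$.

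First I would fix an orthonormal basis $\{\ket{i}\}_{i=1}^{d_X}$ of $\Hs_X$ and decompose $\rho_{XY}$ into blocks $\sigma^{(ij)} \coloneqq \bra{i}_X\,\rho_{XY}\,\ket{j}_X$, each a positive-or-not operator on $\Hs_Y$ (the diagonal ones $\sigma^{(ii)}$ are positive-semidefinite). Then $\rho_Y = \sum_i \sigma^{(ii)}$, and, because the omitted blocks are positive, $\sigma^{(ii)} \leq \rho_Y$ for every $i$. Any $\ket{v} \in \Hs_{XY}$ can be written as $\ket{v} = \sum_i \ket{i}_X \otimes \ket{v_i}_Y$ with $\ket{v_i} \coloneqq (\bra{i}_X \otimes \Ident_Y)\ket{v}$, so the two forms of interest are $\bra{v}\rho_{XY}\ket{v} = \sum_{i,j} \bra{v_i}\,\sigma^{(ij)}\,\ket{v_j}$ and $\bra{v}(\Ident_X \otimes \rho_Y)\ket{v} = \sum_i \bra{v_i}\,\rho_Y\,\ket{v_i}$.

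The key step is an operator Cauchy--Schwarz bound on the off-diagonal blocks. For each pair $(i,j)$, the $2\times 2$ arrangement with diagonal entries $\sigma^{(ii)},\sigma^{(jj)}$ and off-diagonal entries $\sigma^{(ij)},\sigma^{(ji)}$ is positive-semidefinite, being the compression of $\rho_{XY} \geq 0$ to the subspace $\mathrm{span}\{\ket{i}_X,\ket{j}_X\}\otimes\Hs_Y$. This yields $\bigl|\bra{v_i}\,\sigma^{(ij)}\,\ket{v_j}\bigr| \leq a_i a_j$ with $a_i \coloneqq \bra{v_i}\,\sigma^{(ii)}\,\ket{v_i}^{1/2}$, so that $\bra{v}\rho_{XY}\ket{v} \leq \sum_{i,j} a_i a_j = (\sum_i a_i)^2$. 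I would then invoke the elementary inequality $(\sum_{i=1}^{d_X} a_i)^2 \leq d_X \sum_i a_i^2$ and finish with $\sigma^{(ii)} \leq \rho_Y$, giving $\bra{v}\rho_{XY}\ket{v} \leq d_X \sum_i \bra{v_i}\,\rho_Y\,\ket{v_i} = d_X\,\bra{v}(\Ident_X \otimes \rho_Y)\ket{v}$. Since $\ket{v}$ is arbitrary, this establishes $\rho_{XY} \leq d_X(\Ident_X \otimes \rho_Y)$, which is the lemma.

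The main obstacle is conceptual rather than computational: ensuring that the constant is exactly $d_X$ and no larger. The two places demanding care are the off-diagonal Cauchy--Schwarz step (justified by positivity of the $2\times 2$ compressions) and the quadratic-mean/arithmetic-mean step, whose number of summands is exactly $d_X$. An alternative route that exposes the same point is to reduce to a pure $\rho_{XY} = \proj{\psi}$ by spectral decomposition---both sides are additive over a positive decomposition---and to use the Schmidt decomposition of $\ket{\psi}$, whose rank is at most $d_X$, in place of the block count; the factor $d_X$ then emerges from bounding the Schmidt rank.
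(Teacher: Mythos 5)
Your proof is correct, but it takes a genuinely different route from the paper's. The paper argues operator-algebraically: it adjoins an auxiliary copy $C$ of $X$, uses $\chi_C \leq \Ident_C$ and a pinching inequality in the computational basis to get $\rho_{XY} \leq d_X \sum_j \proj{j}_X \otimes \bra{j}_X \rho_{XY} \ket{j}_X$, then invokes a second pinching in the Fourier-transformed basis $\{\ket{\tilde l}_X\}$ and the flat overlaps $|\braket{j}{\tilde l}_X|^2 = 1/d_X$ to dominate the dephased operator by $d_X\,\Ident_X\otimes\rho_Y$. You replace all of this with a direct quadratic-form estimate: positivity of the $2\times 2$ compressions gives the operator Cauchy--Schwarz bound $|\bra{v_i}\sigma^{(ij)}\ket{v_j}| \leq a_i a_j$, and the factor $d_X$ enters through $(\sum_i a_i)^2 \leq d_X \sum_i a_i^2$, a sum with exactly $d_X$ terms; your final step $\sigma^{(ii)} \leq \rho_Y$ is equivalent to what the paper's last display extracts from the Fourier pinching. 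In effect, your first two steps re-derive the pinching inequality $\rho_{XY} \leq d_X \sum_j \proj{j}_X \otimes \sigma^{(jj)}$ by hand, so the two proofs share a skeleton (a dephasing-type step costing $d_X$, then domination of the diagonal blocks by $\rho_Y$) but run on different engines. The paper's route buys brevity and a reusable standard tool; yours buys self-containedness---no ancilla, no Fourier basis---and a transparent accounting of where the constant $d_X$ originates, which also points directly at the refinement you note at the end: running the same estimate on a pure state through its Schmidt decomposition replaces $d_X$ by the Schmidt rank, and summing over a positive decomposition recovers the mixed case. One small observation: like the paper's proof, yours uses only positivity of $\rho_{XY}$, so subnormalization is never actually needed.
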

\begin{proof}[**thm:state-partial-order-lemma]
  Let $C$ denote a system isomorphic to $X$. Let $`{ \ket{j}_X }$ and $`{ \ket{j}_C }$ denote bases for the eigenspaces of $X$ and $C$.
  If $\ket{\chi}_C \coloneqq \sum_j \ket{j}_C / \sqrt{d_X}$, then $\rho_{XY} \otimes \chi_C \leq \rho_{XY} \otimes \Ident_C$, since $\chi_C \leq \Ident_C$.
  Every positive map preserves the partial order on operators.
  [If $\mathcal{N}$ is a positive (linear) map acting on operators $A$ and $B \geq A$, then $\mathcal{N}(B) \geq \mathcal{N}(A)$: $\mathcal{N}(B-A)=\mathcal{N}(B)-\mathcal{N}(A)$ is positive-semidefinite, since $B-A$ is positive-semidefinite.]
  Hence, conjugating both sides of $\rho_{XY} \otimes \chi_C \leq \rho_{XY} \otimes \Ident_C$ by $\sum_j \proj{j}_X\otimes\bra{j}_C$ yields the pinching inequality
  \begin{align}
    \rho_{XY} =
    d_X `*(\sum_j \proj{j}_X\otimes\bra{j}_C) (\rho_{XY}\otimes\chi_C) `*(\sum_k \proj{k}_X\otimes\ket{k}_C)
    \leq d_X \sum_j \proj{j}_X \, \bra{j}_X \rho_{XY} \ket{j}_X \ .
    \label{eq:pinching-inequality-computational-basis}
  \end{align}
  In terms of the Fourier-transformed states $\ket{\tilde l}_X \coloneqq \sum_j \ee^{i2\pi lj/d_X} \ket{j}_X / \sqrt{d_X}$, the pinching inequality is
  \begin{align}
    \rho_{XY} \leq
    d_X \sum_l \proj{\tilde l}_X \, \bra{\tilde l}_X \rho_{XY} \ket{\tilde l}_X \ .
    \label{eq:pinching-inequality-fourier-basis}
  \end{align}
  Applying the positive map $d_X \sum_j \proj{j}_X \, \bra{j}_X (\cdot) \ket{j}_X$ to both sides of~\eqref{eq:pinching-inequality-fourier-basis} yields
  \begin{align}
    d_X \sum_j \proj{j}_X \, \bra{j}_X \rho_{XY} \ket{j}_X
    \leq d_X^2 \sum_{jl} \proj{j}_X \abs`*{\braket{j}{\tilde l}_X}^2 \bra{\tilde l}_X \rho_{XY} \ket{\tilde l}_X
    = d_X \Ident_X \otimes \tr_X`*(\rho_{XY}) = d_X \Ident_X \otimes \rho_Y \ .
    \label{eq:positive-map-on-pinching-inequality}
  \end{align}
  We have used $\abs`*{\braket{j}{\tilde l}_X}^2 = 1/d_X$.
  Chaining together~\eqref{eq:pinching-inequality-computational-basis} and~\eqref{eq:positive-map-on-pinching-inequality} proves the lemma.
\end{proof}

The following lemma provides a useful expression for, and bound on, the diamond distance between a unitary operation and the identity process.

\begin{lemma}[Diamond distance between a unitary operation and the identity process]
  \label{thm:diamond-norm-unitary-to-ident}
  Let $U$ denote any unitary operator, and let $\mathcal{U}(\cdot) \coloneqq U `(\cdot) U^\dagger$.
  It holds that 
  \begin{align}
    \frac12 \dianorm{ \mathcal{U} - \IdentProc{} } &= \sin `*(  \min`*{ e(U), \frac{\pi}{2} } ) ,
    \label{eq:thm-diamond-norm-unitary-to-ident-eU}
  \end{align}
  where
  \begin{align}
    e(U)
    &\coloneqq \min`*{
      \opnorm{H} \,:\; H = H^\dagger , \; \exists\chi\in\mathbb{R} \text{ such that } \ee^{-iH} = \ee^{-i\chi}U 
    } \ .
    \label{eq:thm-diamond-norm-unitary-to-ident-eU-defn-eU}
  \end{align}
  Furthermore,
  \begin{align}
    \frac12 \dianorm{ \mathcal{U} - \IdentProc{} } \leq \opnorm{ U - \Ident } \ .
    \label{eq:thm-diamond-norm-bound-eU-with-op-norm-U}
  \end{align}
\end{lemma}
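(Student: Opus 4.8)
The plan is to evaluate the diamond norm exactly by reducing it to a geometric optimization over the eigenphases of $U$, and then to establish the operator-norm bound~\eqref{eq:thm-diamond-norm-bound-eU-with-op-norm-U} by a direct triangle-inequality argument that bypasses the explicit formula. First I would reduce the maximization in~\eqref{eq:def-dianorm} to pure inputs: since $\rho \mapsto (\mathcal{U}\otimes\IdentProc[R]{})(\rho) - \rho$ is linear and the trace norm is convex, the maximum defining $\dianorm{\mathcal{U}-\IdentProc{}}$ is attained at a pure state $\proj\psi$ on $SR$ with $R\cong S$. For such a state, $(\mathcal{U}\otimes\IdentProc{})(\proj\psi) = \proj\phi$ with $\ket\phi \coloneqq (U\otimes\Ident)\ket\psi$, and $\proj\phi$ and $\proj\psi$ are both rank-one projectors, so the standard identity $\onenorm{\proj\phi - \proj\psi} = 2\sqrt{1 - \abs{\braket{\phi}{\psi}}^2}$ applies. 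As $\braket{\phi}{\psi} = \overline{\bra\psi (U\otimes\Ident)\ket\psi}$, this gives $\tfrac12\dianorm{\mathcal{U}-\IdentProc{}} = \sqrt{1 - \min_{\ket\psi}\abs{\bra\psi (U\otimes\Ident)\ket\psi}^2}$.

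The next step is to identify this overlap with a trace against the reduced state. Expanding $\ket\psi$ in its coefficient matrix and tracing out $R$ yields $\bra\psi(U\otimes\Ident)\ket\psi = \tr(\rho_S U)$, where $\rho_S \coloneqq \tr_R(\proj\psi)$. Because $R \cong S$ has enough dimension to purify any state of $S$, the minimization ranges over all states $\rho_S$. Diagonalizing $U = \sum_j \ee^{i\theta_j}\proj{u_j}$ and setting $q_j \coloneqq \bra{u_j}\rho_S\ket{u_j}$ shows that $\tr(\rho_S U) = \sum_j q_j \ee^{i\theta_j}$ sweeps out the convex hull of the eigenvalues $\{\ee^{i\theta_j}\}$ as $\rho_S$ varies. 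Hence $\min_{\ket\psi}\abs{\bra\psi (U\otimes\Ident)\ket\psi} = \nu$, the Euclidean distance from the origin to $\mathrm{conv}\{\ee^{i\theta_j}\}$, and $\tfrac12\dianorm{\mathcal{U}-\IdentProc{}} = \sqrt{1-\nu^2}$.

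I expect the main obstacle to be connecting $\nu$ to $e(U)$ through the geometry of the eigenphases. The constraint $\ee^{-iH} = \ee^{-i\chi}U$ in~\eqref{eq:thm-diamond-norm-unitary-to-ident-eU-defn-eU} forces the eigenvalues of $H$ to be representatives of $\chi - \theta_j$ modulo $2\pi$, so $\opnorm{H} = \max_j d(\chi,\theta_j)$, where $d$ is the circular distance; minimizing over $\chi$ identifies $e(U) = \Theta$, the half-length of the smallest arc enclosing all eigenphases (a circular Chebyshev-radius problem). It then remains to compute $\nu$ in terms of $\Theta$, splitting into two regimes. If $\Theta \le \pi/2$, then for the optimal center $\chi^*$ every eigenvalue satisfies $\mathrm{Re}(\ee^{-i\chi^*}\ee^{i\theta_j}) \ge \cos\Theta$, so the origin is separated from the hull, while the midpoint $\ee^{i\chi^*}\cos\Theta$ of the chord joining the two extreme eigenvalues lies in the hull at distance $\cos\Theta$; thus $\nu = \cos\Theta$ and $\sqrt{1-\nu^2} = \sin\Theta$. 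If $\Theta > \pi/2$, then no closed half-plane through the origin contains all eigenvalues, so the origin lies in the interior of the hull, giving $\nu = 0$ and $\sqrt{1-\nu^2} = 1$. Combining the two cases yields $\tfrac12\dianorm{\mathcal{U}-\IdentProc{}} = \sin(\min\{e(U),\pi/2\})$, which is~\eqref{eq:thm-diamond-norm-unitary-to-ident-eU}.

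Finally, for the bound~\eqref{eq:thm-diamond-norm-bound-eU-with-op-norm-U} I would argue directly, without the explicit formula. Writing $W \coloneqq U\otimes\Ident$, the decomposition $W\rho W^\dagger - \rho = W\rho(W^\dagger - \Ident) + (W-\Ident)\rho$, together with $\onenorm{AB} \le \opnorm{A}\onenorm{B}$ and $\opnorm{W} = 1$, gives $\onenorm{W\rho W^\dagger - \rho} \le 2\opnorm{W-\Ident}\onenorm{\rho} = 2\opnorm{U-\Ident}$ for every state $\rho$, where I use $\opnorm{U\otimes\Ident - \Ident} = \opnorm{U-\Ident}$. Taking the maximum over $\rho$ in~\eqref{eq:def-dianorm} then yields $\tfrac12\dianorm{\mathcal{U}-\IdentProc{}} \le \opnorm{U-\Ident}$.
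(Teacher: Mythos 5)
Your proof is correct, and it takes a genuinely different route from the paper's. The paper obtains the exact formula~\eqref{eq:thm-diamond-norm-unitary-to-ident-eU} by citing a known result (Proposition~18 of Nechita \emph{et al.}), which states $\frac12\dianorm{\mathcal{U}-\IdentProc{}} = \sin(\min\{\alpha,\pi/2\})$ with $\alpha$ the half-angle of the smallest arc enclosing the eigenphases, and then only has to prove $e(U)=\alpha$, which it does via principal logarithms and a shortest-arc contradiction argument; you instead re-derive the formula from first principles---reduction of the diamond norm to pure inputs by convexity, the rank-one identity $\onenorm{\proj\phi-\proj\psi}=2\sqrt{1-\abs{\braket{\phi}{\psi}}^2}$, the identification $\bra\psi(U\otimes\Ident)\ket\psi=\tr(\rho_S U)$, and the fact that $\{\tr(\rho_S U)\}$ is the convex hull of the eigenvalues---so your argument is self-contained (essentially recovering the Aharonov--Kitaev--Nisan-type distance formula), at the cost of the extra convex-geometry case analysis on $\Theta\lessgtr\pi/2$, which you handle correctly (including the separation argument showing $\nu=0$ when $\Theta>\pi/2$). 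Your treatment of $e(U)=\Theta$ via the circular Chebyshev radius over eigenphase representatives is equivalent to the paper's logarithm argument. The most striking divergence is the bound~\eqref{eq:thm-diamond-norm-bound-eU-with-op-norm-U}: the paper deduces it \emph{from} the exact formula by a two-case eigenvalue estimate ($\Re(\theta)\leq 0$ when $\alpha\geq\pi/2$; $\sin\alpha = \frac12\abs{\theta_+-\theta_-}$ otherwise), whereas your telescoping decomposition $W\rho W^\dagger-\rho = W\rho(W^\dagger-\Ident)+(W-\Ident)\rho$ with H\"older's inequality proves it directly, independently of any spectral analysis---a simpler and more robust argument that would survive even if the exact formula were unavailable. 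One small point to make explicit in a full writeup: your phrase ``the two extreme eigenvalues'' presupposes that the endpoints of the minimal enclosing arc are themselves eigenphases of $U$; this is true (otherwise the arc could be shortened), and the paper spells this step out, but your chord-midpoint construction giving $\nu=\cos\Theta$ silently relies on it.
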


The quantity $e(U)$ is called the \textit{potential entangling power} of $U$; a similar quantity is defined in Ref.~\cite{Eisert2021PRL_entangling}.
$e(U)$ is insensitive to global phases: for all $\chi\in\mathbb{R}$, $e(\ee^{i\chi} U) = e(U)$.

Inequality~\eqref{eq:thm-diamond-norm-bound-eU-with-op-norm-U} automatically yields an upper bound on the diamond distance between arbitrary unitary operations.
For all unitaries $U$ and $V$ acting on the same Hilbert space, the operations $\mathcal{U}(\cdot) \coloneqq U`(\cdot) U^\dagger$ and $\mathcal{V}(\cdot) \coloneqq V`(\cdot) V^\dagger$ obey
\begin{align}
    \frac12\dianorm{ \mathcal{U} - \mathcal{V} }
    = \frac12\dianorm{ \mathcal{V}^\dagger \mathcal{U} - \IdentProc[]{} } 
    \leq \opnorm{V^\dagger U - \Ident}
    = \opnorm{U - V}\ .
\end{align}
The first (second) equality follows from the diamond (operator) norm's invariance under unitary operations (operators).

\begin{proof}[*thm:diamond-norm-unitary-to-ident]
  By Proposition~18 of Ref.~\cite{Nechita2018JMP_almost}, $\frac12 \dianorm{ \mathcal{U} - \IdentProc{} } = \sin`*(\min`{\alpha,\pi/2})$, wherein $\alpha$ denotes half the angle of the shortest arc $A$, on the unit circle, that contains every eigenvalue (eigenphase) of $U$.
  Let $\chi_* \in \mathbb{R}$ such that $\ee^{i\chi_*}$ is the midpoint of $A$.
  (Since $U$ has only finitely many eigenphases, $A$ does not encompass the unit circle, so $A$ has endpoints and a midpoint.)
  By definition, $A = `*{ \ee^{i\chi_*} \ee^{i\varphi} \,:\; \varphi \in [-\alpha, \alpha] }$.
  $A$ has the endpoints $\theta_\pm \coloneqq \ee^{i\chi_*} \ee^{\pm i \alpha}$.
  $\theta_\pm$ are eigenphases of $U$.
  Otherwise, one could obtain an arc shorter than $A$ that contains every eigenphase of $U$.

  To prove~\eqref{eq:thm-diamond-norm-unitary-to-ident-eU}, we show that $e(U) = \alpha$.
  Let $-iH_*$ denote the unique logarithm of $\ee^{-i\chi_*} U$ such that every eigenvalue of $H_*$ lies in $(-\pi,\pi]$.
  Since $H_*$ is Hermitian and $\ee^{-iH_*} = \ee^{-i\chi_*} U$, $e(U) \leq \opnorm{H_*}$.
  Because every eigenvalue of $U$ lies in $A$, every eigenvalue of $\ee^{-iH_*}$ lies in the arc $A_* \coloneqq `*{ \ee^{i\varphi} \,:\; \varphi \in [-\alpha, \alpha] }$, which one obtains by multiplying every point in $A$ by $\ee^{-i\chi_*}$.
  Consequently, every eigenvalue of $H_*$ lies in $[-\alpha, \alpha]$, so $\opnorm{H_*} \leq \alpha$.
  Hence, $e(U) \leq \alpha$.
  
  For the sake of contradiction, suppose $e(U) < \alpha$.
  Let $H$ denote any Hermitian operator that achieves the minimum in~\eqref{eq:thm-diamond-norm-unitary-to-ident-eU-defn-eU}.
  $\ee^{-iH} = \ee^{-i\chi} U$ for some $\chi \in \mathbb{R}$, and $\opnorm{H} = e(U) < \alpha$.
  Since every eigenvalue of $H$ lies in $`*[-\opnorm{H}, \opnorm{H}]$, the arc $A_H \coloneqq `*{ \ee^{i\chi_*} \ee^{i\varphi} \,:\; \varphi \in `*[-\opnorm{H}, \opnorm{H}] }$ contains every eigenvalue of $\ee^{i\chi} \ee^{-iH} = U$.
  Having the half-angle $\opnorm{H} < \alpha$, $A_H$ is shorter than $A$, a contradiction.
  Therefore, $e(U) = \alpha$.

  We now prove~\eqref{eq:thm-diamond-norm-bound-eU-with-op-norm-U}.
  Suppose $\alpha \geq \pi/2$.
  Then $A$ contains a half-circle and thus has an endpoint $\theta \in `*{ \theta_\pm }$ whose real part is nonpositive: $\Re(\theta)\leq 0$.
  $\theta$ is an eigenphase of $U$, so $\theta-1$ is an eigenvalue of $U-\Ident$.
  Hence,
  \begin{align*}
    \frac{1}{2} \dianorm{ \mathcal{U} - \IdentProc{} }
    = \sin`*( \min`*{ \alpha, \frac\pi2 } )
    = 1
    \leq \abs`*{ \Re(\theta) - 1 }
    = \abs`*{ \Re(\theta - 1) }
    \leq \abs`*{\theta-1}
    \leq \opnorm{U-\Ident} \ .
  \end{align*}
  Now, suppose $\alpha < \pi/2$.
  Then
  \begin{align}
      \frac{1}{2} \dianorm{ \mathcal{U} - \IdentProc{} }
      &= \sin`*( \min`*{ \alpha, \frac\pi2 } )
      = \sin`*(\alpha)
      = \abs`*{ \ee^{i\chi_*} \sin`*(\alpha) } 
      \nonumber \\
      &= \frac{1}{2} \abs`*{ \ee^{i\chi_*} `*( \ee^{i\alpha} - \ee^{-i\alpha} ) }
      = \frac{1}{2} \abs`*{ \theta_+ - \theta_- }
      = \frac{1}{2} \abs`*{ (\theta_+ -1) - (\theta_- -1) }
      \nonumber \\
      &\leq \frac{1}{2} \abs`*{ \theta_+ -1 } + \frac{1}{2} \abs`*{ \theta_- -1 } 
      \nonumber \\
      &\leq \opnorm{U - \Ident} \ .
  \end{align}
  The second inequality follows because $U -\Ident$ has the eigenvalues $\theta_\pm - 1$.
\end{proof}

\section{Measures of complexity and POVM-effect complexity}
\label{appx:sec:complexity-of-measurement-effects}

We define the complexity (relative) entropy in terms of POVM-effect complexity.
This section's main purpose is to introduce general assumptions that define POVM-effect complexity.
Our setting's generality enables one to define the complexity (relative) entropy for a range of physical systems, including qubits, qudits, and continuous-variable systems.

\subsection{Complexity of superoperators and unitaries}

Using conservative assumptions, we define complexity measures for superoperators and unitaries.
We later use these measures to construct measures of POVM-effect complexity.

\begin{definition}[Superoperator-complexity measure]
  \label{defn:superoperator-complexity-general}
  Let $S$ denote a composition of $N$ quantum subsystems: $S = S_1 S_2 \ldots S_N$.
  For every subset $I\subset `{1, 2, \ldots, N}$, let $S_I$ denote the composition of the subsystems in $`{S_i}_{i\in I}$.
  A superoperator-complexity measure on $S$ is a family of functions $C_{S_I}$ that map completely positive, trace-nonincreasing maps to $\mathbb{R}_+ \cup `{ \infty }$, for all $I \subset `{ 1, 2, \ldots, N}$.
  The functions must have the following properties, for every $I \subset `{ 1, 2, \ldots, N}$:
  \begin{enumerate}[(\roman*)]
  
  \item\label{item:superoperator-complexity--identity-has-complexity-zero}
    The identity operation has zero complexity: $C_{S_I}(\IdentProc[S_I]{}) = 0$.
    
  \item\label{item:superoperator-complexity--sequential-composition}
    No sequential composition of operations has a complexity that exceeds the sum of the individual operations' complexities:
    for any operations $\mathcal{E}_1$ and $\mathcal{E}_2$ on $S_I$, $C_{S_I}(\mathcal{E}_2\mathcal{E}_1) \leq C_{S_I}(\mathcal{E}_1) + C_{S_I}(\mathcal{E}_2)$.
    
  \item\label{item:superoperator-complexity--parallel-composition}
    No parallel composition of operations has a complexity that exceeds the sum of the individual operations' complexities:
    for any $J$ disjoint from $I$, for any operation $\mathcal{E}_1$ on $S_I$, and for any operation $\mathcal{E}_2$ on $S_J$, $C_{S_I S_J}(\mathcal{E}_1\otimes\mathcal{E}_2) \leq C_{S_I}(\mathcal{E}_1) + C_{S_J}(\mathcal{E}_2)$.
  \end{enumerate}
\end{definition}
For every operation $\mathcal{E}$ on any $S_I$, we employ the shorthand notation $C_S(\mathcal{E}) \coloneqq C_{S_I}(\mathcal{E})$.
Accordingly, we denote a superoperator-complexity measure by its function $C_S$, for convenience.

The following is a natural way to construct a superoperator-complexity measure:
Choose a set $\mathcal{G}$ of elementary operations.
Define the complexity of an arbitrary operation $\mathcal{E}$ as the minimum number of operations in $\mathcal{G}$ that, under composition, effect $\mathcal{E}$.

\begin{definition}[Circuit-superoperator-complexity measure]
  \label{defn:circuit-complexity-measure}
  Let $S = S_1 S_2 \ldots S_N$ denote a quantum system.
  We employ the notation of \cref{defn:superoperator-complexity-general}.
  For every subset $I\subset `{1, 2, \ldots, N}$, let $\mathcal{G}_{S_I}$ denote an (arbitrary) set of completely positive, trace-nonincreasing maps such that $\IdentProc[S_I]{} \in \mathcal{G}_{S_I}$ and $\mathcal{G}_{S_I} \subset \mathcal{G}_{S_J}$ for all $J \supset I$.
  Denote the set $\mathcal{G}_S$ by $\mathcal{G}$.
  A \emph{circuit-superoperator-complexity measure} on $S$ associated with $\mathcal{G}$ is a family of functions $C_{\mathcal{G},S_I}$, for any $I \subset `{1, 2, \ldots, N}$, defined as
  \begin{align}
    C_{\mathcal{G},S_I}(\mathcal{E}) \coloneqq \min `*{ r\in \mathbb{N} :
    \mathcal{E}_r \cdots \mathcal{E}_2 \mathcal{E}_1 = \mathcal{E}\,,\;
    \mathcal{E}_i \in \mathcal{G}_{S_I} \: 
    \, \forall i \in `*{1, 2, \ldots, r}
    }\ .
  \end{align}
\end{definition}

We introduce three conventions.
First, we denote a circuit-superoperator-complexity measure by its function $C_{\mathcal{G},S}$.
Second, we designate the composition of a zero number of operations on $S_I$ as the identity operation $\IdentProc[S_I]{}$.
Third, we set $C_{S_I}(\mathcal{E}) = \infty$ whenever no finite sequence of operations from $\mathcal{G}_{S_I}$ implements an operation $\mathcal{E}$ exactly.

The following proposition confirms that every circuit-complexity measure (\cref{defn:circuit-complexity-measure}) is a superoperator-complexity measure (\cref{defn:superoperator-complexity-general}).

\begin{proposition}[Compatibility of circuit-superoperator complexity and general-superoperator complexity]
  \label{thm:check-circuit-complexity-is-also-superoperator-complexity}
  Define $S$ and the gate set $\mathcal{G}$ as in \cref{defn:circuit-complexity-measure}.
  Every circuit-superoperator-complexity measure on $S$ associated with $\mathcal{G}$ is a superoperator-complexity measure.
\end{proposition}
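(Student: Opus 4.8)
The plan is to verify the three defining properties of a superoperator-complexity measure (\cref{defn:superoperator-complexity-general}) directly from the circuit construction (\cref{defn:circuit-complexity-measure}), noting first that the codomain requirement is automatic, since $C_{\mathcal{G},S_I}$ takes values in $\mathbb{N}\cup\{\infty\}\subset\mathbb{R}_+\cup\{\infty\}$ by definition. Property~\labelcref{item:superoperator-complexity--identity-has-complexity-zero} is then immediate: by the stated convention that the composition of a zero number of operations equals $\IdentProc[S_I]{}$, the value $r=0$ is admissible in the minimization defining $C_{\mathcal{G},S_I}(\IdentProc[S_I]{})$, so that complexity is $0$.

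For the sequential-composition property~\labelcref{item:superoperator-complexity--sequential-composition}, I would first dispose of the case in which $C_{\mathcal{G},S_I}(\mathcal{E}_1)$ or $C_{\mathcal{G},S_I}(\mathcal{E}_2)$ equals $\infty$, where the inequality holds vacuously. Otherwise, fix decompositions $\mathcal{E}_1 = \mathcal{A}_{r_1}\cdots\mathcal{A}_1$ and $\mathcal{E}_2 = \mathcal{B}_{r_2}\cdots\mathcal{B}_1$ into gates drawn from $\mathcal{G}_{S_I}$, with $r_1 = C_{\mathcal{G},S_I}(\mathcal{E}_1)$ and $r_2 = C_{\mathcal{G},S_I}(\mathcal{E}_2)$. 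Concatenating these sequences yields a length-$(r_1+r_2)$ sequence of gates in $\mathcal{G}_{S_I}$ whose composition is $\mathcal{E}_2\mathcal{E}_1$, so the minimization defining $C_{\mathcal{G},S_I}(\mathcal{E}_2\mathcal{E}_1)$ is bounded above by $r_1+r_2$.

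The parallel-composition property~\labelcref{item:superoperator-complexity--parallel-composition} is the step I expect to require the most care, because it is where the containment hypotheses $\mathcal{G}_{S_I}\subset\mathcal{G}_{S_IS_J}$ and $\mathcal{G}_{S_J}\subset\mathcal{G}_{S_IS_J}$ (both instances of the assumption $\mathcal{G}_{S_K}\subset\mathcal{G}_{S_L}$ for $K\subset L$) are used. Assuming both complexities finite, take optimal decompositions $\mathcal{E}_1 = \mathcal{A}_{r_1}\cdots\mathcal{A}_1$ with $\mathcal{A}_i\in\mathcal{G}_{S_I}$ and $\mathcal{E}_2 = \mathcal{B}_{r_2}\cdots\mathcal{B}_1$ with $\mathcal{B}_i\in\mathcal{G}_{S_J}$. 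Because $I$ and $J$ are disjoint, each $\mathcal{A}_i$ (extended trivially on $S_J$) commutes with each $\mathcal{B}_i$ (extended trivially on $S_I$), whence
\begin{align}
  \mathcal{E}_1\otimes\mathcal{E}_2
  = \mathcal{A}_{r_1}\cdots\mathcal{A}_1\,\mathcal{B}_{r_2}\cdots\mathcal{B}_1 \ ,
\end{align}
with each factor now regarded as an operation on $S_IS_J$. The containment relations place every $\mathcal{A}_i$ and every $\mathcal{B}_i$ in $\mathcal{G}_{S_IS_J}$, so this exhibits $\mathcal{E}_1\otimes\mathcal{E}_2$ as a composition of $r_1+r_2$ gates from $\mathcal{G}_{S_IS_J}$ and bounds $C_{\mathcal{G},S_IS_J}(\mathcal{E}_1\otimes\mathcal{E}_2)$ by $r_1+r_2$.

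The only subtlety worth flagging explicitly is the identification, implicit in the notation, of a gate on a subsystem with its trivial extension to any containing system; this identification is precisely what makes the containment $\mathcal{G}_{S_K}\subset\mathcal{G}_{S_L}$ meaningful and what legitimizes rewriting the tensor product as an interleaved circuit above. Once it is made explicit, the three properties follow as described, establishing that $C_{\mathcal{G},S}$ satisfies \cref{defn:superoperator-complexity-general} and thereby proving the proposition.
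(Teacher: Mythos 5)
Your proof is correct and takes essentially the same approach as the paper's: verify the three axioms of \cref{defn:superoperator-complexity-general} directly, with the zero-length-sequence convention handling the identity, concatenation of gate sequences (after disposing of infinite-complexity cases vacuously) handling sequential composition, and the gate-set containments $\mathcal{G}_{S_I}\subset\mathcal{G}_{S_I S_J}$ handling the parallel case. The only cosmetic difference is that the paper obtains the parallel-composition property as a one-line special case of the sequential one via the identification of $\mathcal{E}$ on $S_I$ with $\mathcal{E}\otimes\IdentProc[S_J]{}$ on $S_I S_J$, whereas you unpack that same identification explicitly and interleave the two circuits—the subtlety you rightly flag is exactly what the paper's terse phrasing relies on.
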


\begin{proof}[**thm:check-circuit-complexity-is-also-superoperator-complexity]
  We verify that every circuit-superoperator-complexity measure $C_{\mathcal{G},S}$ has the three properties in \cref{defn:superoperator-complexity-general}.
  Property~\ref{item:superoperator-complexity--identity-has-complexity-zero} holds because, by definition, $\IdentProc[S_I]{} \in \mathcal{G}_{S_I}$; and, by convention, $\IdentProc[S_I]{}$ is the sequential composition of zero operations.
  Property~\ref{item:superoperator-complexity--sequential-composition} automatically characterizes every sequential composition that involves an infinite-complexity operation.
  Now, we prove that property~\ref{item:superoperator-complexity--sequential-composition} characterizes finite-complexity operations $\mathcal{E}$ and $\mathcal{E}'$ on $S_I$.
  If the operations' complexities are $C_{S_I}(\mathcal{E}) = r$ and $C_{S_I}(\mathcal{E}') = r'$, we can decompose $\mathcal{E}$ and $\mathcal{E}'$ as sequences of $r$ and $r'$ operations in $\mathcal{G}_{S_I}$, respectively.
  Therefore, we can decompose $\mathcal{E'}\mathcal{E}$ as a sequence of $r + r'$ operations in $\mathcal{G}_{S_I}$.
  Hence $C_{S_I}(\mathcal{E'} \mathcal{E}) \leq r + r' = C_{S_I}(\mathcal{E}) + C_{S_I}(\mathcal{E}')$.
  Property~\ref{item:superoperator-complexity--parallel-composition} follows as a special case of Property~\ref{item:superoperator-complexity--sequential-composition}: one can express every operation $\mathcal{E}$ on $S_I$ as the operation $\mathcal{E} \otimes \IdentProc[S_J]{}$ on $S_I S_J$ and vice versa.
\end{proof}

Many complexity measures in the literature are defined only for unitary operations.
The following definition specializes \cref{defn:superoperator-complexity-general} to such situations.

\begin{definition}[Unitary-complexity measure]
  \label{defn:unitary-complexity}
  A \emph{unitary-complexity measure} is
  a superoperator-complexity measure that assigns finite values only to unitary operations.
  A unitary-complexity measure $C_S$ is called \emph{adjoint-invariant} if $C_S(\mathcal{U}) = C_S(\mathcal{U}^\dagger)$ for every unitary operation $\mathcal{U}$ on $S$.
\end{definition}
Let $U$ denote any unitary operator on $\Hs_S$, and let $\mathcal{U}(\cdot) \coloneqq U (\cdot) U^\dagger$.
We write $C_S(U)$ as a shorthand for $C_S( \mathcal{U} )$, for convenience.
Correspondingly, a \emph{unitary-circuit-complexity measure} is a unitary-complexity measure that is also a circuit-complexity measure.
Equivalently, a unitary-circuit-complexity measure is a circuit-complexity measure associated with a set of unitary gates.

\subsection{Quantum state complexity}

Here, we introduce state-complexity measures in general, for completeness and for use in later appendices.

\begin{definition}[Exact-state-complexity measure]
  \label{defn:exact-state-complexity}
  Let $S$ denote a quantum system.
  Let $C_S$ denote any superoperator-complexity measure; and $\rho_0$, any state of $S$.
  The \emph{exact-state-complexity measure} with respect to $C_S$ and $\rho_0$ is
  \begin{align}
    C_S^{\rho_0}(\rho)
    \coloneqq \inf `*{ C_S(\mathcal{E}) : \mathcal{E}(\rho_0)  = \rho } \ .
  \end{align}
\end{definition}
\noindent 
An exact-state-complexity measure $C_S^{\rho_0}$ is called an \emph{exact-pure-state-complexity measure} if $\rho_0$ is pure and if $C_S^{\rho_0}$ assigns finite values only to pure states.
$C_S^{\rho_0}$ is such a measure if, for instance, $\rho_0$ is pure and $C_S$ is a unitary-complexity measure.

\begin{definition}[Approximate-state-complexity measure]
  \label{defn:approx-state-complexity}
  Let $S$ denote a quantum system.
  Let $C_S$ denote any superoperator-complexity measure; and $\rho_0$, any state of $S$.
  Let $\delta \geq 0$.
  The \emph{$\delta$-approximate-state-complexity measure} with respect to $C_S$ and $\rho_0$ is
  \begin{align}
    C_S^{\rho_0,\,\delta}(\rho)
    \coloneqq \lim_{\zeta\to 0^+} \inf `*{ C_S(\mathcal{E}) \,:\;
    \frac12\onenorm{\mathcal{E}(\rho_0) - \rho } \leq \delta + \zeta }\ .
    \label{eq:defn-approx-state-complexity}
  \end{align}
\end{definition}

The stabilization $\lim_{\zeta\to 0^+}$ in~\eqref{eq:defn-approx-state-complexity} helps smooth $C_S^{\rho_0,\,\delta}$ at unusual discontinuities in superoperator-complexity measures.
For instance, consider any set $\mathcal{G}_0$ of two-qubit unitary gates that is universal for quantum computation.
Let $\mathcal{G}$ denote the set of $n$-qubit unitaries realizable as finite sequences of gates in $\mathcal{G}_0$.
$\mathcal{G}$ is closed under composition.
Let $C_{\mathcal{G}}$ denote the unitary-circuit-complexity measure associated with $\mathcal{G}$ (\cref{defn:circuit-complexity-measure}).
By construction, the identity $\Ident_2^{\otimes n}$ has zero complexity, and every unitary $U \in \mathcal{G}$ has unit complexity: $C_{\mathcal{G}}(U) = 1$.
Any other unitary has infinite complexity, since it can only be approximated by a sequence of gates in $\mathcal{G}_0$---and thus by a unitary in $\mathcal{G}$~\cite{BookNielsenChuang2000}.
Let $\ket{\psi_0}$ denote any pure $n$-qubit state.
The stabilization ensures that $C_S^{\psi_0,\,0}(\psi) = 1$ for all pure states $\ket{\psi} \neq \ket{\psi_0}$.
Without the stabilization, every state $\ket\psi$ inaccessible from $\ket{\psi_0}$ by any unitary in $\mathcal{G}$ would have an infinite complexity: $C_S^{\psi_0,\,0}(\psi) = \infty$.

As an immediate consequence of~\eqref{eq:defn-approx-state-complexity}, $C_S^{\rho_0,\delta}$ is right-continuous in $\delta$:
\begin{align}
  \lim_{\zeta\to 0^+} C_S^{\rho_0,\delta + \zeta}(\rho) = C_S^{\rho_0,\delta}(\rho)\ .
  \label{eq:approx-state-complexity-right-continuity-in-delta}
\end{align}

An approximate-state-complexity measure $C_S^{\rho_0,\,\delta}$ is called an \emph{approximate-pure-state-complexity measure} if $\rho_0$ is pure and if $C_S^{\rho_0,\,0}$ assigns finite values only to pure states.
$C_S^{\rho_0,\,\delta}$ is such a measure if, for instance, $\rho_0$ is pure and $C_S$ is a unitary-complexity measure.

The following proposition provides an alternative expression for~\eqref{eq:defn-approx-state-complexity}.
From this expression, we derive a condition under which the stabilization in~\eqref{eq:defn-approx-state-complexity} becomes inconsequential.

\begin{proposition}[Alternative expression for approximate-state-complexity measure]
  \label{thm:stab-approx-state-complexity-measure}
  Let $S$ denote a quantum system.
  Let $C_S$ denote any superoperator-complexity measure; and $\rho_0$, any state of $S$.
  Let $\delta \geq 0$.
  It holds that
  \begin{align}
    C_S^{\rho_0,\delta}(\rho)
    &= \inf `*{ r \,:\;
      \dist_1(\rho, \mathfrak{G}_r[\rho_0]) \leq \delta
      } \ ,
    \label{eq:stab-approx-state-complexity-measure-expression-with-set-dist1}
  \end{align}
  where
  \begin{align}
      \mathfrak{G}_r[\rho_0]
    \coloneqq `*{ \mathcal{E}(\rho_0) \,:\; C_S(\mathcal{E}) \leq r } \
    \; \; \text{and} \; \;
      \dist_1(\rho, T) &\coloneqq \inf_{\rho'\in T} `*{ \frac12\onenorm[\big]{ \rho - \rho' } } \ ,
      \label{eq:stab-approx-state-complexity-measure-defn-dist1}
  \end{align}
  and wherein $T$ denotes any set of states.
\end{proposition}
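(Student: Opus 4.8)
The plan is to rewrite both sides of~\eqref{eq:stab-approx-state-complexity-measure-expression-with-set-dist1} in terms of two auxiliary functions and then compare them via their shared monotonicity. Define $f(\zeta) \coloneqq \inf\{ C_S(\mathcal{E}) : \frac12\onenorm{\mathcal{E}(\rho_0) - \rho} \leq \delta + \zeta\}$, so that the left-hand side is $C_S^{\rho_0,\delta}(\rho) = \lim_{\zeta\to0^+} f(\zeta)$. Relaxing the constraint as $\zeta$ grows can only enlarge the feasible set, so $f$ is non-increasing; hence the stabilization equals a supremum, $\lim_{\zeta\to0^+} f(\zeta) = \sup_{\zeta>0} f(\zeta)$. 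Similarly, define $h(r) \coloneqq \dist_1(\rho, \mathfrak{G}_r[\rho_0]) = \inf\{ \frac12\onenorm{\mathcal{E}(\rho_0)-\rho} : C_S(\mathcal{E}) \leq r\}$, which is non-increasing in $r$ because $\mathfrak{G}_r[\rho_0]$ grows with $r$ (and is always defined, since $C_S(\IdentProc{})=0$ forces $\rho_0 \in \mathfrak{G}_r[\rho_0]$ for every $r\geq0$). The right-hand side is then $g \coloneqq \inf\{ r : h(r)\leq\delta\}$; because $h$ is non-increasing, this set is an up-set, so $h(r)\leq\delta$ holds for every $r>g$.

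For the inequality $C_S^{\rho_0,\delta}(\rho)\leq g$, I would fix $\zeta>0$ and any $r>g$. Then $h(r)\leq\delta<\delta+\zeta$, so by the definition of the infimum $h(r)$ there is an operation $\mathcal{E}$ with $C_S(\mathcal{E})\leq r$ and $\frac12\onenorm{\mathcal{E}(\rho_0)-\rho}\leq\delta+\zeta$. This $\mathcal{E}$ is feasible for $f(\zeta)$, whence $f(\zeta)\leq C_S(\mathcal{E})\leq r$. Since this holds for every $r>g$, we get $f(\zeta)\leq g$, and taking the supremum over $\zeta$ yields $\sup_{\zeta>0}f(\zeta)=C_S^{\rho_0,\delta}(\rho)\leq g$.

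For the reverse inequality $g\leq L$, where $L\coloneqq\sup_{\zeta>0}f(\zeta)=C_S^{\rho_0,\delta}(\rho)$, it suffices to show $h(r)\leq\delta$ for every $r>L$, since that forces $(L,\infty)\subseteq\{r:h(r)\leq\delta\}$ and hence $g\leq L$. Fixing $r>L$, I have $f(\zeta)\leq L<r$ for every $\zeta>0$, so by the definition of $f(\zeta)$ there is an $\mathcal{E}$ with $C_S(\mathcal{E})<r$ and $\frac12\onenorm{\mathcal{E}(\rho_0)-\rho}\leq\delta+\zeta$; this $\mathcal{E}$ witnesses $h(r)\leq\delta+\zeta$. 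As $\zeta>0$ is arbitrary, $h(r)\leq\delta$, completing the argument. The cases with $L=\infty$ are handled automatically, since the two inequalities $L\leq g$ and $g\leq L$ are both proved without any finiteness assumption.

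The content of the proof is entirely the bookkeeping of infima and the interchange of the two one-sided limits, so the main obstacle is care rather than depth. I must be scrupulous about strict versus non-strict inequalities---using $\delta<\delta+\zeta$ to extract a feasible $\mathcal{E}$ from the infimum defining $h$, and $f(\zeta)<r$ to extract one from the infimum defining $f$---and about the fact that, because $h$ need not be continuous or attain its infimum, I can assert $h(r)\leq\delta$ only for $r$ strictly above $g$ rather than at $g$ itself. The stabilization $\lim_{\zeta\to0^+}$ in~\eqref{eq:defn-approx-state-complexity} is precisely what matches the closed constraint $\dist_1(\rho,\mathfrak{G}_r[\rho_0])\leq\delta$ on the right; without it the two sides could disagree exactly at such discontinuities.
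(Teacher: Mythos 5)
Your proof is correct and takes essentially the same route as the paper: unfolding the constraint $\dist_1(\rho,\mathfrak{G}_r[\rho_0])\leq\delta$ into ``for all $\zeta>0$ there exists $\mathcal{E}$ with $C_S(\mathcal{E})\leq r$ and $\frac12\onenorm{\mathcal{E}(\rho_0)-\rho}\leq\delta+\zeta$'' shows that your $g=\inf\{r: h(r)\leq\delta\}$ is precisely the paper's intermediate quantity $r^*$, and your two directions are the paper's two inequalities, with your strict margins $r>g$ and $r>L$ playing the role of the paper's additive slack $\zeta'$. Your extra bookkeeping (nonemptiness of $\mathfrak{G}_r[\rho_0]$ via $C_S(\IdentProc{})=0$, and the $L=\infty$ cases) is sound and, if anything, slightly more explicit than the paper's treatment.
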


We now show that, under simple conditions, we can forgo the stabilization in the definition of the approximate-state-complexity measure.

\begin{corollary}[No stabilization of the approximate-state-complexity measure required for compact gate sets]
  \label{thm:stab-approx-state-complexity-measure-compact-gate-set}
  Consider the setting in \cref{thm:stab-approx-state-complexity-measure}.
  Suppose that, for all $r\geq 0$, the sets $\mathfrak{G}_r[\rho_0]$ are compact.
  Then
  \begin{align}
    C_S^{\rho_0,\,\delta}(\rho)
    = 
    \inf
    `*{ C_S(\mathcal{E}) \,:\;
    \frac12\onenorm{\mathcal{E}(\rho_0) - \rho } \leq \delta }\ .
    \label{eq:approx-state-complexity-nostab-Gr-compact}
  \end{align}
  Furthermore, the sets $\mathfrak{G}_r[\rho_0]$ are compact if $C_S$ is a circuit-complexity measure (\cref{defn:circuit-complexity-measure}) associated with a compact gate set $\mathcal{G}_S$.
\end{corollary}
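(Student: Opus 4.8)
The plan is to derive both assertions from the alternative expression for the approximate-state-complexity measure established in \cref{thm:stab-approx-state-complexity-measure}, using compactness solely to guarantee that the infimum defining $\dist_1$ is attained. Write $A \coloneqq C_S^{\rho_0,\delta}(\rho)$, which by~\eqref{eq:stab-approx-state-complexity-measure-expression-with-set-dist1} equals $\inf\{ r : \dist_1(\rho, \mathfrak{G}_r[\rho_0]) \leq \delta \}$, and let $B$ denote the right-hand side of~\eqref{eq:approx-state-complexity-nostab-Gr-compact}. (If no finite-complexity map brings $\rho_0$ within distance $\delta$ of $\rho$, then $A = B = \infty$, so we may assume both are finite.) First I would prove the inequality $A \leq B$, which needs no compactness: any $\mathcal{E}$ feasible for $B$ satisfies $\mathcal{E}(\rho_0) \in \mathfrak{G}_{C_S(\mathcal{E})}[\rho_0]$ and $\frac12\onenorm{\mathcal{E}(\rho_0) - \rho} \leq \delta$, whence $\dist_1(\rho, \mathfrak{G}_{C_S(\mathcal{E})}[\rho_0]) \leq \delta$; thus $C_S(\mathcal{E})$ is feasible for the infimum defining $A$, giving $A \leq C_S(\mathcal{E})$, and taking the infimum over $\mathcal{E}$ yields $A \leq B$.

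The reverse inequality is where compactness enters. Because $\mathfrak{G}_r[\rho_0]$ is nondecreasing in $r$, the quantity $\dist_1(\rho, \mathfrak{G}_r[\rho_0])$ is nonincreasing in $r$, so the feasible set defining $A$ is upward-closed and every $r > A$ satisfies $\dist_1(\rho, \mathfrak{G}_r[\rho_0]) \leq \delta$. Here I would invoke compactness of $\mathfrak{G}_r[\rho_0]$: the continuous function $\sigma \mapsto \frac12\onenorm{\rho - \sigma}$ attains its infimum over the compact set $\mathfrak{G}_r[\rho_0]$, so there exists a state $\sigma_r = \mathcal{E}_r(\rho_0)$ with $C_S(\mathcal{E}_r) \leq r$ and $\frac12\onenorm{\rho - \sigma_r} = \dist_1(\rho, \mathfrak{G}_r[\rho_0]) \leq \delta$ exactly, with no slack. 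This is precisely what the stabilization $\lim_{\zeta\to 0^+}$ was compensating for in the general case. Hence $\mathcal{E}_r$ is feasible for $B$, so $B \leq C_S(\mathcal{E}_r) \leq r$; letting $r$ decrease to $A$ gives $B \leq A$, and therefore $A = B$, which is~\eqref{eq:approx-state-complexity-nostab-Gr-compact}.

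For the final claim I would show that $\mathfrak{G}_r[\rho_0]$ is compact whenever $C_S$ is the circuit-complexity measure associated with a compact gate set $\mathcal{G} \coloneqq \mathcal{G}_S$. Since circuit complexity is integer-valued, \cref{defn:circuit-complexity-measure} implies that a state lies in $\mathfrak{G}_r[\rho_0]$ if and only if it equals $\mathcal{E}_k\cdots\mathcal{E}_1(\rho_0)$ for some integer $k \leq \lfloor r \rfloor$ and gates $\mathcal{E}_i \in \mathcal{G}$ (the nesting $\mathcal{G}_{S_I} \subseteq \mathcal{G}_S$ places every gate in $\mathcal{G}$). Thus $\mathfrak{G}_r[\rho_0] = \bigcup_{k=0}^{\lfloor r\rfloor} \Phi_k(\mathcal{G}^{\times k})$, where $\Phi_k(\mathcal{E}_1, \ldots, \mathcal{E}_k) \coloneqq \mathcal{E}_k\cdots\mathcal{E}_1(\rho_0)$ and $\Phi_0 \coloneqq \rho_0$. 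Because composition of superoperators and evaluation at $\rho_0$ are continuous on the finite-dimensional superoperator space, each $\Phi_k$ is continuous; each $\mathcal{G}^{\times k}$ is compact as a finite product of the compact set $\mathcal{G}$; continuous images of compact sets are compact; and a finite union of compact sets is compact. Hence $\mathfrak{G}_r[\rho_0]$ is compact. I expect the main obstacle to be exactly this last step — recognizing $\mathfrak{G}_r[\rho_0]$ as a finite union of continuous images of compact product sets and verifying the continuity of the iterated composition map — since the equality $A = B$ is then a short consequence of the attainment of the distance infimum on a compact set.
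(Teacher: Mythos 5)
Your proof is correct and takes essentially the same route as the paper's: compactness of $\mathfrak{G}_r[\rho_0]$ makes the infimum defining $\dist_1(\rho,\mathfrak{G}_r[\rho_0])$ an attained minimum, which removes the stabilization, and compactness itself follows from expressing $\mathfrak{G}_r[\rho_0]$ as a continuous image of compact products of the gate set. The only cosmetic differences are that you spell out the two-inequality substitution that the paper compresses into a single line, and you write $\mathfrak{G}_r[\rho_0]$ as a finite union $\bigcup_{k=0}^{\lfloor r\rfloor}\Phi_k(\mathcal{G}^{\times k})$ where the paper uses a single length-$r$ map $F$ on $(\mathcal{G}_S)^{\times r}$, implicitly padding shorter circuits with the identity (which $\mathcal{G}_{S_I}$ contains by \cref{defn:circuit-complexity-measure}).
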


\begin{proof}[*thm:stab-approx-state-complexity-measure]
  Let $\rho$ denote any state of $S$.
  We show that the expressions in~\eqref{eq:defn-approx-state-complexity} and~\eqref{eq:stab-approx-state-complexity-measure-expression-with-set-dist1} equal
  \begin{align}
    r^*
    &\coloneqq \inf`*{
    r\ :\ 
    \forall\ \zeta > 0\ ,\ 
    \exists\ \mathcal{E}\ \text{ such that }\ C_S(\mathcal{E}) \leq r\ \text{ and }\ 
    \frac12\onenorm{\mathcal{E}(\rho_0) - \rho} \leq \delta + \zeta
    } \ .
    \label{eq:qiphojlkwop2iewojbnlk}
  \end{align}
  
  The infimum in~\eqref{eq:stab-approx-state-complexity-measure-expression-with-set-dist1} equals $r^*$, since
  \begin{align}
    &
    \inf`*{ r \ : \ \dist_1`\big(\rho, \mathfrak{G}_r[\rho_0]) \leq \delta }
      \nonumber\\
    &= 
    \inf`*{ r \,:\ \forall\ \zeta>0\;,\  \exists\ \rho'\in\mathfrak{G}_r[\rho_0]\; \text{ such that }\ 
    \frac12\onenorm{\rho - \rho'} \leq \delta + \zeta }
    = r^* \ .
  \end{align}

  Now, consider $C_S^{\rho_0,\delta}(\rho)$, as defined in~\eqref{eq:defn-approx-state-complexity}.
  We show that $C_S^{\rho_0,\delta}(\rho) = r^*$ by showing first that $r^* \leq C_S^{\rho_0,\delta}(\rho)$ and then that $C_S^{\rho_0,\delta}(\rho) \leq r^*$.
  Consider any $\zeta,\zeta' > 0$.
  By~\eqref{eq:defn-approx-state-complexity}, there exists an $\mathcal{E}$ such that $C_S(\mathcal{E}) \leq C_S^{\rho_0,\delta}(\rho) + \zeta'$ and $\frac12\onenorm{ \mathcal{E}(\rho_0) - \rho } \leq \delta + \zeta$.
  Hence, $r = C_S^{\rho_0,\delta}(\rho) + \zeta'$ is a candidate for the optimization in~\eqref{eq:qiphojlkwop2iewojbnlk}, so
  \begin{align}
    r^* \leq C_S^{\rho_0,\delta}(\rho) + \zeta' \ .
  \end{align}
  Thus, $r^* \leq C_S^{\rho_0,\delta}(\rho)$, since $\zeta' > 0$ is arbitrary.
  By the infimum in~\eqref{eq:qiphojlkwop2iewojbnlk}, there exists an $\mathcal{E}$ and an $r\leq r^* + \zeta'$ such that $C_S(\mathcal{E})\leq r$ and $\frac12\onenorm{\mathcal{E}(\rho_0) - \rho} \leq \delta+\zeta$.
  Hence, $r$ is lower-bounded as
  \begin{align}
    \inf`*{ C_S(\mathcal{E})\ :\ \frac12\onenorm{\mathcal{E}(\rho_0) - \rho} \leq \delta + \zeta }
    \leq r \leq r^* + \zeta' \ .
    \label{eq:random-infimum-in-proof}
  \end{align}
  Thus, the infimum in~\eqref{eq:random-infimum-in-proof} lower-bounds $r^*$, since $\zeta' > 0$ is arbitrary.
  Consequently,
  \begin{align}
    C_S^{\rho_0,\delta}(\rho)
    = \lim_{\zeta\to 0^+}
    \inf`*{ C_S(\mathcal{E})\ :\ \frac12\onenorm{\mathcal{E}(\rho_0) - \rho} \leq \delta + \zeta }
    \leq r^* \ .
  \end{align}
\end{proof}

\begin{proof}[*thm:stab-approx-state-complexity-measure-compact-gate-set]
  Suppose that $\mathfrak{G}_r[\rho_0]$ is compact for all $r\geq 0$.
  The infimum defining $\dist_1`(\rho, \mathfrak{G}_r[\rho_0])$, in~\eqref{eq:stab-approx-state-complexity-measure-defn-dist1}, is achieved by some state in $\mathfrak{G}_r[\rho_0]$.
  Thus,
  \begin{align}
      \dist_1`*(\rho, \mathfrak{G}_r[\rho_0])
      = \min`*{ \frac12\onenorm{\mathcal{E}(\rho_0) - \rho}\ :\ C_S(\mathcal{E})\leq r } \ .
      \label{eq:final-expression-for-dist-set}
  \end{align}
  Substituting~\eqref{eq:final-expression-for-dist-set} into~\eqref{eq:stab-approx-state-complexity-measure-expression-with-set-dist1} yields~\eqref{eq:approx-state-complexity-nostab-Gr-compact}.

  Now, suppose that $C_S$ is a circuit-complexity measure (\cref{defn:circuit-complexity-measure}) associated with a gate set $\mathcal{G}_S$.
  Assume that $\mathcal{G}_{S_I}$ is compact for all $I \subset `{1, 2, \ldots, N}$.
  Consider the function $F$ that maps every sequence $(\mathcal{E}_1, \ldots ,\mathcal{E}_r)$ of gates in $\mathcal{G}_S$ to an action on $\rho_0$, as follows:
  \begin{align}
      F`*( \mathcal{E}_1, \mathcal{E}_2, \ldots, \mathcal{E}_r ) \coloneqq `*( \mathcal{E}_r \cdots \mathcal{E}_2 \mathcal{E}_1 ) `*( \rho_0 ) \ .
  \end{align}
  By definition, $F$ is a map from $(\mathcal{G}_S)^{\times r}$ to $\mathfrak{G}_r[\rho_0]$.
  $F$ is continuous, since it is a polynomial map.
  $(\mathcal{G}_S)^{\times r}$ is compact, since the finite product of compact sets is compact.
  Therefore, $\mathfrak{G}_r[\rho_0]$ is compact, since continuous functions preserve compactness.
\end{proof}

\subsection{POVM-effect complexity}

A central ingredient in the complexity relative entropy's definition is a family $\{\Mr[r]\}$ of sets of POVM effects.
$\{ \Mr[r] \}$ is a formal parameter that can accommodate any notion of POVM-effect complexity.
Each set $\Mr[r]$ contains precisely the POVM effects of complexities $\leq r$.
An example of such a family of sets is defined via~\eqref{eq:setting-defn-Mr}.

Here, we present conservative assumptions about every family $\{\Mr[r]\}$ to ensure that it has properties necessary for our construction of the complexity (relative) entropy.

\begin{definition}[POVM-effect-complexity sets]
  \label{defn:Mr-sets-general}
  Let $S$ denote a composition of $N$ quantum subsystems: $S = S_1 S_2 \ldots S_N$.
  For every subset $I\subset `{1, 2, \ldots, N}$, let $S_I$ denote the composition of the subsystems in $`{S_i}_{i\in I}$.
  A \emph{family of POVM-effect-complexity sets} on $\Hs_S$ is a family $`{ \Mr[r][S_I] }_{r\in\mathbb{R}_+}$ of sets of POVM effects, for each $S_I$.
  Each $\Mr[r][S_I]$ is a \emph{POVM-effect-complexity set}.
  The family must satisfy the following axioms, for every $I\subset `{1, 2, \ldots, N}$:
  \begin{enumerate}[label={(\roman*)}]
    \item\label{item:POVM-effect-complexity--identity} 
    The identity operator has zero complexity: $\Ident_{S_I} \in \Mr[0][S_I]$.
    \item\label{item:POVM-effect-complexity--monotonous}
    For all $r'\geq r\geq 0$, $\Mr[r][S_I] \subset \Mr[r'][S_I]$.
    \item\label{item:POVM-effect-complexity--subsystems}
    For all $r,r'\geq 0$ and for all $J$ disjoint from $I$, $\Mr[r][S_I]\otimes\Mr[r'][S_J] \subset \Mr[r+r'][S_I S_J]$.
  \end{enumerate}
\end{definition}
\noindent
For convenience, we write $`{ \Mr[r][S] }$ as shorthand for a family $`{ \Mr[r][S_I] }_{r, I}$.

\subsection{POVM-effect-complexity sets from simple POVM effects and a superoperator-complexity measure}

A natural way to construct POVM-effect-complexity sets is as follows:
Designate a set of ``simple,'' zero-complexity POVM effects.
Define the complexity of an effect $Q$ as the minimum complexity of any superoperator that maps a simple effect to $Q$.
In the main text, we use this procedure to construct the sets $\Mr[r]$ in~\eqref{eq:setting-defn-Mr} from single-qubit projectors and a circuit-complexity measure.

The conditions in \cref{defn:Mr-sets-general} imply that zero-complexity effects must satisfy two conditions:
(i) the identity operator is a zero-complexity effect.
(ii) The parallel composition of two zero-complexity effects is a zero-complexity effect.
Any set of ``simple'' POVM effects must satisfy these criteria, which we formalize in the following definition.

\begin{definition}[Simple POVM effects]
  \label{defn:Psimple-general}
  Let $S$ denote a composition of $N$ quantum subsystems: $S = S_1 S_2 \ldots S_N$.
  For every subset $I\subset `{1, 2, \ldots, N}$, let $S_I$ denote the composition of the subsystems in $`{S_i}_{i\in I}$.
  Let $\{\Psimple[S_I]\}$ denote a family of sets of POVM effects.
  The $\Psimple[S_I]$ are sets of \textit{simple POVM effects} if, for all $I\subset `{1, 2, \ldots, N}$, the following conditions hold:
  \begin{enumerate}[(\roman*)]
    \item $\Psimple[S_I]$ contains the identity operator: $\Ident_{S_I} \in \Psimple[S_I]$.
    \item $\Psimple[S_I]\otimes \Psimple[S_J] \subset \Psimple[S_I S_J]$ for all $J$ disjoint from $I$.
  \end{enumerate}
\end{definition}
\noindent
For convenience, we denote a family $`{ \Psimple[S_I] }$ by its set $\Psimple[S]$.

We now construct POVM-effect-complexity sets from simple POVM effects and a superoperator-complexity measure.
\begin{definition}[POVM-effect-complexity sets from simple effects and a superoperator-complexity measure]
  \label{defn:Mr-general-from-Psimple-and-superop-cplx-measure}
  Let $S = S_1 S_2 \ldots S_N$ denote a quantum system as in \cref{defn:Psimple-general}.
  Let $\Psimple[S]$ denote any set of simple POVM effects (\cref{defn:Psimple-general}); and $C_S$, any superoperator-complexity measure (\cref{defn:superoperator-complexity-general}).
  For all $r \geq 0$ and $I\subset `{1, 2, \ldots, N}$, we define
  \begin{align}
    \Mr[r][S_I]`*(\Psimple[S_I], C_{S_I}) \coloneqq `*{
    \mathcal{E}^\dagger( P ) \,:\;
    P \in \Psimple[S_I] \,,\;
    C_{S_I}(\mathcal{E}) \leq r
    }\ .
    \label{eq:Mr-general-from-Psimple-and-superop-cplx-measure}
  \end{align}
\end{definition}

\begin{proposition}
  \label{thm:Mr-Psimple-C-define-POVM-effect-cplx-measure}
  The sets $\Mr[r][S_I]`*(\Psimple[S_I], C_{S_I})$ constructed in~\eqref{eq:Mr-general-from-Psimple-and-superop-cplx-measure} define a family of POVM-effect-complexity sets (\cref{defn:Mr-sets-general}).
\end{proposition}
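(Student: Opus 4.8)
The plan is to verify directly that the sets defined in~\eqref{eq:Mr-general-from-Psimple-and-superop-cplx-measure} satisfy each of the three axioms~\ref{item:POVM-effect-complexity--identity}--\ref{item:POVM-effect-complexity--subsystems} of \cref{defn:Mr-sets-general}, deducing each from the matching structural property of the simple effects $\Psimple[S]$ (\cref{defn:Psimple-general}) and of the superoperator-complexity measure $C_S$ (\cref{defn:superoperator-complexity-general}). Two elementary facts about superoperator adjoints will be used repeatedly: the identity superoperator equals its own adjoint, and the adjoint factorizes over tensor products, $(\mathcal{E}_1 \otimes \mathcal{E}_2)^\dagger = \mathcal{E}_1^\dagger \otimes \mathcal{E}_2^\dagger$. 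Both follow at once from the defining relation $\tr(\mathcal{E}^\dagger(A)\,B) = \tr(A\,\mathcal{E}(B))$, using that product operators $A_I \otimes A_J$ span the operator space of $S_I S_J$.

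For axiom~\ref{item:POVM-effect-complexity--identity}, I will realize $\Ident_{S_I}$ as a member of $\Mr[0][S_I]$ by taking the superoperator $\mathcal{E} = \IdentProc[S_I]{}$, which has complexity $0$ by property~\ref{item:superoperator-complexity--identity-has-complexity-zero} of \cref{defn:superoperator-complexity-general}, together with the simple effect $P = \Ident_{S_I} \in \Psimple[S_I]$ guaranteed by \cref{defn:Psimple-general}. Since $\IdentProc[S_I]{}$ is self-adjoint, $\mathcal{E}^\dagger(P) = \Ident_{S_I}$, establishing $\Ident_{S_I} \in \Mr[0][S_I]$. Axiom~\ref{item:POVM-effect-complexity--monotonous} is immediate from the construction~\eqref{eq:Mr-general-from-Psimple-and-superop-cplx-measure}: any $Q = \mathcal{E}^\dagger(P) \in \Mr[r][S_I]$ carries a witness $\mathcal{E}$ with $C_{S_I}(\mathcal{E}) \leq r \leq r'$, so the same pair $(\mathcal{E}, P)$ certifies $Q \in \Mr[r'][S_I]$ whenever $r' \geq r$.

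The only axiom requiring a genuine (if short) argument is~\ref{item:POVM-effect-complexity--subsystems}. Given effects $Q_I = \mathcal{E}_I^\dagger(P_I) \in \Mr[r][S_I]$ and $Q_J = \mathcal{E}_J^\dagger(P_J) \in \Mr[r'][S_J]$, with $P_I \in \Psimple[S_I]$, $P_J \in \Psimple[S_J]$, $C_{S_I}(\mathcal{E}_I) \leq r$, and $C_{S_J}(\mathcal{E}_J) \leq r'$, I will set $\mathcal{E} \coloneqq \mathcal{E}_I \otimes \mathcal{E}_J$. The tensor factorization of the adjoint then yields $\mathcal{E}^\dagger(P_I \otimes P_J) = \mathcal{E}_I^\dagger(P_I) \otimes \mathcal{E}_J^\dagger(P_J) = Q_I \otimes Q_J$. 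The effect $P_I \otimes P_J$ is simple on $S_I S_J$ by \cref{defn:Psimple-general}, and the parallel-composition bound~\ref{item:superoperator-complexity--parallel-composition} of \cref{defn:superoperator-complexity-general} gives $C_{S_I S_J}(\mathcal{E}) \leq C_{S_I}(\mathcal{E}_I) + C_{S_J}(\mathcal{E}_J) \leq r + r'$. Hence $Q_I \otimes Q_J$ admits a witness of complexity $\leq r + r'$ built from a simple effect, placing it in $\Mr[r+r'][S_I S_J]$; since $Q_I, Q_J$ were arbitrary, $\Mr[r][S_I] \otimes \Mr[r'][S_J] \subset \Mr[r+r'][S_I S_J]$.

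I expect no substantive obstacle: the proof is essentially a transcription of the three defining axioms of $\Psimple[S]$ and $C_S$ through the construction~\eqref{eq:Mr-general-from-Psimple-and-superop-cplx-measure}. The only point demanding mild care is the bookkeeping of superoperator adjoints---in particular, the factorization $(\mathcal{E}_1 \otimes \mathcal{E}_2)^\dagger = \mathcal{E}_1^\dagger \otimes \mathcal{E}_2^\dagger$ used in axiom~\ref{item:POVM-effect-complexity--subsystems}---which I would establish once from the defining trace relation and then reuse.
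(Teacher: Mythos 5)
Your proof is correct and follows essentially the same route as the paper's: each axiom of \cref{defn:Mr-sets-general} is verified by exhibiting the obvious witness, with axiom~\ref{item:POVM-effect-complexity--subsystems} handled via the tensor-product witness $(\mathcal{E}_I \otimes \mathcal{E}_J)^\dagger(P_I \otimes P_J) = Q_I \otimes Q_J$ together with properties of $\Psimple$ and the parallel-composition bound on $C_S$. Your only addition is to make explicit the adjoint factorization $(\mathcal{E}_1 \otimes \mathcal{E}_2)^\dagger = \mathcal{E}_1^\dagger \otimes \mathcal{E}_2^\dagger$, which the paper uses tacitly.
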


\begin{proof}[**thm:Mr-Psimple-C-define-POVM-effect-cplx-measure]
  We write $\Mr[r][S_I] = \Mr[r][S_I]`*(\Psimple[S_I], C_{S_I})$ for short.
  We verify that the sets $\Mr[r][S_I]$ have the three properties in \cref{defn:Mr-sets-general}.
  Property~\ref{item:POVM-effect-complexity--identity} holds because $\Ident_{S_I} = \IdentProc[S_I]{}^\dagger`*(\Ident_{S_I})$, $\Ident_{S_I} \in \Psimple[S_I]$ (\cref{defn:Psimple-general}), and $C_{S_I}`*(\IdentProc[S_I]{}) = 0$ (\cref{defn:superoperator-complexity-general}).
  Property~\ref{item:POVM-effect-complexity--monotonous} holds because, for all $r' \geq r \geq 0$ and for all operations $\mathcal{E}$ on $S_I$, $C_{S_I}(\mathcal{E}) \leq r$ implies $C_{S_I}(\mathcal{E}) \leq r'$.
  To show that Property~\ref{item:POVM-effect-complexity--subsystems} holds, we consider any POVM effects $Q_1 \in \Mr[r][S_I]$ and $Q_2 \in \Mr[r'][S_J]$, with $I$ and $J$ disjoint.
  $Q_1 = \mathcal{E}_1^\dagger (P_1)$ for some effect $P_1 \in \Psimple[S_I]$ and for some operation $\mathcal{E}_1$ satisfying $C_{S_I}(\mathcal{E}_1) \leq r$.
  Similarly, $Q_2 = \mathcal{E}_2^\dagger (P_2)$ for some effect $P_2 \in \Psimple[S_J]$ and for some operation $\mathcal{E}_2$ satisfying $C_{S_J}(\mathcal{E}_2) \leq r'$.
  By \cref{defn:Psimple-general}, $P_1\otimes P_2 \in \Psimple[S_I S_J]$.
  By \cref{defn:superoperator-complexity-general}, $C_{S_I S_J}(\mathcal{E}_1\otimes\mathcal{E}_2) \leq C_{S_I}(\mathcal{E}_1) + C_{S_J}(\mathcal{E}_2) \leq r + r'$.
  Thus, $Q_1\otimes Q_2 = `\big(\mathcal{E}_1\otimes \mathcal{E}_2)^\dagger`\big(P_1\otimes P_2) \in \Mr[r+r'][S_I S_J]$.
\end{proof}

By~\eqref{eq:Mr-general-from-Psimple-and-superop-cplx-measure}, every simple POVM effect is a zero-complexity effect, since the identity process has zero complexity.
In general, the converse is false: not every zero-complexity effect is a simple POVM effect.
Indeed, a nontrivial zero-complexity operation applied to a simple effect may yield a zero-complexity effect that is not a simple effect.
Let $\Psimple$ denote a set of simple POVM effects (\cref{defn:Psimple-general}); and $C$, a superoperator-complexity measure (\cref{defn:superoperator-complexity-general}).
Suppose that there exists a zero-complexity operation $\mathcal{E} \neq \IdentProc[]{}$: $C(\mathcal{E}) = 0$.
Suppose, further, that $\mathcal{E}^\dagger(P) \not\in \Psimple$ for some effect $P \in \Psimple$.
Then $\Mr[r=0](\Psimple,C) \supsetneq \Psimple$, since $\mathcal{E}^\dagger(P) \in \Mr[r=0](\Psimple,C)$, by~\eqref{eq:Mr-general-from-Psimple-and-superop-cplx-measure}.
If $\IdentProc[]{}$ is the unique zero-complexity operation---or, more generally, if $\Psimple$ is closed under (the adjoint of) every zero-complexity operation---then $\Mr[r=0](\Psimple,C) = \Psimple$.
In particular, $\IdentProc[]{}$ is the unique zero-complexity operation whenever $C$ is a circuit-complexity measure (\cref{defn:circuit-complexity-measure}).

\subsection{POVM-effect-complexity sets from a POVM-effect-complexity measure}

An alternative, equivalent formulation of a POVM-effect-complexity set (\cref{defn:Mr-sets-general}) arises from a complexity measure on individual POVM effects.
This formulation has the advantage of using a primitive formally similar to a superoperator-complexity measure (\cref{defn:superoperator-complexity-general}).
Consequently, some readers may find POVM-effect-complexity sets more intuitive in this formulation.

\begin{definition}[POVM-effect-complexity measure]
  \label{defn:POVM-complexity-general}
  Let $S$ denote a composition of $N$ quantum subsystems: $S = S_1 S_2 \ldots S_N$.
  For every subset $I\subset `{1, 2, \ldots, N}$, let $S_I$ denote the composition of the subsystems in $`{S_i}_{i\in I}$.
  A \emph{POVM-effect-complexity measure} on $S$ is a family of functions $F_{S_I}$ that map POVM effects on $S_I$ to $\mathbb{R}_+ \cup `{ \infty }$, for all $I \subset `{ 1, 2, \ldots, N}$.
  The functions have the following properties, for every $I \subset `{ 1, 2, \ldots, N}$:
  \begin{enumerate}[(\roman*)]
  \item\label{item:POVM-complexity--identity-has-complexity-zero}%
    The identity effect has zero complexity: $F_{S_I}(\Ident_{S_I}) = 0$.
  \item\label{item:POVM-complexity--parallel-composition} %
    No parallel composition of POVM effects has a complexity greater than the sum of the individual effects' complexities:
    $F_{S_I S_J}(Q_1\otimes Q_2) \leq F_{S_I}(Q_1) + F_{S_J}(Q_2)$ for all $J$ disjoint from $I$.
  \end{enumerate}
\end{definition}

For convenience, we denote a POVM-effect-complexity measure by its function $F_S$.
Every POVM-effect-complexity measure induces a family of POVM-effect-complexity sets (\cref{defn:Mr-sets-general}), as per the following definition.

\begin{definition}[POVM-effect-complexity sets
from a POVM-effect-complexity measure]
  \label{defn:Mr-sets-from-POVM-complexity-measure}
  Let $S = S_1 S_2 \ldots S_N$ denote a quantum system; and $F_S$, a POVM-effect-complexity measure, as in \cref{defn:POVM-complexity-general}.
  A \emph{family of POVM-effect-complexity sets} on $\Hs_S$, associated with $F_S$, is a family $`{ \Mr[r][S_I]`*(F_{S_I}) }_{r\in\mathbb{R}_+ }$ of sets of POVM effects.
  Each $\Mr[r][S_I]`*(F_{S_I})$ denotes the set of POVM effects on $S_I$ of complexities $\leq r$:
  \begin{align}
      \Mr[r][S_I]`*(F_{S_I}) \coloneqq `*{Q \: : \: F_{S_I}(Q) \leq r } \ .
      \label{eq:POVM-effect-complexity-sets-induced-from-measure}
  \end{align}
\end{definition}
\noindent
One can straightforwardly check that the sets $\Mr[r][S_I]`*(F_{S_I})$ in~\eqref{eq:POVM-effect-complexity-sets-induced-from-measure} satisfy the conditions of \cref{defn:Mr-sets-general}.

We now construct a POVM-effect-complexity measure from simple POVM effects and a superoperator-complexity measure, providing an alternative way to define the sets $\Mr[r][S_I]`*(\Psimple[S_I],C_{S_I})$ in~\eqref{eq:Mr-general-from-Psimple-and-superop-cplx-measure}.
Let $S$ denote a quantum system; and $F_S$, a POVM-effect-complexity measure, as in \cref{defn:POVM-complexity-general}.
Let $\Psimple[S]$ denote a set of simple POVM effects (\cref{defn:Psimple-general}).
The following prescription, for all subsets $I\subset `{1, 2, \ldots, N}$, defines a POVM-effect-complexity measure:
\begin{align}
    F_{S_I}(Q) \coloneqq \inf `*{ C_{S_I}(\mathcal{E}) \, : \, P \in \Psimple[S_I], \, \mathcal{E}^\dagger( P ) = Q } \ .
    \label{eq:POVM-effect-complexity-from-simple-effects-and-superoperator-complexity}
\end{align}

\begin{proposition}
  \label{thm:FS-Psimple-C-define-POVM-effect-cplx-measure}
  The sets $F_{S_I}$ constructed in~\eqref{eq:POVM-effect-complexity-from-simple-effects-and-superoperator-complexity} define a POVM-effect-complexity measure.
\end{proposition}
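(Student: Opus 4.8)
The plan is to verify directly that the family $\{F_{S_I}\}$ defined in~\eqref{eq:POVM-effect-complexity-from-simple-effects-and-superoperator-complexity} satisfies the two axioms of \cref{defn:POVM-complexity-general}, closely mirroring the argument already used for \cref{thm:Mr-Psimple-C-define-POVM-effect-cplx-measure}. For the zero-complexity property of the identity effect [\cref{defn:POVM-complexity-general}\ref{item:POVM-complexity--identity-has-complexity-zero}], I would exhibit an explicit candidate for the infimum defining $F_{S_I}(\Ident_{S_I})$: take $\mathcal{E} = \IdentProc[S_I]{}$ together with $P = \Ident_{S_I}$. By \cref{defn:Psimple-general}, $\Ident_{S_I}\in\Psimple[S_I]$, and $\IdentProc[S_I]{}^\dagger(\Ident_{S_I}) = \Ident_{S_I}$, so this pair is feasible. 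Since $C_{S_I}(\IdentProc[S_I]{}) = 0$ by \cref{defn:superoperator-complexity-general}\ref{item:superoperator-complexity--identity-has-complexity-zero}, the infimum is at most $0$; it is non-negative because $C_{S_I}$ takes values in $\mathbb{R}_+\cup\{\infty\}$. Hence $F_{S_I}(\Ident_{S_I}) = 0$.

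For the parallel-composition axiom [\cref{defn:POVM-complexity-general}\ref{item:POVM-complexity--parallel-composition}], the key step is to tensor together decompositions of the two effects. Fix disjoint $I,J$ and effects $Q_1$ on $S_I$, $Q_2$ on $S_J$, and consider any feasible pairs $(\mathcal{E}_1,P_1)$ and $(\mathcal{E}_2,P_2)$ — that is, $P_1\in\Psimple[S_I]$, $P_2\in\Psimple[S_J]$, $\mathcal{E}_1^\dagger(P_1) = Q_1$, and $\mathcal{E}_2^\dagger(P_2)=Q_2$. I would then argue that $(\mathcal{E}_1\otimes\mathcal{E}_2,\, P_1\otimes P_2)$ is feasible for $Q_1\otimes Q_2$. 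Three facts combine here: (a) the adjoint distributes over tensor products, $(\mathcal{E}_1\otimes\mathcal{E}_2)^\dagger = \mathcal{E}_1^\dagger\otimes\mathcal{E}_2^\dagger$ (verified from the trace definition of the adjoint on product operators and extended by linearity), so $(\mathcal{E}_1\otimes\mathcal{E}_2)^\dagger(P_1\otimes P_2) = Q_1\otimes Q_2$; (b) $P_1\otimes P_2\in\Psimple[S_I S_J]$ by \cref{defn:Psimple-general}; and (c) $C_{S_I S_J}(\mathcal{E}_1\otimes\mathcal{E}_2)\leq C_{S_I}(\mathcal{E}_1) + C_{S_J}(\mathcal{E}_2)$ by \cref{defn:superoperator-complexity-general}\ref{item:superoperator-complexity--parallel-composition}. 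Feasibility together with (c) yields $F_{S_I S_J}(Q_1\otimes Q_2)\leq C_{S_I}(\mathcal{E}_1) + C_{S_J}(\mathcal{E}_2)$.

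The final step is to pass to the infimum, which is where the only real subtlety lies. Because the previous bound holds for every feasible choice of $(\mathcal{E}_1,P_1)$ and $(\mathcal{E}_2,P_2)$ independently, and the two right-hand terms depend on disjoint sets of variables, the infimum of the sum factorizes into the sum of the infima, giving $F_{S_I S_J}(Q_1\otimes Q_2)\leq F_{S_I}(Q_1)+F_{S_J}(Q_2)$, as required. I expect no serious obstacle here; the main points to handle with care are the factorization of the infimum (trivial when both values are finite, and immediate in the extended reals $\mathbb{R}_+\cup\{\infty\}$, since a sum involving an infinite term is $+\infty$) and the standard verification of fact (a), that the adjoint of a tensor-product superoperator is the tensor product of the adjoints.
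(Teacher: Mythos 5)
Your proposal is correct and follows essentially the same route as the paper's own proof: both verify the identity axiom via the candidate $\bigl(\IdentProc[S_I]{},\,\Ident_{S_I}\bigr)$ and the parallel-composition axiom by tensoring feasible decompositions, invoking $P_1\otimes P_2\in\Psimple[S_I S_J]$ and the subadditivity $C_{S_I S_J}(\mathcal{E}_1\otimes\mathcal{E}_2)\leq C_{S_I}(\mathcal{E}_1)+C_{S_J}(\mathcal{E}_2)$, then passing to the infimum over both feasible pairs independently. Your extra care regarding the adjoint distributing over tensor products and the infimum in the extended reals $\mathbb{R}_+\cup\{\infty\}$ (covering empty feasible sets) is left implicit in the paper but adds nothing structurally different.
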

\begin{proof}[**thm:FS-Psimple-C-define-POVM-effect-cplx-measure]
  We verify that the sets $F_{S_I}$ have both the properties in \cref{defn:POVM-complexity-general}.
  Property~\ref{item:POVM-complexity--identity-has-complexity-zero} holds because $C_{S_I}`*(\IdentProc[S_I]{}) = 0$ (\cref{defn:superoperator-complexity-general}), $\Ident_{S_I} \in \Psimple[S_I]$ (\cref{defn:Psimple-general}), and $\IdentProc[S_I]{}^\dagger`*(\Ident_{S_I}) = \Ident_{S_I}$.
  To show that Property~\ref{item:POVM-complexity--parallel-composition} holds, we consider any POVM effects $Q_1$ on $S_I$ and $Q_2$ on $S_J$, with $I$ and $J$ disjoint.
  Let $P_1 \in \Psimple[S_I]$, and let $\mathcal{E}_1$ denote an operation on $S_I$ such that $\mathcal{E}_1^\dagger(P_1) = Q_1$.
  Similarly, let $P_2 \in \Psimple[S_J]$, and let $\mathcal{E}_2$ denote an operation on $S_J$ such that $\mathcal{E}_2^\dagger(P_2) = Q_2$. 
  $\mathcal{E}_1 \otimes \mathcal{E}_2$ is a candidate for the $F_{S_I S_J}(Q_1 \otimes Q_2)$ optimization in~\eqref{eq:POVM-effect-complexity-from-simple-effects-and-superoperator-complexity}, since $P_1 \otimes P_2 \in \Psimple[S_I S_J]$ (\cref{defn:Psimple-general}) and $(\mathcal{E}_1 \otimes \mathcal{E}_2)^\dagger(P_1 \otimes P_2) = Q_1 \otimes Q_2$.
  Hence, $F_{S_I S_J}(Q_1 \otimes Q_2) \leq C_{S_I S_J}`*(\mathcal{E}_1 \otimes \mathcal{E}_2)$.
  Furthermore, $C_{S_I S_J}(\mathcal{E}_1 \otimes \mathcal{E}_2) \leq C_{S_I}`*(\mathcal{E}_1) + C_{S_J}`*(\mathcal{E}_2)$ (\cref{defn:superoperator-complexity-general}), so $F_{S_I S_J}(Q_1 \otimes Q_2) \leq C_{S_I}`*(\mathcal{E}_1) + C_{S_J}`*(\mathcal{E}_2)$.
  We obtain $F_{S_I S_J}(Q_1 \otimes Q_2) \leq F_{S_I}(Q_1) + F_{S_J}(Q_2)$ by taking the infimum of the last inequality with respect to all candidates $\mathcal{E}_1$ and $\mathcal{E}_2$ for the $F_{S_I}(Q_1)$ and $F_{S_J}(Q_2)$ optimizations in~\eqref{eq:POVM-effect-complexity-from-simple-effects-and-superoperator-complexity}, respectively.
\end{proof}

A POVM-effect-complexity measure constructed as in~\eqref{eq:POVM-effect-complexity-from-simple-effects-and-superoperator-complexity}, with a circuit-superoperator-complexity measure (\cref{defn:circuit-complexity-measure}), is called a \textit{circuit-POVM-effect-complexity measure}.
The POVM-effect-complexity measure in~\eqref{eq:POVM-effect-complexity-from-simple-effects-and-superoperator-complexity} induces, via~\eqref{eq:POVM-effect-complexity-sets-induced-from-measure}, the sets $\Mr[r][S_I] `*(\Psimple[S_I], C_{S_I})$ in~\eqref{eq:Mr-general-from-Psimple-and-superop-cplx-measure}.

\subsection{Examples of POVM-effect-complexity sets}

Any set of simple POVM effects (\cref{defn:Psimple-general}) and any unitary-complexity measure (\cref{defn:unitary-complexity}) determine a family of POVM-effect-complexity sets, as per~\eqref{eq:Mr-general-from-Psimple-and-superop-cplx-measure}.
In this section, we provide natural examples of simple POVM effects and unitary-complexity measures on $n$ qubits.

\subsubsection*{Examples of simple POVM effects}

Among the POVM effects that one can directly implement in a laboratory, local effects are often the most feasible.
The following sets of simple POVM effects contain only tensor products.
\begin{itemize}
    \item The set of tensor products of the single-qubit projectors $\proj{0}$ and $\Ident_2$.
    We use this choice throughout the main text [see~\eqref{eq:setting-defn-Mrzero}], as is natural in the context of erasure.
    
    \item The set of tensor products of single-qubit projectors.
    This choice is natural if one can easily perform any single-qubit gate before implementing a projector onto a single-qubit state, e.g., $\ket{0}$.
    
    \item The set of tensor-product POVM effects. 
    If performing nonlocal operations is hard, local operations are a natural choice for simple effects.
    Tensor-product POVM effects generalize tensor-product projectors.
\end{itemize}

Importantly, there exist POVM effects that one cannot implement by first applying a unitary and then implementing a tensor-product POVM effect.
For instance, the two-qubit effect $\frac{1}{2} `*(\proj{00} + \proj{01} + \proj{10})$ is a rank-3 operator, unlike any tensor-product two-qubit effect: operator rank is invariant under unitary transformation, so the given effect is not unitarily equivalent to any tensor-product effect.

\subsubsection*{Examples of unitary-complexity measures}
\label{sec:examples-unitary-complexity measures}

A natural unitary-complexity measure on $n$ qubits is the circuit complexity associated with a two-qubit gate set $\mathcal{G}$ (\cref{defn:circuit-complexity-measure}).
Depending on one's experimental capabilities, one can consider circuits with all-to-all connectivity (any gate in $\mathcal{G}$ can act on any two qubits); geometric locality (any gate in $\mathcal{G}$ can act on only two neighboring qubits); or intermediate connectivity (some gates in $\mathcal{G}$ can act on any two qubits, and other gates can act on only two neighboring qubits).
Convenient choices for $\mathcal{G}$ include the following:
\begin{itemize}
    \item The set of two-qubit gates: $\mathcal{G} = \SU(4)$.
    This choice is widespread in studies of random circuits~\cite{Brandao2016CMP_local,Dalzell2022PRXQ_random}.
    For instance, this choice has elucidated the linear growth, with time, of exact circuit complexity under random circuits \cite{Haferkamp_21_Linear, Li_22_ShortProofsOfLinearGrowth}.
    
    \item The following Clifford gates: Hadamard, $S$, and \textsc{CNOT} gates.
    This set generates quantum circuits that can be efficiently simulated classically~\cite{Gottesman1998_heisenbergrep}.
    The set is not universal, and almost all unitaries have infinite complexities with respect to the set.
    
    \item Any finite, universal set of two-qubit gates, such as the Clifford + $T$ gate set.
    This set can approximate every unitary arbitrarily well with a finite-complexity unitary~\cite{NielsenC10}.
    In this setting, the circuit complexity grows algebraically under random quantum circuits with gates drawn from a finite gate set~\cite{Brandao2021PRXQ_models,brandao2016local,haferkamp2022random}.
\end{itemize}

Our definition of unitary complexity accommodates other notions of complexity, such as the Nielsen complexity~\cite{Nielsen_05_Geometric,Nielsen_06_Quantum,Nielsen_06_Optimal,Dowling_06_Geometry}.
Another example is the minimum $T$-count of any Clifford + $T$ circuit that approximates a given unitary.
(In the latter example, different gates contribute different amounts of complexity.
Each $T$ gate contributes one unit of complexity; and each Clifford gate, zero units.

\section{Hypothesis-testing (relative) entropy}
\label{appx:hypothesis-testing-entropy}

In a hypothesis test, one receives either $\rho$ or $\sigma$, and guesses which state one received.
In the most general strategy to distinguish $\rho$ and $\sigma$, one performs a two-outcome measurement $\{Q,\Ident-Q\}$.
One guesses $\rho$ if $Q$ obtains and guesses $\sigma$ otherwise.
Suppose that one must, if the state is $\rho$, guess $\rho$ with a probability $\geq \eta \in (0, 1]$.
The minimum probability of wrongly guessing $\rho$ defines the \textit{hypothesis-testing relative entropy}.

\begin{definition}[Hypothesis-testing relative entropy]
  \label{def:defn-DHyp}
  Let $\rho$ denote any subnormalized state, and $\Gamma$ any positive-semidefinite operator, that act on the same Hilbert space.
  Let $\eta \in (0, \tr(\rho) ]$.
  The \textit{hypothesis-testing relative entropy} is defined as \cite{Datta_2013_HypothesisTesting,Wang2012PRL_oneshot,Dupuis2013_DH}
  \begin{align}
    \DHyp[\eta]{\rho}{\Gamma}
    \coloneqq -\log \; `*(
    \min_{\substack{
    0 \leq Q \leq \Ident\\
    \tr`(Q\rho) \geq \eta
    }}
    `*{ \frac{\tr`*(Q\Gamma)}{\eta} }
    )
    \ .
    \label{eq:defn-DHyp}
  \end{align}
\end{definition}
In a hypothesis test, the least probability of erroneously rejecting $\sigma$, using a strategy that successfully accepts $\rho$ with a probability $\geq \eta$, is $\eta \exp\bm{(} -\DHyp[\eta]{\rho}{\sigma} \bm{)}$.
Every optimal POVM effect $Q$ satisfies $\tr`(Q\rho) = \eta$.
If any such $Q$ violated this equation, one could achieve a better objective value with $Q' \coloneqq \eta Q / \tr`(Q\rho) \leq Q$.
Consequently, if $\Gamma = \rho$, then $\DHyp[\eta]{\rho}{\Gamma} = 0$. In this case, every $Q$ satisfying $\tr`(Q\rho) = \eta$ is optimal.

There always exists an optimal effect for the $\DHyp[\eta]{}{}$ optimization, since the domain of optimization is compact.
This fact justifies the use of a minimum, instead of an infimum, in~\eqref{eq:defn-DHyp}.

\begin{definition}[Hypothesis-testing entropy]
  \label{def:defn-HHyp}
  Let $\rho$ denote any subnormalized state.
  Let $\eta \in (0, \tr(\rho) ]$.
  The \textit{hypothesis-testing entropy} is defined as \cite{Dupuis2013_DH}
  \begin{align}
    \HHyp[\eta]{\rho}
    \coloneqq - \DHyp[\eta]{\rho}{\Ident}
    = \log \; `*(
    \min_{\substack{
    0 \leq Q \leq \Ident\\
    \tr`(Q\rho) \geq \eta} }
    `*{ \frac{\tr`*(Q)}{\eta} } ) \ .
    \label{eq:defn-HHyp}
 \end{align}
\end{definition}

The hypothesis-testing entropy and relative entropy are related to smooth entropies, including the min- and max-relative entropies~\cite{Datta2009IEEE_minmax,Dupuis2013_DH}.
The smooth entropies quantify the optimal efficiencies of operational tasks performed in the absence of complexity restrictions.
Such tasks feature finitely many copies of a quantum state or random variable, as well as finite failure probabilities.

One can express the hypothesis-testing relative entropy~\eqref{eq:defn-DHyp} as a semidefinite program.
The dual problem takes the form~\cite{Dupuis2013_DH}
\begin{equation}
  \DHyp[\eta]{\rho}{\Gamma}
  = - \log `*(
  \max_{\substack{ X\geq 0, \, \mu\geq 0\\ \mu\rho \leq \Gamma + X }}
  `*{ \mu - \frac{\tr`*(X)}{\eta} }
  ) \ .
  \label{eq:DHyp-alternative-expressions--dual-max-mu-X}\\
\end{equation}

We now reformulate the hypothesis-testing relative entropy.
The rewriting reveals the entropy's formal similarity to the complexity relative
entropy~\eqref{eq:setting-defn-DHypr}.

\begin{proposition}[Alternative expression for hypothesis-testing relative entropy]
  \label{thm:DHyp-alternative-expressions}
  Let $\rho$ denote any subnormalized state; and $\Gamma$, any positive-semidefinite operator.
  Let $\eta \in (0, \tr(\rho) ]$.
  The hypothesis-testing relative entropy obeys
  \begin{align}
      \DHyp[\eta]{\rho}{\Gamma}
      &= -\log`*(
        \inf_{ \substack{0\leq Q\leq \Ident\\ \tr`(Q\rho)\geq\eta }}
          `*{ \frac{\tr`*(Q\Gamma)}{\tr`*(Q\rho)} }
      )\ .
      \label{eq:DHyp-alternative-expressions--ratio-trQGamma-trQrho}
  \end{align}
\end{proposition}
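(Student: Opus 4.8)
The plan is to show that the two minimization problems inside the logarithms have the same optimal value; since $-\log$ is strictly decreasing, equality of the optimal values immediately yields~\eqref{eq:DHyp-alternative-expressions--ratio-trQGamma-trQrho}. Both optimizations range over exactly the same feasible set, namely $\{Q : 0\leq Q\leq\Ident,\; \tr(Q\rho)\geq\eta\}$; only the objectives differ, being $\tr(Q\Gamma)/\eta$ in~\eqref{eq:defn-DHyp} and $\tr(Q\Gamma)/\tr(Q\rho)$ in the target expression. So I would prove the two inequalities $\inf_Q \tr(Q\Gamma)/\tr(Q\rho) \leq \min_Q \tr(Q\Gamma)/\eta$ and the reverse.

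First I would establish the easy direction. For every feasible $Q$, the constraint $\tr(Q\rho)\geq\eta$ together with $\tr(Q\Gamma)\geq 0$ (as $Q\geq 0$ and $\Gamma\geq 0$) gives the pointwise bound $\tr(Q\Gamma)/\tr(Q\rho) \leq \tr(Q\Gamma)/\eta$. Taking the infimum over the common feasible set then shows that the ratio objective's infimum is at most the $\eta$-normalized objective's minimum.

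For the reverse direction I would use the same rescaling trick invoked just after Definition~\ref{def:defn-DHyp} to argue that optimal effects saturate $\tr(Q\rho)=\eta$. Given any feasible $Q$, set $Q' \coloneqq \eta\,Q/\tr(Q\rho)$. Since $\eta\leq\tr(Q\rho)$, the scalar factor lies in $(0,1]$, so $0\leq Q'\leq Q\leq\Ident$ and $Q'$ is a legitimate POVM effect; moreover $\tr(Q'\rho)=\eta$, so $Q'$ is feasible for the optimization in~\eqref{eq:defn-DHyp}. The $\eta$-normalized objective at $Q'$ equals $\tr(Q'\Gamma)/\eta = \tr(Q\Gamma)/\tr(Q\rho)$, which is precisely the ratio objective at $Q$. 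Hence the ratio objective at every feasible $Q$ is at least $\min_Q \tr(Q\Gamma)/\eta$, and taking the infimum over $Q$ delivers the reverse inequality. Combining the two inequalities gives equality of the optimal values, and applying $-\log$ completes the proof.

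The only genuinely nontrivial step — really just a bookkeeping check — is confirming that the rescaled operator $Q'$ stays a valid POVM effect, i.e.\ that $Q'\leq\Ident$; this holds because the rescaling factor $\eta/\tr(Q\rho)$ does not exceed $1$, so $Q'\leq Q$. No compactness or convexity machinery is required, and the existence of a minimizer on the left (used to justify writing $\min$ rather than $\inf$) is already furnished by the compactness remark following~\eqref{eq:defn-DHyp}.
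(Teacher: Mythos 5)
Your proof is correct and follows essentially the same route as the paper's: the central device in both is the rescaling $Q' \coloneqq \eta\,Q/\tr(Q\rho)$, which preserves the ratio objective, satisfies $Q' \leq Q \leq \Ident$ with $\tr(Q'\rho) = \eta$, and thereby links the two optimizations over the common feasible set. Your organization into two explicit inequalities (rather than the paper's chain of equalities between candidate-value sets) is a cosmetic difference only, and your appeal to the compactness remark after the definition to justify the minimum is exactly what the paper does.
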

\noindent
Without loss of generality, we can assume that every candidate effect $Q$ for the optimization satisfies $\opnorm{Q}=1$.
If a candidate $Q$ did not, we could replace $Q$ with a candidate $Q' \coloneqq Q/\opnorm{Q} \geq Q$ that achieves the same objective value as $Q$.
\begin{proof}[**thm:DHyp-alternative-expressions]
  Let $Q$ denote any POVM effect such that $\tr`(Q\rho) \geq \eta$.
  Let $Q' \coloneqq \eta Q / \tr`(Q\rho)$.
  By definition, $Q'$ satisfies $\tr`(Q'\rho) = \eta$.
  $Q$ and $Q'$ are candidates for the $\DHyp[\eta]{}{}$ optimization in~\eqref{eq:defn-DHyp} and achieve the same objective value:
  \begin{align}
      \frac{\tr`*(Q\Gamma)}{\tr`*(Q\rho)} = \frac{\tr`*(Q'\Gamma)}{\tr`*(Q'\rho)} \ .
  \end{align}
  Consequently, the set of candidate objective values in~\eqref{eq:DHyp-alternative-expressions--ratio-trQGamma-trQrho} is unaltered if we restrict to POVM effects $Q'$ satisfying $\tr`(Q'\rho) = \eta$:
  \begin{align}
        `*{ \frac{\tr`*(Q\Gamma)}{\tr`*(Q\rho)} : 0 \leq Q \leq \Ident, \,  \tr`*(Q\rho) \geq \eta } = `*{ \frac{\tr`(Q'\Gamma)}{\tr`*(Q'\rho)} : 0 \leq Q' \leq \Ident, \,  \tr`*(Q'\rho) = \eta } \ .
  \end{align}
  Therefore,
  \begin{align}
      \label{eq:InfimumLabel1}
    \inf_{ \substack{0 \leq Q \leq \Ident \\ \tr`(Q\rho) \geq \eta }}
          `*{ \frac{\tr`*(Q\Gamma)}{\tr`*(Q\rho)} } \
    = \inf_{ \substack{0 \leq Q' \leq \Ident\\ \tr`(Q'\rho) = \eta }}
          `*{ \frac{\tr`*(Q'\Gamma)}{\tr`*(Q'\rho)} } \
    = \inf_{ \substack{0 \leq Q' \leq \Ident\\ \tr`(Q'\rho) = \eta }}
          `*{ \frac{\tr`*(Q'\Gamma)}{\eta} } \ .
  \end{align}
  The second equality follows because $\tr`(Q'\rho) = \eta$ for every $Q'$ in the second infimum's domain.
  
  Consider, again, any effect $Q$ satisfying $\tr`(Q\rho) \geq \eta$ and the effect $Q' \coloneqq \eta Q / \tr`(Q\rho)$.
  Clearly $Q' \leq Q$, so $\tr`(Q'\Gamma) / \eta \leq \tr`(Q\Gamma) / \eta$.
  Hence, the final infimum in~\eqref{eq:InfimumLabel1} does not decrease if we extend its domain to all $Q$ satisfying $\tr`(Q\rho) \geq \eta$: 
  \begin{align}
      \label{eq:InfimumLabel2}
    \inf_{ \substack{0 \leq Q' \leq \Ident\\ \tr`(Q'\rho) = \eta }}
          `*{ \frac{\tr`*(Q'\Gamma)}{\eta} } \,
    =
    \inf_{ \substack{0 \leq Q \leq \Ident\\ \tr`(Q\rho) \geq \eta }}
          `*{ \frac{\tr`*(Q\Gamma)}{\eta} } \ .
  \end{align}
  Finally, the second infimum in~\eqref{eq:InfimumLabel2} is a minimum, since the infimum's domain is compact.
  Hence,
  \begin{align}
      \label{eq:InfimumLabel3}
    \inf_{ \substack{0 \leq Q \leq \Ident\\ \tr`(Q\rho) \geq \eta }}
          `*{ \frac{\tr`*(Q\Gamma)}{\eta} } \,
    = \min_{ \substack{0 \leq Q \leq \Ident\\ \tr`(Q\rho) \geq \eta }}
          `*{ \frac{\tr`*(Q\Gamma)}{\eta} } \,
    = e^{-\DHyp[\eta]{\rho}{\Gamma}} \ .
  \end{align}
  The second equality follows from \cref{def:defn-DHyp}.
  Chaining together~\eqref{eq:InfimumLabel1},~\eqref{eq:InfimumLabel2}, and~\eqref{eq:InfimumLabel3} yields an equality equivalent to~\eqref{eq:DHyp-alternative-expressions--ratio-trQGamma-trQrho}.
\end{proof}

\section{Complexity (relative) entropy and its variants}
\label{appx:GeneralConstructionComplexityEntropy}

Imagine a hypothesis test performed by an observer able to render only limited-complexity measurement effects.
The hypothesis-testing relative entropy is ill-suited for such a test: the optimization in~\eqref{eq:defn-DHyp} is appropriate only for an observer who can render all POVM effects.
In \cref{appx-topic:defn-complexity-relative-entropy}, we introduce the complexity (relative) entropy as a hypothesis-testing (relative) entropy tailored for the complexity-limited observer.
In the optimization defining the complexity (relative) entropy, candidate POVM effects cannot exceed a given complexity $r \geq 0$.
\cref{appx-topic:Dhypr-basic-properties} details the complexity (relative) entropy's elementary properties.
In \cref{appx-topic:Dhypr-hypo-test}, we apply the complexity relative entropy to hypothesis testing with complexity limitations.
In \cref{appx-topic:complexity-entropy-relationship-to-state-complexity}, we bound the complexity (relative) entropy using state-complexity measures.
\cref{appx-topic:reduced-complexity-entropy} mainly concerns a variant of the complexity (relative) entropy, the reduced complexity (relative) entropy.
The reduced complexity entropy features in our data-compression results and appears in Ref.~\cite{YungerHalpern_2022_Uncomplexity} as ``the complexity entropy.''
Last, in \cref{appx-topic:complexity-conditional-entropy}, we define the complexity conditional entropy and study some of its elementary properties.
In the following, $\{\Mr[r]\}$ denotes any family of POVM-effect-complexity sets (\cref{defn:Mr-sets-general}).
The complexity (relative) entropy and its variants are defined with respect to this set, unless further specified.

\subsection{Definition of the complexity (relative) entropy}
\label{appx-topic:defn-complexity-relative-entropy}

\begin{definition}[Complexity relative entropy]
  \label{def:defn-DHypr}
  Let $\rho$ denote any subnormalized state; and $\Gamma$, any positive-semidefinite operator.
  Let $r \geq 0$ and $\eta \in (0, \tr(\rho)]$.
  Here, $\rho$, $\Gamma$, and every $Q \in \Mr[r]$ act on the same Hilbert space.
  The \emph{complexity-restricted hypothesis-testing relative entropy}, or simply the \emph{complexity relative entropy}, is
  \begin{align}
    \DHypr[r][\eta]{\rho}{\Gamma}
    \coloneqq -\log\; `*(
    \inf_{\substack{Q \in \Mr[r] \\ \tr`(Q\rho) \geq \eta}}
    `*{ \frac{\tr`*(Q\Gamma)}{\tr`*(Q\rho)} }
    )\ .
    \label{eq:defn-DHypr}
  \end{align}
\end{definition}

The denominator $\tr`(Q\rho)$ is a normalization factor ensuring that the complexity (relative) entropy has
some desired properties.
First, the complexity relative entropy satisfies $\DHypr[r][\eta]{\rho}{\rho} = 0$ (\cref{thm:DHypr-zero-same-args}).
Second, if $\rho$ acts on a Hilbert space of dimensionality $d$, the complexity entropy assumes values in the range $[0, \log(d)]$ (\cref{thm:DHypr-trivial-bounds}).
The hypothesis-testing relative entropy is the same with a denominator $\tr`(Q\rho)$ [as in~\eqref{eq:DHyp-alternative-expressions--ratio-trQGamma-trQrho}] or $\eta$ [as in~\eqref{eq:defn-DHyp}].
In contrast, the complexity relative entropy might increase if we replace $\tr`(Q\rho)$ with $\eta$ in~\eqref{eq:defn-DHypr}: it may be impossible to find a $Q\in\Mr[r]$ such that $\tr`(Q\rho)=\eta$ [or such that $\tr`(Q\rho)$ approximates $\eta$ arbitrarily well].
This is typically the case, for instance, for a family of sets that contains only projectors; such a family is defined by~\eqref{eq:setting-defn-Mr}.

\begin{definition}[Complexity entropy]
  \label{def:defn-HHypr}
  Let $\rho$ denote any subnormalized state.
  Let $r \geq 0$ and $\eta \in (0, \tr(\rho) ]$.
  The \emph{complexity-restricted hypothesis-testing entropy}, or simply the \emph{complexity entropy}, is
  \begin{align}
    \HHypr[r][\eta]{\rho}
    \coloneqq - \DHypr[r][\eta]{\rho}{\Ident}
    = \log\; `*(
    \inf_{\substack{Q \in \Mr[r] \\ \tr`(Q\rho) \geq \eta}}
    `*{ \frac{\tr`*(Q)}{\tr`*(Q\rho)} } ) \ .
    \label{eq:defn-HHypr}
  \end{align}
\end{definition}

The complexity entropy measures how well one can distinguish $\rho$ from the maximally mixed state, using a limited-complexity measurement effect.
Indeed, if $\rho$ acts on a Hilbert space of dimensionality $d$, then $\HHypr[r][\eta]{\rho} = \log`(d) - \DHypr[r][\eta]{\rho}{\pi}$. The $\pi \coloneqq \Ident/d$ denotes the maximally mixed state.
Two special cases offer insight. First, $\HHypr[r][\eta]{\rho} = 0$ if $\rho$ is pure and $\rho \in \Mr[r]$.
Second, $\HHypr[r][\eta]{\rho} = \log(d)$ if $\rho$ is normalized and $\Ident \in \Mr[r]$ is the only candidate for the $\HHypr[r][\eta]{\rho}$ optimization.

The complexity entropy is analogous to the strong complexity of Ref.~\cite{Brandao2021PRXQ_models}.
Compared to the strong complexity, the complexity entropy quantifies distinguishability in terms of a hypothesis test, rather than a modified trace distance.
Also, the complexity entropy bears a meaningful interpretation for mixed states.

\subsection{Elementary properties of the complexity (relative) entropy}
\label{appx-topic:Dhypr-basic-properties}

The complexity (relative) entropy possesses several basic properties.

\begin{proposition}[Vanishing complexity relative entropy]
  \label{thm:DHypr-zero-same-args}
  Let $\rho$ denote any subnormalized state.
  Let $r\geq 0$ and $\eta \in (0,\tr(\rho) ]$.
  It holds that
  \begin{align}
    \DHypr[r][\eta]{\rho}{\rho} = 0 \ .
  \end{align}
\end{proposition}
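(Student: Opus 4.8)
The plan is to substitute $\Gamma = \rho$ directly into the definition~\eqref{eq:defn-DHypr} and observe that the objective function collapses to a constant. For every feasible effect $Q \in \Mr[r]$ satisfying $\tr(Q\rho) \geq \eta$, the ratio in the optimization becomes $\tr(Q\rho)/\tr(Q\rho)$. Since the hypothesis $\eta \in (0,\tr(\rho)]$ forces $\tr(Q\rho) \geq \eta > 0$, this denominator never vanishes, so the ratio is well-defined and equals $1$ for every feasible $Q$. Consequently the infimum equals $1$, and $\DHypr[r][\eta]{\rho}{\rho} = -\log(1) = 0$.

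The only point requiring care is to confirm that the feasible set is non-empty, so that the infimum is genuinely taken over a set on which the objective equals $1$, rather than over the empty set (which would yield $+\infty$ under the usual convention and spoil the conclusion). I would supply this by exhibiting an explicit feasible effect, namely the identity operator $\Ident$. By axiom~\ref{item:POVM-effect-complexity--identity} of \cref{defn:Mr-sets-general}, $\Ident \in \Mr[0]$, and by the monotonicity axiom~\ref{item:POVM-effect-complexity--monotonous}, $\Mr[0] \subset \Mr[r]$ for every $r \geq 0$; hence $\Ident \in \Mr[r]$. Moreover $\tr(\Ident\,\rho) = \tr(\rho) \geq \eta$, since $\eta \leq \tr(\rho)$ by assumption. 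Thus $Q = \Ident$ meets the constraint, and the feasible set is non-empty.

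There is essentially no technical obstacle here: the statement is a direct consequence of the normalization factor $\tr(Q\rho)$ in the denominator of~\eqref{eq:defn-DHypr}, which is precisely the design feature that forces the relative entropy to vanish on identical arguments, mirroring the analogous property of the hypothesis-testing relative entropy remarked upon after \cref{def:defn-DHyp}. The one thing not to overlook is the non-emptiness check above, which rests on the defining axioms of the POVM-effect-complexity family and not on any property of $\rho$ itself.
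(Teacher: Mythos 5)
Your proof is correct and follows essentially the same route as the paper's: every feasible $Q$ yields the objective value $\tr(Q\rho)/\tr(Q\rho) = 1$, with $\Ident \in \Mr[0] \subset \Mr[r]$ (satisfying $\tr(\Ident\,\rho) = \tr(\rho) \geq \eta$) guaranteeing a nonempty feasible set. Your explicit emphasis on the non-emptiness check is a minor elaboration of the same point the paper makes by noting the candidate $\Ident$.
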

\begin{proof}[**thm:DHypr-zero-same-args]
  Every candidate effect $Q\in\Mr[r]$ for the optimization in~\eqref{eq:defn-DHypr}, including $\Ident \in \Mr[r=0] \subset \Mr[r]$, achieves the objective value $\tr`(Q \rho)/\tr`(Q\rho) = 1$.
  Hence, $\DHypr[r][\eta]{\rho}{\rho} = \log(1) = 0$.
\end{proof}

In general, the converse is false: there may exist a state $\sigma\neq\rho$ such that $\DHypr[r][\eta]{\rho}{\sigma} = 0$.
This situation arises if the measurement effects in $\Mr[r]$ are too crude to distinguish $\rho$ from $\sigma$.
For instance, consider the sets $\Mr[r]$ defined in~\eqref{eq:setting-defn-Mr} for $n$ qubits, with respect to the set of two-qubit gates.
Let $\rho=\proj{1^n}$.
For any $\eta\in(0,1]$, the only effect $Q\in\Mr[{r=0}]$ satisfying $\tr`(Q\rho)\geq\eta$ is $Q=\Ident$.
Therefore, trivially, $\DHypr[{r=0}][\eta]{\rho}{\sigma} = 0$ for all $\sigma$.

The complexity relative entropy $\DHypr[r][\eta]{\rho}{\Gamma}$ lies within a fixed range dependent on only $\tr(\rho)$ and the eigenvalues of $\Gamma$.
The range takes a simple form if $\rho$ is normalized and $\Gamma$ is positive-definite.
\begin{proposition}[General bounds]
   \label{thm:DHypr-trivial-bounds}
  Let $\rho$ denote any subnormalized state, and $\Gamma$ any positive-semidefinite operator, that act on a Hilbert space of dimensionality $d$.
  Let $r \geq 0$ and $\eta \in (0, \tr(\rho) ]$.
  It holds that
  \begin{subequations}
    \begin{align}
      \DHypr[r][\eta]{\rho}{\Gamma} &\geq -\log \bm{(} \tr`(\Gamma) \bm{)} + \log \bm{(} \tr`(\rho) \bm{)} \ .
        \label{eq:thm-DHypr-trivial-bounds--lower-bound}
    \end{align}
    Furthermore, if $\Gamma$ is positive-definite (has full rank),
    \begin{align}
      \DHypr[r][\eta]{\rho}{\Gamma} &\leq \log`*( \opnorm{\Gamma^{-1}} ) + \log \bm{(} \tr`(\rho) \bm{)} \ .
        \label{eq:thm-DHypr-trivial-bounds--upper-bound}
    \end{align}
  \end{subequations}
  In particular,
  \begin{align}
      - \log \bm{(} \tr`(\rho) \bm{)}
      \leq \HHypr[r][\eta]{\rho}
      \leq \log(d) - \log \bm{(} \tr`(\rho) \bm{)} \ .
       \label{eq:thm-DHypr-trivial-bounds--HHypr}
  \end{align}
  Consequently, if $\rho$ is normalized and $\Gamma$ is positive-definite,
  \begin{subequations}
    \begin{align}
      -\log \bm{(} \tr`(\Gamma) \bm{)}
      \leq \DHypr[r][\eta]{\rho}{\Gamma}
        &\leq \log `*( \opnorm{\Gamma^{-1}} ) \ ,
        \label{eq:thm-DHypr-trivial-bounds--special-case}
      \\
      \text{and} \; \; \;
      0 \leq \HHypr[r][\eta]{\rho}
      &\leq \log(d) \ .
      \label{eq:thm-DHypr-trivial-bounds--special-case--HHypr}
    \end{align}
  \end{subequations}
\end{proposition}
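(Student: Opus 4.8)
The plan is to establish the lower bound~\eqref{eq:thm-DHypr-trivial-bounds--lower-bound} and upper bound~\eqref{eq:thm-DHypr-trivial-bounds--upper-bound} by exhibiting suitable candidate effects and by bounding the objective ratio $\tr(Q\Gamma)/\tr(Q\rho)$ uniformly over all admissible $Q \in \Mr[r]$; the remaining inequalities then follow as corollaries. Throughout, I recall that $\Ident \in \Mr[0] \subset \Mr[r]$ by axiom~\ref{item:POVM-effect-complexity--identity} of \cref{defn:Mr-sets-general}, so the identity operator is always an admissible candidate for the optimization in~\eqref{eq:defn-DHypr}, and that $\DHypr[r][\eta]{\rho}{\Gamma} = -\log(\cdot)$ of the infimum of that ratio.

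For the lower bound~\eqref{eq:thm-DHypr-trivial-bounds--lower-bound}, I would bound the objective ratio from below for \emph{every} candidate $Q$. Since $0 \leq Q \leq \Ident$ and $\Gamma \geq 0$, the trace $\tr(Q\Gamma)$ is at most $\tr(\Gamma)$; meanwhile $\tr(Q\rho) \leq \tr(\rho)$ for any admissible $Q$. Combining these gives $\tr(Q\Gamma)/\tr(Q\rho) \leq \tr(\Gamma)/\tr(Q\rho)$, which is not immediately what I want, so instead I bound the ratio below: $\tr(Q\Gamma) \geq 0$ is useless, so the correct route is to observe that the infimum of the ratio is at most the value achieved by $Q=\Ident$, namely $\tr(\Gamma)/\tr(\rho)$. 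Taking $-\log$ flips the inequality and yields $\DHypr[r][\eta]{\rho}{\Gamma} \geq -\log(\tr(\Gamma)/\tr(\rho)) = -\log(\tr(\Gamma)) + \log(\tr(\rho))$, which is exactly~\eqref{eq:thm-DHypr-trivial-bounds--lower-bound}. For the upper bound~\eqref{eq:thm-DHypr-trivial-bounds--upper-bound}, assume $\Gamma$ is positive-definite. I would bound the ratio below for every admissible $Q$ using $\Gamma \geq \opnorm{\Gamma^{-1}}^{-1}\,\Ident$, which gives $\tr(Q\Gamma) \geq \opnorm{\Gamma^{-1}}^{-1}\tr(Q)$. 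Since $\tr(Q) \geq \tr(Q\rho)$ (because $\rho \leq \tr(\rho)\,\Ident \leq \Ident$ when $\rho$ is subnormalized, hence $\tr(Q\rho) \leq \tr(Q)$), we get $\tr(Q\Gamma)/\tr(Q\rho) \geq \opnorm{\Gamma^{-1}}^{-1}\,\tr(Q)/\tr(Q\rho) \geq \opnorm{\Gamma^{-1}}^{-1}\cdot \tr(Q\rho)/\tr(Q\rho)$; a cleaner route is to note $\tr(Q\rho) \leq \tr(\rho)\opnorm{Q} \leq \tr(\rho)$, so the ratio is at least $\opnorm{\Gamma^{-1}}^{-1}\tr(Q)/\tr(\rho) \geq \opnorm{\Gamma^{-1}}^{-1}/\tr(\rho)$ once one shows $\tr(Q) \geq 1$; taking $-\log$ delivers~\eqref{eq:thm-DHypr-trivial-bounds--upper-bound}.

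The bounds~\eqref{eq:thm-DHypr-trivial-bounds--HHypr} on the complexity entropy follow by specializing to $\Gamma = \Ident$, for which $\tr(\Gamma) = d$ and $\opnorm{\Gamma^{-1}} = 1$, and recalling $\HHypr[r][\eta]{\rho} = -\DHypr[r][\eta]{\rho}{\Ident}$; substituting into~\eqref{eq:thm-DHypr-trivial-bounds--lower-bound} and~\eqref{eq:thm-DHypr-trivial-bounds--upper-bound} and negating gives the stated range. Finally, the special cases~\eqref{eq:thm-DHypr-trivial-bounds--special-case} and~\eqref{eq:thm-DHypr-trivial-bounds--special-case--HHypr} are the substitution $\tr(\rho)=1$ into the general bounds.

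The main obstacle I anticipate is the careful tracking of normalization: because $\rho$ may be \emph{subnormalized} and because the denominator in~\eqref{eq:defn-DHypr} is $\tr(Q\rho)$ rather than the fixed constant $\eta$, I must verify that the extremal candidate $Q = \Ident$ is genuinely admissible (it satisfies $\tr(\Ident\,\rho) = \tr(\rho) \geq \eta$ by the hypothesis $\eta \in (0,\tr(\rho)]$) and that the inequalities relating $\tr(Q)$, $\tr(Q\rho)$, and $\tr(\rho)$ are valid with the correct direction for subnormalized $\rho$. The positive-definiteness hypothesis on $\Gamma$ is essential for the upper bound, since $\opnorm{\Gamma^{-1}}$ is otherwise undefined; I would flag this explicitly rather than assume full rank silently.
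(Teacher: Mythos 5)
Your lower-bound argument, once past the self-correcting detour, is exactly the paper's: $\Ident\in\Mr[0]\subset\Mr[r]$ is an admissible candidate (with $\tr(\Ident\,\rho)=\tr(\rho)\geq\eta$), it achieves the objective value $\tr(\Gamma)/\tr(\rho)$, and taking $-\log$ gives \eqref{eq:thm-DHypr-trivial-bounds--lower-bound}. The genuine gap is in the upper bound \eqref{eq:thm-DHypr-trivial-bounds--upper-bound}. Your first chain, which uses $\tr(Q)\geq\tr(Q\rho)$, proves only the weaker statement $\DHypr[r][\eta]{\rho}{\Gamma}\leq\log\opnorm{\Gamma^{-1}}$, losing the $+\log\bm{(}\tr(\rho)\bm{)}$ term; since $\log\bm{(}\tr(\rho)\bm{)}\leq 0$ for subnormalized $\rho$, the stated bound is strictly stronger. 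Your ``cleaner route'' would recover the full bound but hinges on the auxiliary claim $\tr(Q)\geq 1$, which is false in general: the proposition is stated for an arbitrary family of POVM-effect-complexity sets, and nothing in \cref{defn:Mr-sets-general} bounds the trace of an effect below by $1$ --- the admissibility constraint $\tr(Q\rho)\geq\eta$ only forces $\tr(Q)\geq\eta$. Even in the concrete construction~\eqref{eq:setting-defn-Mr}, where the $\mathcal{E}_i\in\mathcal{G}$ are merely trace-nonincreasing, one has $\tr\bm{(}\mathcal{E}_1^\dagger\cdots\mathcal{E}_r^\dagger(P)\bm{)}=\tr\bm{(}P\,\mathcal{E}_r\cdots\mathcal{E}_1(\Ident)\bm{)}$, which can drop below $1$.

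The one-line fix --- and this is how the paper argues --- is to apply $\rho\leq\tr(\rho)\,\Ident$ to the \emph{denominator} so that $\tr(Q)$ cancels: combining $\tr(Q\Gamma)\geq\opnorm{\Gamma^{-1}}^{-1}\tr(Q)$ with $\tr(Q\rho)\leq\tr(\rho)\,\tr(Q)$ gives, for every candidate $Q$,
\begin{align}
  \frac{\tr(Q\Gamma)}{\tr(Q\rho)}
  \;\geq\;
  \frac{\opnorm{\Gamma^{-1}}^{-1}\,\tr(Q)}{\tr(\rho)\,\tr(Q)}
  \;=\;
  \frac{1}{\opnorm{\Gamma^{-1}}\,\tr(\rho)}\ ,
\end{align}
so the infimum in~\eqref{eq:defn-DHypr} is bounded below by this constant and taking $-\log$ yields \eqref{eq:thm-DHypr-trivial-bounds--upper-bound} exactly. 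Note also that your gap propagates: with only the weaker upper bound $\DHypr[r][\eta]{\rho}{\Ident}\leq 0$, the specialization $\Gamma=\Ident$ gives just $\HHypr[r][\eta]{\rho}\geq 0$, not the stated $\HHypr[r][\eta]{\rho}\geq-\log\bm{(}\tr(\rho)\bm{)}$ in \eqref{eq:thm-DHypr-trivial-bounds--HHypr}. Your remaining specializations ($\tr(\Ident)=d$, $\opnorm{\Ident^{-1}}=1$, then $\tr(\rho)=1$ for \eqref{eq:thm-DHypr-trivial-bounds--special-case} and \eqref{eq:thm-DHypr-trivial-bounds--special-case--HHypr}) are correct once the full-strength upper bound is in place.
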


\noindent
\cref{eq:thm-DHypr-trivial-bounds--lower-bound} implies that $\DHypr[r][\eta]{\rho}{\sigma} \geq 0$ for normalized states $\rho$ and $\sigma$.

\begin{proof}[**thm:DHypr-trivial-bounds]
  \eqref{eq:thm-DHypr-trivial-bounds--lower-bound} follows because $\Ident \in \Mr[r]$ is a candidate effect for the $\DHypr[r][\eta]{\rho}{\Gamma}$ optimization.
  To prove~\eqref{eq:thm-DHypr-trivial-bounds--upper-bound}, let $Q \in \Mr[r]$ denote any candidate for the $\DHypr[r][\eta]{\rho}{\Gamma}$ optimization.
  If $\Gamma$ is positive-definite, $\Gamma \geq \opnorm{\Gamma^{-1}}^{-1} \Ident$, wherein $\opnorm{\Gamma^{-1}}^{-1}$ is the smallest eigenvalue of $\Gamma$.
  Furthermore, $\rho \leq \tr`(\rho) \Ident$, since $\tr`(\rho) \leq 1$.  
  Thus, $\tr`(Q\Gamma) \geq \opnorm{\Gamma^{-1}}^{-1} \tr`(Q)$, and $\tr`(Q\rho) \leq \tr`(\rho) \, \tr`(Q)$.
  These two inequalities, together, imply that
  \begin{align}
    \frac{ \tr`*(Q\Gamma) }{ \tr`*(Q\rho) } \
    \geq \frac{ \opnorm{\Gamma^{-1}}^{-1} \tr`*(Q) }{ \tr`*(\rho) \tr`*(Q) } \
    = \bm{(} \opnorm{\Gamma^{-1}} \, \tr`(\rho) \bm{)}^{-1} \ .
  \end{align}
  Therefore, $\bm{(} \opnorm{\Gamma^{-1}} \, \tr`(\rho) \bm{)}^{-1}$ lower-bounds the objective value $\tr`(Q\Gamma) / \tr`(Q\rho)$ of every candidate effect $Q$.
  By the definition of an infimum,
  \begin{align}
    \bm{(} \opnorm{\Gamma^{-1}} \, \tr`(\rho) \bm{)}^{-1}
    \leq \inf_{\substack{Q \in \Mr[r] \\ \tr`(Q\rho) \geq \eta}}
    `*{ \frac{\tr`*(Q\Gamma)}{\tr`*(Q\rho)} } 
    = e^{ -\DHypr[r][\eta]{\rho}{\Gamma} } \ ,
  \end{align}
  which is equivalent to~\eqref{eq:thm-DHypr-trivial-bounds--upper-bound}.
  One obtains~\eqref{eq:thm-DHypr-trivial-bounds--HHypr} by setting $\Gamma = \Ident$ in~\eqref{eq:thm-DHypr-trivial-bounds--lower-bound} and~\eqref{eq:thm-DHypr-trivial-bounds--upper-bound}.
\end{proof}

The complexity relative entropy never exceeds the hypothesis-testing relative entropy.
\begin{proposition}[Upper bound by hypothesis-testing relative entropy]
  \label{thm:DHypr-bounded-by-DHyp}
  Let $\rho$ denote any subnormalized state; and $\Gamma$, any positive-semidefinite operator.
  Let $r \geq 0$ and $\eta \in (0, \tr(\rho) ]$.
  It holds that
    \begin{align}
        \DHypr[r][\eta]{\rho}{\Gamma} \leq \DHyp[\eta]{\rho}{\Gamma}
        \; \; \; \text{and}  \; \; \;
        \HHypr[r][\eta]{\rho} \geq \HHyp[\eta]{\rho} \ .
    \end{align}
\end{proposition}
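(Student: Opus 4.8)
The plan is to prove both inequalities simultaneously, observing that the second follows from the first by setting $\Gamma = \Ident$ and negating: since $\HHypr[r][\eta]{\rho} = -\DHypr[r][\eta]{\rho}{\Ident}$ and $\HHyp[\eta]{\rho} = -\DHyp[\eta]{\rho}{\Ident}$, the claim $\DHypr[r][\eta]{\rho}{\Ident} \leq \DHyp[\eta]{\rho}{\Ident}$ immediately gives $\HHypr[r][\eta]{\rho} \geq \HHyp[\eta]{\rho}$. So the entire content lies in the first inequality for the relative entropies.

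The key structural fact to exploit is \cref{thm:DHyp-alternative-expressions}, which rewrites the hypothesis-testing relative entropy in the normalized form
\begin{align}
  \DHyp[\eta]{\rho}{\Gamma}
  = -\log`*(
    \inf_{\substack{0\leq Q\leq \Ident\\ \tr`(Q\rho)\geq\eta}}
    `*{ \frac{\tr`*(Q\Gamma)}{\tr`*(Q\rho)} }
  ) \ .
  \label{eq:plan-DHyp-ratio}
\end{align}
This expression has exactly the same objective function, $\tr`(Q\Gamma)/\tr`(Q\rho)$, as the complexity relative entropy in~\eqref{eq:defn-DHypr}. The only difference between the two optimizations is the feasible set: the hypothesis-testing relative entropy ranges over \emph{all} POVM effects $Q$ with $0 \leq Q \leq \Ident$ and $\tr`(Q\rho) \geq \eta$, whereas the complexity relative entropy restricts to effects $Q \in \Mr[r]$ satisfying the same trace constraint. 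First I would note that every $Q \in \Mr[r]$ is, by definition, a POVM effect (that is, $0 \leq Q \leq \Ident$), so the feasible set for the complexity relative entropy is a \emph{subset} of that for the hypothesis-testing relative entropy.

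The inequality then follows from an elementary monotonicity principle for infima: taking an infimum over a smaller set can only produce a value at least as large. Concretely, restricting the domain in~\eqref{eq:plan-DHyp-ratio} to the subset $\{Q \in \Mr[r] : \tr`(Q\rho)\geq\eta\}$ yields
\begin{align}
  \inf_{\substack{0\leq Q\leq \Ident\\ \tr`(Q\rho)\geq\eta}}
    `*{ \frac{\tr`*(Q\Gamma)}{\tr`*(Q\rho)} }
  \leq
  \inf_{\substack{Q\in\Mr[r]\\ \tr`(Q\rho)\geq\eta}}
    `*{ \frac{\tr`*(Q\Gamma)}{\tr`*(Q\rho)} } \ .
\end{align}
Applying $-\log$, which is monotonically decreasing, reverses the inequality and gives precisely $\DHypr[r][\eta]{\rho}{\Gamma} \leq \DHyp[\eta]{\rho}{\Gamma}$, as desired. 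I do not anticipate a serious obstacle here; the proof is essentially a one-line domain-restriction argument, and the only point requiring care is to invoke \cref{thm:DHyp-alternative-expressions} so that both quantities are expressed with the \emph{same} objective function (the ratio form), rather than comparing the ratio form against the $\tr`(Q\Gamma)/\eta$ form of the original definition~\eqref{eq:defn-DHyp}. The crucial verification is simply that $\Mr[r]$ consists of genuine POVM effects, which holds by \cref{defn:Mr-sets-general} together with the constructions in \cref{defn:Mr-general-from-Psimple-and-superop-cplx-measure}.
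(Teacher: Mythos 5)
Your proof is correct and takes essentially the same approach as the paper's, whose entire argument is the one-sentence observation that the $\DHypr[r][\eta]{\rho}{\Gamma}$ optimization ranges over the subset $\Mr[r]$ of the POVM effects admitted in the $\DHyp[\eta]{\rho}{\Gamma}$ optimization. Your only addition is to make explicit the reliance on the ratio form of \cref{thm:DHyp-alternative-expressions} so that both infima share the objective $\tr(Q\Gamma)/\tr(Q\rho)$ — a point the paper leaves implicit, and which you are right to flag as the one step requiring care.
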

\begin{proof}[**thm:DHypr-bounded-by-DHyp]
  The $\DHyp[\eta]{\rho}{\Gamma}$ optimization ranges over all POVM effects, while the $\DHypr[r][\eta]{\rho}{\Gamma}$ optimization ranges over only effects in $\Mr[r]$.
\end{proof}

The greater an agent's computational power (the greater the $r$), the less mixed $\rho$ appears $[$the less $\HHypr[r][\eta]{\rho}$ is$]$.
Likewise, the greater an agent's error intolerance (the greater the $\eta$), the more mixed $\rho$ appears $[$the greater $\HHypr[r][\eta]{\rho}$ is$]$.
Consequently, the complexity entropy monotonically decreases as $r$ increases and monotonically increases as $\eta$ increases.
The complexity entropy inherits its monotonicity in $\eta$ from the hypothesis-testing entropy.

\begin{proposition}[Monotonicity in $r$ and $\eta$]
  \label{thm:DHypr-monotonous-in-r-and-in-eta}
  Let $\rho$ denote any subnormalized state; and $\Gamma$, any positive-semidefinite operator.
  Let $r \geq 0$ and $\eta \in (0, \tr(\rho) ]$.
  \begin{enumerate}
  \item For all $r' \geq r$, $\DHypr[r'][\eta]{\rho}{\Gamma} \geq \DHypr[r][\eta]{\rho}{\Gamma}$, and $\HHypr[r'][\eta]{\rho} \leq \HHypr[r][\eta]{\rho}$.
  \item For all $\eta' \in (0, \eta ]$, $\DHypr[r][\eta']{\rho}{\Gamma} \geq \DHypr[r][\eta]{\rho}{\Gamma}$, and $\HHypr[r][\eta']{\rho} \leq \HHypr[r][\eta]{\rho}$.
  \end{enumerate}
\end{proposition}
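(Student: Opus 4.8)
The plan is to observe that both monotonicity claims reduce to a single elementary principle: enlarging the domain of an infimum can only weakly decrease its value. Each claim then follows by tracking the sign conventions built into the definitions~\eqref{eq:defn-DHypr} and~\eqref{eq:defn-HHypr}. First I would record that in all cases the feasible set is nonempty, so the infima are genuine: by axiom~\ref{item:POVM-effect-complexity--identity} of \cref{defn:Mr-sets-general}, $\Ident \in \Mr[0] \subset \Mr[r]$, and $\tr(\Ident\,\rho) = \tr(\rho) \geq \eta$ for every $\eta \in (0,\tr(\rho)]$, so the identity always witnesses feasibility.

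For monotonicity in $r$, I would invoke axiom~\ref{item:POVM-effect-complexity--monotonous} of \cref{defn:Mr-sets-general}, which gives $\Mr[r] \subset \Mr[r']$ whenever $r' \geq r$. Consequently the optimization domain $\{\, Q \in \Mr[r'] : \tr(Q\rho) \geq \eta \,\}$ contains $\{\, Q \in \Mr[r] : \tr(Q\rho) \geq \eta \,\}$, so the infimum in~\eqref{eq:defn-DHypr} over the former is at most the infimum over the latter. Since $-\log$ is monotonically decreasing, this yields $\DHypr[r'][\eta]{\rho}{\Gamma} \geq \DHypr[r][\eta]{\rho}{\Gamma}$. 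Setting $\Gamma = \Ident$ and applying the definition $\HHypr[r][\eta]{\rho} = -\DHypr[r][\eta]{\rho}{\Ident}$ reverses the direction, giving $\HHypr[r'][\eta]{\rho} \leq \HHypr[r][\eta]{\rho}$.

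For monotonicity in $\eta$, the argument is identical but applied to the constraint rather than to the gate budget: for $\eta' \in (0,\eta]$ the condition $\tr(Q\rho) \geq \eta'$ is weaker than $\tr(Q\rho) \geq \eta$, so the feasible set again enlarges as $\eta$ decreases. Hence the infimum over the $\eta'$-domain is at most that over the $\eta$-domain, and the same two sign flips deliver $\DHypr[r][\eta']{\rho}{\Gamma} \geq \DHypr[r][\eta]{\rho}{\Gamma}$ and $\HHypr[r][\eta']{\rho} \leq \HHypr[r][\eta]{\rho}$. I do not anticipate a genuine obstacle here; the only point demanding care is bookkeeping the two compositions of order-reversing steps (the decreasing $-\log$ and the leading minus sign in the definition of $\HHypr$), so that each inequality emerges pointing the correct way.
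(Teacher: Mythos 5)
Your proof is correct and follows essentially the same route as the paper's: both parts reduce to observing that the feasible set for the infimum in~\eqref{eq:defn-DHypr} enlarges (via $\Mr[r]\subset\Mr[r']$ for part~1, via the weakened constraint $\tr(Q\rho)\geq\eta'$ for part~2), after which the $-\log$ and the sign in the definition of $\HHypr$ fix the directions. Your added remark on nonemptiness of the feasible set via $\Ident\in\Mr[0]$ is a harmless extra; the paper omits it but nothing hinges on it.
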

\begin{proof}[**thm:DHypr-monotonous-in-r-and-in-eta]
  The monotonicity in $r$ follows because $\Mr[r] \subset \Mr[r']$: every candidate $Q \in \Mr[r]$ for the $\DHypr[r][\eta]{\rho}{\Gamma}$ optimization belongs to $\Mr[r']$ and is therefore a candidate for the $\DHypr[r'][\eta]{\rho}{\Gamma}$ optimization.
  The monotonicity in $\eta$ follows because every candidate $Q$ for the $\DHypr[r][\eta]{\rho}{\Gamma}$ optimization satisfies $\tr(Q \rho) \geq \eta \geq \eta'$ and is therefore a candidate for the $\DHypr[r][\eta']{\rho}{\Gamma}$ optimization.
\end{proof}

Like standard relative entropies, the complexity relative entropy enjoys a scaling property in its second argument.

\begin{proposition}[Scaling property in the second argument]
  \label{thm:DHypr-scaling-2ndarg}
  Let $\rho$ denote any subnormalized state; and $\Gamma$, any positive-semidefinite operator.
  Let $r \geq 0$, $\eta \in (0, \tr(\rho) ]$, and $a > 0$.
  It holds that
  \begin{align}
    \DHypr[r][\eta]{\rho}{a\Gamma} = \DHypr[r][\eta]{\rho}{\Gamma} - \log(a)\ .
  \end{align}
\end{proposition}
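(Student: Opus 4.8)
The plan is to prove the scaling property directly from the definition of the complexity relative entropy in~\eqref{eq:defn-DHypr}, relying only on the fact that scaling the second argument by a positive constant $a$ scales the objective value $\tr(Q\Gamma)/\tr(Q\rho)$ uniformly across all candidate effects $Q$. First I would write out the definition of $\DHypr[r][\eta]{\rho}{a\Gamma}$ and observe that the feasible set of the optimization is unchanged by the substitution $\Gamma \mapsto a\Gamma$: the constraints $Q \in \Mr[r]$ and $\tr(Q\rho) \geq \eta$ depend on $\rho$ but not on the second argument, so the domain of the infimum is identical in both cases.

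The key algebraic step is to factor the constant $a$ out of the numerator of each objective value. For any candidate $Q$, linearity of the trace gives $\tr(Q \cdot a\Gamma) = a\,\tr(Q\Gamma)$, so that
\begin{align}
  \frac{\tr(Q\,a\Gamma)}{\tr(Q\rho)} = a\,\frac{\tr(Q\Gamma)}{\tr(Q\rho)} \ .
\end{align}
Since $a > 0$ is a fixed positive scalar independent of $Q$, I would then pull it through the infimum,
\begin{align}
  \inf_{\substack{Q \in \Mr[r] \\ \tr(Q\rho) \geq \eta}}
  `*{ \frac{\tr(Q\,a\Gamma)}{\tr(Q\rho)} }
  = a \inf_{\substack{Q \in \Mr[r] \\ \tr(Q\rho) \geq \eta}}
  `*{ \frac{\tr(Q\Gamma)}{\tr(Q\rho)} } \ ,
\end{align}
using the elementary fact that multiplying a set of real numbers by a fixed positive constant scales its infimum by the same constant.

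Finally I would take the negative logarithm of both sides and use $-\log(a x) = -\log(a) - \log(x)$ to conclude
\begin{align}
  \DHypr[r][\eta]{\rho}{a\Gamma}
  = -\log(a) - \log`*( \inf_{\substack{Q \in \Mr[r] \\ \tr(Q\rho) \geq \eta}} `*{ \frac{\tr(Q\Gamma)}{\tr(Q\rho)} } )
  = \DHypr[r][\eta]{\rho}{\Gamma} - \log(a) \ ,
\end{align}
which is the claimed identity. I do not anticipate any genuine obstacle here: the proof is a one-line manipulation whose only subtlety is the interchange of the scalar with the infimum, which is valid precisely because $a$ is strictly positive (so the order of the objective values is preserved) and because the feasibility constraints are untouched by the scaling. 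The strict positivity of $a$ is the lone hypothesis that must be invoked explicitly; were $a$ allowed to vanish or be negative, the factorization through the infimum would fail.
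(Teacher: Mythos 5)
Your proof is correct and matches the paper's own argument, which simply notes that the factor $a$ (equivalently $\log(a)$) factorizes out of the optimization in~\eqref{eq:defn-DHypr}; you have merely spelled out the same steps—unchanged feasible set, linearity of the trace, scalar through the infimum, negative logarithm—in full detail. Your explicit remark that strict positivity of $a$ is what licenses the interchange with the infimum is a fair articulation of the implicit hypothesis, but it is not a new idea beyond the paper's one-line proof.
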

\begin{proof}[**thm:DHypr-scaling-2ndarg]
  The equality holds because $\log(a)$ factorizes out of the optimizer in~\eqref{eq:defn-DHypr}.
\end{proof}

The complexity (relative) entropy inverts (preserves) the partial order of positive-semidefinite operators.
\begin{proposition}[Monotonicity under operator ordering]
    \label{thm:DHypr-monotonic-under-ordering}
    Let $\rho$ and $\rho' \geq \rho$ denote any subnormalized states.
    Let $\Gamma$ and $\Gamma' \leq \Gamma$ denote any positive-semidefinite operators.
    Let $r \geq 0$ and $\eta \in (0, \tr(\rho) ]$.
    It holds that
    \begin{subequations}
    \begin{align}
      \label{eq:thm-DHypr-monotonic-under-ordering}
        \DHypr[r][\eta]{\rho}{\Gamma} &\leq \DHypr[r][\eta]{\rho'}{\Gamma'}
        \\
        \text{and} \; \; \;
      \label{eq:thm-DHypr-monotonic-under-ordering--HHypr}
        \HHypr[r][\eta]{\rho} &\geq \HHypr[r][\eta]{\rho'}\ .
    \end{align}
    \end{subequations}
\end{proposition}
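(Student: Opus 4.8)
The plan is to argue directly from the definition~\eqref{eq:defn-DHypr}, exploiting that $\DHypr[r][\eta]{\cdot}{\cdot}$ is minus the logarithm of an infimum and that $-\log$ is monotonically decreasing. Consequently, the claimed inequality $\DHypr[r][\eta]{\rho}{\Gamma} \leq \DHypr[r][\eta]{\rho'}{\Gamma'}$ is equivalent to the reversed inequality between the two infima; it suffices to show
\begin{align}
  \inf_{\substack{Q \in \Mr[r] \\ \tr(Q\rho') \geq \eta}} \frac{\tr(Q\Gamma')}{\tr(Q\rho')}
  \leq
  \inf_{\substack{Q \in \Mr[r] \\ \tr(Q\rho) \geq \eta}} \frac{\tr(Q\Gamma)}{\tr(Q\rho)} \ .
\end{align}
The standard route to such an infimum comparison is to show that every feasible point of the right-hand problem furnishes a feasible point of the left-hand problem with an objective value no larger.

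The key observation is that the \emph{same} POVM effect serves both problems. First I would take any $Q \in \Mr[r]$ feasible for the $(\rho,\Gamma)$ optimization, so that $\tr(Q\rho) \geq \eta$. Because $Q \geq 0$ and $\rho' - \rho \geq 0$, the trace $\tr(Q(\rho'-\rho)) \geq 0$, whence $\tr(Q\rho') \geq \tr(Q\rho) \geq \eta$; thus $Q$ is automatically feasible for the $(\rho',\Gamma')$ optimization. (Enlarging the state relaxes the constraint $\tr(Q\rho)\geq\eta$, which is the correct direction.) Next, since $\Gamma - \Gamma' \geq 0$ and $Q \geq 0$, one has $\tr(Q\Gamma') \leq \tr(Q\Gamma)$. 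Combining these facts, the numerator can only shrink while the denominator can only grow, so
\begin{align}
  \frac{\tr(Q\Gamma')}{\tr(Q\rho')} \leq \frac{\tr(Q\Gamma)}{\tr(Q\rho)} \ .
\end{align}

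To conclude, I would note that the left-hand infimum is bounded above by the objective value attained by this particular $Q$, which by the displayed inequality is itself bounded above by the right-hand objective value of the same $Q$. Taking the infimum over all $Q$ feasible for the $(\rho,\Gamma)$ problem yields the desired comparison of infima, and hence~\eqref{eq:thm-DHypr-monotonic-under-ordering}. The entropy statement~\eqref{eq:thm-DHypr-monotonic-under-ordering--HHypr} then follows immediately by specializing to $\Gamma = \Gamma' = \Ident$ (for which $\Gamma' \leq \Gamma$ holds with equality) and negating, using $\HHypr[r][\eta]{\rho} = -\DHypr[r][\eta]{\rho}{\Ident}$.

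This argument is essentially bookkeeping, so I do not anticipate a serious obstacle; the only points demanding care are tracking the direction of $-\log$ and confirming that the optimization domains are nonempty and the parameter range is respected. For the latter, $\Ident \in \Mr[0] \subset \Mr[r]$ is always feasible, and since $\tr(\rho') \geq \tr(\rho) \geq \eta$, the hypothesis $\eta \in (0,\tr(\rho)]$ guarantees $\eta \in (0,\tr(\rho')]$, so the $(\rho',\Gamma')$ optimization is well posed. No appeal to compactness or to the existence of optimizers is needed, since the comparison is carried out pointwise over feasible effects before taking infima.
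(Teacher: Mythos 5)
Your proposal is correct and takes essentially the same route as the paper's proof: both compare the infima pointwise, observing that for every POVM effect $Q$ one has $\tr(Q\rho') \geq \tr(Q\rho) \geq \eta$ and $\tr(Q\Gamma') \leq \tr(Q\Gamma)$, so each feasible candidate for the $(\rho,\Gamma)$ optimization is feasible for the $(\rho',\Gamma')$ optimization with a no-larger objective value. Your extra remarks on well-posedness (nonempty feasible sets, $\eta \leq \tr(\rho')$) are harmless additions the paper leaves implicit.
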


\begin{proof}[**thm:DHypr-monotonic-under-ordering]
  For every POVM effect $Q$, $\tr`(Q\rho)\leq\tr`(Q\rho')$, and $\tr`(Q\Gamma)\geq\tr`(Q\Gamma')$.
  Therefore,
  \begin{align}
    \ee^{-\DHypr[r][\eta]{\rho}{\Gamma}}
    = \inf_{\substack{Q\in\Mr[r]\\ \tr`(Q\rho)\geq\eta}} `*{ \frac{\tr`*(Q\Gamma)}{\tr`*(Q\rho)} }
    \geq \inf_{\substack{Q\in\Mr[r]\\ \tr`(Q\rho')\geq\eta}} `*{ \frac{\tr`*(Q\Gamma')}{\tr`*(Q\rho')} }
    = \ee^{-\DHypr[r][\eta]{\rho'}{\Gamma'}}\ .
  \end{align}
  One obtains~\eqref{eq:thm-DHypr-monotonic-under-ordering--HHypr} by setting $\Gamma = \Gamma' = \Ident$ in~\eqref{eq:thm-DHypr-monotonic-under-ordering}.
\end{proof}

For tensor-product states, the complexity entropy has a property similar to the von Neumann entropy's subadditivity.

\begin{proposition}[Subadditivity for tensor-product states]
  \label{thm:DHypr-subadditivity}
  Let $S$ and $S'$ denote distinct quantum systems.
  Let $\rho_S$ and $\rho_{S'}'$ denote any subnormalized states, and let $\Gamma_S$ and $\Gamma_{S'}'$ denote any positive-semidefinite operators.
  Let $r,r' \geq 0$, $\eta \in (0, \tr(\rho) ]$, and $\eta' \in (0, \tr(\rho') ]$.
  It holds that
  \begin{subequations}
  \begin{align}
    \DHypr[r+r'][\eta\eta']{\rho \otimes \rho'}{\Gamma \otimes \Gamma'}
    &\geq \DHypr[r][\eta]{\rho}{\Gamma} + \DHypr[r'][\eta']{\rho'}{\Gamma'}
      \label{eq:thm-DHypr-subadditivity}
    \\
    \text{and} \; \; \;
    \HHypr[r+r'][\eta\eta']{\rho \otimes \rho'}
    &\leq \HHypr[r][\eta]{\rho} + \HHypr[r'][\eta']{\rho'}\ .
      \label{eq_Superadd_H}
    \end{align}
    \end{subequations}
\end{proposition}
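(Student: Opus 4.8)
The plan is to exploit the fact that, under tensor products, both the objective ratio and the feasibility constraint in the definition~\eqref{eq:defn-DHypr} of the complexity relative entropy factorize across the two systems. Since $-\log$ is strictly decreasing, the relative-entropy inequality~\eqref{eq:thm-DHypr-subadditivity} is equivalent, after exponentiating, to the claim that the infimum defining $\DHypr[r+r'][\eta\eta']{\rho\otimes\rho'}{\Gamma\otimes\Gamma'}$ is \emph{at most} the product of the two single-system infima. Because I only need an upper bound on this joint infimum, it suffices to evaluate the objective on a convenient subclass of candidates, namely product effects.

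First I would take any feasible $Q\in\Mr[r]$ with $\tr \bm{(} Q\rho \bm{)}\geq\eta$ and any feasible $Q'\in\Mr[r']$ with $\tr \bm{(} Q'\rho' \bm{)}\geq\eta'$, and form $\tilde Q\coloneqq Q\otimes Q'$. Axiom~\ref{item:POVM-effect-complexity--subsystems} of \cref{defn:Mr-sets-general}, which states $\Mr[r]\otimes\Mr[r']\subset\Mr[r+r']$, guarantees $\tilde Q\in\Mr[r+r']$, so $\tilde Q$ is eligible for the joint optimization at complexity scale $r+r'$. The constraint is met since $\tr \bm{(} \tilde Q\,(\rho\otimes\rho') \bm{)}=\tr \bm{(} Q\rho \bm{)}\,\tr \bm{(} Q'\rho' \bm{)}\geq\eta\eta'$, which is precisely why the intolerance parameters combine multiplicatively to $\eta\eta'$. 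The objective value then factorizes, because the trace of a tensor product of operators is the product of the traces:
\[
\frac{\tr \bm{(} \tilde Q\,(\Gamma\otimes\Gamma') \bm{)}}{\tr \bm{(} \tilde Q\,(\rho\otimes\rho') \bm{)}}
= \frac{\tr \bm{(} Q\Gamma \bm{)}}{\tr \bm{(} Q\rho \bm{)}} \cdot \frac{\tr \bm{(} Q'\Gamma' \bm{)}}{\tr \bm{(} Q'\rho' \bm{)}} \ .
\]

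Next I would pass to infima. The two single-system optimizations range over disjoint variables $Q$ and $Q'$, and each of the displayed factors is nonnegative, so the infimum of the product over feasible product effects equals the product of the two separate infima. Since the full joint optimization ranges over a set that \emph{contains} all such product effects, its infimum is bounded above by that product. Taking $-\log$ reverses the inequality and converts the product into the sum $\DHypr[r][\eta]{\rho}{\Gamma}+\DHypr[r'][\eta']{\rho'}{\Gamma'}$, which establishes~\eqref{eq:thm-DHypr-subadditivity}. The entropy statement~\eqref{eq_Superadd_H} follows immediately by specializing to $\Gamma=\Ident_S$ and $\Gamma'=\Ident_{S'}$, whence $\Gamma\otimes\Gamma'=\Ident_{SS'}$ and each relative entropy reduces to minus the corresponding complexity entropy via~\eqref{eq:defn-HHypr}.

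This argument has no genuine obstacle; the only point demanding care is the \emph{direction} of the bound. One must use product effects to \emph{upper-bound} the joint infimum, and resist the temptation to assert that every joint-feasible effect factorizes (it need not, and such a claim would be both false and unnecessary). A secondary piece of bookkeeping is to confirm that the individual constraints $\tr \bm{(} Q\rho \bm{)}\geq\eta$ and $\tr \bm{(} Q'\rho' \bm{)}\geq\eta'$ indeed yield exactly the threshold $\eta\eta'$ for the joint effect, so that no feasibility is lost in passing to $\tilde Q$.
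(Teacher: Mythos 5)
Your proposal is correct and follows essentially the same route as the paper: both use axiom~\ref{item:POVM-effect-complexity--subsystems} to certify $Q\otimes Q'\in\Mr[r+r']$ as a candidate for the joint optimization, verify feasibility via the multiplicative constraint $\tr`(Q\rho)\tr`(Q'\rho')\geq\eta\eta'$, and let the objective factorize. The only cosmetic difference is bookkeeping: the paper works with $\zeta$-approximate optimizers and lets $\zeta,\zeta'\to0$, whereas you factor the infimum of a product of nonnegative functions over disjoint variables directly—both are valid.
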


A consequence of \cref{thm:DHypr-subadditivity} is, ancillas cannot decrease the complexity relative entropy. For all subnormalized states $\rho$ and $\sigma$ defined on the same Hilbert space, and for all subnormalized states $\tau$,
\begin{align}
    \DHypr[r][\eta]{\rho \otimes \tau}{\sigma \otimes \tau}
    \geq \DHypr[r][\eta]{\rho}{\sigma}\ .
    \label{eq:DHypr-ancillas}
\end{align}
Inequality~\eqref{eq:DHypr-ancillas} follows from setting $r' = 0$ and $\eta' =1$ in~\eqref{eq:thm-DHypr-subadditivity} and applying \cref{thm:DHypr-zero-same-args} $`\big[\DHypr[r'][\eta']{\tau}{\tau} = 0]$.
Inequality~\eqref{eq:DHypr-ancillas} is compatible with observations about the power of the one-clean-qubit computational model (DQC1)~\cite{Knill_98_Power} and about how tossing an extra pure qubit into a black hole is expected to decrease the black hole's complexity~\cite{Brown_18_Second,Susskind2018arXiv_ThreeLectures}.

\begin{proof}[*thm:DHypr-subadditivity]
  Consider any $\zeta,\zeta'>0$.
  By the definitions of $\DHypr[r][\eta]{\rho}{\Gamma}$ and $\DHypr[r'][\eta']{\rho'}{\Gamma'}$, there exist $Q \in \Mr[r][S]$, with $\tr`(Q\rho) \geq \eta$, and $Q' \in \Mr[r'][S']$, with $\tr`(Q'\rho') \geq \eta'$, such that
  \begin{align}
    \frac{\tr`*(Q\Gamma)}{\tr`*(Q\rho)}
    \leq \ee^{-\DHypr[r][\eta]{\rho}{\Gamma}} + \zeta
    \; \;  \; \; \; \; \text{and} \; \; \; \; \; \;
    \frac{\tr`*(Q'\Gamma')}{\tr`*(Q'\rho')}
    \leq \ee^{-\DHypr[r'][\eta']{\rho'}{\Gamma'}} + \zeta' \ .
  \end{align}
  By \cref{defn:Mr-sets-general}, $\Mr[r][S] \otimes \Mr[r'][S'] \subseteq \Mr[r+r'][S S']$, so $Q \otimes Q' \in \Mr[r+r'][S S']$.
  Moreover,
  \begin{align}
    \tr`*( `[Q\otimes Q']\,`[\rho\otimes\rho'] ) = \tr`(Q\rho)\tr`(Q'\rho') \geq \eta\eta' \ .
  \end{align}
  Therefore, $Q\otimes Q'$ is a candidate for the $\DHypr[r+r'][\eta\eta']{\rho\otimes\rho'}{\Gamma\otimes\Gamma'}$ optimization.
  Consequently,
  \begin{align}
    \ee^{-\DHypr[r+r'][\eta\eta']{\rho\otimes\rho'}{\Gamma\otimes\Gamma'}}
    \leq \frac{\tr`*(Q\Gamma)\tr`*(Q'\Gamma')}{\tr`*(Q\rho)\tr`*(Q'\rho')}
    \leq `\Big( \ee^{-\DHypr[r][\eta]{\rho}{\Gamma}} + \zeta )
    `\Big( \ee^{-\DHypr[r'][\eta']{\rho'}{\Gamma'}} + \zeta' )\ ,
  \end{align}
  which implies~\eqref{eq:thm-DHypr-subadditivity}, since $\zeta$ and $\zeta'$ are arbitrary.
  One obtains~\eqref{eq_Superadd_H} by setting $\Gamma = \Ident_S$ and $\Gamma' = \Ident_{S'}$ in~\eqref{eq:thm-DHypr-subadditivity}.
\end{proof}

The complexity relative entropy never increases under any partial trace.

\begin{proposition}[Monotonicity under partial traces]
  \label{thm:cplxrelentropy-monotonicity-partial-trace}
  Let $A$ and $B$ denote distinct quantum systems.
  Let $\rho_{AB}$ denote any subnormalized quantum state of $AB$.
  Let $\Gamma_{AB}$ denote any positive-semidefinite operator.
  Let $r \geq 0$ and $\eta \in (0, \tr(\rho) ]$.
  It holds that
    \begin{align}
        \DHypr[r][\eta]{\rho_{AB}}{\Gamma_{AB}} \geq \DHypr[r][\eta]{\rho_{A}}{\Gamma_{A}} \ ,
        \label{eq:thm:cplxrelentropy-monotonicity-partial-trace--DHypr}
    \end{align}
   wherein $\rho_A \coloneqq \tr_B`(\rho_{AB})$ and $\Gamma_A \coloneqq \tr_B`(\Gamma_{AB})$.
   Furthermore,
    \begin{align}
        \HHypr[r][\eta]{\rho_{AB}} \leq \HHypr[r][\eta]{\rho_A} + \log(d_B) \ .
        \label{eq:thm:cplxrelentropy-monotonicity-partial-trace--HHypr}
    \end{align}
\end{proposition}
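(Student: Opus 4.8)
The plan is to establish the relative-entropy inequality \eqref{eq:thm:cplxrelentropy-monotonicity-partial-trace--DHypr} by lifting each candidate effect for the reduced optimization to a candidate for the joint optimization, and then to obtain the entropy inequality \eqref{eq:thm:cplxrelentropy-monotonicity-partial-trace--HHypr} as a corollary via the scaling property (\cref{thm:DHypr-scaling-2ndarg}).

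First I would prove \eqref{eq:thm:cplxrelentropy-monotonicity-partial-trace--DHypr}. Let $Q_A \in \Mr[r][A]$ denote any candidate for the $\DHypr[r][\eta]{\rho_A}{\Gamma_A}$ optimization, so that $\tr(Q_A \rho_A) \geq \eta$. I would consider the extended effect $Q_A \otimes \Ident_B$ and verify that it is a valid candidate for the joint optimization. Its membership in $\Mr[r][AB]$ follows from the axioms of \cref{defn:Mr-sets-general}: since $\Ident_B \in \Mr[0][B]$ and $\Mr[r][A] \otimes \Mr[0][B] \subset \Mr[r][AB]$, we have $Q_A \otimes \Ident_B \in \Mr[r][AB]$. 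Because $\tr_B$ is the adjoint of $(\cdot) \otimes \Ident_B$, the extended effect satisfies $\tr( [Q_A \otimes \Ident_B] \rho_{AB} ) = \tr(Q_A \rho_A) \geq \eta$ and $\tr( [Q_A \otimes \Ident_B] \Gamma_{AB} ) = \tr(Q_A \Gamma_A)$, so it achieves exactly the objective value $\tr(Q_A \Gamma_A)/\tr(Q_A \rho_A)$. As the product-form effects constitute a subset of all joint candidates, the joint infimum is no larger than the reduced one: $\ee^{-\DHypr[r][\eta]{\rho_{AB}}{\Gamma_{AB}}} \leq \ee^{-\DHypr[r][\eta]{\rho_A}{\Gamma_A}}$, which is \eqref{eq:thm:cplxrelentropy-monotonicity-partial-trace--DHypr} after taking $-\log$.

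Next I would derive \eqref{eq:thm:cplxrelentropy-monotonicity-partial-trace--HHypr} by specializing to $\Gamma_{AB} = \Ident_{AB}$, for which $\Gamma_A = \tr_B(\Ident_{AB}) = d_B \Ident_A$. Then \eqref{eq:thm:cplxrelentropy-monotonicity-partial-trace--DHypr} reads $\DHypr[r][\eta]{\rho_{AB}}{\Ident_{AB}} \geq \DHypr[r][\eta]{\rho_A}{d_B \Ident_A}$. Invoking the scaling property gives $\DHypr[r][\eta]{\rho_A}{d_B \Ident_A} = \DHypr[r][\eta]{\rho_A}{\Ident_A} - \log(d_B)$. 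Negating and recalling that $\HHypr[r][\eta]{\rho} = -\DHypr[r][\eta]{\rho}{\Ident}$ then yields \eqref{eq:thm:cplxrelentropy-monotonicity-partial-trace--HHypr}.

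The one place warranting care --- and the only conceptual content beyond bookkeeping --- is the membership $Q_A \otimes \Ident_B \in \Mr[r][AB]$, which rests entirely on the compositional axioms of the POVM-effect-complexity sets rather than on any explicit construction such as \eqref{eq:setting-defn-Mr}. I do not expect a genuine obstacle: no smoothing or approximation is required, and because the argument compares infima over nested candidate sets rather than individual optimizers, it does not matter whether the infima are attained.
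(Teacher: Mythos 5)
Your proposal is correct and takes essentially the same route as the paper's proof: lift each candidate $Q_A$ to $Q_A\otimes\Ident_B$, verify its membership in $\Mr[r][AB]$ via axioms \ref{item:POVM-effect-complexity--identity} and \ref{item:POVM-effect-complexity--subsystems} of \cref{defn:Mr-sets-general}, note that the constraint and objective value are preserved, and then obtain \eqref{eq:thm:cplxrelentropy-monotonicity-partial-trace--HHypr} by setting $\Gamma_{AB}=\Ident_{AB}$ and applying the scaling property (\cref{thm:DHypr-scaling-2ndarg}). The only (cosmetic) difference is that the paper extracts a $\zeta$-near-optimal candidate and lets $\zeta\to0$, whereas you compare the nested candidate sets directly, which is equally valid and slightly cleaner since the lifting preserves objective values exactly.
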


\begin{proof}[**thm:cplxrelentropy-monotonicity-partial-trace]
  Consider any $\zeta>0$.
  There exists a $Q_A \in \Mr[r][A]$ such that $\tr`(Q_A\rho_A) \geq \eta$ and $\tr`(Q_A\Gamma_A)/\tr`(Q_A\rho_A) \leq \exp \bm{(} -\DHypr[r][\eta]{\rho_A}{\Gamma_A} \bm{)} + \zeta$.
  By Properties~\ref{item:POVM-effect-complexity--identity} and~\ref{item:POVM-effect-complexity--subsystems} of \cref{defn:Mr-sets-general}, $\tilde{Q}_{AB} \coloneqq Q_A \otimes \Ident_B \in \Mr[r][AB]$.
  Moreover, $\tr`(\tilde{Q}_{AB} \rho_{AB}) = \tr`(Q_A \rho_A) \geq \eta$.
  Therefore, $\tilde{Q}_{AB}$ is a candidate for the $\DHypr[r][\eta]{\rho_{AB}}{\Gamma_{AB}}$ optimization.
  Consequently,
  \begin{align}
    \ee^{ -\DHypr[r][\eta]{\rho_{AB}}{\Gamma_{AB}} }
    \leq \frac{ \tr`*(\tilde{Q}_{AB}  \Gamma_{AB} ) }{ \tr`*(\tilde{Q}_{AB} \rho_{AB}) }
    = \frac{ \tr`*(Q_A \Gamma_{A} ) }{ \tr`*(Q_A \rho_{A}) }
    \leq \ee^{ -\DHypr[r][\eta]{\rho_A}{\Gamma_A} } + \zeta \ ,
  \end{align}
  which implies~\eqref{eq:thm:cplxrelentropy-monotonicity-partial-trace--DHypr}, since $\zeta$ is arbitrary.
  \eqref{eq:thm:cplxrelentropy-monotonicity-partial-trace--HHypr} follows from applying first~\eqref{eq:thm:cplxrelentropy-monotonicity-partial-trace--DHypr} and then \cref{thm:DHypr-scaling-2ndarg}, to get
  \begin{align}
    \HHypr[r][\eta]{\rho_{AB}}
    = - \DHypr[r][\eta]{\rho_{AB}}{\Ident_{AB}}
    \leq - \DHypr[r][\eta]{\rho_{A}}{d_B\Ident_{A}}
    = \HHypr[r][\eta]{\rho_A} + \log(d_B) \ .
  \end{align}
\end{proof}

We can bound the complexity (relative) entropy in some cases where its argument undergoes a limited-complexity unitary.

\begin{proposition}[Unitary operations on arguments]
    \label{thm:DHypr-arg-U-rp}
    Let $\rho$ denote any subnormalized state; and $\Gamma$, any positive-semidefinite operator.
    Let $\Psimple$ denote any set of simple POVM effects (\cref{defn:Psimple-general}) and $C$ any adjoint-invariant unitary-complexity measure (\cref{defn:unitary-complexity}).
    Let $\{\Mr[r] = \Mr[r](\Psimple,C)\}$ denote the family of POVM-effect-complexity sets defined by~\eqref{eq:Mr-general-from-Psimple-and-superop-cplx-measure}.
    Let $r , r' \geq 0$ and $\eta\in(0,\tr(\rho)]$.
    Let $U$ denote any unitary satisfying $C(U) \leq r'$.
    It holds that
    \begin{subequations}
    \begin{align}
        \DHypr[r+r'][\eta]{U\rho U^\dagger}{U \Gamma U^\dagger} &\geq \DHypr[r][\eta]{\rho}{\Gamma}
        \label{eq:thm-DHypr-arg-U-rp-for-DHypr---adjoint-invariant}
        \\
        \text{and} \; \; \;
        \HHypr[r+r'][\eta]{ U \rho U^\dagger } &\leq \HHypr[r][\eta]{ \rho }\ .
        \label{eq:thm-DHypr-arg-U-rp-for-HHypr---adjoint-invariant}
    \end{align}
    \end{subequations}
\end{proposition}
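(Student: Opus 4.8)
The plan is to exhibit, for any candidate effect in the optimization defining $\DHypr[r][\eta]{\rho}{\Gamma}$, a corresponding candidate for the optimization defining $\DHypr[r+r'][\eta]{U\rho U^\dagger}{U\Gamma U^\dagger}$ that achieves the same objective value, and which lies in $\Mr[r+r']$. The key observation is the cyclicity of the trace: if $Q\in\Mr[r]$ is a candidate for the first optimization, then $UQU^\dagger$ satisfies $\tr`*(UQU^\dagger\, U\rho U^\dagger) = \tr`*(Q\rho) \geq \eta$ and $\tr`*(UQU^\dagger\, U\Gamma U^\dagger)/\tr`*(UQU^\dagger\, U\rho U^\dagger) = \tr`*(Q\Gamma)/\tr`*(Q\rho)$, so the objective value is unchanged. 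Hence the result will follow once I verify that $UQU^\dagger \in \Mr[r+r']$.

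The complexity-membership step is where the structure of $\Mr[r]$ in~\eqref{eq:Mr-general-from-Psimple-and-superop-cplx-measure} and the adjoint-invariance hypothesis enter. First I would write $Q = \mathcal{E}^\dagger(P)$ for some simple effect $P\in\Psimple$ and some operation $\mathcal{E}$ with $C(\mathcal{E})\leq r$. Denoting $\mathcal{U}(\cdot)\coloneqq U(\cdot)U^\dagger$, I observe that $UQU^\dagger = \mathcal{U}(Q) = \mathcal{U}\mathcal{E}^\dagger(P)$. Writing this in the adjoint form required by~\eqref{eq:Mr-general-from-Psimple-and-superop-cplx-measure}, I use $\mathcal{U} = (\mathcal{U}^\dagger)^\dagger$ and the identity $(\mathcal{U}^\dagger)^\dagger\mathcal{E}^\dagger = (\mathcal{E}\,\mathcal{U}^\dagger)^\dagger$, so that $UQU^\dagger = (\mathcal{E}\,\mathcal{U}^\dagger)^\dagger(P)$. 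By subadditivity of the complexity measure under sequential composition (\cref{defn:superoperator-complexity-general}, Property~\ref{item:superoperator-complexity--sequential-composition}), $C(\mathcal{E}\,\mathcal{U}^\dagger) \leq C(\mathcal{E}) + C(\mathcal{U}^\dagger)$. Adjoint-invariance gives $C(\mathcal{U}^\dagger) = C(\mathcal{U}) = C(U) \leq r'$, so $C(\mathcal{E}\,\mathcal{U}^\dagger) \leq r + r'$. Therefore $UQU^\dagger \in \Mr[r+r']$ by construction~\eqref{eq:Mr-general-from-Psimple-and-superop-cplx-measure}.

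Combining the two steps: every candidate $Q$ for the $\DHypr[r][\eta]{\rho}{\Gamma}$ infimum yields a candidate $UQU^\dagger$ for the $\DHypr[r+r'][\eta]{U\rho U^\dagger}{U\Gamma U^\dagger}$ infimum with an identical objective value. Hence the latter infimum is taken over at least as rich a set of achievable objective values, giving
\begin{align}
  \ee^{-\DHypr[r+r'][\eta]{U\rho U^\dagger}{U\Gamma U^\dagger}}
  \leq \ee^{-\DHypr[r][\eta]{\rho}{\Gamma}}\ ,
\end{align}
which is equivalent to~\eqref{eq:thm-DHypr-arg-U-rp-for-DHypr---adjoint-invariant}. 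Inequality~\eqref{eq:thm-DHypr-arg-U-rp-for-HHypr---adjoint-invariant} then follows by setting $\Gamma=\Ident$ and noting $U\Ident U^\dagger = \Ident$, so that $\HHypr[r+r'][\eta]{U\rho U^\dagger} = -\DHypr[r+r'][\eta]{U\rho U^\dagger}{\Ident} \leq -\DHypr[r][\eta]{\rho}{\Ident} = \HHypr[r][\eta]{\rho}$. I expect the only genuinely delicate point to be the bookkeeping of the adjoint and the direction of composition in the identity $UQU^\dagger = (\mathcal{E}\,\mathcal{U}^\dagger)^\dagger(P)$; adjoint-invariance of $C$ is precisely what is needed to convert the cost of $\mathcal{U}^\dagger$ back into the stated budget $r'$, and this is where that hypothesis is used.
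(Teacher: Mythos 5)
Your proof is correct and takes essentially the same approach as the paper's: both map a (near-)optimal candidate $Q \in \Mr[r]$ to $UQU^\dagger$, use cyclicity of the trace to keep the objective value fixed, and combine subadditivity of $C$ under sequential composition with adjoint invariance to conclude $UQU^\dagger \in \Mr[r+r']$. The only cosmetic differences are that the paper does the bookkeeping at the operator level, writing $Q' = (U_0 U^\dagger)^\dagger P (U_0 U^\dagger)$ with $C(U_0 U^\dagger) \leq C(U_0) + C(U) \leq r + r'$, and handles the possibly unattained infimum via an explicit $\zeta$-approximation rather than your (equally valid) monotonicity-of-infima phrasing.
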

\begin{proof}[**thm:DHypr-arg-U-rp]
    Consider any $\zeta>0$.
    There exists a $Q\in \Mr[r]$ such that $\tr`(Q\rho)\geq\eta$ and $\tr`(Q\Gamma)/\tr`(Q\rho) \leq \exp \bm{(} -\DHypr[r][\eta]{\rho}{\Gamma} \bm{)} + \zeta$.
    $Q = U_0^\dagger P U_0$ for some effect $P \in \Psimple$ and for some unitary $U_0$ satisfying $C(U_0) \leq r$.
    Let $Q' \coloneqq U Q U^\dagger = ( U_0 U^\dagger)^\dagger P (U_0 U^\dagger)$.
    By \cref{defn:superoperator-complexity-general} and by the adjoint invariance of $C$, $C(U_0U^\dagger) \leq C(U_0) + C(U^\dagger) = C(U_0) + C(U) \leq r + r'$, so $Q' \in \Mr[r+r']$.
    Moreover, $\tr`(Q' [U\rho U^\dagger]) = \tr`(Q\rho) \geq \eta$.
    Therefore, $Q'$ is a candidate for the $\DHypr[r+r'][\eta]{U\rho U^\dagger}{U \Gamma U^\dagger}$ optimization, so
    \begin{align}
        \ee^{ -\DHypr[r+r'][\eta]{U\rho U^\dagger}{U \Gamma U^\dagger} }
        \leq \frac{\tr`\big(Q' `\big[U\Gamma U^\dagger])}{\tr`\big(Q' `\big[U\rho U^\dagger])}
        = \frac{\tr`(Q\Gamma)}{\tr`*(Q\rho)}
        \leq \ee^{ -\DHypr[r][\eta]{\rho}{\Gamma} } + \zeta \ ,
    \end{align}
    which implies~\eqref{eq:thm-DHypr-arg-U-rp-for-DHypr---adjoint-invariant}, since $\zeta$ is arbitrary.
\end{proof}

For composite systems, one can bound the complexity (relative) entropy in terms of hypothesis-testing (relative) entropies on each subsystem, if $\Mr[r]$ contains only tensor-product POVM effects.
In most cases, $\Mr[r]$ contains nonlocal effects for all $r > 0$; in such cases, only $\Mr[r=0]$ may consist solely of tensor-product effects.
Importantly, the set $\Mr[r=0]$ defined in~\eqref{eq:setting-defn-Mrzero}, for $n$ qubits, contains only tensor-product effects.

\begin{proposition}[Upper bound by hypothesis-testing relative entropies of subsystems]
  \label{thm:bound-DHypr-DHyp-tensor-products}
  Let $S$ denote a composition of $N$ quantum subsystems: $S = S_1 S_2 \ldots S_N$.
  Let $\rho_S$ denote any subnormalized state.
  For all $i=1,2,\ldots,N$, let $\Gamma_{S_i}$ denote any positive-semidefinite operator.
  Let $\Gamma_S \coloneqq \bigotimes_{i=1}^N \Gamma_{S_i}$.
  Let $r \geq 0$ and $\eta\in(0,\tr(\rho_S)]$.
  Suppose that $\Mr[r][S]$ contains only tensor-product POVM effects: every $Q_S \in \Mr[r][S]$ is of the form $ \bigotimes_{i=1}^N Q_{S_i}$. Each $Q_{S_i}$ is an effect on $S_i$.
  It holds that
  \begin{subequations}
    \begin{align}
      \DHypr[r][\eta]{\rho_S}{\Gamma_S}
      &\leq \sum_{i=1}^N \DHyp[\eta]{\rho_{S_i}}{\Gamma_{S_i}}
      \label{eq:bound-DHypr-DHyp-tensor-products}
      \\
      \text{and} \; \; \;
      \HHypr[r][\eta]{\rho_S}
      &\geq \sum_{i=1}^N \HHyp[\eta]{\rho_{S_i}} \ .
      \label{eq:bound-DHypr-DHyp-tensor-products---Hhypr}
    \end{align}
  \end{subequations}
\end{proposition}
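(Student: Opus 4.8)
The plan is to prove the equivalent multiplicative form of the claim. Using the alternative expression for the hypothesis-testing relative entropy (Proposition~\ref{thm:DHyp-alternative-expressions}), I would write $e^{-\DHyp[\eta]{\rho_{S_i}}{\Gamma_{S_i}}} = \inf\{\tr(Q_i\Gamma_{S_i})/\tr(Q_i\rho_{S_i})\}$ over single-system effects $Q_i$ with $\tr(Q_i\rho_{S_i})\geq\eta$, and $e^{-\DHypr[r][\eta]{\rho_S}{\Gamma_S}}=\inf_{Q\in\Mr[r][S],\,\tr(Q\rho_S)\geq\eta}\tr(Q\Gamma_S)/\tr(Q\rho_S)$. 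The target bound is then equivalent to showing that every feasible candidate $Q\in\Mr[r][S]$ has objective value at least $\prod_{i=1}^N e^{-\DHyp[\eta]{\rho_{S_i}}{\Gamma_{S_i}}}$. By hypothesis such a $Q$ is a tensor product, $Q=\bigotimes_{i=1}^N Q_{S_i}$, and since $\Gamma_S=\bigotimes_i\Gamma_{S_i}$ the numerator factorizes exactly, $\tr(Q\Gamma_S)=\prod_i\tr(Q_{S_i}\Gamma_{S_i})$. The $\HHypr$ statement then follows by specializing every $\Gamma_{S_i}=\Ident_{S_i}$.

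The second step establishes that each marginal effect $Q_{S_i}$ is itself feasible for the $i$-th single-system hypothesis test. Since every $Q_{S_j}$ obeys $0\leq Q_{S_j}\leq\Ident$ and tensor products of positive-semidefinite operators are positive-semidefinite, I would use the operator inequality $\bigotimes_j Q_{S_j}\leq Q_{S_i}\otimes\bigotimes_{j\neq i}\Ident_{S_j}$; tracing against $\rho_S\geq0$ gives $\tr(Q\rho_S)\leq\tr(Q_{S_i}\rho_{S_i})$. Hence $\tr(Q_{S_i}\rho_{S_i})\geq\tr(Q\rho_S)\geq\eta$, so $Q_{S_i}$ is an admissible candidate for the optimization defining $\DHyp[\eta]{\rho_{S_i}}{\Gamma_{S_i}}$, and therefore $\tr(Q_{S_i}\Gamma_{S_i})\geq e^{-\DHyp[\eta]{\rho_{S_i}}{\Gamma_{S_i}}}\,\tr(Q_{S_i}\rho_{S_i})$.

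Multiplying these $N$ inequalities and dividing by the joint acceptance probability yields
\[
\frac{\tr(Q\Gamma_S)}{\tr(Q\rho_S)}\;\geq\;\Big(\prod_{i=1}^N e^{-\DHyp[\eta]{\rho_{S_i}}{\Gamma_{S_i}}}\Big)\cdot\frac{\prod_{i=1}^N\tr(Q_{S_i}\rho_{S_i})}{\tr(Q\rho_S)}.
\]
The main obstacle, and the genuine crux of the argument, is the trailing ratio: the correction-free bound requires $\tr(Q\rho_S)\leq\prod_i\tr(Q_{S_i}\rho_{S_i})$, i.e. that the joint probability of all marginal effects accepting does not exceed the product of the individual acceptance probabilities. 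This holds with equality for product $\rho_S$, but it can fail for correlated $\rho_S$ (for a maximally entangled pair with $Q_{S_i}=\proj{0}$ the joint probability is $1/2$ while the product of marginals is $1/4$), because the single-qubit bound $\tr(Q\rho_S)\leq\tr(Q_{S_i}\rho_{S_i})$ controls only the worst marginal, not the product. I would therefore close the argument by retaining just one factor, $\tr(Q\rho_S)\leq\tr(Q_{S_1}\rho_{S_1})$, to cancel the denominator, leaving the residual $\prod_{i\geq2}\tr(Q_{S_i}\rho_{S_i})\geq\eta^{\,N-1}$; this reproduces the displayed bound up to a correction that is at most $(N-1)\log(1/\eta)$ and vanishes as $\eta\to1$, which is precisely the regime (acceptance probabilities squeezed into $[\eta,1]$) in which the stated correction-free inequality is recovered and is consistent with the main text's remark that the error terms vanish as $\eta\to1$.
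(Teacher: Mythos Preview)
Your diagnosis is correct, and the obstruction you isolate is not a gap in your reasoning but a genuine defect in the statement as written. The paper's own proof runs through the chain
\[
\prod_{i=1}^N \frac{\tr(Q_{S_i}\Gamma_{S_i})}{\tr(Q_{S_i}\rho_{S_i})}
\;\leq\;
\prod_{i=1}^N \frac{\tr(Q_{S_i}\Gamma_{S_i})}{\tr(Q_S\rho_S)}
\;=\;
\frac{\tr(Q_S\Gamma_S)}{\tr(Q_S\rho_S)}\,,
\]
and the displayed equality is simply false: the middle expression equals $\tr(Q_S\Gamma_S)/[\tr(Q_S\rho_S)]^{N}$, not $\tr(Q_S\Gamma_S)/\tr(Q_S\rho_S)$. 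The missing factor $[\tr(Q_S\rho_S)]^{N-1}\geq\eta^{\,N-1}$ is exactly the correction you derived. So you and the paper follow the same route; the difference is that you noticed where the denominator count goes wrong and the paper did not.

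The uncorrected inequality is not merely unproved but false. Your maximally-entangled example already witnesses it: for two qubits with $\rho_S=\proj{\Phi^+}$, $\Gamma_S=\Ident$, $\eta=\tfrac12$, and the standard $\Mr[0]$ of Eq.~\eqref{eq:setting-defn-Mrzero}, the effect $Q=\proj{00}$ gives $\HHypr[0][1/2]{\rho_S}=\log 2$, whereas each marginal is maximally mixed with $\HHyp[1/2]{\rho_{S_i}}=\log 2$, so the claimed bound would read $\log 2\geq 2\log 2$. Your weakened version
\[
\DHypr[r][\eta]{\rho_S}{\Gamma_S}\;\leq\;\sum_{i=1}^N \DHyp[\eta]{\rho_{S_i}}{\Gamma_{S_i}}\;+\;(N-1)\log\!\Big(\frac{1}{\eta}\Big)
\]
is correct and is the right statement; it matches the main text's remark below Eq.~\eqref{eq:main-bound-cplx-rel-entr-r-eq-zero} that the hidden error terms vanish as $\eta\to 1$, and its sole downstream use (in the proof of Theorem~\ref{thm:Hhypr-bound-entgl-Erho}, with $N=2$) absorbs the extra $\log(1/\eta)$ harmlessly into the already-present $\eta$-dependent corrections.
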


\begin{proof}[**thm:bound-DHypr-DHyp-tensor-products]
  Consider any $\zeta>0$.
  There exists a $Q_S = \bigotimes_{i=1}^N Q_{S_i} \in \Mr[r][S]$, with each $Q_{S_i}$ an effect on $S_i$, such that $\tr`*(Q_S\rho_S) \geq \eta$ and $\tr`*(Q_S\Gamma_S) / \tr`*(Q_S\rho_S) \leq \exp \bm{(} - \DHypr[r][\eta]{\rho_S}{\Gamma_S} \bm{)} + \zeta$.
  For all $i=1,2,\ldots,N$, let $Q_S^{(i)} \coloneqq `*( \Ident_{S_1} \otimes \cdots \otimes \Ident_{S_{i-1}} ) \otimes Q_{S_i} \otimes `*( \Ident_{S_{i+1}} \otimes \cdots \otimes \Ident_{S_N} )$.
  For all $i$, $Q_S^{(i)} \geq Q_S$, since $\Ident_{S_j} \geq Q_{S_j}$ for all $j \neq i$;
  consequently, $\tr`*(Q_{S_i} \, \rho_{S_i}) = \tr`\big( Q_S^{(i)} \, \rho_S ) \geq \tr`*( Q_S \, \rho_S ) \geq \eta$.
  Thus, for all $i$, $Q_{S_i}$ is a candidate for the $\DHyp[\eta]{\rho_{S_i}}{\Gamma_{S_i}}$ optimization, so $\exp \bm{(} -\DHyp[\eta]{\rho_{S_i}}{\Gamma_{S_i}} \bm{)} \leq \tr`\big(Q_{S_i} \Gamma_{S_i})$.
  Hence,
  \begin{align}
    \ee^{ - \sum_{i=1}^N \DHyp[\eta]{\rho_{S_i}}{\Gamma_{S_i}} }
    \leq \prod_{i=1}^N \frac{\tr`*( Q_{S_i} \Gamma_{S_i} )}{\tr`*( Q_{S_i} \rho_{S_i} )}
    \leq \prod_{i=1}^N \frac{\tr`*( Q_{S_i} \Gamma_{S_i} )}{\tr`*( Q_S \rho_S )}
    = \frac{\tr`*( Q_S \Gamma_S )}{\tr`*( Q_S \rho_S )}
    \leq \ee^{ -\DHypr[r][\eta]{\rho_S}{\Gamma_S} } + \zeta \ ,
  \end{align}
  which implies~\eqref{eq:bound-DHypr-DHyp-tensor-products}, since $\zeta$ is arbitrary.
\end{proof}

\subsection{Complexity relative entropy and hypothesis testing}
\label{appx-topic:Dhypr-hypo-test}

We prove \cref{mainthm:DHypr-interpretation-hypo-test} of the main text, quantifying the type~I and type~II errors of a hypothesis test that involves limited-complexity measurement effects.
 
\begin{proposition}[Hypothesis testing with complexity limitations]
    \label{thm:DHypr-interpretation-hypo-test}
    Let $\rho$ and $\sigma$ denote any quantum states.
    Let $r \geq 0$, $\eta\in(0,1]$, and $\delta \in (0,1]$.
    The following statements are equivalent:
    \begin{enumerate}[label=(\roman*)]
    \item For all $\zeta>0$, there exist a $Q\in \Mr[r]$ and a $q \in [\eta,1]$ such that $\tr`([qQ]\rho) = \eta$ and $\tr`([qQ]\sigma) \leq \delta + \zeta$.
    \item It holds that
    \begin{align}
        \DHypr[r][\eta]{\rho}{\sigma} \geq -\log`*( \frac\delta\eta ) \ .
        \label{eq:appx-DHypr-interpretation-hypo-test}
    \end{align}
    \end{enumerate}
    
    Consequently, there exist a $Q\in \Mr[r]$ and a $q \in (0,1]$ such that $\tr`([qQ]\rho) = \eta$ and $\tr`([qQ]\sigma) < \delta$ if and only if
    \begin{align}
        \DHypr[r][\eta]{\rho}{\sigma} > -\log`*( \frac\delta\eta ) \ .
        \label{eq:appx-DHypr-interpretation-hypo-test-strict}
    \end{align}
\end{proposition}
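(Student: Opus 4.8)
The plan is to reduce both claims to an elementary statement about the infimum in the definition~\eqref{eq:defn-DHypr} of the complexity relative entropy. Since $-\log$ is strictly decreasing, the bound~\eqref{eq:appx-DHypr-interpretation-hypo-test} is equivalent to
\[
  \inf_{\substack{Q\in\Mr[r]\\ \tr(Q\rho)\geq\eta}} \left\{ \frac{\tr(Q\sigma)}{\tr(Q\rho)} \right\} \leq \frac{\delta}{\eta} \ ,
\]
while the strict bound~\eqref{eq:appx-DHypr-interpretation-hypo-test-strict} is equivalent to the same infimum being \emph{strictly} below $\delta/\eta$. The whole proposition then follows once the hypothesis-test data $(Q,q)$ are translated into a single effect $Q$ whose objective value is the ratio above.

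The key device is the reparametrization $q = \eta/\tr(Q\rho)$. First I would note that, because $0\leq Q\leq\Ident$ and $\rho$ is normalized, $\tr(Q\rho)\leq 1$ always; hence for any $Q\in\Mr[r]$ with $\tr(Q\rho)\geq\eta$, setting $q \coloneqq \eta/\tr(Q\rho)$ yields $q\in[\eta,1]$ and $\tr([qQ]\rho)=\eta$. Conversely, any pair $(Q,q)$ with $q\in[\eta,1]$ and $\tr([qQ]\rho)=\eta$ gives $\tr(Q\rho)=\eta/q\in[\eta,1]$, so $Q$ is feasible for the infimum. This establishes a bijective correspondence between feasible pairs and feasible effects, under which
\[
  \tr([qQ]\sigma) = q\,\tr(Q\sigma) = \eta\,\frac{\tr(Q\sigma)}{\tr(Q\rho)} \ .
\]
This identity is the heart of the argument: the type~II error $\tr([qQ]\sigma)$ equals $\eta$ times the objective value of $Q$.

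With this identity in hand, statement~(i) reads: for every $\zeta>0$ there is a feasible $Q$ with $\tr(Q\sigma)/\tr(Q\rho)\leq(\delta+\zeta)/\eta$. This is exactly the assertion that the infimum is $\leq\delta/\eta$, i.e.\ statement~(ii); the $\zeta$-quantifier is required precisely because the infimum over $\Mr[r]$ need not be attained. For the final strict claim I would run the same reparametrization but omit $\zeta$ (the constraint $\tr([qQ]\rho)=\eta$ forcing $q\in[\eta,1]\subseteq(0,1]$ in any case): a single feasible $Q$ with $\tr([qQ]\sigma)<\delta$ exists iff some feasible $Q$ has objective value $<\delta/\eta$, which holds iff the infimum lies strictly below $\delta/\eta$ --- here strictness lets me exhibit an actual element below the threshold, with no approximation needed --- equivalently $\DHypr[r][\eta]{\rho}{\sigma}>-\log(\delta/\eta)$.

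The argument is essentially bookkeeping, so I do not anticipate a serious obstacle. The single point demanding care is the contrast between the non-strict equivalence (i)$\Leftrightarrow$(ii), where the $\zeta$-quantifier is indispensable because $\Mr[r]$ is generally not compact and the infimum may fail to be attained, and the strict ``consequently'' statement, where a strict inequality automatically furnishes a witnessing effect and $\zeta$ can be dropped.
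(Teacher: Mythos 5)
Your proposal is correct and follows essentially the same route as the paper's proof: the identical reparametrization $q = \eta/\tr(Q\rho)$ (with $q\in[\eta,1]$ automatic from $\eta\leq\tr(Q\rho)\leq 1$), the identity $\tr([qQ]\sigma) = \eta\,\tr(Q\sigma)/\tr(Q\rho)$, and the same treatment of the $\zeta$-quantifier for the non-strict case versus a directly witnessed effect in the strict case. The only cosmetic difference is that the paper derives the strict claim by passing through auxiliary thresholds $\delta'$ and $\delta''$ and invoking the first equivalence, whereas you argue directly from the standard fact that an infimum lies strictly below a threshold iff some feasible point does.
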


\begin{corollary}
  \label{thm:DHypr-interpretation-hypo-test-Mr-closed}
  Let $r \geq 0$.
  If $\Mr[r]$ is compact, then statements (i) and (ii) in \cref{thm:DHypr-interpretation-hypo-test} are equivalent to
  \begin{enumerate}[label=(\roman*)]
    \item[(iii)] There exist a $Q\in \Mr[r]$ and a $q \in [\eta,1]$ such that $\tr`([qQ]\rho) = \eta$ and $\tr`([qQ]\sigma) \leq \delta$.
  \end{enumerate}
  In particular, suppose that $\Psimple$ is a set of simple POVM effects (\cref{defn:Psimple-general}) that is compact, and suppose that $C_{\mathcal{G}}$ is a circuit-superoperator-complexity measure (\cref{defn:circuit-complexity-measure}) associated with a compact gate set $\mathcal{G}$.
  Then $\Mr[r]`*(\Psimple, C_{\mathcal{G}})$ (\cref{defn:Mr-general-from-Psimple-and-superop-cplx-measure}) is compact.
\end{corollary}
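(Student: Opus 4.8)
The plan is to prove the three-way equivalence by establishing the cycle (iii) $\Rightarrow$ (i) $\Rightarrow$ (iii), and then to invoke the already-proved equivalence (i) $\Leftrightarrow$ (ii) from \cref{thm:DHypr-interpretation-hypo-test}. The implication (iii) $\Rightarrow$ (i) requires no work: any pair $(Q,q)$ witnessing (iii) satisfies $\tr`([qQ]\sigma) \leq \delta \leq \delta + \zeta$ for every $\zeta > 0$, so the same pair witnesses (i). Only the converse uses compactness.

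For (i) $\Rightarrow$ (iii), I would argue by extracting a convergent subsequence. Applying (i) with $\zeta_k = 1/k$ yields, for each $k$, a pair $(Q_k, q_k) \in \Mr[r] \times [\eta,1]$ with $\tr`([q_k Q_k]\rho) = \eta$ and $\tr`([q_k Q_k]\sigma) \leq \delta + 1/k$. Since $\Mr[r]$ is compact and $[\eta,1]$ is compact, the product $\Mr[r] \times [\eta,1]$ is compact, hence (in the finite-dimensional operator space of the preliminaries) sequentially compact; some subsequence $(Q_{k_j}, q_{k_j})$ therefore converges to a limit $(Q_*, q_*) \in \Mr[r] \times [\eta,1]$. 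The functionals $X \mapsto \tr`(X\rho)$ and $X \mapsto \tr`(X\sigma)$ are linear, hence continuous, and $(Q,q)\mapsto qQ$ is continuous; passing to the limit gives $\tr`([q_* Q_*]\rho) = \eta$ and $\tr`([q_* Q_*]\sigma) \leq \delta$. As $q_* \geq \eta > 0$, the pair $(Q_*, q_*)$ witnesses (iii).

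It remains to verify the concrete compactness claim, mirroring \cref{thm:stab-approx-state-complexity-measure-compact-gate-set}: if $\Psimple$ is compact and $C_{\mathcal{G}}$ is a circuit-superoperator-complexity measure associated with a compact gate set $\mathcal{G}$, then $\Mr[r]`*(\Psimple, C_{\mathcal{G}})$ is compact. Because $\IdentProc{}\in\mathcal{G}$ (\cref{defn:circuit-complexity-measure}), every operation of complexity $\leq r$ can be padded with identity gates into an exact product of $m \coloneqq \lfloor r\rfloor$ gates from $\mathcal{G}$. Hence, by \cref{defn:Mr-general-from-Psimple-and-superop-cplx-measure},
\begin{align}
  \Mr[r]`*(\Psimple, C_{\mathcal{G}}) = `*{ \mathcal{E}_1^\dagger \cdots \mathcal{E}_m^\dagger(P) \,:\; \mathcal{E}_i \in \mathcal{G},\; P \in \Psimple } = F`*( \mathcal{G}^{\times m} \times \Psimple ) \ ,
\end{align}
where $F`*(\mathcal{E}_1, \ldots, \mathcal{E}_m, P) \coloneqq \mathcal{E}_1^\dagger \cdots \mathcal{E}_m^\dagger(P)$. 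The map $F$ is continuous, being polynomial in the matrix entries of its arguments, and its domain $\mathcal{G}^{\times m} \times \Psimple$ is a finite product of compact sets, hence compact. Since continuous images of compact sets are compact, $\Mr[r]`*(\Psimple, C_{\mathcal{G}})$ is compact.

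The main obstacle is conceptually minor and amounts to bookkeeping: one must ensure that the limit $Q_*$ genuinely lands in $\Mr[r]$, which is exactly what the closedness supplied by compactness of $\Mr[r]$ guarantees, and one must justify the identity-padding step so that $\Mr[r]`*(\Psimple, C_{\mathcal{G}})$ appears as a \emph{single} continuous image of a compact domain (rather than an infinite union). Notably, no stabilization of the sort used in \cref{defn:approx-state-complexity} is needed here, because the constraint $C(\mathcal{E}) \leq r$ is imposed exactly on the POVM effects, not through an approximation tolerance.
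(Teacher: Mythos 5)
Your proposal is correct and follows essentially the same route as the paper: for (i)$\Rightarrow$(iii) your subsequence-extraction argument is the sequential-compactness version of the paper's argument, which instead takes the closed preimage $f^{-1}(`{\eta})$ of the continuous constraint map inside the compact set $[\eta,1]\times\Mr[r]$ and shows the infimum of $\tr`([qQ]\sigma)$ is attained there; for the second part, both proofs exhibit $\Mr[r]`*(\Psimple, C_{\mathcal{G}})$ as the continuous image of the compact set $\mathcal{G}^{\times m}\times\Psimple$, with your explicit identity-padding step (valid since $\IdentProc{}\in\mathcal{G}$) merely spelling out what the paper leaves implicit in writing $h$ with exactly $r$ gate arguments.
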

\noindent
In particular, the sets $\Mr[r]$ defined in~\eqref{eq:setting-defn-Mr} are compact.

\begin{proof}[*thm:DHypr-interpretation-hypo-test]
    To prove the first part of the proposition, consider any $\zeta > 0$.
    Suppose there exist a $Q\in\Mr[r]$ and a $q\in [\eta,1]$ such that $\tr`([qQ]\rho) = \eta$ and $\tr`([qQ]\sigma) \leq \delta + \zeta$.
    Then $\tr`(Q\rho) \geq \tr`([qQ]\rho) = \eta$, and
    \begin{align}
        \ee^{-\DHypr[r][\eta]{\rho}{\sigma}}
        \leq \frac{\tr`*(Q\sigma)}{\tr`*(Q\rho)}
        = \frac{\tr`*([qQ]\sigma)}{\tr`*([qQ]\rho)}
        \leq \frac{\delta + \zeta}{\eta}\ .
    \end{align}
    Therefore, $\ee^{-\DHypr[r][\eta]{\rho}{\sigma}} \leq \delta/\eta$, since $\zeta$ is arbitrary.
    Conversely, suppose that $\exp \bm{(} -\DHypr[r][\eta]{\rho}{\sigma} \bm{)} \leq \delta /\eta$.
    Consider again any $\zeta>0$.
    There exists a $Q\in\Mr[r]$ such that $\tr`(Q\rho)\geq\eta$ and $\tr`(Q\sigma) / \tr`(Q\rho) \leq (\delta + \zeta)/\eta$.
    Let $q \coloneqq \eta/\tr`(Q\rho)$.
    $q \in [\eta,1]$, since $\eta\leq \tr`(Q\rho)\leq 1$.
    Then $\tr`([qQ]\rho) = \eta$, and
    \begin{align}
        \tr`([qQ]\sigma)
        = \tr`([qQ]\rho) \cdot \frac{\tr`*(Q\sigma)}{\tr`*(Q\rho)}
        \leq \tr`([qQ]\rho) \cdot \frac{\delta+\zeta}{\eta}
        = \delta + \zeta \ .
    \end{align}

    We now prove the second part of the proposition.
    Suppose there exist a $Q\in\Mr[r]$ and a $q \in (0,1]$ such that $\tr`([qQ]\rho) = \eta$ and $\tr`([qQ]\sigma) < \delta$.  
    Let $\delta' \coloneqq \tr`([qQ]\sigma) < \delta$.
    $Q$, $q$, and $\delta'$ satisfy the conditions equivalent to~\eqref{eq:appx-DHypr-interpretation-hypo-test}, so $\exp \bm{(} -\DHypr[r][\eta]{\rho}{\sigma} \bm{)} \leq \delta' / \eta < \delta / \eta$.
    Conversely, suppose that $\exp \bm{(} -\DHypr[r][\eta]{\rho}{\sigma} \bm{)} < \delta / \eta$.
    Let $\delta'' \coloneqq \eta \exp \bm{(} -\DHypr[r][\eta]{\rho}{\sigma} \bm{)} < \delta$.
    Then $\DHypr[r][\eta]{\rho}{\sigma} \leq -\log(\delta''/\eta)$, as in~\eqref{eq:appx-DHypr-interpretation-hypo-test}, so there exist $Q\in\Mr[r]$ and $q \in (0,1]$ such that $\tr`([qQ]\rho)=\eta$ and $\tr`([qQ]\sigma) \leq \delta''+\zeta$.
    By considering $\zeta$ sufficiently small such that $\delta''+\zeta < \delta$, one obtains $\tr`([qQ]\sigma) < \delta$.
\end{proof}

\begin{proof}[*thm:DHypr-interpretation-hypo-test-Mr-closed]
  The implication $\text{(iii)}\Rightarrow\text{(i)}$ is straightforward; we need to show that $\text{(i)}\Rightarrow\text{(iii)}$.
  Assume (i).
  Consider the continuous functions $f, g : [\eta,1]\times \Mr[r] \to [0,1]$ defined as $f(q, Q) \coloneqq \tr`([qQ]\rho)$ and $g(q, Q) \coloneqq \tr`([qQ]\sigma)$.
  Let $A$ denote the preimage of $\eta$ under $f$: $A = f^{-1}(`{\eta})$.
  $A$ is nonempty, since, by assumption, (i) holds.
  $A$ is closed, since $f$ is continuous and $`{\eta}$ is closed.
  Therefore, $A$ is compact, being a closed subspace of the compact set $[\eta,1] \times \Mr[r]$.
  Let
  \begin{align}
      \nu \coloneqq \inf `*{ g(A) } = \inf `*{ \tr`*([qQ]\sigma) \,:\; (q,Q)\in A } \ .
  \end{align}
  $g(A)$ is compact, since it is the image of a compact set under a continuous function.
  Therefore, the infimum is attained, and there exists a $(q, Q) \in A$ such that $g(q, Q) = \nu$.
  Since (i) holds, $\nu \leq \delta + \zeta$ for all $\zeta>0$; hence, $\nu \leq \delta$.
  Thus, there exists $(q,Q)\in A$ such that $g(q, Q) \leq \delta$, so (iii) holds.

  Finally, suppose that $\Psimple$ and $\mathcal{G}$ are compact.
  Consider the continuous function $h(P, \mathcal{E}_1, \ldots, \mathcal{E}_r) \coloneqq `( \mathcal{E}_1^\dagger \cdots \mathcal{E}_r^\dagger )`(P)$.
  $\Mr[r](\Psimple,C_\mathcal{G})$ is compact, since it is the image of a compact set under a continuous function.
\end{proof}

\subsection{Complexity (relative) entropy and state-complexity measures}
\label{appx-topic:complexity-entropy-relationship-to-state-complexity}

The complexity entropy relates to pure-state complexity as follows.
\begin{proposition}[Relation between complexity entropy and pure-state complexity]
  \label{thm:complexity-entropy-pure-state-complexity}
  Let $\Psimple$ denote any set of simple POVM effects (\cref{defn:Psimple-general}); and $C$, any adjoint-invariant unitary-complexity measure (\cref{defn:unitary-complexity}).
  Let $r \geq 0$, and let $\Mr[r] = \Mr[r](\Psimple,C)$ denote the POVM-effect-complexity set defined in~\eqref{eq:Mr-general-from-Psimple-and-superop-cplx-measure}.
  Let $\delta \geq 0$, and let $\ket{\psi_0}$ denote any pure reference state such that $\psi_0 \in \Psimple$.
  Let $C^{\psi_0,\delta}$ denote the $\delta$-approximate-pure-state-complexity measure with respect to $C$ and $\psi_0$ (\cref{defn:approx-state-complexity}).
  Let $\ket\psi$ denote any pure state.
  The following statements hold:
  \begin{enumerate}[label={\arabic*.}]
    \item Suppose that $\delta < 1$ and $r > C^{\psi_0,\,\delta}(\psi)$.
    Then
    \begin{align}
        \HHypr[r][1-\delta^2]{\psi} \leq \log`*( \frac1{1-\delta^2} ) \ .
    \end{align}

    \item Let $r_0 \geq 0$ and $\epsilon \in [0,1)$.
    Suppose that $\HHypr[r][1-\epsilon^2]{\rho} \leq \log`*( 1/`*[ 1-\epsilon^2 ] )$.
    Then
    \begin{align}
      C^{\psi_0,\,\delta}(\psi) &\leq r + r_0 \ ,
    \end{align}
    where
    \begin{align}
      \delta &= \epsilon + \sqrt{ \alpha \epsilon^2 + (1 - \alpha) }\ ;
      &
      \alpha\ &\coloneqq \ \lim_{\zeta\to0^+}\inf_{\substack{P \in \Psimple\\
      \frac{\tr`(P^2)}{[\tr`(P)]^2} \geq 1 - 8\epsilon^2 - \zeta }} \;
      `*{
      \sup_{ \substack{ U_0 :\\ \, C(U_0) \leq r_0} }\;
      `*{
      \dmatrixel{\psi_0}{ U_0 P U_0^\dagger } } } \ .
      \label{eq:complexity-entropy-pure-state-complexity--error-term}
    \end{align}
  \end{enumerate}
\end{proposition}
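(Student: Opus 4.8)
The plan is to set up a dictionary between candidate POVM effects for the complexity entropy and low-complexity preparations of $\ket\psi$ from $\ket{\psi_0}$. Because $\Mr[r]=\Mr[r](\Psimple,C)$, every candidate effect has the form $Q=U P U^\dagger$ with $P\in\Psimple$ and $C(U)\le r$ (using adjoint invariance of $C$), so that $\tr(Q)=\tr(P)$ and $\tr(Q\psi)=\dmatrixel{\psi'}{P}$ with $\ket{\psi'}:=U^\dagger\ket\psi$. I will translate between the two pictures using the pure-state identity $\tfrac12\onenorm{\proj a-\proj b}=\sqrt{1-\abs{\braket ab}^2}$ and the mixed-state bound $\tfrac12\onenorm{\proj a-\sigma}\le\sqrt{1-\dmatrixel a{\sigma}}$.

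\textbf{Statement 1.} Since $r>C^{\psi_0,\delta}(\psi)$, the alternative expression of \cref{thm:stab-approx-state-complexity-measure} (with the distance infimum attained, guaranteed by compactness of the natural gate sets via \cref{thm:stab-approx-state-complexity-measure-compact-gate-set}) yields a unitary $U$ with $C(U)\le r$ and $\tfrac12\onenorm{U\psi_0U^\dagger-\psi}\le\delta$. Setting $Q:=U\psi_0U^\dagger$, we have $Q\in\Mr[r]$ because $\psi_0\in\Psimple$, while $\tr(Q)=1$ and $\tr(Q\psi)=1-\bigl(\tfrac12\onenorm{U\psi_0U^\dagger-\psi}\bigr)^2\ge 1-\delta^2$. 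Thus $Q$ is admissible in the optimization defining $\HHypr[r][1-\delta^2]{\psi}$, giving $\HHypr[r][1-\delta^2]{\psi}\le\log(\tr(Q)/\tr(Q\psi))\le\log(1/(1-\delta^2))$.

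\textbf{Statement 2.} This is the main work. From $\HHypr[r][1-\epsilon^2]{\psi}\le\log(1/(1-\epsilon^2))$ I extract, up to infimum slack $\zeta$, an effect $Q=UPU^\dagger\in\Mr[r]$ with $C(U)\le r$, $\dmatrixel{\psi'}{P}\ge 1-\epsilon^2$, and $\tr(P)/\dmatrixel{\psi'}{P}\le 1/(1-\epsilon^2)$, where $\ket{\psi'}=U^\dagger\ket\psi$. Writing the density operator $\bar P:=P/\tr(P)$, the ratio bound gives $\dmatrixel{\psi'}{\bar P}\ge 1-\epsilon^2$, hence $\tfrac12\onenorm{\proj{\psi'}-\bar P}\le\epsilon$. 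Next, $\tr(P^2)\ge\dmatrixel{\psi'}{P^2}\ge(\dmatrixel{\psi'}{P})^2\ge(1-\epsilon^2)^2$ while $(\tr P)^2\le(1-\epsilon^2)^{-2}$, so $\tr(P^2)/(\tr P)^2\ge(1-\epsilon^2)^4\ge 1-8\epsilon^2$, which qualifies $P$ for the infimum defining $\alpha$. By definition of $\alpha$ there is a unitary $U_0$ with $C(U_0)\le r_0$ and $\dmatrixel{\psi_0}{U_0PU_0^\dagger}\ge\alpha$; setting $\ket{\phi_0}:=U_0^\dagger\ket{\psi_0}$ this reads $\dmatrixel{\phi_0}{P}\ge\alpha$, so $\dmatrixel{\phi_0}{\bar P}\ge\alpha(1-\epsilon^2)$ and $\tfrac12\onenorm{\proj{\phi_0}-\bar P}\le\sqrt{1-\alpha(1-\epsilon^2)}=\sqrt{\alpha\epsilon^2+1-\alpha}$. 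Combining through the intermediate $\bar P$ by the trace-distance triangle inequality yields $\tfrac12\onenorm{\proj{\psi'}-\proj{\phi_0}}\le\epsilon+\sqrt{\alpha\epsilon^2+1-\alpha}=\delta$. Finally $V:=UU_0^\dagger$ satisfies $C(V)\le C(U)+C(U_0)\le r+r_0$, and by unitary invariance of the trace norm $\tfrac12\onenorm{V\psi_0V^\dagger-\psi}=\tfrac12\onenorm{\proj{\phi_0}-\proj{\psi'}}\le\delta$, so $C^{\psi_0,\delta}(\psi)\le r+r_0$.

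The hard part is not any individual inequality but the bookkeeping of the three nested slack parameters: the infimum in the complexity-entropy optimization, the limit $\lim_{\zeta\to0^+}$ in the definition of $\alpha$, and the stabilization $\lim_{\zeta\to0^+}$ in $C^{\psi_0,\delta}$. These must be threaded so that a single family of near-optimal effects $P$ simultaneously nearly saturates the complexity-entropy bound, falls inside the qualifying set for $\alpha$ (this is exactly what the generous constant $8$, rather than the tight $4$, provides as slack to absorb the $\zeta$-corrections), and produces preparations of accuracy tending to $\delta$; the stated $\delta$ and $\alpha$ are the clean $\zeta\to0$ limits. A secondary technical point is attainment of the distance infimum in Statement 1, which I handle by compactness of the gate set.
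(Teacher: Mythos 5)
Your proof is correct and follows essentially the same route as the paper's: for Statement 1, the candidate $Q = U\psi_0 U^\dagger$ with the pure-state trace-distance identity; for Statement 2, extracting a near-optimal $Q = UPU^\dagger$, passing through the normalized effect $\bar P = P/\tr(P)$ via Fuchs--van de Graaf, qualifying $P$ for the $\alpha$-infimum through the purity bound, and closing with the triangle inequality, $C(UU_0^\dagger)\le C(U)+C(U_0)$ by adjoint invariance, and the right-continuity/stabilization of $C^{\psi_0,\delta}$ --- precisely the paper's $\zeta_i$-threading, whose structure you correctly identify as the real content. The only immaterial deviations are that your purity step uses $\tr(P^2)\ge\dmatrixel{\psi'}{P^2}\ge\bigl(\dmatrixel{\psi'}{P}\bigr)^2$ where the paper uses $\tr(P^2)\ge\opnorm{P}^2$ with $\opnorm{P}\ge\dmatrixel{\psi'}{P}$, and that in Statement 1 you invoke gate-set compactness (an assumption the proposition does not make) to attain the distance infimum, whereas the paper instead spends the strict slack $r - C^{\psi_0,\delta}(\psi) > 0$ in the complexity budget, so no attainment is needed.
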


The error parameter $\alpha$ is necessary for the second statement of the proposition.
$\alpha$ assumes simple values for suitable choices of $\Psimple$, $C$, $\ket{\psi_0}$, and $r_0$.
For instance, consider an $n$-qubit system, and let $\ket{\psi_0} = \ket{0^n}$.
Suppose that $\Psimple$ equals the set of single-qubit projectors in~\eqref{eq:setting-defn-Mrzero}.
For every $P\in\Psimple$, the supremum in~\eqref{eq:complexity-entropy-pure-state-complexity--error-term} achieves its maximum value with $U_0 = \Ident$: $\dmatrixel{0^n}{U_0 P U_0^\dagger} = \braket{0^n}{0^n} = 1$.
[For any $C$ and any $r_0$, $C(\Ident) = 0 \leq r_0$.]
Having unit purity, $P = \proj{0^n}$ always satisfies the constraint on the infimum in~\eqref{eq:complexity-entropy-pure-state-complexity--error-term}, so $\alpha = 1$, and $\delta = 2\epsilon$.
Now, suppose that $\Psimple$ equals the set of tensor products of all single-qubit projectors.
Suppose further that $C$ assigns the value 1 (0) to every single-qubit unitary and that $r_0 = n$ ($r_0 = 0$).
Then, for all $P\in\Psimple$, there exists a tensor product $U_0$ of single-qubit unitaries such that $\dmatrixel{\psi_0}{U_0 P U_0^\dagger} = 1$.
$U_0$ satisfies $C(U_0) \leq r_0$.
Having unit purity, every rank-1 projector in $\Psimple$ always satisfies the constraint on the infimum in~\eqref{eq:complexity-entropy-pure-state-complexity--error-term}; so, again, $\alpha = 1$, and $\delta = 2\epsilon$.

We now relate the complexity relative entropy to the strong complexity of Ref.~\cite{Brandao2021PRXQ_models}.
Let
\begin{align}
  \beta^r`(\rho,\sigma) \coloneqq \max_{M\in \Mr[r]} `*{ \abs`*{ \tr`*( M `*[\rho-\sigma] ) } } \ .
\end{align}
Let $\delta \geq 0$, and let $\ket\psi$ denote a pure state in a Hilbert space of dimensionality $d$.
The \emph{strong complexity} of $\ket\psi$ is defined as
\begin{align}
  C_{\textup{strong}}^{\delta}(\ket\psi)
  \coloneqq \inf `*{ r\geq 0\,:\; \beta^r(\psi,\pi) \geq 1 - \frac1{d} - \delta } \ .
\end{align}
The $\pi \coloneqq \Ident/d$ denotes the maximally mixed state.
Our definition of $\beta^r(\psi,\pi)$ differs slightly from that in Ref.~\cite{Brandao2021PRXQ_models}, wherein $\Mr[r]$ can be defined using an auxiliary system.

The following proposition presents a bound relating the complexity entropy and the strong complexity.
\begin{proposition}[Complexity entropy and strong complexity]
  \label{thm:cplxrelentr-bound-strongcomplexity}
  Let $\rho$ and $\sigma$ denote any quantum states.
  Let $r \geq 0$ and $\eta \in (0,1]$.
  It holds that
  \begin{align}
    \DHypr[r][\eta]{\rho}{\sigma}
    \leq -\log`*( 1- \frac{\beta^r`(\rho,\sigma)}{\eta} )
    = \frac{\beta^r`(\rho,\sigma)}{\eta}
    + O`*[ `*( \frac{\beta^r`(\rho,\sigma)}{\eta} )^2 ] \ .
    \label{eq:cplx-rel-entropy-bound-beta-r}
  \end{align}
  Furthermore, let $\delta \geq 0$, and let $\ket\psi$ denote any pure state in a Hilbert space of dimensionality $d$.
  Suppose that $r < C_{\textup{strong}}^\delta(\ket\psi)$ and $\eta > 1 - d^{-1} - \delta$.
  Then
  \begin{align}
    \HHypr[r][\eta]{\psi}
    &> \log(d) - \log`*(\frac1{1-c})\ ;
    &
    c &\coloneqq \frac{1 - d^{-1} - \delta}{\eta} < 1 \ .  
    \label{eq:cplx-entropy-bound-strong-cplx}
  \end{align}
\end{proposition}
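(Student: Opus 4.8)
The plan is to establish the two displayed inequalities in order, deriving the second from the first. For the first inequality I would bound the objective value of every candidate in the optimization \eqref{eq:defn-DHypr} defining $\DHypr[r][\eta]{\rho}{\sigma}$ from below. Fix any $Q \in \Mr[r]$ with $\tr(Q\rho) \geq \eta$. Since $Q \in \Mr[r]$, the definition of $\beta^r$ gives $\tr(Q\rho) - \tr(Q\sigma) = \tr(Q[\rho-\sigma]) \leq \abs{\tr(Q[\rho-\sigma])} \leq \beta^r(\rho,\sigma)$, whence $\tr(Q\sigma) \geq \tr(Q\rho) - \beta^r(\rho,\sigma)$. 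Dividing by $\tr(Q\rho) \geq \eta > 0$ and using $\beta^r(\rho,\sigma) \geq 0$ yields
\begin{align}
  \frac{\tr(Q\sigma)}{\tr(Q\rho)} \geq 1 - \frac{\beta^r(\rho,\sigma)}{\tr(Q\rho)} \geq 1 - \frac{\beta^r(\rho,\sigma)}{\eta} \ .
\end{align}
As this holds for every candidate $Q$, the infimum in \eqref{eq:defn-DHypr} is at least $1 - \beta^r(\rho,\sigma)/\eta$; applying the decreasing function $-\log$ gives $\DHypr[r][\eta]{\rho}{\sigma} \leq -\log(1 - \beta^r(\rho,\sigma)/\eta)$ (the bound being vacuous when $\beta^r(\rho,\sigma)/\eta \geq 1$). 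The claimed equality is then just the Taylor expansion $-\log(1-x) = x + O(x^2)$ evaluated at $x = \beta^r(\rho,\sigma)/\eta$.

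For the second inequality I would specialize the first to $\rho = \psi$ and $\sigma = \pi = \Ident/d$. The key additional input is that $\beta^r(\psi,\pi)$ is non-decreasing in $r$, which follows from $\Mr[r] \subseteq \Mr[r']$ for $r \leq r'$ (Property~\ref{item:POVM-effect-complexity--monotonous} of \cref{defn:Mr-sets-general}). Hence if $r < C_{\textup{strong}}^\delta(\ket\psi) = \inf\{\, r' : \beta^{r'}(\psi,\pi) \geq 1 - d^{-1} - \delta \,\}$, then $r$ lies strictly below this infimum, so $\beta^r(\psi,\pi) < 1 - d^{-1} - \delta$, i.e.\ $\beta^r(\psi,\pi)/\eta < c$. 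Inserting this into the first inequality and using that $-\log(1-x)$ is strictly increasing for $x < 1$ gives $\DHypr[r][\eta]{\psi}{\pi} < -\log(1-c) = \log\frac{1}{1-c}$, where $c < 1$ by the hypothesis $\eta > 1 - d^{-1} - \delta$. Finally I would convert to the complexity entropy via the identity $\HHypr[r][\eta]{\psi} = \log(d) - \DHypr[r][\eta]{\psi}{\pi}$, which follows from $\Ident = d\pi$ together with the scaling property of \cref{thm:DHypr-scaling-2ndarg} and the definition of $\HHypr[r][\eta]{\cdot}$. This produces $\HHypr[r][\eta]{\psi} > \log(d) - \log\frac{1}{1-c}$, as required.

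Every computation here is elementary, so I do not anticipate a genuinely hard step; the points that require care are the inequality bookkeeping rather than any deep estimate. The first is the strict-versus-nonstrict argument around the infimum defining $C_{\textup{strong}}^\delta$: I must invoke the monotonicity of $\beta^r$ to promote $r < C_{\textup{strong}}^\delta(\ket\psi)$ into the \emph{strict} bound $\beta^r(\psi,\pi) < 1 - d^{-1} - \delta$, which is exactly what makes the final inequality strict. The second is ensuring all logarithms are well defined, which is guaranteed precisely by $\beta^r(\psi,\pi)/\eta < c < 1$ under the stated hypotheses.
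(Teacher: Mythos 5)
Your proof is correct and takes essentially the same route as the paper's: lower-bound $\tr(Q\sigma)/\tr(Q\rho) \geq 1 - \beta^r(\rho,\sigma)/\eta$ for every candidate $Q \in \Mr[r]$, then specialize to $\sigma = \pi$, deduce $\beta^r(\psi,\pi) < 1 - d^{-1} - \delta$ from $r < C_{\textup{strong}}^\delta(\ket\psi)$, and convert via $\HHypr[r][\eta]{\psi} = \log(d) - \DHypr[r][\eta]{\psi}{\pi}$ using the scaling property. The only cosmetic differences are that the paper argues through a near-optimal $Q$ with an auxiliary $\zeta$ rather than bounding all candidates uniformly, and that your appeal to monotonicity of $\beta^r$ in $r$ is unnecessary: if $\beta^r(\psi,\pi) \geq 1 - d^{-1} - \delta$ then $r$ itself lies in the set whose infimum defines $C_{\textup{strong}}^\delta(\ket\psi)$, contradicting $r < C_{\textup{strong}}^\delta(\ket\psi)$ directly.
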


We now prove \cref{thm:complexity-entropy-pure-state-complexity,thm:cplxrelentr-bound-strongcomplexity}.

\begin{proof}[*thm:complexity-entropy-pure-state-complexity]
  We prove the first statement.
  Let $\zeta \coloneqq r - C^{\psi_0,\,\delta}(\psi) > 0$.
  There exists a unitary $U$ such that $\frac12\onenorm{U\psi_0 U^\dagger - \psi} \leq \delta$ and $C(U) \leq r = C^{\psi_0,\,\delta}(\psi) + \zeta$.
  Let $Q \coloneqq U \psi_0 U^\dagger$.
  $Q \in \Mr[r]$, since $\psi_0\in \Psimple$ and $C(U^\dagger) = C(U) \leq r$, by the adjoint invariance of $C$.
  Moreover,
  \begin{align}
    \tr`*( Q\psi )
    = \tr`*( U \psi_0 U^\dagger \psi )
    = \abs{ \matrixel{\psi}{U}{\psi_0} }^2 
    = 1 - `*( \frac12\onenorm{U\psi_0 U^\dagger - \psi} )^2
    \geq 1 - \delta^2
    \ .
  \end{align}
  The last equality expresses a general relation for the trace norm~\cite{BookNielsenChuang2000}:
  for any two pure states $\ket\phi$ and $\ket\varphi$, $\frac12\onenorm{\phi - \varphi} = \sqrt{1 - \abs{ \braket{\varphi}{\phi} }^2}$.
  Thus, $Q$ is a candidate for the $\HHypr[r][1-\delta^2]{\psi}$ optimization, so
  \begin{align}
    \HHypr[r][1-\delta^2]{\psi}
    \leq \log`*(\frac{\tr`*(Q)}{\tr`*(Q\psi)}) \leq \log`*(\frac1{1-\delta^2}) \ .
  \end{align}

  We now prove the second statement.
  Suppose that $\HHypr[r][1-\epsilon^2]{\psi} \leq \log`*( 1/ `*[ 1 - \epsilon^2 ] )$.
  Let $\zeta>0$.
  There exists a $Q\in \Mr[r]$ such that $\tr`(Q\psi) \geq 1-\epsilon^2$ and 
  \begin{align}
    \frac{\tr`(Q)}{\tr`(Q\psi)}
    \leq \ee^{\HHypr[r][1-\epsilon^2]{\psi}} + \zeta \leq \frac1{1-\epsilon^2} + \zeta \ .
    \label{eq:egkjbeio0389uiwefhopofjl}
  \end{align}
  $Q = U^\dagger P U$ for some effect $P \in \Psimple$ and for some unitary $U$ satisfying $C(U) \leq r$.
  Let $\ket{\psi'} \coloneqq U \ket\psi$.
  It holds that
  \begin{align}
    F^2`*(\psi', \frac{P}{\tr(P)}) = \dmatrixel{\psi'}{\frac{P}{\tr(P)}}
    = \frac{\tr(P\psi')}{\tr(P)}  = `*[ \frac{\tr`(Q)}{\tr`(Q\psi)} ]^{-1}
    \geq 1 - \epsilon^2 - \zeta_1 \ .
    \label{eq:fidelity1}
  \end{align}
  Here and in the following, each $\zeta_i \equiv \zeta_i(\zeta) > 0$ is an error parameter such that $\lim_{\zeta\to0^+}\zeta_i = 0$.
  By a Fuchs-van de Graaf inequality~\cite{Fuchs1999IEEETIT_distinguishability},
  \begin{align}
    \frac12\onenorm[\bigg]{ \psi' - \frac{P}{\tr(P)} }
    \leq \sqrt{1 - F^2`*( \psi', \frac{P}{\tr(P)} )}
    \leq \sqrt{ \epsilon^2 + \zeta_1 } \ .
    \label{eq:intermediate-step-trace-norm-1}
  \end{align}

  Let
  \begin{align}
    \alpha_P \coloneqq
    \sup_{ \substack{ U_0 :\\ \,
    C(U_0) \leq r_0} }
    `*{ \dmatrixel{\psi_0}{ U_0 P U_0^\dagger } } \ .
    \label{eq:kjefbhoeiqoajsdijkdfosa}
  \end{align}
  There exists a unitary $U_0$ such that $C(U_0) \leq r_0$ and
  \begin{align}
    \dmatrixel{\psi_0}{ U_0 P U_0^\dagger } \geq \alpha_P  - \zeta \ .
  \end{align}
  Consequently,
  \begin{align}
    F^2`*( U_0^\dagger \psi_0 U_0, \frac{P}{\tr(P)} )
    = \dmatrixel{\psi_0}{ U_0 \,\frac{P}{\tr(P)}\, U_0^\dagger }
    \geq \frac{\alpha_P - \zeta}{\tr`(P)}
    \geq `*(\alpha_P - \zeta) `*(1-\epsilon^2 - \zeta_1)
    = \alpha_P `*(1-\epsilon^2) - \zeta_2 \ .
    \label{eq:fidelity2}
  \end{align}
  The second inequality follows from~\eqref{eq:fidelity1}
  \begin{align}
      \frac1{\tr`(P)} 
      \geq \frac{\tr`(P\psi')}{\tr`(P)}
      \geq 1-\epsilon^2 - \zeta_1 \ .
  \end{align}
  The largest eigenvalue of $P$, $\opnorm{P}$, satisfies $\opnorm{P} \geq \dmatrixel{\psi'}{P} =\dmatrixel{\psi}{Q} \geq 1-\epsilon^2$, so $\tr`*(P^2) \geq \opnorm{P^2} = \opnorm{P}^2 \geq `*(1-\epsilon^2)^2$, because $P$ is positive-semidefinite.
  Hence,
  \begin{align}
    \frac{ \tr`\big(P^2) }{ `\big[\tr(P)]^2 }
    \geq `*(1-\epsilon^2)^2 `*(1-\epsilon^2 - \zeta_1)^2
    = `*(1-\epsilon^2)^4 - \zeta_3
    \geq 1 - 4 \epsilon^2 - 4 `*(\epsilon^2)^3 - \zeta_3
    \geq 1 - 8 \epsilon^2 - \zeta_3 \ ,
  \end{align}
  so
  \begin{align}
    \alpha_P \ \geq\  \alpha_{\zeta_3}
    \coloneqq \inf_{ \substack{P \in \Psimple\\
    \frac{\tr(P^2)}{[\tr(P)]^2} \geq 1 - 8 \epsilon^2 - \zeta_3
    } } `*{ \alpha_P } \ .
  \end{align}
  Thus,
  \begin{align}
    F^2`*( U_0^\dagger \psi_0 U_0 , \frac{P}{\tr(P)} )
    \geq \alpha_{\zeta_3} `*(1-\epsilon^2) - \zeta_2 \ .
  \end{align}
  By a Fuchs-van de Graaf inequality,
  \begin{align}
    \frac12\onenorm[\bigg]{ U_0^\dagger \psi_0 U_0 - \frac{P}{\tr(P)} }
    \leq \sqrt{1 - F^2`*( U_0^\dagger \psi_0 U_0, \frac{P}{\tr(P)} )}
    \leq \sqrt{1 - \alpha_{\zeta_3} `*(1 - \epsilon^2) - \zeta_2 } \ .
    \label{eq:intermediate-step-trace-norm-2}
  \end{align}
  
  Since the trace norm is unitarily invariant and obeys the triangle inequality,~\eqref{eq:intermediate-step-trace-norm-1} and~\eqref{eq:intermediate-step-trace-norm-2} imply that
  \begin{align}
    \frac12\onenorm[\big]{ \psi - U^\dagger U_0^\dagger \psi_0 U_0 U^\dagger }
    = \frac12\onenorm[\big]{ U \psi U^\dagger - U_0^\dagger \psi_0 U_0 }
    \leq \sqrt{\epsilon^2 + \zeta_1} + \sqrt{ 1 - \alpha_{\zeta_3} `*(1 - \epsilon^2) - \zeta_2 }
    \eqqcolon \delta_\zeta \ .
  \end{align}
  By the adjoint invariance of $C$, $C`\big(U^\dagger U_0^\dagger) \leq C`\big(U^\dagger) + C`\big(U_0^\dagger) = C`*(U) + C`*(U_0) \leq r +r_0$, so
  \begin{align}
    C^{\psi_0,\, \delta_\zeta}(\psi) \leq r + r_0 \ .
  \end{align}
  Since $\lim_{\zeta \to 0^+} \delta_\zeta = \delta$ and since $C$ is right-continuous [as per~\eqref{eq:approx-state-complexity-right-continuity-in-delta}],
  \begin{align}
      C^{\psi_0,\,\delta}(\psi)
      = \lim_{\zeta \to 0^+} C^{\psi_0,\,\delta_\zeta}(\psi)
      \leq r + r_0 \ .
  \end{align}
\end{proof}

\begin{proof}[*thm:cplxrelentr-bound-strongcomplexity]
  Consider any $\zeta>0$.
  There exists a $Q \in \Mr[r]$ such that $\tr`(Q\rho)\geq\eta$ and $\tr`(Q\sigma)/\tr`(Q\rho) \leq \exp \bm{(} -\DHypr[r][\eta]{\rho}{\sigma} \bm{)} + \zeta$.
  It holds that
  \begin{align}
    \frac{\tr`*(Q\sigma)}{\tr`*(Q\rho)}
    = 1 - \frac{\tr`*( Q [\rho-\sigma ] )}{\tr`*(Q\rho)}
    \geq 1 - \frac{\beta^r`(\rho,\sigma)}{\tr`*(Q\rho)}
    \geq 1 - \frac{\beta^r`(\rho,\sigma)}{\eta} \ ,
  \end{align}
  so
  \begin{align}
    \ee^{-\DHypr[r][\eta]{\rho}{\sigma}} \geq 1 - \frac{\beta^r`(\rho,\sigma)}{\eta} + \zeta \ ,
  \end{align}
  which implies~\eqref{eq:cplx-rel-entropy-bound-beta-r}, since $\zeta$ is arbitrary.

  We now prove~\eqref{eq:cplx-entropy-bound-strong-cplx}.
  Suppose that $r < C_{\textup{strong}}^{\delta}(\ket\psi)$ and $\eta > 1 - d^{-1} - \delta$.
  $r < C_{\textup{strong}}^{\delta}(\ket\psi)$ implies that $\beta^r(\psi,\pi) < 1 - d^{-1} - \delta < \eta$, so
  \begin{align}
    \HHypr[r][\eta]{\psi}
    = \log`(d) -\DHypr[r][\eta]{\psi}{\pi}
    &\geq \log`(d) + \log`*( 1 - \frac{\beta^r(\psi,\pi)}{\eta} )
    \nonumber\\
    &> \log`(d) + \log`*( 1 - \frac{1 - d^{-1} - \delta}{\eta} )
    = \log`(d) - \log`* (\frac1{1-c} ) \ .
  \end{align}
\end{proof}

\subsection{Reduced complexity (relative) entropy and related quantities}
\label{appx-topic:reduced-complexity-entropy}

\subsubsection{Definition of the reduced complexity (relative) entropy}

Some situations call for a variant of the complexity (relative) entropy---a variant that lacks the denominator in~\eqref{eq:defn-HHypr} [in~\eqref{eq:defn-DHypr}].
We call these variants the \emph{reduced complexity relative entropy} and the \emph{reduced complexity entropy}.
The variants lack a desirable property possessed by the original versions:
the reduced complexity (relative) entropy may diverge as $\eta\to0$, whereas the complexity (relative) entropy will not whenever $\Gamma$ is positive-definite (see \cref{thm:DHypr-trivial-bounds}).
Nevertheless, the variants bear simpler definitions than the originals and more readily facilitate certain technical proofs.
In any case, the reduced complexity (relative) entropy and the complexity (relative) entropy differ by at most $\log(1/\eta)$ (\cref{thm:relation-Dhypr-DHypr}).
Therefore, the quantities are interchangeable if the error tolerance is insignificant: $\eta \approx 1$.

\begin{definition}[Reduced complexity relative entropy]
  \label{defn:Dhypr}
  Let $\rho$ denote any subnormalized state; and $\Gamma$, any positive-semidefinite operator.
  Let $r \geq 0$ and $\eta \in (0,\tr`(\rho)]$.
  Here, $\rho$, $\Gamma$, and every $Q \in \Mr[r]$ act on the same Hilbert space.
  The \emph{reduced complexity-restricted hypothesis-testing relative entropy}, or simply the \emph{reduced complexity relative entropy}, is
  \begin{align}
    \Dhypr[r][\eta]{\rho}{\Gamma}
    \coloneqq -\log\; `*(
    \inf_{\substack{Q \in \Mr[r] \\  \tr`(Q\rho) \geq \eta}}
    `*{ \tr`*(Q\Gamma) }
    )\ .
    \label{eq:defn-Dhypr}
  \end{align}
\end{definition}

\begin{definition}[Reduced complexity entropy]
  \label{defn:Hhypr}
  Let $\rho$ denote any subnormalized state.
  Let $r \geq 0$ and $\eta \in (0, \tr(\rho) ]$.
  The \emph{reduced complexity-restricted hypothesis-testing entropy}, or simply the \emph{reduced complexity entropy}, is
  \begin{align}
    \Hhypr[r][\eta]{\rho}
    &\coloneqq - \Dhypr[r][\eta]{\rho}{\Ident}
    = \log\; `*(
    \inf_{\substack{Q \in \Mr[r] \\ \tr`(Q\rho) \geq \eta}}
    `*{ \tr`*(Q) }
    ) \ .
    \label{eq:defn-Hhypr}
  \end{align}
\end{definition}

By the constraint $\eta \leq \tr(Q\rho) \leq 1$, the reduced complexity (relative) entropy differs from the complexity (relative) entropy by at most $\log(1/\eta)$:
\begin{proposition}[Difference between $\DSym_{\mathrm{h}}^{r,\,\eta}$ and $\DSym_{\mathrm{H}}^{r,\,\eta}$]
  \label{thm:relation-Dhypr-DHypr}
    Let $\rho$ denote any subnormalized state; and $\Gamma$, any positive-semidefinite operator.
    Let $r \geq 0$ and $\eta \in (0, \tr(\rho) ]$.
    It holds that
    \begin{subequations}
    \begin{align}
      0
      \leq \Dhypr[r][\eta]{\rho}{\Gamma} - \DHypr[r][\eta]{\rho}{\Gamma}
      &\leq \log`*( \frac1\eta )
        \label{eq:relation-Dhypr-DHypr} \\
      \text{and} \; \; \;
      0
      \leq \HHypr[r][\eta]{\rho} - \Hhypr[r][\eta]{\rho}
      &\leq \log`*( \frac1\eta ) \ .
        \label{eq:relation-Dhypr-DHypr--Hhypr}
    \end{align}
  \end{subequations}
    If $\rho$ is normalized, $\Dhypr[r][\eta = 1]{\rho}{\Gamma} = \DHypr[r][\eta = 1]{\rho}{\Gamma}$, and $\Hhypr[r][\eta = 1]{\rho} = \HHypr[r][\eta = 1]{\rho}$.
\end{proposition}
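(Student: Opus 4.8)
The plan is to exploit the single structural difference between the two quantities: the reduced versions in Definitions~\ref{defn:Dhypr} and~\ref{defn:Hhypr} omit the normalization factor $\tr(Q\rho)$ appearing in the denominators of Definitions~\ref{def:defn-DHypr} and~\ref{def:defn-HHypr}. Since the two optimizations range over exactly the same feasible set $\{Q \in \Mr[r] : \tr(Q\rho) \geq \eta\}$, it suffices to control the factor $1/\tr(Q\rho)$ uniformly over this set. The key elementary bound is $\eta \leq \tr(Q\rho) \leq 1$ for every feasible $Q$: the lower bound is the feasibility constraint, while the upper bound follows because $0 \leq Q \leq \Ident$ and $\rho$ is subnormalized, so $\tr(Q\rho) \leq \tr(\rho) \leq 1$. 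Hence $1 \leq 1/\tr(Q\rho) \leq 1/\eta$ for every feasible $Q$.

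Next I would insert these bounds into the objective. Multiplying $\tr(Q\Gamma) \geq 0$ by $1/\tr(Q\rho)$ and using $1 \leq 1/\tr(Q\rho) \leq 1/\eta$ gives, for every feasible $Q$, the chain $\tr(Q\Gamma) \leq \tr(Q\Gamma)/\tr(Q\rho) \leq \tr(Q\Gamma)/\eta$. Taking the infimum over the common feasible set preserves these inequalities, and since $1/\eta$ is a positive constant it factors out of the right-hand infimum. Applying $-\log(\cdot)$, which is monotonically decreasing and therefore reverses the chain, turns the middle infimum into $\DHypr[r][\eta]{\rho}{\Gamma}$, the left infimum into $\Dhypr[r][\eta]{\rho}{\Gamma}$, and the right one into $\Dhypr[r][\eta]{\rho}{\Gamma} - \log(1/\eta)$. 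This yields $\Dhypr[r][\eta]{\rho}{\Gamma} - \log(1/\eta) \leq \DHypr[r][\eta]{\rho}{\Gamma} \leq \Dhypr[r][\eta]{\rho}{\Gamma}$, which rearranges to~\eqref{eq:relation-Dhypr-DHypr}. Setting $\Gamma = \Ident$ and using $\HHypr[r][\eta]{\rho} - \Hhypr[r][\eta]{\rho} = \Dhypr[r][\eta]{\rho}{\Ident} - \DHypr[r][\eta]{\rho}{\Ident}$ then gives~\eqref{eq:relation-Dhypr-DHypr--Hhypr} for the entropies.

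Finally, the equality claim at $\eta = 1$ for normalized $\rho$ is immediate from the same uniform bound: when $\tr(\rho) = 1$ and $\eta = 1$, the feasibility constraint $\tr(Q\rho) \geq 1$ combines with $\tr(Q\rho) \leq 1$ to force $\tr(Q\rho) = 1$ for every candidate. The normalization factor is then identically $1$, so the two optimizations share the same objective and the quantities coincide. The only points requiring care—rather than any genuine obstacle—are confirming that both optimizations share the same feasible domain (so that infima may be compared term by term) and tracking the reversal of inequalities under $-\log$; both are routine given the definitions.
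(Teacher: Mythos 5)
Your proposal is correct and follows essentially the same route as the paper, which proves the bounds in one line from the constraint $\eta \leq \tr(Q\rho) \leq 1$ on the denominator; your argument is simply a fully spelled-out version of that observation. Your direct treatment of the $\eta = 1$ case (forcing $\tr(Q\rho) = 1$ for every candidate) is trivially equivalent to the paper's step of setting $\eta = 1$ in the bounds.
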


\begin{proof}[**thm:relation-Dhypr-DHypr]
    The denominator $\tr(Q\rho)$ in~\cref{def:defn-DHypr} is constrained to $\eta\leq\tr(Q\rho)\leq 1$, leading to the claimed bounds.
    The second part of the proposition follows by setting $\eta=1$ in the bounds.
\end{proof}

The reduced complexity (relative) entropy also obeys bounds similar to those in \cref{thm:DHypr-trivial-bounds}.

\begin{proposition}[General bounds for $\DSym_{\mathrm{h}}^{r,\,\eta}$]
   \label{thm:Dhypr-trivial-bounds}
  Let $\rho$ denote any subnormalized state, and $\Gamma$ any positive-semidefinite operator, that act on a Hilbert space of dimensionality $d$.
  Let $r \geq 0$ and $\eta \in (0, \tr(\rho) ]$.
  It holds that
  \begin{subequations}
    \begin{align}
      \Dhypr[r][\eta]{\rho}{\Gamma} &\geq -\log \bm{(} \tr`(\Gamma) \bm{)} \ .
        \label{eq:thm-Dhypr-trivial-bounds--lower-bound}
    \end{align}
    Furthermore, if $\Gamma$ is positive-definite (has full rank),
    \begin{align}
      \Dhypr[r][\eta]{\rho}{\Gamma} &\leq \log`*( \opnorm{\Gamma^{-1}} ) + \log \bm{(} \tr`(\rho) \bm{)} + \log`*( \frac1\eta ) \ .
        \label{eq:thm-Dhypr-trivial-bounds--upper-bound}
    \end{align}
  \end{subequations}
  In particular,
  \begin{align}
      - \log \bm{(} \tr`(\rho) \bm{)} - \log`*( \frac1\eta ) 
      \leq \Hhypr[r][\eta]{\rho}
      \leq \log(d) \ .
       \label{eq:thm-Dhypr-trivial-bounds--HHypr}
  \end{align}
  Consequently, if $\rho$ is normalized and $\Gamma$ is positive-definite,
  \begin{subequations}
    \begin{align}
      -\log \bm{(} \tr`(\Gamma) \bm{)}
      \leq \Dhypr[r][\eta]{\rho}{\Gamma}
        &\leq \log `*( \opnorm{\Gamma^{-1}} ) + \log`*( \frac1\eta ) \ ,
        \label{eq:thm-Dhypr-trivial-bounds--special-case}
      \\
      \text{and} \; \; \;
      -\log`*( \frac1\eta ) \leq \Hhypr[r][\eta]{\rho}
      &\leq \log(d) \ .
      \label{eq:thm-Dhypr-trivial-bounds--special-case--HHypr}
    \end{align}
  \end{subequations}
\end{proposition}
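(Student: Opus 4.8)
The plan is to follow closely the proof of \cref{thm:DHypr-trivial-bounds}; the only structural change is that the reduced quantity lacks the normalization denominator $\tr(Q\rho)$ present in~\eqref{eq:defn-DHypr}, and tracking this absence is exactly what produces the extra $\log(1/\eta)$ terms in the upper bounds below.

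First, for the lower bound~\eqref{eq:thm-Dhypr-trivial-bounds--lower-bound}, I would exhibit $\Ident \in \Mr[r=0] \subset \Mr[r]$ as a feasible candidate for the infimum in~\eqref{eq:defn-Dhypr}: it satisfies the constraint $\tr(\Ident\,\rho) = \tr(\rho) \geq \eta$. Since $\tr(\Ident\,\Gamma) = \tr(\Gamma)$, the infimum is at most $\tr(\Gamma)$, and applying $-\log$ gives $\Dhypr[r][\eta]{\rho}{\Gamma} \geq -\log(\tr(\Gamma))$.

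For the upper bound~\eqref{eq:thm-Dhypr-trivial-bounds--upper-bound}, I would lower-bound $\tr(Q\Gamma)$ uniformly over all feasible $Q$. When $\Gamma$ is positive-definite, $\Gamma \geq \opnorm{\Gamma^{-1}}^{-1}\Ident$, so $\tr(Q\Gamma) \geq \opnorm{\Gamma^{-1}}^{-1}\tr(Q)$. Separately, the operator inequality $\rho \leq \tr(\rho)\Ident$ (valid for any subnormalized state, since its largest eigenvalue is at most $\tr(\rho)$) yields $\tr(Q\rho) \leq \tr(\rho)\tr(Q)$, so the feasibility constraint $\tr(Q\rho) \geq \eta$ forces $\tr(Q) \geq \eta/\tr(\rho)$. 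Combining the two inequalities gives $\tr(Q\Gamma) \geq \opnorm{\Gamma^{-1}}^{-1}\,\eta/\tr(\rho)$ for every feasible $Q$; hence the infimum is bounded below by the same quantity, and $-\log$ delivers~\eqref{eq:thm-Dhypr-trivial-bounds--upper-bound}.

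The remaining claims are specializations. Setting $\Gamma = \Ident$ in the two displayed inequalities and using $\Hhypr[r][\eta]{\rho} = -\Dhypr[r][\eta]{\rho}{\Ident}$, together with $\opnorm{\Ident^{-1}} = 1$ and $\tr(\Ident) = d$, gives~\eqref{eq:thm-Dhypr-trivial-bounds--HHypr}; the normalized special cases~\eqref{eq:thm-Dhypr-trivial-bounds--special-case} and~\eqref{eq:thm-Dhypr-trivial-bounds--special-case--HHypr} then follow immediately upon substituting $\tr(\rho) = 1$. There is no genuine obstacle here, the argument being elementary, but the point worth tracking is the asymmetry between the lower and upper bounds: the missing denominator means feasibility yields only $\tr(Q) \geq \eta/\tr(\rho)$ via the eigenvalue bound $\rho \leq \tr(\rho)\Ident$, and taking $-\log$ of the resulting lower bound on $\tr(Q\Gamma)$ introduces the $\log(1/\eta)$ absent from the sharper estimates of \cref{thm:DHypr-trivial-bounds}, where the $\tr(Q\rho)$ denominator cancels precisely this factor.
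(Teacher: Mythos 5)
Your proof is correct, and its lower-bound half is identical to the paper's: both exhibit $\Ident \in \Mr[r]$ as a feasible candidate, note $\tr(\Ident\,\rho) = \tr(\rho) \geq \eta$, and conclude \eqref{eq:thm-Dhypr-trivial-bounds--lower-bound}. For the upper bound \eqref{eq:thm-Dhypr-trivial-bounds--upper-bound} you take a mildly different route. The paper does not argue directly: it simply combines the upper bound \eqref{eq:thm-DHypr-trivial-bounds--upper-bound} of \cref{thm:DHypr-trivial-bounds} (for the normalized quantity $\DHypr[r][\eta]{\rho}{\Gamma}$) with the comparison $\Dhypr[r][\eta]{\rho}{\Gamma} - \DHypr[r][\eta]{\rho}{\Gamma} \leq \log(1/\eta)$ from \cref{thm:relation-Dhypr-DHypr}, so the $\log(1/\eta)$ enters through the comparison lemma. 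You instead re-derive the bound from scratch: the operator inequalities $\Gamma \geq \opnorm{\Gamma^{-1}}^{-1}\,\Ident$ and $\rho \leq \tr(\rho)\,\Ident$, combined with the feasibility constraint $\tr(Q\rho) \geq \eta$, force $\tr(Q) \geq \eta/\tr(\rho)$ and hence $\tr(Q\Gamma) \geq \opnorm{\Gamma^{-1}}^{-1}\,\eta/\tr(\rho)$ uniformly over feasible $Q$. These are the same two operator inequalities used in the paper's proof of \cref{thm:DHypr-trivial-bounds}; you have effectively inlined that proof and absorbed the role of \cref{thm:relation-Dhypr-DHypr} into the constraint $\tr(Q\rho)\geq\eta$. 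What your version buys is a self-contained argument that makes the origin of the $\log(1/\eta)$ term transparent — it arises because, without the denominator $\tr(Q\rho)$, feasibility only guarantees $\tr(Q) \geq \eta/\tr(\rho)$ — whereas the paper's version is shorter given the lemmas already established and reuses them without duplication. Your eigenvalue justification of $\rho \leq \tr(\rho)\,\Ident$ (largest eigenvalue of a positive-semidefinite operator is at most its trace) is also cleaner than the paper's stated reason. The specializations at $\Gamma = \Ident$ and $\tr(\rho)=1$ are handled the same way in both proofs and check out.
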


\begin{proof}[**thm:Dhypr-trivial-bounds]
  \eqref{eq:thm-Dhypr-trivial-bounds--lower-bound} follows because $\Ident \in \Mr[r]$ is a candidate effect for the $\Dhypr[r][\eta]{\rho}{\Gamma}$ optimization.
  \eqref{eq:thm-Dhypr-trivial-bounds--upper-bound} follows from combining~\eqref{eq:thm-DHypr-trivial-bounds--upper-bound} of \cref{thm:DHypr-trivial-bounds} with~\eqref{eq:relation-Dhypr-DHypr} of \cref{thm:relation-Dhypr-DHypr}.
\end{proof}
\noindent
\eqref{eq:thm-Dhypr-trivial-bounds--lower-bound} implies that $\Dhypr[r][\eta]{\rho}{\sigma} \geq 0$ for normalized states $\rho$ and $\sigma$.

The reduced complexity (relative) entropy possesses several elementary properties of the complexity (relative) entropy.

\begin{proposition}[Properties of $\DSym_{\mathrm{h}}^{r,\,\eta}$]
  \label{thm:properties-Dhypr}
    The following propositions hold if, everywhere therein, one replaces the complexity (relative) entropy with the reduced complexity (relative) entropy:
    \cref{thm:DHypr-monotonous-in-r-and-in-eta}, \cref{thm:DHypr-scaling-2ndarg}, \cref{thm:DHypr-monotonic-under-ordering}, \cref{thm:DHypr-subadditivity}, \cref{thm:cplxrelentropy-monotonicity-partial-trace}, and \cref{thm:DHypr-arg-U-rp}.
\end{proposition}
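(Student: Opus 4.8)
The plan is to prove each of the six claims by revisiting the corresponding proof for the unreduced complexity relative entropy and verifying that it survives the removal of the normalization factor. The only formal difference between $\Dhypr[r][\eta]{\rho}{\Gamma}$ and $\DHypr[r][\eta]{\rho}{\Gamma}$ is that the objective minimized in~\eqref{eq:defn-Dhypr} is $\tr(Q\Gamma)$ rather than the ratio $\tr(Q\Gamma)/\tr(Q\rho)$ of~\eqref{eq:defn-DHypr}; the optimization domain $\{Q\in\Mr[r]:\tr(Q\rho)\geq\eta\}$ is identical in both. I would therefore organize the argument according to which feature of each original proof is actually exploited.

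First, the monotonicity in $r$ and $\eta$ (\cref{thm:DHypr-monotonous-in-r-and-in-eta}), the monotonicity under partial traces (\cref{thm:cplxrelentropy-monotonicity-partial-trace}), and the unitary-operations bound (\cref{thm:DHypr-arg-U-rp}) are all proved purely by comparing optimization domains: a candidate effect for one optimization is exhibited as a candidate for another, and the two effects yield equal objective values. Dropping the denominator changes neither the domain inclusions $\Mr[r]\subset\Mr[r']$ and $\{Q:\tr(Q\rho)\geq\eta\}\subset\{Q:\tr(Q\rho)\geq\eta'\}$, nor the membership argument using adjoint invariance of $C$, nor the equalities $\tr(\tilde Q_{AB}\Gamma_{AB})=\tr(Q_A\Gamma_A)$ and $\tr(Q'[U\Gamma U^\dagger])=\tr(Q\Gamma)$ on which the partial-trace and unitary proofs rest. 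These three proofs thus transfer essentially verbatim, and the associated $\Hhypr$-type statements follow from the reduced scaling property as before.

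Next, the scaling property (\cref{thm:DHypr-scaling-2ndarg}) and the subadditivity (\cref{thm:DHypr-subadditivity}) rely on algebraic features of the objective that the numerator alone also possesses: $\tr(Q\cdot a\Gamma)=a\,\tr(Q\Gamma)$ lets $\log(a)$ factor out of the reduced optimizer, while $\tr\bigl((Q\otimes Q')(\Gamma\otimes\Gamma')\bigr)=\tr(Q\Gamma)\,\tr(Q'\Gamma')$ preserves the tensor-product multiplicativity used to construct the candidate $Q\otimes Q'$ and to bound its objective value; the constraint $\tr((Q\otimes Q')(\rho\otimes\rho'))=\tr(Q\rho)\tr(Q'\rho')\geq\eta\eta'$ is likewise unchanged. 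I note here that the vanishing property \cref{thm:DHypr-zero-same-args} is deliberately \emph{absent} from the list, as $\Dhypr[r][\eta]{\rho}{\rho}$ need not vanish; this is why the ancilla corollary of \cref{thm:DHypr-subadditivity} is not claimed for the reduced quantity.

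The one case requiring a moment's care is the monotonicity under operator ordering (\cref{thm:DHypr-monotonic-under-ordering}), whose original proof manipulates numerator and denominator simultaneously. I would argue it directly. Since $\rho'\geq\rho$, every effect with $\tr(Q\rho)\geq\eta$ satisfies $\tr(Q\rho')\geq\tr(Q\rho)\geq\eta$, so $\{Q:\tr(Q\rho)\geq\eta\}\subseteq\{Q:\tr(Q\rho')\geq\eta\}$; and since $\Gamma\geq\Gamma'$, one has $\tr(Q\Gamma)\geq\tr(Q\Gamma')$ for every effect $Q$. Chaining these gives $\inf_{Q:\,\tr(Q\rho)\geq\eta}\tr(Q\Gamma)\geq\inf_{Q:\,\tr(Q\rho)\geq\eta}\tr(Q\Gamma')\geq\inf_{Q:\,\tr(Q\rho')\geq\eta}\tr(Q\Gamma')$, which is exactly $e^{-\Dhypr[r][\eta]{\rho}{\Gamma}}\geq e^{-\Dhypr[r][\eta]{\rho'}{\Gamma'}}$, with the $\Hhypr$ statement following upon setting $\Gamma=\Gamma'=\Ident$. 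This short verification is the main (and only mildly nontrivial) obstacle; the remaining cases amount to bookkeeping that I would carry out by replacing $\tr(Q\Gamma)/\tr(Q\rho)$ with $\tr(Q\Gamma)$ uniformly in each cited proof.
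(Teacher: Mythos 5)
Your proposal is correct and matches the paper's approach: the paper's entire proof of \cref{thm:properties-Dhypr} is the one-line instruction ``Adapt the propositions' proofs as appropriate,'' and your case-by-case verification (domain-inclusion arguments transferring verbatim, multiplicativity of $\tr(Q\Gamma)$ handling scaling and subadditivity, and a direct two-step argument for operator ordering) is precisely that adaptation carried out explicitly. Your observation that \cref{thm:DHypr-zero-same-args} is deliberately excluded---so the ancilla corollary~\eqref{eq:DHypr-ancillas} does not transfer---also agrees with the paper's remark immediately following the proposition.
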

\begin{proof}[**thm:properties-Dhypr]
    Adapt the propositions' proofs as appropriate.
\end{proof}
\noindent
In general, the reduced complexity (relative) entropy lacks the properties stated in \cref{thm:DHypr-zero-same-args} and in \cref{thm:DHypr-bounded-by-DHyp}.
\cref{thm:DHypr-zero-same-args} does not hold because
$\Dhypr[r][\eta]{\rho}{\rho} > 0$ if $\Mr[r] = \{\Ident\}$ and $\tr`(\Ident) > 1$.
\cref{thm:DHypr-bounded-by-DHyp} does not hold because $\Hhypr[r][\eta]{\rho}$, but not $\HHyp[\eta]{\rho}$, can be negative.

\subsubsection{Complexity success probability}

It is convenient to introduce a quantity closely related to the reduced complexity entropy.
We call this quantity the \emph{complexity success probability}.
Given any subnormalized state $\rho$ and any $r \geq 0$, we have defined the reduced complexity entropy as the infimum of $\log \bm{(} \tr`(Q) \bm{)}$ over all $Q \in \Mr[r]$, subject to a constraint (a lower bound) on $\tr`(Q\rho)$, the probability of successfully identifying $\rho$.
Reciprocally, we define the complexity success probability as the supremum of $\tr`(Q\rho)$ over all $Q \in \Mr[r]$, subject to a constraint (an upper bound) on $\log \bm{(} \tr`(Q) \bm{)}$.

\begin{definition}[Complexity success probability]
  \label{defn:cplx-entropy-success-probability}
  Let $\rho$ denote any subnormalized state.
  Let $r\geq 0$.
  Let $m_- \coloneqq \inf_\eta `*{ \Hhypr[r][\eta]{\rho} }$ and $m_+ \coloneqq \sup_\eta `*{ \Hhypr[r][\eta]{\rho} } = \Hhypr[r][\eta = \tr`(\rho)]{\rho}$, with $\eta$ ranging over $(0,\tr`(\rho)]$.
  Let $m \in [m_-, m_+]$ if $m_-$ is the minimum of $`*{ \Hhypr[r][\eta]{\rho} }_\eta$; otherwise, let $m \in (m_-, m_+]$.
  The \emph{reduced-complexity-entropy success probability}, or simply the \emph{complexity success probability}, is
  \begin{align}
    \etahypr[r][m]{\rho}
    \coloneqq \sup `*{
      \eta\,:\; \Hhypr[r][\eta]{\rho} \leq m
      }
    = \sup_{ \substack{Q \in \Mr[r] \\ 
    \log\bm{(} \tr`(Q) \bm{)}
    \leq m} }
    `*{ \tr`(Q\rho) } \ .
  \label{eq:defn-cplx-entropy-success-probability}
\end{align}
\end{definition}
\noindent
$\etahypr[r][m]{\rho}$ appears in the proof of \cref{thm:ComplexityEntropyLowerBoundApproximateKDesignSimple} (\cref{appx-topic:complexity-entropy-random-circuits}).

$\etahypr[r][m]{\rho}$ takes a simple form in the special case where $\Mr[r]$ is defined for $n$ qubits, as in~\eqref{eq:setting-defn-Mr}.
Consider any candidate $Q \in \Mr[r]$.
$Q = U^\dagger P U$ for some unitary $U$ and for some effect $P$ that, up to a permutation of qubits, equals $\Ident_2^{\otimes k_Q} \otimes \proj{0^{n-k_Q}}$, wherein $k_Q \coloneqq \log_2 \bm{(} \tr`(Q) \bm{)}$.
$\log \bm{(} \tr`(Q) \bm{)} \leq m$ precisely if $k_Q \leq m / \log(2)$, which holds precisely if $k_Q \leq \lfloor{m / \log(2)}\rfloor$.
Moreover, if $k_Q < \lfloor{m / \log(2)}\rfloor$, then there exists a $\tilde{Q} \in \Mr[r]$ such that $k_{\tilde{Q}} = \lfloor{m / \log(2)}\rfloor$ and $\tr`(\tilde{Q}\rho) \geq \tr`(Q\rho)$.
$\big($Indeed, let $\tilde{Q} \coloneqq U^\dagger \tilde{P} U$, wherein $\tilde{P}$, up to the aforementioned permutation, equals $\Ident_2^{ \otimes k_{\tilde{Q}} } \otimes \proj{ 0^{n-k_{\tilde{Q}}} }$.$\big)$
Thus, we can replace the constraint $k_Q \leq \lfloor{m / \log(2)}\rfloor$ with $k_Q = \lfloor{m / \log(2)}\rfloor$:
\begin{align}
    \etahypr[r][m]{\rho}
    = \sup_{ \substack{Q \in \Mr[r] \\ 
    \log_2\bm{(} \tr`(Q) \bm{)} = \lfloor{m / \log(2)}\rfloor } }
    `*{ \tr`(Q\rho) } \ .
  \label{eq:cplx-success-probability--special-case}
\end{align}

The complexity success probability monotonically increases as $r$ increases and monotonically increases as $m$ increases.
\begin{proposition}[Monotonicity of $\etahypr{}$ in $r$ and $m$]
  \label{thm:success-probability-monotonic-in-r-and-in-eta}
  Let $\rho$ denote any subnormalized state.
  Let $r \geq 0$.
  Let $m$ and $m_+$ be as in \cref{defn:cplx-entropy-success-probability}.
  \begin{enumerate}
    \item For all $r' \geq r$, $\etahypr[r'][m]{\rho} \geq \etahypr[r][m]{\rho}$.
    \item For all $m' \in [m, m_+]$, $\etahypr[r][m']{\rho} \geq \etahypr[r][m]{\rho}$.
  \end{enumerate}
\end{proposition}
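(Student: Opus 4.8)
The plan is to derive both monotonicity statements directly from the supremum-over-effects expression for the complexity success probability in~\eqref{eq:defn-cplx-entropy-success-probability}, exploiting the elementary fact that a supremum never decreases when its feasible set is enlarged. This is the mirror image of the proof of \cref{thm:DHypr-monotonous-in-r-and-in-eta}: there, enlarging the optimization domain lowered the (reduced) complexity entropy, whereas here the same enlargement raises the success probability, because $\etahypr$ optimizes $\tr(Q\rho)$ rather than $\tr(Q)$. Accordingly, in each part I would identify the feasible set appearing under the supremum, argue that the relevant parameter change only enlarges it, and conclude that the supremum can only grow.

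For Part 1, I would fix $m$ and take any $r' \geq r$. By Property~\ref{item:POVM-effect-complexity--monotonous} of \cref{defn:Mr-sets-general}, $\Mr[r] \subset \Mr[r']$. Hence every effect $Q \in \Mr[r]$ satisfying the constraint $\log(\tr(Q)) \leq m$ also lies in $\Mr[r']$ and satisfies the identical constraint, so it is a candidate for the $\etahypr[r'][m]{\rho}$ optimization. The feasible set for $r'$ therefore contains the feasible set for $r$, and taking the supremum of $\tr(Q\rho)$ over the larger set yields $\etahypr[r'][m]{\rho} \geq \etahypr[r][m]{\rho}$.

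For Part 2, I would fix $r$ and take any $m' \in [m, m_+]$. Since $m' \geq m$, the constraint $\log(\tr(Q)) \leq m$ implies $\log(\tr(Q)) \leq m'$, so every candidate $Q$ for the $\etahypr[r][m]{\rho}$ optimization is also a candidate for the $\etahypr[r][m']{\rho}$ optimization. Once again the feasible set only grows, giving $\etahypr[r][m']{\rho} \geq \etahypr[r][m]{\rho}$.

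The argument involves no substantive estimate, so there is no real analytic obstacle; the only point requiring care is bookkeeping over the admissible range of $m$. That range, $[m_-, m_+]$ or $(m_-, m_+]$ per \cref{defn:cplx-entropy-success-probability}, is defined relative to $r$, so in Part 1 one should confirm that a value of $m$ admissible for $r$ remains admissible for $r'$. This follows because $\Hhypr[r][\eta]{\rho}$ is nonincreasing in $r$ (\cref{thm:properties-Dhypr}, adapting \cref{thm:DHypr-monotonous-in-r-and-in-eta}), so the lower endpoint $m_-$ can only decrease as $r$ grows; moreover, working directly with the supremum-over-effects form sidesteps any concern about the upper endpoint, since that expression is well-defined whenever its feasible set is nonempty. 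Thus the feasible-set inclusions are the entire content of both claims.
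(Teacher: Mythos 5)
Your proof is correct, and it is the paper's argument viewed through the other face of the definition. The paper works with the first expression in~\eqref{eq:defn-cplx-entropy-success-probability}, $\etahypr[r][m]{\rho} = \sup\{\eta : \Hhypr[r][\eta]{\rho}\leq m\}$, and gets Part~1 by citing the monotonicity of $\Hhypr$ in $r$ (\cref{thm:DHypr-monotonous-in-r-and-in-eta} via \cref{thm:properties-Dhypr}) and Part~2 from $m\leq m'$; you work with the second expression, the supremum of $\tr(Q\rho)$ over effects, and get both parts from feasible-set inclusions, using $\Mr[r]\subset\Mr[r']$ (Property~\ref{item:POVM-effect-complexity--monotonous} of \cref{defn:Mr-sets-general}) for Part~1. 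Since the cited monotonicity lemma is itself proved by exactly that set inclusion, the two proofs coincide once one lemma is unfolded; yours is marginally more self-contained, the paper's marginally shorter. Your bookkeeping remark on the admissible range of $m$ is a genuine point the paper silently skips, and your diagnosis is the right one: the subtlety is the upper endpoint, since $m_+ = \Hhypr[r][\eta=\tr(\rho)]{\rho}$ is also nonincreasing in $r$, so an $m$ admissible at $r$ can exceed $m_+(r')$ at $r'\geq r$. Working with the sup-over-effects form (which remains well-defined there, with nonempty feasible set inherited from $r$) handles this cleanly; equivalently, in the $\eta$-form the constraint becomes vacuous and the supremum saturates at $\tr(\rho)$, so the inequality still holds.
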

\begin{proof}[**thm:success-probability-monotonic-in-r-and-in-eta]
  Consider the definition~\eqref{eq:defn-cplx-entropy-success-probability} of $\etahypr[r][m]{\rho}$.
  Consider any $\eta \in (0, \tr`(\rho)]$.
  The monotonicity in $r$ follows because $\Hhypr[r][\eta]{\rho} \leq m$ implies that $\Hhypr[r'][\eta]{\rho} \leq m$, by \cref{thm:DHypr-monotonous-in-r-and-in-eta} (via \cref{thm:properties-Dhypr}).
  The monotonicity in $m$ follows because $\Hhypr[r][\eta]{\rho} \leq m$ implies that $\Hhypr[r][\eta]{\rho} \leq m'$.
\end{proof}

Moreover, the complexity success probability is convex.

\begin{proposition}[Convexity of $\etahypr{}$]
  \label{thm:etahypr-convexity}
  Let $`{ p_k }$ denote any probability distribution; and $`{ \rho_k }$, any associated collection of 
  subnormalized states.
  Let $r \geq 0$.
  Let $m$ be as in \cref{defn:cplx-entropy-success-probability}.
  It holds that
  \begin{align}
    \etahypr[r][m]`*{ \sum_k p_k \rho_k } \leq \sum_k p_k \, \etahypr[r][m]{ \rho_k }\ .
    \label{eq:etahypr-convexity}
  \end{align}
\end{proposition}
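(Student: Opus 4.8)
The plan is to exploit the one structural feature that makes this proposition nearly immediate, namely that the complexity success probability is a supremum of a \emph{linear} functional of its argument over a \emph{state-independent} domain. Reading off the second equality in the definition~\eqref{eq:defn-cplx-entropy-success-probability}, set $\mathcal{A} \coloneqq \{ Q \in \Mr[r] : \log(\tr(Q)) \leq m \}$; this set depends only on $r$ and $m$, not on the state being scored, and $\etahypr[r][m]{\rho} = \sup_{Q \in \mathcal{A}} \tr(Q\rho)$. The desired subadditivity under mixing then follows from the elementary fact that the supremum of a sum is bounded above by the sum of the suprema, together with the non-negativity of the mixing weights $p_k$.

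Concretely, I would first rewrite the left-hand side of~\eqref{eq:etahypr-convexity} using linearity of the trace,
\begin{align}
  \etahypr[r][m]{\sum_k p_k \rho_k}
  = \sup_{Q \in \mathcal{A}} \tr( Q \, \textstyle\sum_k p_k \rho_k )
  = \sup_{Q \in \mathcal{A}} \sum_k p_k \, \tr(Q \rho_k) \ .
\end{align}
Next, I would observe that for every fixed $Q \in \mathcal{A}$ and every $k$, the operator $Q$ is a candidate in the optimization defining $\etahypr[r][m]{\rho_k}$, so $\tr(Q\rho_k) \leq \etahypr[r][m]{\rho_k}$. Since each $p_k \geq 0$, multiplying and summing yields $\sum_k p_k \, \tr(Q\rho_k) \leq \sum_k p_k \, \etahypr[r][m]{\rho_k}$, a bound that is uniform in $Q$. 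Taking the supremum over $Q \in \mathcal{A}$ on the left-hand side then gives~\eqref{eq:etahypr-convexity}. No earlier inequality is needed; the argument rests only on linearity of the trace, non-negativity of the $p_k$, and the defining property of a supremum as a least upper bound.

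The sole point requiring care — rather than genuine difficulty — is the well-definedness of $m$ as a parameter simultaneously for each $\rho_k$ and for the mixture, in the sense of \cref{defn:cplx-entropy-success-probability}, since the admissible range of $m$ nominally depends on the state. I would resolve this by working throughout with the $Q$-optimization form on the right-hand side of~\eqref{eq:defn-cplx-entropy-success-probability}, whose domain $\mathcal{A}$ is manifestly state-independent and nonempty for every $m \geq 0$ (it always contains a rank-one projector inherited from $\Mr[0]$); the equality of this form with the $\eta$-supremum is precisely what the range condition on $m$ guarantees, so the $Q$-form argument carries over without loss. Thus the main obstacle is bookkeeping about the parameter range, not the inequality itself.
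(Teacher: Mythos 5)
Your proof is correct and follows essentially the same route as the paper's: both exploit that every $Q$ feasible for the mixture's optimization is simultaneously a candidate for each $\etahypr[r][m]{\rho_k}$ optimization, so linearity of the trace and non-negativity of the $p_k$ give the bound uniformly in $Q$. The paper merely phrases the handling of the supremum via a $\zeta$-near-optimal $Q$ rather than your ``sup of a uniformly bounded family'' formulation, which is an immaterial difference; your added remark on the state-independence of the feasible set $\mathcal{A}$ is a fair piece of bookkeeping that the paper leaves implicit.
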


\begin{proof}[**thm:etahypr-convexity]
  Consider any $\zeta > 0$.
  There exists a $Q \in \Mr[r]$ such that $\log\bm{(} \tr`(Q) \bm{)} \leq m$ and
  \begin{align}
      \etahypr[r][m]`*{ \sum_k p_k \rho_k } - \zeta
      \leq \tr`*( Q `*[ \sum_k p_k \rho_k ] )
      = \sum_k p_k \, \tr`(Q \rho_k)
      \leq \sum_k p_k \, \etahypr[r][m]{\rho_k} \ .
  \end{align}
  The second inequality holds because, for each $k$, $Q$ is a candidate for the $\etahypr[r][m]{\rho_k}$ optimization and thus satisfies $\tr`(Q \rho_k) \leq \etahypr[r][m]{\rho_k}$.
  \eqref{eq:etahypr-convexity} now follows because $\zeta$ is arbitrary.
\end{proof}

\subsubsection{Interrelating reduced complexity (relative) entropies under two computational models}

Under certain conditions, a simple bound relates reduced complexity (relative) entropies defined with respect to different computational models.

\begin{proposition}[Reduced complexity relative entropies under two computational models]
  \label{thm:cplx-rel-entr-bridging-complexity-measures}
  Let $\rho$ denote any subnormalized state; and $\Gamma$, any positive-semidefinite operator.
  Let $\Psimple$ denote any set of simple POVM effects (\cref{defn:Psimple-general}), and let $C^{(1)}$ and $C^{(2)}$ denote superoperator-complexity measures (\cref{defn:superoperator-complexity-general}).
  For each $i = 1,2$, let $\big\{ \Mr[r ,\, (i)] \coloneqq \Mr[r] \big( \Psimple,C^{(i)} \big) \big\}$ denote the family of POVM-effect-complexity sets defined by~\eqref{eq:Mr-general-from-Psimple-and-superop-cplx-measure}.
  Let $r \geq 0$, $\alpha > 0$, $\epsilon \in [0,1]$, and $\eta \in (\epsilon, 1]$.
  Assume there exists a monotonically increasing function $f : \mathbb{R}_+ \to \mathbb{R}_+$ with the following property:
  for every operation $\mathcal{E}$ satisfying $C^{(1)}(\mathcal{E}) < \infty$, there exists an operation $\mathcal{F}$ such that
  \begin{align}
      \frac12\dianorm{\mathcal{E} - \mathcal{F}} \leq \epsilon \ , \quad
      \mathcal{F}(\Gamma) \leq \alpha \mathcal{E}(\Gamma) \ ,
      \quad \text{and} \quad
      C^{(2)}(\mathcal{F}) \leq f \bm{(} C^{(1)}(\mathcal{E}) \bm{)} \ .
      \label{eq:cplx-rel-entr-bridging-complexity-measures--assmpt-E-F}
  \end{align}
  It holds that
  \begin{align}
      \Dhypr[f(r)][\eta-\epsilon][(2)]{\rho}{\Gamma}
      \geq \Dhypr[r][\eta][(1)]{\rho}{\Gamma} - \log`(\alpha) \ .
      \label{eq:cplx-rel-entr-bridging-complexity-measures}
  \end{align}
  The reduced complexity relative entropies $\Dhypr[r][\eta][(1)]{}{}$ and $\Dhypr[f(r)][\eta-\epsilon][(2)]{}{}$ are defined with respect to $\Mr[r ,\, (1)]$ and $\Mr[f(r) ,\, (2)]$, respectively.
\end{proposition}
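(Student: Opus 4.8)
The plan is to prove the equivalent statement about the two infima defining the reduced complexity relative entropies. Writing $I_1 \coloneqq e^{-\Dhypr[r][\eta][(1)]{\rho}{\Gamma}}$ and $I_2 \coloneqq e^{-\Dhypr[f(r)][\eta-\epsilon][(2)]{\rho}{\Gamma}}$ for the infima appearing in~\eqref{eq:defn-Dhypr}, the claim~\eqref{eq:cplx-rel-entr-bridging-complexity-measures} becomes, after taking $-\log$ of both sides, equivalent to $I_2 \le \alpha\, I_1$. It therefore suffices to take any near-optimal candidate $Q_1$ for the first infimum and to manufacture from it a valid candidate $Q_2$ for the second whose objective value satisfies $\tr(Q_2\Gamma)\le\alpha\,\tr(Q_1\Gamma)$.

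First I would fix $\zeta>0$ and choose an effect $Q_1\in\Mr[r ,\, (1)]$ with $\tr(Q_1\rho)\ge\eta$ and $\tr(Q_1\Gamma)\le I_1+\zeta$. By the construction~\eqref{eq:Mr-general-from-Psimple-and-superop-cplx-measure}, $Q_1=\mathcal{E}^\dagger(P)$ for some $P\in\Psimple$ and some operation $\mathcal{E}$ with $C^{(1)}(\mathcal{E})\le r<\infty$. The hypothesis~\eqref{eq:cplx-rel-entr-bridging-complexity-measures--assmpt-E-F} then supplies an operation $\mathcal{F}$ obeying the three stated properties, and I would set $Q_2\coloneqq\mathcal{F}^\dagger(P)$. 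That $Q_2\in\Mr[f(r) ,\, (2)]$ follows at once from $P\in\Psimple$ together with $C^{(2)}(\mathcal{F})\le f\bm{(}C^{(1)}(\mathcal{E})\bm{)}\le f(r)$, where the last step uses monotonicity of $f$ and $C^{(1)}(\mathcal{E})\le r$.

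The two remaining checks transfer the adjoint back onto the arguments via $\tr\bm{(}\mathcal{G}^\dagger(P)X\bm{)}=\tr\bm{(}P\,\mathcal{G}(X)\bm{)}$. For the objective value, $\tr(Q_2\Gamma)=\tr\bm{(}P\,\mathcal{F}(\Gamma)\bm{)}\le\alpha\,\tr\bm{(}P\,\mathcal{E}(\Gamma)\bm{)}=\alpha\,\tr(Q_1\Gamma)$, where the inequality uses $\mathcal{F}(\Gamma)\le\alpha\,\mathcal{E}(\Gamma)$ together with the positivity of $P$, since the trace of a product of positive-semidefinite operators is non-negative. For the feasibility constraint, $\tr(Q_2\rho)=\tr\bm{(}P\,\mathcal{F}(\rho)\bm{)}$, and because $0\le P\le\Ident$ I would control the change under $\mathcal{E}\mapsto\mathcal{F}$ through the diamond distance using~\eqref{eq:dianorm-upper-bounds-onenorm}, obtaining $\tr\bm{(}P\,\mathcal{F}(\rho)\bm{)}\ge\tr\bm{(}P\,\mathcal{E}(\rho)\bm{)}-\tfrac12\dianorm{\mathcal{E}-\mathcal{F}}\ge\eta-\epsilon$. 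The hypothesis $\eta>\epsilon$ guarantees $\eta-\epsilon>0$, so $Q_2$ is a genuine candidate for the second infimum. Chaining these estimates gives $I_2\le\tr(Q_2\Gamma)\le\alpha\,\tr(Q_1\Gamma)\le\alpha(I_1+\zeta)$, and letting $\zeta\to0^+$ yields $I_2\le\alpha\,I_1$, which is equivalent to~\eqref{eq:cplx-rel-entr-bridging-complexity-measures}.

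The main obstacle is the sharpness of the feasibility estimate $\tr\bm{(}P\,\mathcal{F}(\rho)\bm{)}\ge\eta-\epsilon$: one must ensure that passing from a diamond-norm bound on $\mathcal{E}-\mathcal{F}$ to an estimate on a single POVM-effect expectation produces the constant $\epsilon$ rather than $2\epsilon$. For equal-trace images this is immediate from the effect-distinguishability characterization of the trace distance, since the positive and negative parts of $\mathcal{E}(\rho)-\mathcal{F}(\rho)$ then have equal trace and $\tfrac12\onenorm{\mathcal{E}(\rho)-\mathcal{F}(\rho)}\le\tfrac12\dianorm{\mathcal{E}-\mathcal{F}}\le\epsilon$. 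For genuinely trace-decreasing operations one would instead verify that the relevant elementary operations preserve $\tr(\rho)$—as the thermal and Gibbs-preserving maps do—so that the trace-defect correction vanishes and the constant $\epsilon$ is retained.
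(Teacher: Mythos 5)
Your proposal is correct and takes essentially the same route as the paper's proof: pick a near-optimal $Q_1=\mathcal{E}^\dagger(P)$, set $Q_2\coloneqq\mathcal{F}^\dagger(P)$, get $Q_2\in\Mr[f(r) ,\, (2)]$ from $C^{(2)}(\mathcal{F})\leq f(C^{(1)}(\mathcal{E}))\leq f(r)$ by monotonicity of $f$, bound the objective via $\mathcal{F}(\Gamma)\leq\alpha\,\mathcal{E}(\Gamma)$, and secure feasibility via $\tr(P\,\mathcal{F}(\rho))\geq\tr(P\,\mathcal{E}(\rho))-\tfrac12\dianorm{\mathcal{E}-\mathcal{F}}\geq\eta-\epsilon$ before letting $\zeta\to0^+$. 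Even your closing concern matches the paper's treatment, which obtains the same feasibility estimate by citing Corollary~9.1.1 of Ref.~\cite{BookWilde2013QIT} for the step from the trace-norm bound to the effect-expectation bound.
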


\begin{corollary}[Reduced complexity entropies under two unitary computational models]
  \label{thm:cplx-entr-bridging-unitary-complexity-measures}
  Let $\rho$ denote any subnormalized state.
  Let $\Psimple$ denote any set of simple POVM effects (\cref{defn:Psimple-general}), and let $C^{(1)}$ and $C^{(2)}$ denote unitary-complexity measures (\cref{defn:unitary-complexity}).
  For each $i =1,2$, let $\big\{ \Mr[r ,\, (i)] \coloneqq \Mr[r] \big( \Psimple,C^{(i)} \big) \big\}$ denote the family of POVM-effect-complexity sets defined by~\eqref{eq:Mr-general-from-Psimple-and-superop-cplx-measure}.
  Let $r \geq 0$, $\epsilon \in [0,1]$, and $\eta \in (\epsilon, 1]$.
  Assume there exists a monotonically increasing function $f : \mathbb{R}_+ \to \mathbb{R}_+$ with the following property:
  for every unitary $U$ satisfying $C^{(1)}(U) < \infty$, there exists a unitary $V$ such that
  \begin{align}
      \norm{U - V}
      \leq \epsilon
      \quad \text{and} \quad
      C^{(2)}(V)
      \leq f \bm{(} C^{(1)}(U) \bm{)} \ .
  \end{align}
  It holds that
  \begin{align}
      \Hhypr[f(r)][\eta-\epsilon][(2)]{\rho}
      \leq \Hhypr[r][\eta][(1)]{\rho}\ .
      \label{eq:cplx-entr-bridging-unitary-complexity-measures}
  \end{align}
  The reduced complexity entropies $\Hhypr[r][\eta][(1)]{}{}$ and $\Hhypr[f(r)][\eta-\epsilon][(2)]{}{}$ are defined with respect to $\Mr[r ,\, (1)]$ and $\Mr[f(r) ,\, (2)]$, respectively.
\end{corollary}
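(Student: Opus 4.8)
The plan is to derive \cref{thm:cplx-entr-bridging-unitary-complexity-measures} as the special case $\Gamma = \Ident$, $\alpha = 1$ of \cref{thm:cplx-rel-entr-bridging-complexity-measures}, together with the defining relation $\Hhypr[r][\eta]{\rho} = -\Dhypr[r][\eta]{\rho}{\Ident}$ (\cref{defn:Hhypr}). First I would check that the corollary's hypotheses furnish exactly the hypotheses required by the proposition. Since $C^{(1)}$ is a unitary-complexity measure (\cref{defn:unitary-complexity}), it assigns a finite value only to unitary operations; hence every operation $\mathcal{E}$ with $C^{(1)}(\mathcal{E}) < \infty$ is of the form $\mathcal{U}(\cdot) = U(\cdot)U^\dagger$ for some unitary $U$ with $C^{(1)}(U) < \infty$. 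For such a $U$, the corollary's assumption supplies a unitary $V$ obeying $\opnorm{U - V} \leq \epsilon$ and $C^{(2)}(V) \leq f \bm{(} C^{(1)}(U) \bm{)}$; I would set $\mathcal{F} = \mathcal{V}$, where $\mathcal{V}(\cdot) = V(\cdot)V^\dagger$.

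Next I would verify the three conditions in~\eqref{eq:cplx-rel-entr-bridging-complexity-measures--assmpt-E-F} with $\Gamma = \Ident$ and $\alpha = 1$. The diamond-norm condition follows from \cref{thm:diamond-norm-unitary-to-ident}: by the bound displayed immediately after that lemma, $\frac12 \dianorm{\mathcal{U} - \mathcal{V}} = \frac12 \dianorm{\mathcal{V}^\dagger \mathcal{U} - \IdentProc{}} \leq \opnorm{V^\dagger U - \Ident} = \opnorm{U - V} \leq \epsilon$. The second condition holds with $\alpha = 1$ precisely because $\Gamma = \Ident$: since $V$ and $U$ are unitary, $\mathcal{V}(\Ident) = V V^\dagger = \Ident = U U^\dagger = \mathcal{U}(\Ident)$, so $\mathcal{V}(\Ident) \leq \mathcal{U}(\Ident)$. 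The third condition is immediate from the corollary's hypothesis and the shorthand $C^{(2)}(\mathcal{V}) = C^{(2)}(V)$: $C^{(2)}(\mathcal{V}) \leq f \bm{(} C^{(1)}(U) \bm{)} = f \bm{(} C^{(1)}(\mathcal{U}) \bm{)}$.

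With all hypotheses in place, I would apply \cref{thm:cplx-rel-entr-bridging-complexity-measures} with $\Gamma = \Ident$ and $\alpha = 1$. Because $\log(\alpha) = \log(1) = 0$, its conclusion~\eqref{eq:cplx-rel-entr-bridging-complexity-measures} reads $\Dhypr[f(r)][\eta-\epsilon][(2)]{\rho}{\Ident} \geq \Dhypr[r][\eta][(1)]{\rho}{\Ident}$. Negating both sides and invoking the definition of the reduced complexity entropy (\cref{defn:Hhypr}) yields~\eqref{eq:cplx-entr-bridging-unitary-complexity-measures}. I do not expect a genuine obstacle here: the content is entirely in the proposition, and the corollary only requires translating the operator-norm closeness of unitaries into diamond-norm closeness of channels (handled by \cref{thm:diamond-norm-unitary-to-ident}) and observing that $\Gamma = \Ident$ collapses the multiplicative factor to $\alpha = 1$. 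The one point meriting care is that a single monotone $f$ must serve all finite-complexity operations simultaneously; this is automatic, since the corollary already posits such an $f$ for all unitaries and these are the only finite-complexity operations under $C^{(1)}$.
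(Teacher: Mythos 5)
Your proposal is correct and matches the paper's own proof essentially step for step: both derive the corollary from \cref{thm:cplx-rel-entr-bridging-complexity-measures} with $\Gamma = \Ident$, $\alpha = 1$, identify finite-$C^{(1)}$-complexity operations as unitary conjugations, set $\mathcal{F}(\cdot) = V(\cdot)V^\dagger$, and use \cref{thm:diamond-norm-unitary-to-ident} to convert $\opnorm{U-V} \leq \epsilon$ into the diamond-norm condition. The only cosmetic difference is that you spell out the final negation step from $\Dhypr$ to $\Hhypr$, which the paper leaves implicit.
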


\begin{proof}[*thm:cplx-rel-entr-bridging-complexity-measures]
  Consider any $\zeta>0$.
  There exists a $Q \in \Mr[r ,\, (1)]$ such that $\Dhypr[r][\eta][(1)]{\rho}{\Gamma} \leq -\log \bm{(} \tr`(Q\Gamma) \bm{)} + \zeta$.
  $Q = \mathcal{E}^\dagger(P)$ for some effect $P \in \Psimple$ and for some operation $\mathcal{E}$ satisfying $C^{(1)}(\mathcal{E})\leq r$.
  By assumption, there exists an operation $\mathcal{F}$ satisfying the conditions~\eqref{eq:cplx-rel-entr-bridging-complexity-measures--assmpt-E-F}.
  Hence, $C^{(2)}(\mathcal{F}) \leq f \bm{(} C^{(1)}(\mathcal{E}) \bm{)} \leq f(r)$.
  The second inequality follows because $f$ is monotonically increasing.
  Let $Q' \coloneqq \mathcal{F}^\dagger(P) \in \Mr[f(r) ,\, (2)]$.
  It holds that
  \begin{align}
    \tr`(Q'\rho)
    = \tr \bm{(} P \mathcal{F}(\rho) \bm{)}
    \geq \tr \bm{(} P \mathcal{E}(\rho) \bm{)} - \epsilon
    = \tr`(Q\rho) - \epsilon
    \geq \eta - \epsilon \ .
  \end{align}
  The first inequality follows because $\tr \bm{(} P \mathcal{F}(\rho) \bm{)} \geq \tr \bm{(} P \mathcal{E}(\rho) \bm{)} - \frac12\onenorm{\mathcal{E}(\rho) - \mathcal{F}(\rho)}$, by Corollary 9.1.1 of Ref.~\cite{BookWilde2013QIT}, and because $\frac12\dianorm{\mathcal{E} - \mathcal{F}} \leq \epsilon$ implies that $\frac12\onenorm{\mathcal{E}(\rho) - \mathcal{F}(\rho)} \leq \epsilon$, by~\eqref{eq:dianorm-upper-bounds-onenorm}.
  Therefore, $Q'$ is a candidate for the $\Dhypr[f(r)][\eta-\epsilon][(2)]{\rho}{\Gamma}$ optimization.
  Moreover, $\tr`(Q'\Gamma) = \tr \bm{(} P \mathcal{F}(\Gamma) \bm{)} \leq \tr \bm{(} P `*[\alpha \mathcal{E}(\Gamma)] \bm{)} = \alpha \tr`( Q\Gamma )$, so
  \begin{align}
    \Dhypr[f(r)][\eta-\epsilon][(2)]{\rho}{\Gamma}
    \geq -\log \bm{(} \tr`( Q'\Gamma ) \bm{)}
    \geq - \log \bm{(} \tr`(Q\Gamma) \bm{)} -\log`(\alpha)
    \geq \Dhypr[r][\eta][(1)]{\rho}{\Gamma} -\log`(\alpha) - \zeta \ ,
  \end{align}
  which implies~\eqref{eq:cplx-rel-entr-bridging-complexity-measures}, since $\zeta$ is arbitrary.
\end{proof}

\begin{proof}[*thm:cplx-entr-bridging-unitary-complexity-measures]
  \eqref{eq:cplx-entr-bridging-unitary-complexity-measures} follows from applying \cref{thm:cplx-rel-entr-bridging-complexity-measures} when $\Gamma = \Ident$ and $\alpha = 1$.
  We need to show only that the conditions~\eqref{eq:cplx-rel-entr-bridging-complexity-measures--assmpt-E-F} hold in this case.
  Let $\Gamma = \Ident$ and $\alpha = 1$.
  Consider any operation $\mathcal{E}$ satisfying $C^{(1)}(\mathcal{E}) < \infty$.
  $\mathcal{E}$ is a unitary operation, since $C^{(1)}$ is a unitary-complexity measure.
  Hence, $\mathcal{E}(\cdot) = U`(\cdot) U^\dagger$ for some unitary operator $U$ satisfying $C^{(1)}(U) < \infty$.
  By assumption, there exists a unitary operator $V$ such that $\opnorm{U - V}\leq \epsilon$ and $C^{(2)}(V) \leq f \bm{(} C^{(1)}(U) \bm{)}$.
  Let $\mathcal{F}(\cdot) \coloneqq V`(\cdot) V^\dagger$.
  By \cref{thm:diamond-norm-unitary-to-ident}, $\frac12 \dianorm{\mathcal{E} - \mathcal{F}} \leq \opnorm{U - V} \leq \epsilon$.
  Also, $\mathcal{F}(\Gamma) \leq \alpha \mathcal{E}(\Gamma)$, since $\mathcal{F}(\Ident) = \Ident = \mathcal{E}(\Ident)$.
  Last, $C^{(2)}(\mathcal{F}) \leq f \bm{(} C^{(1)}(\mathcal{E}) \bm{)}$ is a restatement of $C^{(2)}(V) \leq f \bm{(} C^{(1)}(U) \bm{)}$.
\end{proof}

\subsection{Complexity conditional entropy}
\label{appx-topic:complexity-conditional-entropy}

We here define the \emph{complexity conditional entropy} in terms of the complexity relative entropy, just as one defines one-shot conditional entropies in terms of one-shot relative entropies~\cite{BookTomamichel2016_Finite}.
In \cref{sec:decoupling}, the complexity conditional entropy quantifies the maximally mixed qubits one can decouple from a reference system.

\begin{definition}[Complexity conditional entropy]
  \label{defn:complexity-conditional-entropy}
  Let $A$ and $B$ denote distinct quantum systems.
  Let $\rho_{AB}$ denote any subnormalized state of $AB$.
  Let $r \geq 0$ and $\eta \in (0,\tr(\rho)]$.
  The \emph{complexity conditional entropy} of $A$ conditioned on $B$ is
  \begin{align}
    \HHyprc[r][\eta][\rho]{A}[B] \coloneqq -\DHypr[r][\eta]{\rho_{AB}}{\Ident_A\otimes\rho_B}\ .
  \end{align}
\end{definition}

\begin{proposition}[General bounds for the complexity conditional entropy]
  \label{thm:genbounds-cplx-cond-entropy}
  Let $A$ and $B$ denote distinct quantum systems.
  Let $\rho_{AB}$ denote any subnormalized state of $AB$.
  Let $r \geq 0$ and $\eta \in (0,\tr(\rho)]$.
  It holds that
  \begin{align}
    -\log(d_A) \leq \HHyprc[r][\eta][\rho]{A}[B] \leq \log(d_A) \ .
    \label{eq:genbounds-cplx-cond-entropy--HHyprc}
  \end{align}
  Equivalently,
  \begin{align}
    0 \leq \DHypr[r][\eta]{\rho_{AB}}{\pi_A \otimes \rho_B} \leq 2 \log(d_A) \ .
    \label{eq:genbounds-cplx-cond-entropy--DHypr}
  \end{align}
\end{proposition}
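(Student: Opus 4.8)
The plan is to prove \cref{thm:genbounds-cplx-cond-entropy} by reducing the two claimed bounds to the general bounds on the complexity relative entropy established in \cref{thm:DHypr-trivial-bounds}, together with the scaling property of \cref{thm:DHypr-scaling-2ndarg} and \cref{thm:state-partial-order-lemma}. The equivalence between~\eqref{eq:genbounds-cplx-cond-entropy--HHyprc} and~\eqref{eq:genbounds-cplx-cond-entropy--DHypr} is immediate: since $\pi_A = \Ident_A/d_A$, the scaling property gives $\DHypr[r][\eta]{\rho_{AB}}{\pi_A \otimes \rho_B} = \DHypr[r][\eta]{\rho_{AB}}{\Ident_A \otimes \rho_B} + \log(d_A) = -\HHyprc[r][\eta][\rho]{A}[B] + \log(d_A)$, so the bound $0 \leq \DHypr[r][\eta]{\rho_{AB}}{\pi_A \otimes \rho_B} \leq 2\log(d_A)$ translates term-by-term into $-\log(d_A) \leq \HHyprc[r][\eta][\rho]{A}[B] \leq \log(d_A)$. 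Thus I need only establish one of the two displays; I will argue the bounds on $\DHypr[r][\eta]{\rho_{AB}}{\pi_A \otimes \rho_B}$ directly, as these follow most transparently from operator inequalities.

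For the lower bound $\DHypr[r][\eta]{\rho_{AB}}{\pi_A \otimes \rho_B} \geq 0$, I would invoke \cref{thm:DHypr-trivial-bounds}, specifically the nonnegativity remark following~\eqref{eq:thm-DHypr-trivial-bounds--lower-bound}: since $\rho_{AB}$ is normalized and $\pi_A \otimes \rho_B$ is a normalized state (hence $\tr(\pi_A \otimes \rho_B) = 1$), inequality~\eqref{eq:thm-DHypr-trivial-bounds--lower-bound} yields $\DHypr[r][\eta]{\rho_{AB}}{\pi_A \otimes \rho_B} \geq -\log(1) + \log(1) = 0$. (If $\rho_{AB}$ is subnormalized with trace $t \leq 1$, then $\rho_B$ has trace $t$ as well, so $\pi_A \otimes \rho_B$ has trace $t$, and the bound reads $\geq -\log(t) + \log(t) = 0$; the statement goes through regardless.)

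For the upper bound $\DHypr[r][\eta]{\rho_{AB}}{\pi_A \otimes \rho_B} \leq 2\log(d_A)$, the key is to compare $\rho_{AB}$ with $\pi_A \otimes \rho_B$ via \cref{thm:state-partial-order-lemma}, which states that $\frac{1}{d_A}\rho_{AB} \leq \Ident_A \otimes \rho_B$, i.e.\ $\rho_{AB} \leq d_A\,(\Ident_A \otimes \rho_B) = d_A^2\,(\pi_A \otimes \rho_B)$. I would then apply \cref{thm:DHypr-monotonic-under-ordering}: treating $\Gamma = d_A^2(\pi_A\otimes\rho_B)$ and $\Gamma' = \rho_{AB}$ (so that $\Gamma' \leq \Gamma$, with $\rho = \rho' = \rho_{AB}$), monotonicity combined with \cref{thm:DHypr-zero-same-args} gives $\DHypr[r][\eta]{\rho_{AB}}{\rho_{AB}} = 0 \leq \DHypr[r][\eta]{\rho_{AB}}{d_A^2(\pi_A\otimes\rho_B)}$. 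Finally, the scaling property \cref{thm:DHypr-scaling-2ndarg} with $a = d_A^2$ gives $\DHypr[r][\eta]{\rho_{AB}}{d_A^2(\pi_A\otimes\rho_B)} = \DHypr[r][\eta]{\rho_{AB}}{\pi_A\otimes\rho_B} - 2\log(d_A)$, whence $\DHypr[r][\eta]{\rho_{AB}}{\pi_A\otimes\rho_B} \leq 2\log(d_A)$, as required.

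The main obstacle is mostly a matter of bookkeeping rather than conceptual difficulty: I must apply \cref{thm:DHypr-monotonic-under-ordering} in the correct direction, recalling that the complexity relative entropy \emph{inverts} the operator order in its second argument (smaller $\Gamma$ gives larger $\DSym$), and I must chain the monotonicity and scaling lemmas without sign errors. A subtle point worth checking is that $\pi_A \otimes \rho_B$ is positive-\emph{definite} only if $\rho_B$ is; since the argument above uses only the order relation and scaling, it avoids reliance on full-rank hypotheses, so it applies to arbitrary (possibly rank-deficient) $\rho_B$. This is the cleaner route than trying to invoke~\eqref{eq:thm-DHypr-trivial-bounds--upper-bound} directly, which would require $\pi_A \otimes \rho_B$ to be positive-definite.
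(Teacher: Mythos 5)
Your proposal is correct and is essentially the paper's own proof: the paper likewise derives $\HHyprc[r][\eta][\rho]{A}[B] \leq \log(d_A)$ from~\eqref{eq:thm-DHypr-trivial-bounds--lower-bound} of \cref{thm:DHypr-trivial-bounds}, and $\HHyprc[r][\eta][\rho]{A}[B] \geq -\log(d_A)$ from \cref{thm:state-partial-order-lemma} combined with \cref{thm:DHypr-monotonic-under-ordering}, \cref{thm:DHypr-scaling-2ndarg}, and \cref{thm:DHypr-zero-same-args}; your choice to work in the $\pi_A$-normalized form rather than with $\Ident_A\otimes\rho_B$ is an immaterial rescaling, exactly the equivalence recorded in~\eqref{eq:alternative-form-of-HHyprc}. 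One slip to fix: your displayed monotonicity step is reversed. Applying \cref{thm:DHypr-monotonic-under-ordering} with $\rho = \rho' = \rho_{AB}$ and $\Gamma' = \rho_{AB} \leq \Gamma = d_A^2\,(\pi_A\otimes\rho_B)$ yields $\DHypr[r][\eta]{\rho_{AB}}{d_A^2\,(\pi_A\otimes\rho_B)} \leq \DHypr[r][\eta]{\rho_{AB}}{\rho_{AB}} = 0$, not ``$0 \leq \DHypr[r][\eta]{\rho_{AB}}{d_A^2\,(\pi_A\otimes\rho_B)}$'' as you wrote---the latter is false in general (take $\rho_{AB} = \pi_A\otimes\rho_B$, for which the quantity equals $-2\log(d_A)$). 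Your stated conclusion $\DHypr[r][\eta]{\rho_{AB}}{\pi_A\otimes\rho_B} \leq 2\log(d_A)$ follows only from the corrected direction; since you explicitly note that the second argument inverts the operator order, this reads as a transcription error rather than a gap in the argument.
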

\begin{proof}[**thm:genbounds-cplx-cond-entropy]
  $\HHyprc[r][\eta][\rho]{A}[B] \leq \log(d_A)$ follows from applying~\eqref{eq:thm-DHypr-trivial-bounds--lower-bound} in \cref{thm:DHypr-trivial-bounds} with $\Gamma = \Ident_A\otimes\rho_B$:
  \begin{align}
    -\DHypr[r][\eta]{\rho_{AB}}{\Ident_A\otimes\rho_B}
    \leq \log \bm{(} \tr`(\Ident_A\otimes\rho_B) \bm{)} - \log \bm{(} \tr`(\rho_{AB}) \bm{)}
    = \log \bm{(} d_A \tr`(\rho_B) \bm{)} - \log \bm{(} \tr`(\rho_B) \bm{)}
    = \log(d_A) \ .
  \end{align}
  The first equality follows because $\tr(\rho_{AB}) = \tr_B \bm{(} \tr_A(\rho_{AB}) \bm{)} = \tr(\rho_B)$.
  $\HHyprc[r][\eta][\rho]{A}[B] \geq -\log(d_A)$ follows from applying \cref{thm:DHypr-monotonic-under-ordering} to the inequality $d_A^{-1}\rho_{AB} \leq \Ident_A\otimes\rho_B$ (\cref{thm:state-partial-order-lemma}):
  \begin{align}
    -\DHypr[r][\eta]{\rho_{AB}}{\Ident_A\otimes\rho_B}
    \geq -\DHypr[r][\eta]{\rho_{AB}}{d_A^{-1}\rho_{AB}}
    = -\DHypr[r][\eta]{\rho_{AB}}{\rho_{AB}} - \log(d_A)
    = -\log(d_A)\ .
  \end{align}
  The first equality follows from \cref{thm:DHypr-scaling-2ndarg} and the second equality from \cref{thm:DHypr-zero-same-args}.

  Last,~\eqref{eq:genbounds-cplx-cond-entropy--HHyprc} and~\eqref{eq:genbounds-cplx-cond-entropy--DHypr} are equivalent because \cref{thm:DHypr-scaling-2ndarg} implies that $\HHyprc[r][\eta][\rho]{A}[B] = \log(d_A) - \DHypr[r][\eta]{\rho_{AB}}{\pi_A\otimes\rho_B}$:
  \begin{align}
      \HHyprc[r][\eta][\rho]{A}[B]
      = -\DHypr[r][\eta]{\rho_{AB}}{\Ident_A\otimes\rho_B}
      = -\DHypr[r][\eta]{\rho_{AB}}{d_A\pi_A\otimes\rho_B}
      = \log(d_A) - \DHypr[r][\eta]{\rho_{AB}}{\pi_A\otimes\rho_B} \ .
      \label{eq:alternative-form-of-HHyprc}
  \end{align}
\end{proof}

\begin{proposition}[Strong subadditivity of the complexity conditional entropy]
  \label{thm:cond-cplx-entropy-strong-subadditivity}
  Let $A$, $B$, and $C$ denote distinct quantum systems.
  Let $\rho_{ABC}$ denote any subnormalized state of $ABC$.
  Let $r \geq 0$ and $\eta \in (0,\tr(\rho)]$.
  It holds that
  \begin{align}
    \HHyprc[r][\eta][\rho]{A}[BC]
    \leq \HHyprc[r][\eta][\rho]{A}[B]\ .
  \end{align}
\end{proposition}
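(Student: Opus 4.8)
The plan is to reduce the claimed strong subadditivity of the complexity conditional entropy to the monotonicity of the complexity relative entropy under partial traces (\cref{thm:cplxrelentropy-monotonicity-partial-trace}). Unfolding \cref{defn:complexity-conditional-entropy}, the inequality $\HHyprc[r][\eta][\rho]{A}[BC] \leq \HHyprc[r][\eta][\rho]{A}[B]$ is equivalent to
\begin{align}
  \DHypr[r][\eta]{\rho_{ABC}}{\Ident_A \otimes \rho_{BC}}
  \geq \DHypr[r][\eta]{\rho_{AB}}{\Ident_A \otimes \rho_B}\ ,
\end{align}
where $\rho_{AB} = \tr_C`(\rho_{ABC})$, $\rho_{BC} = \tr_A`(\rho_{ABC})$, and $\rho_B = \tr_{AC}`(\rho_{ABC})$.

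First I would identify the correct partial trace to apply. \cref{thm:cplxrelentropy-monotonicity-partial-trace} states that tracing out a subsystem can only decrease the complexity relative entropy: $\DHypr[r][\eta]{\sigma_{XY}}{\Gamma_{XY}} \geq \DHypr[r][\eta]{\sigma_X}{\Gamma_X}$ with $\sigma_X = \tr_Y`(\sigma_{XY})$ and $\Gamma_X = \tr_Y`(\Gamma_{XY})$. Here the natural choice is to set $X = AB$ and $Y = C$, so that the first argument $\rho_{ABC}$ traces down to $\rho_{AB}$. The key step is then to check that the second argument behaves correctly under this same partial trace: I need $\tr_C`*(\Ident_A \otimes \rho_{BC}) = \Ident_A \otimes \rho_B$. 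Since $\tr_C`*(\Ident_A \otimes \rho_{BC}) = \Ident_A \otimes \tr_C`(\rho_{BC}) = \Ident_A \otimes \rho_B$, this indeed matches the desired second argument, using $\tr_C`(\rho_{BC}) = \tr_C`*(\tr_A`(\rho_{ABC})) = \tr_{AC}`(\rho_{ABC}) = \rho_B$. Applying \cref{thm:cplxrelentropy-monotonicity-partial-trace} with these identifications immediately yields the displayed relative-entropy inequality, and hence the claim by negating both sides.

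The only subtlety to verify — and the step I expect to require the most care — is confirming that the hypotheses of \cref{thm:cplxrelentropy-monotonicity-partial-trace} are met, namely that $\Ident_A \otimes \rho_{BC}$ is a legitimate positive-semidefinite operator (it is, being a tensor product of positive-semidefinite operators) and that the parameter $\eta$ stays within the admissible range $(0, \tr`(\rho_{ABC})]$ after tracing out $C$. Because the trace is preserved under the construction of the second argument and because $\tr`(\rho_{AB}) = \tr`(\rho_{ABC})$, the constraint $\eta \in (0, \tr(\rho)]$ is consistent for both entropies, so no adjustment of $\eta$ is needed. Thus the proof is essentially a one-line application of the partial-trace monotonicity, and the remaining work is purely bookkeeping to align the reduced states of the second arguments.
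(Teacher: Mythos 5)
Your proof is correct and matches the paper's own argument: both reduce the claim to the monotonicity of the complexity relative entropy under partial traces (\cref{thm:cplxrelentropy-monotonicity-partial-trace}), tracing out $C$ so that $\Ident_A\otimes\rho_{BC}$ reduces to $\Ident_A\otimes\rho_B$. Your extra verification of the hypotheses (positivity of the second argument and the admissible range of $\eta$, using $\tr`(\rho_{AB})=\tr`(\rho_{ABC})$) is sound bookkeeping that the paper leaves implicit.
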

\begin{proof}[**thm:cond-cplx-entropy-strong-subadditivity]
  By \cref{thm:cplxrelentropy-monotonicity-partial-trace}, the complexity relative entropy never increases under a partial trace, so
  \begin{align}
    \HHyprc[r][\eta][\rho]{A}[BC]
    = -\DHypr[r][\eta]{\rho_{ABC}}{\Ident_A\otimes\rho_{BC}}
    \leq -\DHypr[r][\eta]{\rho_{AB}}{\Ident_A\otimes\rho_{B}}
    = \HHyprc[r][\eta][\rho]{A}[B] \ .
  \end{align}
\end{proof}

\section{Thermodynamic erasure of qubits governed by a product Hamiltonian}
\label{appx-topic:erasure-nontrivial-Ham}
\label{appx-topic:erasure-exact-expression-reduced-Hhypr} 

Here, we generalize the arguments in \cref{sec:main-erasure-Wcost-degenerate-H} to prove \cref{thm:not-necessarily-degenerate-Hamiltonians} (\cref{sec:erasure-nontrivial-H}).
Recall the theorem's setting.
We consider a system of $n$ noninteracting qubits.
Qubit $i$ evolves under a Hamiltonian $H_i$ that has zero ground-state energy, $H_i\ket{0}_i =0$.
We fix an inverse temperature $\beta>0$.
Qubit $i$ has the Gibbs-weight operator $\Gamma_i = \ee^{-\beta H_i}$; and the $n$-qubit system, the Gibbs-weight operator $\Gamma = \bigotimes_{i=1}^n \Gamma_i = \ee^{-\beta \sum_i H_i}$.

Let $\mathcal{T}$ denote a set of elementary computations on the $n$-qubit system.
Each computation in $\mathcal{T}$ is a completely positive, trace-preserving map that sends $\Gamma$ to itself and costs no work to implement.
After implementing elementary computations, one can apply \textsc{reset} operations to a selection $\mathcal{W} \subset `{1, 2, \ldots, n}$ of qubits.
Each \textsc{reset} operation $\mathcal{E}_{\textsc{reset},i}$ initializes qubit $i$ in the state $\ket0_i$ and costs an amount of work $W_{\textsc{reset},i} = \beta^{-1} \log \bm{(} \tr`(\Gamma_i) \bm{)} \geq 0$.
Consider a general protocol $\mathcal{E}$ consisting of $r \geq 0$ elementary computations followed by \textsc{reset} operations on the qubits of $\mathcal{W}$.
$\mathcal{E}$ has the form $\mathcal{E} = `*( \textstyle\prod_{i\in\mathcal{W}} \mathcal{E}_{\textsc{RESET},i} ) \mathcal{E}_r\cdots\mathcal{E}_2\mathcal{E}_1$.
Each $\mathcal{E}_i$ is in $\mathcal{T}$, and the work cost $W(\mathcal{E}) = \sum_{i \in \mathcal{W}} W_{\textsc{reset},i} = \beta^{-1} \log \bm{(} \prod_{i \in \mathcal{W}}   \tr`(\Gamma_i) \bm{)}$.

\begin{theorem}
  \label{thm:erasure-nontrivial-Ham-Wrstar}
  Let $\rho$ denote any quantum state of $n$ qubits.
  Let $\Psimple$ denote the set of simple POVM effects (\cref{defn:Psimple-general}) defined in~\eqref{eq:setting-defn-Mrzero}.
  Let $C_{\mathcal{T}}$ denote the circuit-complexity measure associated with $\mathcal{T}$ (\cref{defn:circuit-complexity-measure}).
  Let $r \geq 0$ denote an integer, and let $\eta \in (0,1]$.
  Let $\Mr[r] = \Mr[r]`*(\Psimple,C_{\mathcal{T}})$ denote the POVM-effect-complexity set defined in~\eqref{eq:Mr-general-from-Psimple-and-superop-cplx-measure}.
  Let
  \begin{align}
      W_r^* \coloneqq 
      \min `*{
        W(\mathcal{E}) \,:\ 
        \mathcal{E} =
        `*( \textstyle\prod_{i\in\mathcal{W}}\mathcal{E}_{\textsc{RESET},i} ) \mathcal{E}_r\cdots\mathcal{E}_2\mathcal{E}_1 \,,\; 
      \mathcal{W} \subset `{1, 2, \ldots, n} \,,\; 
        F^2 \bm{(} \mathcal{E}(\rho),\proj{0^n} \bm{)} \geq \eta
        } \ .
    \label{eq:erasure-nontrivial-Ham-Wrstar---Wrstar}
  \end{align}
  Each $\mathcal{E}_i$ is in $\mathcal{T}$, and $F$ denotes the fidelity between quantum states.
  It holds that
  \begin{align}
    \beta W_r^* = -\Dhypr[r][\eta][\mathcal{T}]{\rho}{\Gamma}\ ,
    \label{eq:erasure-nontrivial-Ham-Wrstar---Dhypr}
  \end{align}
  wherein the reduced complexity relative entropy $\Dhypr[r][\eta][\mathcal{T}]{}{}$ (\cref{defn:Dhypr}) is defined with respect to $\Mr[r]$.
  Consequently,
  \begin{align}
    - \DHypr[r][\eta][\mathcal{T}]{\rho}{\Gamma} -\log `*( \frac1\eta )
    \leq \beta W_r^*
    \leq -\DHypr[r][\eta][\mathcal{T}]{\rho}{\Gamma} \ .
    \label{eq:erasure-nontrivial-Ham-Wrstar---DHypr}
  \end{align}
\end{theorem}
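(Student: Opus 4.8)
The plan is to establish the central equality~\eqref{eq:erasure-nontrivial-Ham-Wrstar---Dhypr} by exhibiting an explicit, two-way correspondence between erasure protocols of the form~\eqref{eq:erasure-nontrivial-Ham-Wrstar---Wrstar} and candidate POVM effects $Q \in \Mr[r]$ for the reduced complexity relative entropy $\Dhypr[r][\eta][\mathcal{T}]{\rho}{\Gamma}$. This mirrors the degenerate-Hamiltonian argument of \cref{sec:main-erasure-Wcost-degenerate-H-simpleprotocol}, but now the Gibbs-weight operator $\Gamma$ replaces the identity, and the reduced complexity relative entropy~\eqref{eq:defn-Dhypr} replaces the reduced complexity entropy~\eqref{eq:defn-Hhypr}. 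Once~\eqref{eq:erasure-nontrivial-Ham-Wrstar---Dhypr} is in hand, the bounds~\eqref{eq:erasure-nontrivial-Ham-Wrstar---DHypr} follow immediately from \cref{thm:relation-Dhypr-DHypr}, which controls the gap between $\Dhypr[r][\eta][\mathcal{T}]{}{}$ and $\DHypr[r][\eta][\mathcal{T}]{}{}$ by $\log(1/\eta)$.

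\textbf{The forward direction.} First I would take an optimal protocol $\mathcal{E} = `*( \textstyle\prod_{i\in\mathcal{W}}\mathcal{E}_{\textsc{RESET},i} ) \mathcal{E}_r\cdots\mathcal{E}_1$ achieving $W(\mathcal{E}) = W_r^*$, and define the projector $P \coloneqq \proj{0^{n-w}}_{\mathcal{W}^{\mathrm c}} \otimes \Ident_{\mathcal{W}}$ together with $Q \coloneqq \mathcal{E}_1^\dagger\cdots\mathcal{E}_r^\dagger(P) \in \Mr[r]$, exactly as in~\eqref{eq:main-erasure-degenerateH-candidate-operator-Q}. The fidelity constraint gives $\tr(Q\rho) = F^2\bm{(}\mathcal{E}(\rho),\proj{0^n}\bm{)} \geq \eta$, so $Q$ is a valid candidate. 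The key new computation is $\tr(Q\Gamma)$: since each $\mathcal{E}_i$ preserves $\Gamma$ by assumption~\eqref{eq:main-erasure-nontrivialHi-condition-Gibbssubpres}, one has $\mathcal{E}_r\cdots\mathcal{E}_1(\Gamma) = \Gamma = \bigotimes_i \Gamma_i$, whence $\tr(Q\Gamma) = \tr(P\Gamma) = \prod_{i\in\mathcal{W}}\tr(\Gamma_i)$, because the $\proj{0}$ factors contribute $\bra{0}\Gamma_i\ket{0} = 1$ (using $H_i\ket{0}_i = 0$) and the untraced factors on $\mathcal{W}$ contribute $\tr(\Gamma_i)$. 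Taking logarithms and recalling $W(\mathcal{E}) = \beta^{-1}\log\bm{(}\prod_{i\in\mathcal{W}}\tr(\Gamma_i)\bm{)}$ yields $\beta W_r^* = \log\bm{(}\tr(Q\Gamma)\bm{)} \geq -\Dhypr[r][\eta][\mathcal{T}]{\rho}{\Gamma}$ by the definition~\eqref{eq:defn-Dhypr}.

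\textbf{The reverse direction} runs the construction backward. Given any (near-)optimal $Q \in \Mr[r]$ for $\Dhypr[r][\eta][\mathcal{T}]{\rho}{\Gamma}$, write $Q = \mathcal{E}_1^\dagger\cdots\mathcal{E}_r^\dagger(P)$ with $P = \proj{0^{n-w}}_{\mathcal{W}^{\mathrm c}} \otimes \Ident_{\mathcal{W}}$, and build the erasure protocol $\mathcal{E} \coloneqq `*(\textstyle\prod_{i\in\mathcal{W}}\mathcal{E}_{\textsc{RESET},i}) \mathcal{E}_r\cdots\mathcal{E}_1$; the same trace identities show $\mathcal{E}$ satisfies the fidelity constraint and has work cost $\beta W(\mathcal{E}) = \log\bm{(}\tr(Q\Gamma)\bm{)}$, giving $\beta W_r^* \leq -\Dhypr[r][\eta][\mathcal{T}]{\rho}{\Gamma}$. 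Combining both directions proves~\eqref{eq:erasure-nontrivial-Ham-Wrstar---Dhypr}. The main subtlety—\textbf{the part needing care}—is that the structure of $\Mr[r] = \Mr[r](\Psimple, C_{\mathcal{T}})$ forces every candidate $P$ to be a tensor product of $\proj{0}$'s and $\Ident_2$'s drawn from~\eqref{eq:setting-defn-Mrzero}, which is precisely what makes the bijection with \textsc{reset}-subset protocols exact; I must verify that $C_{\mathcal{T}}(\mathcal{E}_r\cdots\mathcal{E}_1) \leq r$ translates faithfully into ``$r$ elementary computations,'' and that the nontrivial $\Gamma_i$ weights enter only through the untraced factors. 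The remaining step is purely mechanical: feed~\eqref{eq:erasure-nontrivial-Ham-Wrstar---Dhypr} into the two inequalities $0 \leq \Dhypr[r][\eta][\mathcal{T}]{\rho}{\Gamma} - \DHypr[r][\eta][\mathcal{T}]{\rho}{\Gamma} \leq \log(1/\eta)$ from \cref{thm:relation-Dhypr-DHypr} to obtain~\eqref{eq:erasure-nontrivial-Ham-Wrstar---DHypr}.
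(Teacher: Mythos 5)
Your proposal is correct and follows essentially the same route as the paper's own proof: the same candidate effect $Q = \mathcal{E}_1^\dagger\cdots\mathcal{E}_r^\dagger(P)$ built from an optimal protocol, the same use of the $\Gamma$-preservation condition~\eqref{eq:main-erasure-nontrivialHi-condition-Gibbssubpres} together with $\Gamma_i\ket{0}=\ket{0}$ to get $\tr(Q\Gamma)=\prod_{i\in\mathcal{W}}\tr(\Gamma_i)=\ee^{\beta W(\mathcal{E})}$, and the same reverse construction from a near-optimal $Q$ (the paper handles near-optimality with an explicit $\zeta>0$ and limit, which your parenthetical ``(near-)optimal'' should be read as). The final passage to~\eqref{eq:erasure-nontrivial-Ham-Wrstar---DHypr} via \cref{thm:relation-Dhypr-DHypr} is likewise exactly the paper's step.
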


Consider the special case where each Hamiltonian $H_i$ is degenerate: $H_i = 0$.
In this case, $\Gamma = \Ident$.
By~\eqref{eq:erasure-nontrivial-Ham-Wrstar---Dhypr}, $\beta W_r^*$ equals the reduced complexity entropy (\cref{defn:Hhypr}): $\beta W_r^* = -\Dhypr[r][\eta][\mathcal{T}]{\rho}{\Ident} = \Hhypr[r][\eta][\mathcal{T}]{\rho}$. 
If, furthermore, $\mathcal{T}$ contains only unitary operations, then~\eqref{eq:erasure-nontrivial-Ham-Wrstar---DHypr} yields the bounds in \cref{mainthm:erasure-work-cost-complexity-entropy-simple} (\cref{sec:main-erasure-Wcost-degenerate-H}).

\begin{proof}[*thm:erasure-nontrivial-Ham-Wrstar]
  We first prove that $\beta W_r^* \geq -\Dhypr[r][\eta][\mathcal{T}]{\rho}{\Gamma}$.
  Consider any operation $\mathcal{E}$ that achieves the minimum in~\eqref{eq:erasure-nontrivial-Ham-Wrstar---Wrstar} and thus satisfies $ W(\mathcal{E}) = W_r^*$.
  $\mathcal{E} = `*(\textstyle\prod_{i\in\mathcal{W}} \mathcal{E}_{\textsc{RESET},i}) \mathcal{E}_r\cdots\mathcal{E}_2\mathcal{E}_1$ for some subset $\mathcal{W} \subset `{1, 2, \ldots, n}$ and some operations $\mathcal{E}_i \in \mathcal{T}$.
  Moreover, $F^2 \bm{(} \mathcal{E}(\rho),\proj{0^n} \bm{)} \geq \eta$.
  Let $P \coloneqq \Ident_{\mathcal{W}} \otimes \proj{ 0^{ n - \abs{\mathcal{W}} } }_{\mathcal{W}^{\rm c}} \in \Psimple$, wherein $\mathcal{W}^{\rm c}$ denotes the set complement of $\mathcal{W}$.
  Let $Q \coloneqq \mathcal{E}_1^\dagger \mathcal{E}_2^\dagger \cdots \mathcal{E}_r^{\dagger} `(P) \in \Mr[r]$.
  It holds that
  \begin{align}
    \tr`(Q\rho)
    &= \tr \bm{(} P\,\mathcal{E}_r\cdots\mathcal{E}_2\mathcal{E}_1(\rho) \bm{)}
    = \tr`*{ \proj{0^{n-\abs{\mathcal{W}}}}_{\mathcal{W}^{\rm c} } \tr_{\mathcal{W}} \bm{(} \mathcal{E}_r\cdots\mathcal{E}_2\mathcal{E}_1(\rho) \bm{)} }
    \nonumber \\
    &= \tr \bm{(} \proj{0^n}\,\mathcal{E}(\rho) \bm{)}
    = F^2 \bm{(} \mathcal{E}(\rho),\proj{0^n} \bm{)} \ .
    \label{eq:erasure-nontrivial-Ham-Wrstar---candidate-constraint}
  \end{align}
  The third equality follows because $\mathcal{E}(\rho) = \proj{ 0^{ \abs{\mathcal{W}} } }_{\mathcal{W}} \otimes \tr_{\mathcal{W}} \bm{(} \mathcal{E}_r\cdots\mathcal{E}_2\mathcal{E}_1 (\rho) \bm{)}$.
  Hence, $\tr(Q\rho) \geq \eta$, so $Q$ is a candidate for the $\DHypr[r][\eta][\mathcal{T}]{\rho}{\Gamma}$ optimization.
  Furthermore,
  \begin{align}
    \tr`(Q\Gamma)
    = \tr \bm{(} P\,\mathcal{E}_r\cdots\mathcal{E}_1(\Gamma) \bm{)}
    = \tr`(P \Gamma)
    =\tr`*( `\Bigg[ \prod_{i \in \mathcal{W}} \Gamma_i ] \otimes \proj{0^{n-\abs{\mathcal{W}}}}_{\mathcal{W}^{\rm c} } )
    = \prod_{i \in \mathcal{W}} \tr`(\Gamma_i)
    = \ee^{ \beta W(\mathcal{E}) } \ .
    \label{eq:erasure-nontrivial-Ham-Wrstar---candidate-value}
  \end{align}
  The second equality follows because $\mathcal{E}_r\cdots\mathcal{E}_2\mathcal{E}_1(\Gamma) = \Gamma$, since each $\mathcal{E}_i$ satisfies $\mathcal{E}_i(\Gamma) = \Gamma$.
  Thus, $\beta W_r^* = \beta W(\mathcal{E}) = \log \bm{(} \tr`(Q\Gamma) \bm{)} \geq -\Dhypr[r][\eta][\mathcal{T}]{\rho}{\Gamma}$.

  We now prove that $\beta W_r^* \leq -\Dhypr[r][\eta][\mathcal{T}]{\rho}{\Gamma}$.
  Consider any $\zeta>0$.
  There exists a $Q \in \Mr[r]$ such that $\tr`(Q\rho)\geq \eta$ and $\log \bm{(} \tr`(Q\Gamma) \bm{)} \leq -\Dhypr[r][\eta][\mathcal{T}]{\rho}{\Gamma} + \zeta$.
  Since $\Mr[r] = \Mr[r]`*(\Psimple,C_{\mathcal{T}})$, $Q = \mathcal{E}_1^\dagger \mathcal{E}_2^\dagger \cdots \mathcal{E}_r^\dagger ( P )$ for some operations $\mathcal{E}_i \in \mathcal{T}$ and for some effect $P \in \Psimple$, with $P = \Ident_{\mathcal{W}} \otimes \proj{ 0^{ n-\abs{\mathcal{W}} } }_{\mathcal{W}^{\rm c} }$ for some subset $\mathcal{W} \subset `{ 1, 2, \ldots, n }$.
  Let $\mathcal{E} \coloneqq `*(\textstyle\prod_{i\in\mathcal{W}}\mathcal{E}_{\textsc{RESET},i}) \mathcal{E}_r\cdots\mathcal{E}_2\mathcal{E}_1$.
  As per~\eqref{eq:erasure-nontrivial-Ham-Wrstar---candidate-constraint}, $F^2 \bm{(} \mathcal{E}(\rho),\proj{0^n} \bm{)} = \tr`(Q\rho) \geq \eta$, so $\mathcal{E}$ is a candidate for the $W_r^*$ optimization in~\eqref{eq:erasure-nontrivial-Ham-Wrstar---Wrstar}.
  As per~\eqref{eq:erasure-nontrivial-Ham-Wrstar---candidate-value}, $\tr`(Q\Gamma) = \ee^{\beta W(\mathcal{E})}$, so $\beta W_r^* \leq \beta W(\mathcal{E}) = \log \bm{(} \tr`(Q\Gamma) \bm{)} \leq -\Dhypr[r][\eta][\mathcal{T}]{\rho}{\Gamma} + \zeta$.
  Thus, $\beta W_r^* \leq -\Dhypr[r][\eta][\mathcal{T}]{\rho}{\Gamma}$, since $\zeta$ is arbitrary.
  Finally,~\eqref{eq:erasure-nontrivial-Ham-Wrstar---DHypr} follows from substituting~\eqref{eq:erasure-nontrivial-Ham-Wrstar---Dhypr} into~\eqref{eq:relation-Dhypr-DHypr}.
\end{proof}

\section{Evolution of the complexity entropy under random circuits}
\label{appx-topic:complexity-entropy-random-circuits}

Here, we prove the bound~\eqref{eq:mainthm-complexity-entropy-random-circuits-lower-bound} to complete the proof of \cref{mainthm:ComplexityEntropyInRandomCircuits} (\cref{sec-topic:main-random-circuits}), which describes the complexity entropy's evolution under random circuits.
Random circuits are often used as proxies for Hamiltonian quantum chaotic dynamics.
We adapt the proof of Theorem 8 in Ref.~\cite{Brandao2021PRXQ_models}; the theorem describes the strong complexity's growth under random circuits.

We consider a system of $n \geq 2$ qubits.
Let $G$ denote any set of two-qubit unitary gates: $G \subset \SU(4)$.
Let $`*{ \Mr[r][G] }$ denote the family of POVM-effect-complexity sets (\cref{defn:Mr-sets-general}) defined by~\eqref{eq:setting-defn-Mr}, wherein the computational gate set $\mathcal{G}$ equals the set of gates on $n$ qubits, in $G$, with arbitrary (fixed) connectivity.

Let $\mathtt{E} = `{ p_j, U_j }$ denote a random ensemble of unitary operators.
Each unitary $U_j \in \mathrm{U}(2^n)$ is chosen with the probability $p_j$.
For each integer $k>0$, we define the \emph{$k$-twirling superoperator} associated with $\mathtt{E}$ as
\begin{align}
  \mathcal{M}_{\mathtt{E}}^{(k)}(\cdot) \coloneqq \sum_j p_j \, U_j^{\otimes k} \, (\cdot) \, U_j^{\dagger\,\otimes k} \ .
\end{align}
The $k$-twirling superoperator
associated with the Haar measure is
\begin{align}
  \mathcal{M}_{\mathrm{Haar}}^{(k)}(\cdot) \coloneqq
  \int dU\, U^{\otimes k} \, (\cdot) \, U^{\dagger\,\otimes k} \ .
\end{align}
$dU$ denotes the Haar measure on the unitary group $\mathrm{U}(2^n)$.
If $\mathcal{M}_{\mathtt{E}}^{(k)}$ and $\mathcal{M}_{\mathrm{Haar}}^{(k)}$ are sufficiently close in diamond distance, $\mathtt{E}$ is an approximate $k$-design.

\begin{definition}[Approximate unitary $k$-design]
  \label{defn:approx-k-design}
  Let $\epsilon\geq 0$, and let $k>0$ denote an integer.
  An ensemble $\mathtt{E} = `{ p_j, U_j }$ is an \emph{$\epsilon$-approximate unitary $k$-design} if
  \begin{align}
    \dianorm[\big]{ \mathcal{M}_{\mathtt{E}}^{(k)} - \mathcal{M}_{\mathrm{Haar}}^{(k)} }
    \leq \frac{\epsilon}{2^{nk}} \ .
    \label{eq:defn-approx-k-design}
  \end{align}
\end{definition}

We employ the definition of an approximate $k$-design used in Ref.~\cite{Haferkamp2022Q_random}; this definition differs from that used in Ref.~\cite{Brandao2021PRXQ_models}.
In Ref.~\cite{Brandao2021PRXQ_models}, $`*( k!/2^{2nk} ) \, \epsilon$ replaces the upper bound in~\eqref{eq:defn-approx-k-design}.
Therefore, an $\epsilon$-approximate $k$-design, according to \cref{defn:approx-k-design}, is a $`*(2^{nk}\epsilon /k!)$-approximate $k$-design, according to Ref.~\cite{Brandao2021PRXQ_models}.

Following Ref.~\cite{Brandao2021PRXQ_models}, we rely on the fact that random circuits generate approximate $k$-designs~\cite{Brandao2016CMP_local,Haferkamp2022Q_random,%
  Haferkamp2021PRA_improved,chen2024incompressibility}.
In particular, we utilize the following result, which shows that certain random circuits of a depth $t>0$ are approximate $k$-designs, with $k \sim t$.

\begin{theorem}[\protect\cite{chen2024incompressibility}]
  \label{thm:Haferkamp2022Q_random}
  \noproofref
  Consider an $n$-qubit system.
  Let $\tilde\epsilon>0$ and $c\geq 0$ be independent of $n$.
  Let $V$ denote a circuit of a depth $t > 0$, with staggered layers of nearest-neighbor two-qubit gates (the ``brickwork'' layout).
  Suppose each of the circuit's gates is chosen at random from the Haar measure on $\SU(4)$.
  Then $V$ is a $`\big(2^{-cnk}\tilde\epsilon)$-approximate $k$-design, with
  \begin{align}
    k = \min`*{ 2^{n/2 - O`*(\sqrt{n})} \ ,\ \frac{ t }{\poly(n)}  } \ .
    \label{eq:thm-Haferkamp2022Q_random-k-expression}
  \end{align}
\end{theorem}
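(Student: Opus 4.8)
The plan is to reduce the approximate-$k$-design property to a spectral-gap estimate for the moment operator of one brickwork period, and then to convert that gap into the claimed depth via an elementary exponential-decay argument. Recall that $V$ is an approximate $k$-design exactly when $\dianorm{\mathcal{M}_{\mathtt{E}}^{(k)} - \mathcal{M}_{\mathrm{Haar}}^{(k)}}$ is small [cf.\ \cref{defn:approx-k-design}]. First I would pass to the ``doubled'' space and replace the twirling channels by the moment operator $M_{\mathtt{E}} \coloneqq \mathbb{E}_{U}[U^{\otimes k}\otimes\bar U^{\otimes k}]$ acting on $(\mathbb{C}^{2^n})^{\otimes 2k}$; by Schur--Weyl duality the Haar moment operator $M_{\mathrm{Haar}}$ is the orthogonal projector onto the span of the permutation operators $\{ W_\pi : \pi\in S_k\}$, and the design condition becomes an operator-norm bound $\opnorm{M_{\mathtt{E}} - M_{\mathrm{Haar}}} \leq \delta$. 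Since a depth-$t$ brickwork circuit is a product of independent layers, $M_{\mathtt{E}}$ factorizes as a power $M_{\mathrm{period}}^{t/2}$ of the moment operator of one two-layer period, so it suffices to control the second-largest singular value $\lambda_2(M_{\mathrm{period}})$: writing $g \coloneqq 1 - \lambda_2(M_{\mathrm{period}})$, one gets $\opnorm{M_{\mathtt{E}} - M_{\mathrm{Haar}}} \leq (1-g)^{t/2}$, which falls below $2^{-cnk}\tilde\epsilon$ once $t \gtrsim g^{-1}(cnk + \log(1/\tilde\epsilon))$.

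The second step is to lower-bound the one-period gap $g$ by $1/\poly(n)$ \emph{uniformly in} $k$. A period consists of an even and an odd layer, each a tensor product of independent two-qudit moment operators $m^{(k)}$ supported on disjoint neighbouring pairs, and together the two staggered layers cover every nearest-neighbour bond of the chain. I would bound the gap of this geometrically local product of (near-)projectors by a ``local-gap-implies-global-gap'' argument --- a detectability-lemma estimate or a Nachtergaele martingale bound --- which shows that the global gap is at least $g_{\mathrm{loc}}/\poly(n)$, where $g_{\mathrm{loc}}$ is the gap of the single two-qudit operator $m^{(k)}$. The frustration-free, commuting-within-a-layer structure is exactly what makes this local-to-global passage cost only a polynomial factor in $n$ rather than an exponential one.

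The main obstacle, and the technical heart of Ref.~\cite{chen2024incompressibility}, is the base case: showing that the two-qudit gap $g_{\mathrm{loc}}$ remains bounded below by a constant \emph{independent of $k$}, rather than degrading with $k$. The operator $m^{(k)}$ for Haar-random $\SU(4)$ gates can be studied through the Gram matrix of the permutation operators $\{W_\pi\}_{\pi\in S_k}$ on the two-qudit space; its conditioning, and hence the separation between the top eigenspace and the remainder of the spectrum, is well controlled only while $k$ is small compared with the effective local dimension. This is the origin of the cutoff $k \leq 2^{n/2 - O(\sqrt n)}$ in~\eqref{eq:thm-Haferkamp2022Q_random-k-expression}, below which the $k!$ permutation operators remain linearly independent and well-conditioned. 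The ``incompressibility'' input is precisely the statement that the local gap does not close as $k$ approaches this threshold, so no extra polynomial-in-$k$ loss appears. Combining the uniform local gap with the local-to-global bound yields $g \geq 1/\poly(n)$, and substituting into the decay estimate of the first step reproduces the design order $k = \min\{ 2^{n/2 - O(\sqrt n)}, \, t/\poly(n)\}$ claimed in~\eqref{eq:thm-Haferkamp2022Q_random-k-expression}.
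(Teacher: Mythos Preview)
The paper does not prove this theorem at all: it is stated with \verb|\noproofref| and attributed to Ref.~\cite{chen2024incompressibility}. What the paper does provide is only the short paragraph immediately following the statement, which takes as a black box a depth bound of the form $t_0(k)=Cn^3[2nk+\log_2(1/\epsilon)]$ (valid for $k\leq 2^{n/2-\sqrt{n}-2}$), substitutes $\epsilon=2^{-cnk}\tilde\epsilon$, observes that $t_0(k)\leq\poly(n)\,k$, and reads off the expression~\eqref{eq:thm-Haferkamp2022Q_random-k-expression}. That is the entirety of the paper's contribution here: a parameter substitution showing how the quoted external theorem yields the specific statement used downstream.

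Your proposal instead attempts to sketch the \emph{proof of the cited external result itself}, which is a fundamentally different (and much larger) task than what the paper undertakes. As a high-level outline of how random-circuit design results are proved, your first two steps---passing to the moment operator, using the layered product structure, and doing a local-to-global gap argument---are broadly in the right spirit. However, your explanation of the origin of the cutoff $k\leq 2^{n/2-O(\sqrt n)}$ is off: you attribute it to the conditioning of the Gram matrix of permutation operators $\{W_\pi\}_{\pi\in S_k}$ on the \emph{two-qudit} space, but on $(\mathbb C^4)^{\otimes k}$ those operators already fail to be linearly independent once $k>4$, so this cannot be where an exponential-in-$n$ threshold comes from. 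The $2^{n/2}$ scale is tied to the full $n$-qubit Hilbert space, not to the local gate. More importantly, for the purposes of this paper there is no need to open that box at all---the correct ``proof'' here is simply the citation plus the parameter-matching paragraph.
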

\noindent
The dependence of $k$ on $\tilde\epsilon$ is hidden in the coefficient implied by the $\poly(n)$ notation.

We briefly show how condition~\eqref{eq:thm-Haferkamp2022Q_random-k-expression} arises from the results in Ref.~\cite{chen2024incompressibility}.
Theorem~1.5 in Ref.~\cite{Haferkamp2022Q_random} implies the following statement.
Let $0 < k \leq 2^{n/2-\sqrt{n}-2}$ and let $\epsilon>0$.
Then there exists $C>0$ such that depth-$t$ brickwork random circuits form an $\epsilon$-approximate $k$-design whenever
\begin{align}
    t \geq C n^3
  [ 2nk + \log_2(1/\epsilon) ]  \eqqcolon t_0(k) \ .
\end{align}
Setting $\epsilon=2^{-cnk}\tilde\epsilon$, with $\tilde\epsilon\in(0,1)$ a constant of $n$, and using $k\leq 2^{n/2}$, we find that
\begin{align}
    t_0(k) \leq \poly(n)\, k \ .
    \label{eq:oipwfuhjsajknlsfamdnlfosaji}
\end{align}
Consider now the setting of \cref{thm:Haferkamp2022Q_random}.
The $k$ specified in~\eqref{eq:thm-Haferkamp2022Q_random-k-expression} implies $t \geq t_0(k)$, ensuring that Corollary~1 in Ref.~\cite{Haferkamp2022Q_random} applies.

The bound~\eqref{eq:mainthm-complexity-entropy-random-circuits-lower-bound}---and hence \cref{mainthm:ComplexityEntropyInRandomCircuits}---follows from \cref{thm:Haferkamp2022Q_random} and the following proposition.

\begin{proposition}[Complexity-entropy lower bound with an approximate $k$-design (simple)]
  \label{thm:ComplexityEntropyLowerBoundApproximateKDesignSimple}
  Let $\ket{\psi_0}$ denote any pure $n$-qubit state.
  Let $\tilde\epsilon\in(0,1)$.
  Assume that $k$ is even.
  Let $V$ denote a circuit sampled from a $`\big(2^{-nk}\tilde\epsilon)$-approximate $k$-design.
  Let $r \geq 0$ and $\eta \in (0,1]$.
  Suppose that
  \begin{align}
    r \leq \frac{ k `*[ \frac{n}{2} - \log_2(k) - \log_2(4/\eta) ]
    - `*{ 3n + \log_2(n) `*[ 1 + \log_2(4/\eta) ] } }
    { c_1 `*[ 2n + \log_2`(2/\eta) ]^4 + \log_2`(n+1) + 1 } \ ,
    \label{eq:thm-ComplexityEntropyLowerBoundApproximateKDesignSimple-condition-r}
  \end{align}
  wherein $c_1>0$ denotes a constant independent of $n$.
  Then
  \begin{align}
    \HHypr[r][\eta] `*{ V \psi_0 V^\dagger }
    \geq  \Hhypr[r][\eta] `*{ V \psi_0 V^\dagger }
    \geq n\log(2) - \log`*( \frac{4}{\eta} ) \ ,
    \label{eq:thm-ComplexityEntropyLowerBoundApproximateKDesignSimple-condition-main-bound}
  \end{align}
  except with a probability $\leq \ee^{-\Omega(n)}$ over the sampling of $V$.
  Here, $\HHypr[r][\eta]{}$ is defined with respect to $\Mr[r][\SU(4)]$.
\end{proposition}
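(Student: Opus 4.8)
The plan is to prove the two inequalities of~\eqref{eq:thm-ComplexityEntropyLowerBoundApproximateKDesignSimple-condition-main-bound} separately. The first, $\HHypr[r][\eta]{V\psi_0 V^\dagger} \geq \Hhypr[r][\eta]{V\psi_0 V^\dagger}$, is immediate from \cref{thm:relation-Dhypr-DHypr}, which gives $\HHypr[r][\eta]{\rho} - \Hhypr[r][\eta]{\rho} \geq 0$. The entire content therefore lies in lower-bounding the reduced complexity entropy. Writing $m_0 \coloneqq n\log(2) - \log(4/\eta)$, so that $e^{m_0} = 2^n\eta/4$, I would unpack \cref{defn:Hhypr} and note that $\Hhypr[r][\eta]{V\psi_0V^\dagger}\geq m_0$ is equivalent to the statement that every $Q\in\Mr[r][\SU(4)]$ with $\tr(Q) < 2^n\eta/4$ satisfies $\tr(Q\,V\psi_0V^\dagger) < \eta$; equivalently, via the complexity success probability of \cref{defn:cplx-entropy-success-probability}, it suffices to show $\etahypr[r][m_0]{V\psi_0V^\dagger}\leq\eta$ with the claimed probability. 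Throughout I would use that each candidate effect has the form $Q = U^\dagger P U$, with $U$ a size-$\leq r$ circuit of gates in $\SU(4)$ and $P\in\Mr[0]$ a tensor product of $\proj0$'s and $\Ident_2$'s; hence $Q$ is a projector of rank $\tr(Q) = \tr(P)\in\{2^0,\dots,2^n\}$.

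Next I would establish a per-effect concentration bound. Fix such a projector $Q$ with $\tr(Q)\leq 2^n\eta/4$, and set $X_Q \coloneqq \tr(Q\,V\psi_0V^\dagger)$, a random variable over the sampling of $V$ whose Haar mean is $\mu_Q = \tr(Q)/2^n \leq \eta/4$. I would compute the (even) $k$-th moment $\mathbb{E}_V[X_Q^k] = \tr(Q^{\otimes k}\mathcal{M}^{(k)}_{\mathtt{E}}(\psi_0^{\otimes k}))$ and replace the twirl $\mathcal{M}^{(k)}_{\mathtt E}$ by the Haar twirl $\mathcal{M}^{(k)}_{\mathrm{Haar}}$, bounding the error by $\opnorm{Q^{\otimes k}}\,\dianorm{\mathcal{M}^{(k)}_{\mathtt E}-\mathcal{M}^{(k)}_{\mathrm{Haar}}}\leq 2^{-2nk}\tilde\epsilon$ via \cref{defn:approx-k-design} and~\eqref{eq:dianorm-upper-bounds-onenorm}; this error is negligible. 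Under the Haar measure $X_Q$ is $\mathrm{Beta}(\tr(Q),\,2^n-\tr(Q))$-distributed (equivalently, I would evaluate $\tr(Q^{\otimes k}\Pi_{\mathrm{sym}}) = \binom{\tr(Q)+k-1}{k}$ against the symmetric-subspace dimension), with variance $\sim\mu_Q/2^n$, so the centered $k$-th moment scales like $(\mathrm{poly}(k)\,2^{-n/2})^{k}$. Since the margin $\eta-\mu_Q\geq 3\eta/4$ is a constant while the fluctuation scale is $2^{-n/2}$, Markov's inequality applied to $(X_Q-\mu_Q)^k$ yields a per-effect tail bound of the form $\Pr_V[X_Q\geq\eta]\leq 2^{-k[\,n/2-\log_2 k - O(\log_2(1/\eta))\,]}$.

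I would then promote this to a uniform bound by a net and union bound over all candidate effects. The discrete data of $Q$---the choice of $P\in\Mr[0]$ and the connectivity and placement of the $r$ gates---contribute at most $2^n(n+1)^r$ possibilities, while the continuous $\SU(4)^r$ gate parameters are handled by a $\delta$-net. The Lipschitz estimate $\lvert X_Q-X_{Q'}\rvert\leq\opnorm{Q-Q'}\leq 2r\delta$ (telescoping the product of unitaries) lets me choose $\delta$ a small multiple of $\eta/r$, so that every $Q$ lies within margin $\eta/2$ of some net element; this costs $(\mathrm{const}/\delta)^{15r}$ further points. Absorbing the per-gate net-cardinality exponent into the constant $c_1$, the hypothesis~\eqref{eq:thm-ComplexityEntropyLowerBoundApproximateKDesignSimple-condition-r} on $r$ is exactly what forces $\log_2(\text{net size}) = O(n + r\,\mathrm{poly}(n,\log(1/\eta)))$ to be at most half the per-effect exponent $k[n/2-\log_2 k-O(\log(1/\eta))]$. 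The union bound then makes the total failure probability at most $2^{-\Omega(k[n/2-\cdots])}\leq e^{-\Omega(n)}$, and on the complementary event $\etahypr[r][m_0]{V\psi_0V^\dagger}\leq\eta$, which gives~\eqref{eq:thm-ComplexityEntropyLowerBoundApproximateKDesignSimple-condition-main-bound}.

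The conceptual crux---and the step producing the $n/2$ in the numerator of~\eqref{eq:thm-ComplexityEntropyLowerBoundApproximateKDesignSimple-condition-r}---is the \emph{centered} moment bound: it is essential to center $X_Q$ at $\mu_Q$ and exploit the $2^{-n/2}$ fluctuation scale, since bounding the raw moment would give a per-effect factor constant in $n$ and a far weaker admissible range of $r$. The main technical obstacle is the quantitative bookkeeping: deriving the precise Haar central-moment estimate (via Weingarten calculus or $\mathrm{Beta}$-moment bounds, matching the $\log_2 k$ and $\log_2(4/\eta)$ terms) and bounding the net cardinality tightly enough that the union-bound exponent reproduces the exact denominator of~\eqref{eq:thm-ComplexityEntropyLowerBoundApproximateKDesignSimple-condition-r}. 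For these estimates I would follow and adapt the corresponding argument in the proof of Theorem~8 of Ref.~\cite{Brandao2021PRXQ_models}.
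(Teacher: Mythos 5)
Your proposal is correct in substance, and its core engine coincides with the paper's: reduce the statement to a bound on the complexity success probability (\cref{defn:cplx-entropy-success-probability}), center $X_Q = \tr(Q\,V\psi_0V^\dagger)$ at its Haar mean $\tr(Q)/2^n$, bound the even $k$-th centered moment using the approximate-design property, and finish with Markov plus a union bound; the paper imports the moment bound wholesale as Corollary~24 of Ref.~\cite{Brandao2021PRXQ_models} (applied to the traceless part $Q - 2^{m-n}\Ident$), and your Beta-moment-plus-diamond-norm reconstruction, with design error $2^{-2nk}\tilde\epsilon \ll (k/2^{n/2})^k$, is exactly how that corollary works. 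You also correctly identify the centering as the source of the $n/2$ in the numerator, and the first inequality of~\eqref{eq:thm-ComplexityEntropyLowerBoundApproximateKDesignSimple-condition-main-bound} does follow from \cref{thm:relation-Dhypr-DHypr} as you say. Where you genuinely diverge is the treatment of the continuum: the paper never runs a union bound over $\SU(4)^r$. It instead invokes the Solovay--Kitaev theorem to replace $\SU(4)$ by a finite gate set $G$ forming an $(\epsilon'/4^n)$-net with $\epsilon'=\eta/2$ and $\log_2(|G|) \leq c_1[2n+\log_2(2/\eta)]^4 + 1$, proves the finite-$G$ lemma (\cref{thm:ComplexityEntropyLowerBoundApproximateKDesignPrecise}) by counting $N_r \leq 2^{n+1}(n+1)^r |G|^r$ circuits, and then transfers back to $\SU(4)$ via the monotonicity chain at the cost $\eta \to \eta - \epsilon'$ (in the spirit of \cref{thm:cplx-entr-bridging-unitary-complexity-measures}). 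Your direct $\delta$-net over $\SU(4)^r$ with the Lipschitz estimate $\opnorm{Q - Q'} \leq 2r\delta$ is a legitimate substitute, and it actually buys something: a per-gate entropy cost of order $\log_2(r/\eta) + \log_2(n) = O(n + \log_2(1/\eta))$ (since $\log_2(r) \lesssim n/2$ here), rather than the quartic Solovay--Kitaev overhead, hence a nominally wider admissible range of $r$.

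One claim you should retract, though it does not endanger the conclusion: your net bookkeeping will not ``reproduce the exact denominator'' of~\eqref{eq:thm-ComplexityEntropyLowerBoundApproximateKDesignSimple-condition-r}. The term $c_1[2n+\log_2(2/\eta)]^4$ is a Solovay--Kitaev artifact---in the paper $c_1 = c_{G_0}\log_2(|G_0|)$, with the fourth power coming from the SK compilation length---and no metric-entropy count produces it. What your route yields is a \emph{weaker} sufficient condition on $r$; the stated hypothesis then implies yours a fortiori, because $\log_2(Cr/\eta) \leq O\bm{(}2n + \log_2(2/\eta)\bm{)}$, so your per-gate cost is dominated by $c_1[2n+\log_2(2/\eta)]^4 + \log_2(n+1) + 1$ for the paper's constant $c_1$ and $n\geq 2$. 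You should state this implication explicitly rather than claim exact matching, and likewise track the constant shifts from working at net points (margin $\eta/2$ against mean $\mu_Q \leq \eta/4$), which is where your analogue of the paper's choices $c=1/2$, $\bar\eta = \eta/2$, $g=1$ in the lemma enters the $\log_2(4/\eta)$ terms.
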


We specify the constant $c_1$ in the proof of \cref{thm:ComplexityEntropyLowerBoundApproximateKDesignSimple}.
To prove the proposition, we first use the Solovay-Kitaev theorem~\cite{%
    Dawson2005arXiv_SolovayKitaev,%
    BookNielsenChuang2000,%
    BookKitaevShenVyalyi2002,%
    Oszmaniec2021arXiv_epsilonnets}
to approximate $\SU(4)$ with a finite, universal gate set.
We then apply the following lemma, whose proof closely follows that of Theorem 8 of Ref.~\cite{Brandao2021PRXQ_models}.

\begin{lemma}[Gate-set complexity-entropy lower bound for approximate $k$-designs]
  \label{thm:ComplexityEntropyLowerBoundApproximateKDesignPrecise}
  Let $\ket{\psi_0}$ denote any pure $n$-qubit state.
  Let $\tilde\epsilon\in(0,1)$.
  Assume that $k$ is even.
  Let $V$ denote a circuit sampled from a $`\big(2^{-nk}\tilde\epsilon)$-approximate $k$-design.
  Let $r \geq 0$, $\bar\eta \in (0,1]$, $c \in (0,1)$, and $g > 0$.
  Suppose that
  \begin{align}
    r   &\leq \frac{1}{\log_2`*( [n+1] \abs{G} ) }
        `*(
            k `\bigg[ \frac{n}{2} - \log_2(k) - \log_2`*( \frac1{\bar\eta\,(1-c)} ) ]
            - `*{ (g+1)n + 2 + \log_2(n) \left[ 1 + \log_2`*(\frac1{c\bar\eta}) \right] }
        ) \ .
      \label{eq:thm-ComplexityEntropyLowerBoundApproximateKDesignPrecise-condition-r}
  \end{align}
  Then
  \begin{align}
    \Hhypr[r][\bar\eta][G] `*{ V\psi_0 V^\dagger }
    &\geq n\log(2) - \log`*(\frac1{c \bar\eta}) \ ,
    \label{eq:lower-bound--approx-k-design---precise}
  \end{align}
  except with a probability $\leq 2^{-gn}$ over the sampling of $V$.
  Here, the reduced complexity entropy $\Hhypr[r][\bar\eta][G]{}$ (\cref{defn:Hhypr}) is defined with respect to $\Mr[r][G]$, wherein $G$ is finite.
\end{lemma}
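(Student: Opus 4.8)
The plan is to reduce the claimed lower bound on $\Hhypr[r][\bar\eta][G]{V\psi_0 V^\dagger}$ to the nonexistence of a ``bad'' low-trace effect, and to rule out such an effect by a moment bound combined with a union bound over the finitely many candidates in $\Mr[r][G]$. By \cref{defn:Hhypr}, the reduced complexity entropy equals $\log$ of the smallest $\tr(Q)$ over $Q\in\Mr[r][G]$ with $\tr(Q\,V\psi_0 V^\dagger)\geq\bar\eta$. Hence the bound~\eqref{eq:lower-bound--approx-k-design---precise} is equivalent to the statement that every such $Q$ obeys $\tr(Q)\geq 2^n c\bar\eta$. Writing $\psi\coloneqq V\psi_0 V^\dagger$, I would call a candidate $Q$ \emph{bad} if $\tr(Q\psi)\geq\bar\eta$ yet $\tr(Q)<2^n c\bar\eta$, and show that, except with probability $\leq 2^{-gn}$ over $V$, no bad $Q$ exists. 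Since $G\subset\SU(4)$ consists of unitaries, each $Q\in\Mr[r][G]$ has the form $U^\dagger P U$ with $U$ unitary and $P\in\Mr[0]$ a tensor product of $\proj0$'s and $\Ident_2$'s [see~\eqref{eq:setting-defn-Mrzero}]; thus $Q$ is an orthogonal projector with $\tr(Q)=2^m$ for an integer $0\le m\le n$.

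For the moment bound, I would fix a bad $Q$ and pass to its traceless part $\tilde Q\coloneqq Q-2^{-n}\tr(Q)\,\Ident$, which is Hermitian, satisfies $\opnorm{\tilde Q}\le 1$, and has $\tr(\tilde Q^2)=\tr(Q)\,(1-2^{-n}\tr(Q))\le\tr(Q)$. Badness forces $\tr(\tilde Q\,\psi)=\tr(Q\psi)-2^{-n}\tr(Q)\ge\bar\eta(1-c)>0$, which explains the factor $(1-c)$ in the condition. Because $k$ is even, Markov's inequality gives $\Pr[\tr(\tilde Q\psi)\ge\bar\eta(1-c)]\le[\bar\eta(1-c)]^{-k}\,\mathbb{E}_V[(\tr(\tilde Q\psi))^k]$. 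I would write the $k$-th moment as $\mathbb{E}_V[(\tr(\tilde Q\psi))^k]=\tr(\tilde Q^{\otimes k}\,\mathcal{M}_{\mathtt{E}}^{(k)}(\psi_0^{\otimes k}))$; by \cref{defn:approx-k-design}, \eqref{eq:dianorm-upper-bounds-onenorm}, and $\opnorm{\tilde Q^{\otimes k}}\le1$, this differs from the Haar value $\tr(\tilde Q^{\otimes k}P_{\mathrm{sym}})/\binom{2^n+k-1}{k}$ by at most $2^{-2nk}\tilde\epsilon$, which is negligible. Expanding $P_{\mathrm{sym}}=\frac1{k!}\sum_{\pi\in S_k}W_\pi$ and using $\tr(\tilde Q^{\otimes k}W_\pi)=\prod_{\text{cycles }c\text{ of }\pi}\tr(\tilde Q^{\abs{c}})$, tracelessness ($\tr\tilde Q=0$) kills every permutation with a fixed point, while each surviving cycle has length $\ge2$ and obeys $\abs{\tr(\tilde Q^{\abs{c}})}\le\tr(\tilde Q^2)$. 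A fixed-point-free permutation of $S_k$ has at most $k/2$ cycles and there are at most $k!$ of them, so $\abs{\tr(\tilde Q^{\otimes k}P_{\mathrm{sym}})}\le\tr(\tilde Q^2)^{k/2}\le\tr(Q)^{k/2}$; with $\binom{2^n+k-1}{k}\ge 2^{nk}/k!$ and $\tr(Q)<2^n c\bar\eta$ this yields the per-effect bound $\Pr[\text{this }Q\text{ is bad}]\lesssim k!\,\tr(Q)^{k/2}\,2^{-nk}[\bar\eta(1-c)]^{-k}$.

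Finally I would union-bound over candidates, stratified by $m=\log_2\tr(Q)$. There are at most $\binom nm$ choices of $P$ with $\tr(P)=2^m$ and at most $[(n+1)\abs{G}]^r$ gate sequences (each of the $\le r$ gates being one of $\abs{G}$ gates in one of at most $n+1$ locations, an identity option absorbing shorter circuits), so at most $\binom nm[(n+1)\abs{G}]^r$ bad candidates of trace $2^m$. Multiplying by the per-effect bound and summing over $m$ with $2^m<2^n c\bar\eta$, the sum is geometric in $m$ with ratio $\gg1$ and is controlled by its top term at $m^\ast=n-\log_2\frac1{c\bar\eta}$; bounding $\binom n{m^\ast}=\binom n{\log_2(1/c\bar\eta)}\le n^{\log_2(1/c\bar\eta)}$ produces the $\log_2 n\,[1+\log_2\frac1{c\bar\eta}]$ contribution. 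Taking base-2 logarithms, collecting the $2^{nk}$ and $2^{m^\ast k/2}$ factors into the decisive exponent $-nk/2$, and using $\log_2 k!\le k\log_2 k$, I would find that hypothesis~\eqref{eq:thm-ComplexityEntropyLowerBoundApproximateKDesignPrecise-condition-r} is exactly what forces the total union bound to be $\le 2^{-gn}$; the constant $c_1$, the slack between $gn$ and $(g+1)n$, the extra $\log_2 n$, and the ``$+2$'' absorb the geometric-sum factor, the design correction, and the $\log_2 k!$ term. On the complementary event no bad $Q$ exists, so $\inf\tr(Q)\ge 2^n c\bar\eta$, which gives~\eqref{eq:lower-bound--approx-k-design---precise}.

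The main obstacle is the moment estimate of the middle paragraph. The $-nk/2$ in the exponent—the origin of the $\tfrac n2$ in~\eqref{eq:thm-ComplexityEntropyLowerBoundApproximateKDesignPrecise-condition-r}—appears only because centering $Q$ to the traceless $\tilde Q$ suppresses the symmetric-subspace trace from $\tr(Q)^k$ down to $\tr(Q)^{k/2}$; getting the combinatorial factor right (the fixed-point-free permutation count and the cycle-wise bound, both of which rely on $k$ being even) and checking that the approximate-design error stays negligible are the delicate points, whereas the union-bound bookkeeping, though lengthy, is routine.
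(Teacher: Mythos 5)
Your proposal follows essentially the same skeleton as the paper's proof (which adapts Theorem~8 of Ref.~\cite{Brandao2021PRXQ_models}): reduce \eqref{eq:lower-bound--approx-k-design---precise} to the nonexistence of a low-trace feasible effect, center $Q$ to its traceless part, use Markov with even $k$, and union-bound over the finitely many effects in $\Mr[r][G]$. The one genuine difference is the middle paragraph: where the paper cites Corollary~24 of Ref.~\cite{Brandao2021PRXQ_models} as a black box for $\mathbb{E}_V[\{\tr(\tilde Q\,V\psi_0V^\dagger)\}^k]\leq 2(k/2^{n/2})^k$, you reprove that moment bound inline via the symmetric-subspace expansion, and this part of your argument checks out: $\opnorm{\tilde Q}\le1$, the fixed-point-free permutation count $\le k!$, the cycle-wise bound $\abs{\tr(\tilde Q^{\abs c})}\le\tr(\tilde Q^2)\le\tr(Q)$ (note $\tr(\tilde Q^2)\ge1$ needs $m\ge1$, but $\max\{1,\tr(\tilde Q^2)\}^{k/2}\le\tr(Q)^{k/2}$ covers $m=0$), the estimate $\binom{2^n+k-1}{k}\ge 2^{nk}/k!$, and the additive design error $2^{-2nk}\tilde\epsilon$, which is indeed negligible against the Haar term. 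Your resulting per-effect bound $k!\,\tr(Q)^{k/2}2^{-nk}[\bar\eta(1-c)]^{-k}$ is in fact slightly stronger than the paper's, since $\tr(Q)^{k/2}\le(2^nc\bar\eta)^{k/2}$ carries an extra factor $(c\bar\eta)^{k/2}\le1$.

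The genuine gap is in the final union bound: your claim that the sum over strata $m$ is ``geometric with ratio $\gg1$'' and controlled by its top term $m^\ast$ is false in a regime where the lemma is non-vacuous. The per-stratum total scales like $\binom{n}{m}2^{mk/2}$, whose ratio between consecutive strata near the top is $\approx 2^{k/2}(n-m)/n$; since $n-m^\ast=\lceil\log_2(1/c\bar\eta)\rceil$ can be $O(1)$, dominance by the top term requires roughly $k\gtrsim 2\log_2 n$. But for constant even $k\ge 2(g+1)+O(1)$ and large $n$, condition~\eqref{eq:thm-ComplexityEntropyLowerBoundApproximateKDesignPrecise-condition-r} admits $r=\Omega(n/\log n)>0$, so the lemma has content there while your sum is dominated by interior strata $m\approx n\,2^{k/2}/(1+2^{k/2})<m^\ast$, and the $\binom{n}{m^\ast}\le n^{1+\log_2(1/c\bar\eta)}$ step---which is how you produce the $\log_2(n)[1+\log_2(1/c\bar\eta)]$ term in the condition---does not bound the sum. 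Two repairs exist. First, the paper's route avoids strata altogether: by \cref{defn:cplx-entropy-success-probability} and Eq.~\eqref{eq:cplx-success-probability--special-case}, one may restrict to $\log_2\tr(Q)=m^\ast$ exactly, because replacing $\proj0$ factors of $P$ by $\Ident_2$ enlarges $Q$ within $\Mr[r]$ without decreasing $\tr(Q\psi)$, so a bad effect at a lower stratum yields one at $m^\ast$. Second, within your accounting, bound the full sum by $\sum_{m\le m^\ast}\binom nm 2^{mk/2}\le(1+2^{k/2})^n\le 2^{nk/2}\,2^{n\log_2(1+2^{-k/2})}$ with $n\log_2(1+2^{-k/2})\le 0.59\,n$ for $k\ge2$; since your circuit count $[(n+1)\abs G]^r$ does not spend the paper's $2^{n+1}$ factor ($n+1$ bits of the $(g+1)n+2$ budget), this overhead is absorbed and the stated condition still suffices. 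As written, though, the dominance step would fail and must be replaced by one of these fixes.
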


\begin{proof}[**thm:ComplexityEntropyLowerBoundApproximateKDesignPrecise]
  Let
  \begin{align}
    m \coloneqq \biggl\lfloor n - \log_2`*(\frac1{c\bar\eta}) \biggr\rfloor
    = n - \biggl\lceil \log_2`*(\frac1{c\bar\eta}) \biggr\rceil \ .
    \label{eq:m-floor-of-lower-bound}
  \end{align}
  \eqref{eq:lower-bound--approx-k-design---precise} holds, except with the probability
  \begin{align}
      p_<
      &\coloneqq
        \Pr_V `*[ \Hhypr[r][\bar\eta][G] `*{ V\psi_0 V^\dagger } < n\log(2) - \log`*(\frac1{c \bar\eta}) ] 
        \nonumber \\
      &\leq
        \Pr_V `*[ \Hhypr[r][\bar\eta][G] `*{ V\psi_0 V^\dagger } \leq n\log(2) - \log`*(\frac1{c \bar\eta}) ]
        \nonumber \\
      &=
        \Pr_V `*[ \Hhypr[r][\bar\eta][G] `*{ V\psi_0 V^\dagger } \leq m \log (2) ]
        \nonumber \\
      &\leq
        \Pr_V `*[ \etahypr[r][m \log(2)][G] `*{ V\psi_0 V^\dagger } \geq \bar\eta ]
        \nonumber \\
      &=
        \Pr_V `*[ \max_{ \substack{ Q \in \Mr[r] \\ \log_2\bm{(} \tr`(Q) \bm{)} = m } }
        `*{ \tr`*(Q V\psi_0 V^\dagger) } \geq \bar\eta ]
        \nonumber \\
      &\leq
        \sum_{ \substack{ Q \in \Mr[r] \\ \log_2\bm{(} \tr`(Q) \bm{)} = m } }
        \Pr_V `*[ \tr`*(Q V\psi_0 V^\dagger) \geq \bar\eta ]
        \nonumber \\
      &=
        \sum_{ \substack{ Q \in \Mr[r] \\ \log_2\bm{(} \tr`(Q) \bm{)} = m } }
        \Pr_V `*[ \tr`*( `*[Q - 2^{m-n} \Ident] V\psi_0 V^\dagger) \geq \bar\eta - 2^{m-n} ]
        \nonumber \\
      &\leq
        \sum_{ \substack{ Q \in \Mr[r] \\ \log_2\bm{(} \tr`(Q) \bm{)} = m } }
        \Pr_V `*[ \abs`*{ \tr`*( `*[Q - 2^{m-n} \Ident] V\psi_0 V^\dagger) } \geq \abs`*{ (1-c)\bar\eta } ]
        \nonumber \\
      &=
        \sum_{ \substack{ Q \in \Mr[r] \\ \log_2\bm{(} \tr`(Q) \bm{)} = m } }
        \Pr_V `*[ `*{ \tr`*( `*[Q - 2^{m-n} \Ident] V\psi_0 V^\dagger) }^k \geq `*{ (1-c)\bar\eta }^k ]
        \nonumber \\
      &\leq
        \sum_{ \substack{ Q \in \Mr[r] \\ \log_2\bm{(} \tr`(Q) \bm{)} = m } }
        \frac{ \mathbb{E}_V `*[ `*{ \tr`*( [Q - 2^{m-n} \Ident] V \psi_0 V^\dagger ) }^k ] } { `*[ (1-c)\bar\eta ]^k }
        \nonumber \\
      &\leq
        \sum_{ \substack{ Q \in \Mr[r] \\ \log_2\bm{(} \tr`(Q) \bm{)} = m } } 2 `*( \frac{k}{ 2^{n/2} (1-c) \bar\eta } )^k
        \nonumber \\
      &=
        2 `*( \frac{k}{ 2^{n/2} (1-c) \bar\eta } )^k \abs`*{ `*{ Q \in \Mr[r] \,:\; \log_2\bm{(} \tr`(Q) \bm{)} = m } } \ .
  \end{align}
  The second equality follows because, by the form of $\Mr[r]$, $\Hhypr[r][\bar\eta] `*{ V\,\psi_0 V^\dagger }$ is an integer multiple of $\log(2)$.
  The second inequality follows from \cref{defn:cplx-entropy-success-probability}: $\Hhypr[r][\bar\eta] `*{ V\,\psi_0 V^\dagger } \leq m \log (2)$ implies that $\etahypr[r][m \log(2)]`*{ V\psi_0 V^\dagger } \geq \bar\eta$.
  The third equality follows from~\eqref{eq:cplx-success-probability--special-case}; here, the supremum is a maximum because $G$---and hence $\Mr[r][G]$---is finite.
  The third inequality is
  a crude union bound.
  The fourth equality introduces the traceless part of each $Q$: $Q - 2^{m-n} \Ident = Q - \tr`(Q) \cdot 2^{-n} \Ident$. 
  The fourth inequality follows because 
  \begin{equation}
  \abs`*{ \tr`*( `*[Q - 2^{m-n} \Ident] V\psi_0 V^\dagger) } \geq \tr`*( `*[Q - 2^{m-n} \Ident] V\psi_0 V^\dagger) 
  \end{equation}
  and because $\bar\eta - 2^{m-n} \geq (1-c)\bar\eta > 0$, since $2^{m-n} = 2^{-\lceil \log_2`*(1/c\bar\eta) \rceil} \leq 2^{-\log_2`*(1/c\bar\eta)} = c\bar\eta$.
  The fifth equality follows because, by assumption, $k$ is even.
  The fifth inequality follows from Markov's inequality: for any non-negative random variable $X$ and any $\tau > 0$, $\Pr[X \geq \tau] = \mathbb{E}[X]/\tau$.
  The last inequality follows from Corollary~24 of Ref.~\cite{Brandao2021PRXQ_models}: $\mathbb{E}_V `*[ `*{ \tr( [Q - 2^{m-n} \Ident] V \psi_0 V^\dagger) }^k ] \leq `*(1+\epsilon) `*( k / 2^{n/2})^k \leq 2 `*( k / 2^{n/2})^k$, wherein $\epsilon=\frac{2^{nk}}{k!}2^{-nk}\tilde\epsilon \leq 1$ (recall the alternative convention used for $\epsilon$-approximate designs in Ref.~\cite{Brandao2021PRXQ_models}).

  By the form of $\Mr[r]$, $\abs`*{ `*{ Q \in \Mr[r] \,:\; \log_2\bm{(} \tr`(Q) \bm{)} = m } } \leq N_r \binom{n}{m}$.
  The $N_r$ denotes the number of circuits that one can compose from $\leq r$ gates in $\mathcal{G}$.
  $\binom{n}{m}$ is the number of projectors that, up to a permutation of qubits, equal $\Ident_2^{\otimes m} \otimes \proj{0^{n-m}}$.
  By Eq.~(B21) of Ref.~\cite{Brandao2021PRXQ_models}, $N_r \leq 2^{n+1} (n+1)^r \abs{ G }^r$.
  Moreover,
  $\binom{n}{m}
    = \binom{n}{n-m} \leq n^{n-m}
    = n^{\lceil \log_2`*( 1/c\bar\eta ) \rceil}
    \leq n^{1 + \log_2`*( 1/c\bar\eta ) }$.
  Hence,
  \begin{align}
      \abs`*{ `*{ Q \in \Mr[r] \,:\; \log_2\bm{(} \tr`(Q) \bm{)} = m } } \leq 2^{n+1} (n+1)^r \abs{ G }^r \cdot n^{1 + \log_2`*( 1/c\bar\eta ) } \ ,
  \end{align}
  so $p_< \leq 2^{a + rb - ky}$,
  where
  \begin{align}
    a \coloneqq n + 2 + \log_2(n) `*[ 1 + \log_2 `*( \frac1{c\bar\eta} ) ] \ , \; \; \;
      b \coloneqq \log_2 `*( [n+1] \abs{G} ) \ , \; \; \;
    \text{and} \; \; \;
      y \coloneqq \frac{n}{2} - \log_2(k) - \log_2 `*( \frac1{(1-c)\bar\eta} ) \ .
  \end{align}
  By~\eqref{eq:thm-ComplexityEntropyLowerBoundApproximateKDesignPrecise-condition-r}, $r \leq \frac1{b} `*[ ky - ( gn + a ) ]$, so $a + rb - ky \leq -gn$.
  Thus, $p_< \leq 2^{-gn}$.
\end{proof}

\begin{proof}[*thm:ComplexityEntropyLowerBoundApproximateKDesignSimple]
  Let $\epsilon' \coloneqq \eta/2$.
  Let $G \subset \SU(4)$ denote any finite gate set that forms an $(\epsilon'/4^n)$-net on $\SU(4)$.
  One can build such a gate set using the Solovay-Kitaev theorem~\cite{%
    Dawson2005arXiv_SolovayKitaev,%
    BookNielsenChuang2000,%
    BookKitaevShenVyalyi2002,%
    Oszmaniec2021arXiv_epsilonnets}
  as follows.
  Consider any finite, universal gate set $G_0 \subset \SU(4)$, such as the Clifford + $T$ gate set (see \cref{sec:examples-unitary-complexity measures}).
  By the Solovay-Kitaev theorem, one can approximate any unitary $U \in \SU(4)$ to the error $\epsilon'$ with at most $N \coloneqq c_{G_0} `*[ \log_2(4^n/\epsilon') ]^4 = c_{G_0} `*[ 2n + \log_2`(2/\eta) ]^4$ gates in $G_0$.
  The $c_{G_0} > 0$ is a universal constant depending only on $G_0$.
  Now, let $G$ consist of all the circuits that one can compose from $\leq N$ gates in $G_0$.
  One can approximate any unitary $U \in \SU(4)$ to the error $\epsilon'$ by some circuit in $G$.
  The number of circuits that one can compose from exactly $t > 0$ gates in $G_0$ is $\leq \abs{G_0}^t$, so
  \begin{align}
      \abs{G}
      \leq \sum_{t=1}^N \abs{G_0}^t
      = \frac{ \abs{G_0} } { \abs{G_0} - 1 } `*( \abs{G_0}^N - 1  )
      \leq 2 \abs{G_0}^N \ .
  \end{align}
  Hence,
  \begin{align}
    \log_2`*( \abs{G} )
    \leq N \log_2`*( \abs`*{G_0} ) + 1 
    = c_1 `*[ 2n + \log_2`*(\frac{2}{\eta}) ]^4 + 1 \ ,
  \end{align}
  wherein $c_1 \coloneqq c_{G_0}\,\log_2`*( \abs`*{ G_0 } )$.
  
  We now apply \cref{thm:ComplexityEntropyLowerBoundApproximateKDesignPrecise}.
  Set
  \begin{align}
      g = 1
      \ , \; \; \; \; \; \; \; \;
      c = \frac12 \ , \; \; \; \; \; \; \; \;
    &
    \text{and} \; \; \; \; \; \; \; \;
      \bar\eta = \eta - \epsilon' = \frac\eta2 \ .
  \end{align}
  $r$ satisfies condition~\eqref{eq:thm-ComplexityEntropyLowerBoundApproximateKDesignPrecise-condition-r} in the lemma, since the right-hand side of~\eqref{eq:thm-ComplexityEntropyLowerBoundApproximateKDesignSimple-condition-r} lower-bounds the right-hand side of~\eqref{eq:thm-ComplexityEntropyLowerBoundApproximateKDesignPrecise-condition-r}, as one can verify using the following inequalities: 
  \begin{gather}
    \log_2`*( [n+1] \abs`*{ G } )
    \leq c_1 `*[ 2n + \log_2`*( \frac{2}{\eta} ) ]^4 + \log_2`(n+1) + 1 \ ,
    \\
    \log_2`*( \frac1{(1-c)\bar\eta} ) = \log_2`*( \frac4{\eta} ) \ ,
    \\
    \text{and} \; \; \; \; \;
    2 \leq n
    \quad\Rightarrow\quad
    (g+1)n + 2 + \log_2`*( \frac1{c\bar\eta} )
    \leq 3n + \log_2`*( \frac4\eta ) \ .
  \end{gather}
  Thus, by \cref{thm:ComplexityEntropyLowerBoundApproximateKDesignPrecise},
  \begin{align}
    \Hhypr[r][\eta-\epsilon'][G] `*{ V \psi_0 V^\dagger }
    = \Hhypr[r][\bar\eta][G] `*{ V \psi_0 V^\dagger }
    \geq n\log(2) - \log`*( \frac1{c\bar\eta} ) 
    = n\log(2) - \log`*( \frac4\eta ) \ ,
    \label{eq:cplx-entropy-lower-bound-final-ineqs-1}
  \end{align}
  except with a probability $\leq 2^{-n}$.
  Furthermore,
  \begin{align}
    \HHypr[r][\eta] `*{ V \psi_0 V^\dagger }
    \geq \Hhypr[r][\eta] `*{ V \psi_0 V^\dagger }
    \geq \Hhypr[r][\eta][G] `*{ V \psi_0 V^\dagger }
    \geq \Hhypr[r][\eta-\epsilon'][G] `*{ V \psi_0 V^\dagger } \ ,
    \label{eq:cplx-entropy-lower-bound-final-ineqs-2}
  \end{align}
  wherein $\Hhypr[r][\eta]{}$ is defined with respect to $\Mr[r][\SU(4)]$.
  The first inequality follows from \cref{thm:relation-Dhypr-DHypr}.
  The second inequality follows because $\Mr[r][G] \subset \Mr[r][\SU(4)]$: every candidate $Q \in \Mr[r][G]$ for the $\Hhypr[r][\eta][G]`*{ V \psi_0 V^\dagger }$ optimization belongs to $\Mr[r][\SU(4)]$ and is therefore a candidate for the $\Hhypr[r][\eta]`*{ V \psi_0 V^\dagger }$ optimization.
  The third inequality follows from \cref{thm:DHypr-monotonous-in-r-and-in-eta} (via \cref{thm:properties-Dhypr}).
  Therefore,
  \begin{align}
      \HHypr[r][\eta] `*{ V \psi_0 V^\dagger }
      \geq \Hhypr[r][\eta] `*{ V \psi_0 V^\dagger }
      \geq n\log(2) - \log`*( \frac4\eta ) \ ,
  \end{align}
  except with a probability at most exponentially small in $n$.
\end{proof}

\section{Entanglement bounds on the complexity entropy}
\label{appx-topic:entanglement-bounds}

Here, we use entanglement measures to bound the complexity entropy, proving the results in \cref{sec:quantum-info--entgl-bounds}.
We restrict our attention to a spatially one-dimensional chain $S$ of $n \geq 2$ qubits: $S = S_1 S_2 \ldots S_n$.
Let $`*{ \Mr[r] }$ denote the family of POVM-effect-complexity sets (\cref{defn:Mr-sets-general}) defined by~\eqref{eq:setting-defn-Mr}, wherein the computational gate set $\mathcal{G}$ contains only unitary operations that can act nontrivially only on two neighboring qubits in $S$.
Furthermore, we define the \textit{potential entangling power} of $\mathcal{G}$ (cf.\@ Ref.~\cite{Eisert2021PRL_entangling}) as
\begin{align}
  e(\mathcal{G})
  \coloneqq \sup_{U \in \mathcal{G}} `*{ e(U) } \ ,
  \quad \text{wherein} \quad
    e(U)
    \coloneqq
      \min`*{
      \opnorm{H} \,:\; H = H^\dagger , \; \exists\chi\in\mathbb{R} \text{ such that } \ee^{-iH} = \ee^{-i\chi}U } \ .
\end{align}
In general, $e(\mathcal{G})$ can be as large as $\pi$.
However, our entanglement bounds are most interesting when $\mathcal{G}$ contains only operations that act weakly and remain near the identity operator (up to an overall phase).
In this case, $e(\mathcal{G}) \approx e(\Ident) = 0$.

In the following, we use the \emph{quantum mutual information}.
Let $A$ and $B$ denote distinct quantum systems, and let $\sigma_{AB}$ denote any state of $AB$.
The quantum mutual information of $\sigma_{AB}$ is defined in terms of the von Neumann entropy as
\begin{align}
  \II[\sigma]{A}{B}
  \coloneqq \HH[\sigma]{A} + \HH[\sigma]{B} - \HH[\sigma]{AB}
  = \HH[\sigma]{A} - \HH[\sigma]{A}[B] \ .
  \label{eq:def-quantum-mutual-info}
\end{align}
For any state $\rho$ of the qubit chain $S$, we define the entanglement measure
\begin{align}
  E(\rho) &\coloneqq \frac1{n-1} \sum_{j=1}^{n-1} \II[\rho]{S_1 \ldots S_j}{S_{j+1}\ldots S_n} \ .
    \label{eq:entgl-measure-1d-chain-mutual-information}
\end{align}
$E(\rho)$ quantifies the average amount of correlation, as quantified by the mutual information, between any bipartition of $S$ into subsystems of neighboring qubits.
We now prove the bounds in \cref{mainprop:change-in-entanglement-measure} and in~\eqref{eq:maintext-entanglement-bound-on-cplx-entropy}.

\begin{theorem}[Continuity bound on $E(\rho)$ under one operation]
  \label{thm:continuity-bound-Erho-single-gate}
  Let $\rho$ denote any quantum state of $S$.
  Let $U$ denote any unitary that can act nontrivially only on two neighboring qubits in $S$.
  It holds that
  \begin{align}
    \abs`*{ E`*( U \rho U^\dagger ) - E`*(\rho) }
    \leq \frac1{n-1} \min`*{ 8 \log(2) \ , \ 8 \nu \log(2) + 3`(1+\nu) \, h`*( \frac\nu{1+\nu} ) } \ .
    \label{eq:thm-continuity-bound-Erho-single-gate--bound}
  \end{align}
  The $h(x) \coloneqq -x\log(x) - `(1-x) \log `(1-x)$ denotes the binary entropy function, and
  \begin{align}
    \nu \coloneqq \sin `*( \min`*{ e(U), \frac\pi2 } ) \ .
    \label{eq:thm-continuity-bound-Erho-single-gate--def-nu}
  \end{align}
\end{theorem}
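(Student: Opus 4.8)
The plan is to collapse the entire sum defining $E$ to a single bipartition and then control that one mutual-information change by two single-qubit continuity estimates, so that no factor growing with $n$ ever enters. Throughout, write $\sigma \coloneqq U\rho U^\dagger$ and suppose $U$ acts nontrivially only on the neighboring qubits $S_\ell$ and $S_{\ell+1}$.

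First I would exploit the locality of $U$. For every cut $j \neq \ell$ in \eqref{eq:entgl-measure-1d-chain-mutual-information}, both qubits $S_\ell, S_{\ell+1}$ lie on the same side of the bipartition, so $U$ acts there as a local unitary; since the mutual information is invariant under local unitaries, that term is unchanged. Hence only the $j=\ell$ term can move, and with $L \coloneqq S_1\ldots S_\ell$, $R \coloneqq S_{\ell+1}\ldots S_n$ we get $E(\sigma) - E(\rho) = \frac{1}{n-1}[\II[\sigma]{L}{R} - \II[\rho]{L}{R}]$. This reduces the claim to a continuity bound for a single mutual information across the cut that $U$ straddles, and it is the $\nu$-independent special case that underlies \cref{mainprop:change-in-entanglement-measure}.

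Next I would localize that change. Writing $\II[\rho]{L}{R} = \HH[\rho]{L} + \HH[\rho]{R} - \HH[\rho]{LR}$ and using unitary invariance of $\HH[\rho]{LR}$, the difference equals $(\HH[\sigma]{L}-\HH[\rho]{L}) + (\HH[\sigma]{R}-\HH[\rho]{R})$. Because $U$ is supported on $S_\ell, S_{\ell+1}$, the reduced states on $S_1\ldots S_{\ell-1}$ and on $S_{\ell+2}\ldots S_n$ are invariant, so the chain rule peels off the untouched marginals: $\HH[\sigma]{L}-\HH[\rho]{L} = \HH[\sigma]{S_\ell}[S_1\ldots S_{\ell-1}] - \HH[\rho]{S_\ell}[S_1\ldots S_{\ell-1}]$, and likewise for $R$ with $S_{\ell+1}$ conditioned on $S_{\ell+2}\ldots S_n$. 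Each is a change in the conditional entropy of a \emph{single} qubit. To bound these I would invoke \cref{thm:diamond-norm-unitary-to-ident}, which gives $\tfrac12\dianorm{\mathcal{U}-\IdentProc{}} = \sin(\min\{e(U),\pi/2\}) = \nu$; then \eqref{eq:dianorm-upper-bounds-onenorm} yields $\tfrac12\onenorm{\sigma - \rho} \leq \nu$, and partial traces only contract the trace distance, so the relevant joint reduced states differ by at most $\nu$. Feeding $\nu$ into the Alicki–Fannes–Winter continuity bound for conditional entropy with a single-qubit system gives $2\nu\log 2 + (1+\nu)h(\frac{\nu}{1+\nu})$ for each of the two terms; summing (and using that all terms are non-negative) comfortably establishes the stated $\nu$-dependent estimate. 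For the other branch I would discard continuity and bound each single-qubit conditional entropy by its range $[-\log 2, \log 2]$, so each change is at most $2\log 2$; adding the two gives the $8\log 2$ bound. Taking the smaller of the two then yields \eqref{eq:thm-continuity-bound-Erho-single-gate--bound}.

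The main obstacle is dimension independence: applying a Fannes–Audenaert bound directly to $\HH[\rho]{L}$ or $\HH[\rho]{R}$ would incur a $\log d_L \sim \ell \log 2$ factor that grows with the system and would overwhelm the $1/(n-1)$ prefactor. The conditional-entropy localization, together with the invariance of the marginals that $U$ does not touch, is exactly what removes this dimensional dependence and collapses everything to one-qubit estimates. The remaining work is the routine bookkeeping of constants and the harmless edge cases $\ell = 1$ and $\ell = n-1$, in which one conditioning system is empty and the single-qubit conditional bound degenerates to its unconditional counterpart.
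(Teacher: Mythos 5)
Your proof is correct, and in one key step it takes a genuinely different---and in fact sharper---route than the paper's. Both arguments begin identically: only the cut $j=\ell$ straddled by $U$ can change (mutual information across every other cut is invariant under a unitary supported on one side), and \cref{thm:diamond-norm-unitary-to-ident} together with contractivity of the trace norm under partial traces gives the distance $\nu$ fed into Alicki--Fannes--Winter. The divergence is in how the single mutual-information change is localized. The paper expands $I(A_\ell : B_{\ell+1}) = H(A_\ell) + H(B_{\ell+1}) - H(A_\ell B_{\ell+1})$ by applying the chain rule to the marginal entropies \emph{and} to the joint entropy, isolating an invariant block plus three conditional entropies---two single-qubit terms and the two-qubit term $H(S_\ell S_{\ell+1} \,|\, A_{\ell-1} B_{\ell+2})$---each controlled by AFW, which is where the constants $8\nu\log(2) + 3(1+\nu)\,h\bigl(\nu/(1+\nu)\bigr)$ and, via entropy ranges, $8\log(2)$ come from. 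You instead observe that $A_\ell B_{\ell+1} = S$ is the \emph{full} system, so $H(A_\ell B_{\ell+1})$ is invariant under the global unitary $U$ outright; only the two marginal entropies can move, and the chain rule (plus invariance of the untouched marginals) reduces each to a single-qubit conditional entropy. This requires only two AFW applications and yields $4\nu\log(2) + 2(1+\nu)\,h\bigl(\nu/(1+\nu)\bigr)$ and $4\log(2)$---strictly tighter constants that imply the stated bound, as you note. One cosmetic slip: your range argument gives $2 \times 2\log(2) = 4\log(2)$, not ``the $8\log 2$ bound'' as written; this is harmless since $4\log(2) \leq 8\log(2)$, but you should state the tighter constant you actually prove rather than attributing the paper's weaker one to your own estimate. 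Your handling of the edge cases $\ell \in \{1, n-1\}$ (empty conditioning system) and of the dimension-independence worry is sound.
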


\begin{proof}[**thm:continuity-bound-Erho-single-gate]
  Let $\rho'$ denote the state that results from evolving $\rho$ under $U$: $\rho' \coloneqq U\rho U^\dagger$.
  For all $j \in `{ 1, 2, \ldots, n-1 }$, we write, for short, $A_j \coloneqq S_1 S_2 \ldots S_j$ and $B_{j+1} \coloneqq S_{j+1} S_{j+2} \ldots S_n$.
  By~\eqref{eq:entgl-measure-1d-chain-mutual-information},
  \begin{align}
    E(\rho') - E(\rho) = \frac1{n-1} \sum_{j=1}^{n-1} 
      `*[ \II[\rho']{A_j}{B_{j+1}} - \II[\rho]{A_j}{B_{j+1}} ] \ .
  \end{align}

  Let $k, k+1 \in `{1, 2, \ldots, n}$ denote two qubits on which $U$ can act nontrivially.
  For all $j\neq k$, $\II[\rho']{A_j}{B_{j+1}} = \II[\rho]{A_j}{B_{j+1}}$: 
  $U$ can act nontrivially on qubits only in $A_j$ or only in $B_{j+1}$, so
  \begin{align}
      \II[\rho]{A_j}{B_{j+1}}
      &= \HH[\rho]{A_j} + \HH[\rho]{B_{j+1}} - \HH[\rho]{A_jB_{j+1}}
      \nonumber\\
      &= \HH[\rho']{A_j} + \HH[\rho']{B_{j+1}} - \HH[\rho']{A_jB_{j+1}}
      = \II[\rho']{A_j}{B_{j+1}} \ .
  \end{align}
  The first and last equalities follow from~\eqref{eq:def-quantum-mutual-info}.
  The middle equality follows because the von Neumann entropy is unitarily invariant.
  Thus,
  \begin{align}
    E(\rho') - E(\rho) = \frac1{n-1} `*[ \II[\rho']{A_k}{B_{k+1}} - \II[\rho]{A_k}{B_{k+1}} ] \ .
    \label{eq:diff-mutual-info}
  \end{align}
  Using the chain rule for the von Neumann entropy, we simultaneously re-express $\II[\rho]{A_k}{B_{k+1}}$ and $\II[\rho']{A_k}{B_{k+1}}$:
  \begin{align}
    &\II{A_k}{B_{k+1}}
      \nonumber\\
    &= \HH{A_k} + \HH{B_{k+1}} - \HH{A_kB_{k+1}}
      \nonumber\\
    &= \HH{A_{k-1}} + \HH{S_k}[A_{k-1}]
    + \HH{B_{k+2}} + \HH{S_{k+1}}[B_{k+2}]
     - \HH{A_{k-1}B_{k+2}} - \HH{S_kS_{k+1}}[A_{k-1}B_{k+2}]
      \nonumber\\
    &=
      \underbrace{
        \HH{S_k}[A_{k-1}] + \HH{S_{k+1}}[B_{k+2}] - \HH{S_k S_{k+1}}[A_{k-1}B_{k+2}]
      }_{\textup{(I)}}
      + \underbrace{
        \HH{A_{k-1}} + \HH{B_{k+2}} - \HH{A_{k-1}B_{k+2}}
      }_{\textup{(II)}}
      \ .
  \end{align}
  $A_{k-1} `*(B_{k+2})$ is trivial if $k=1$ ($k=n-1$).
  Since $U$ can act nontrivially only on $S_k S_{k+1}$, the reduced states of $\rho$ and $\rho'$ on $A_{k-1} B_{k+2}$ coincide.
  Hence, the terms in \textup{(II)} have the same value whether evaluated on $\rho$ or on $\rho'$, so~\eqref{eq:diff-mutual-info} equals the difference between the terms in \textup{(I)} evaluated on $\rho$ and on $\rho'$.
  We now upper-bound the absolute value of this difference.
  Let $\mathcal{U}(\cdot) \coloneqq U`(\cdot) U^\dagger$.
  By the diamond norm's definition~\eqref{eq:def-dianorm} and by \cref{thm:diamond-norm-unitary-to-ident},
  \begin{align}
    \frac12 \onenorm{\rho' - \rho}
    = \frac12 \onenorm{ `*(\mathcal{U}-\IdentProc{}) \rho}
    \leq \frac12 \dianorm{ \mathcal{U} - \IdentProc{} }
    = \sin `*( \min`*{ e(U), \frac\pi2} )
    = \nu \ .
    \label{eq:one-norm-dist-for-entgl-bound}
  \end{align}
  By the Alicki-Fannes-Winter continuity bound on the conditional entropy~\cite{Alicki2004JPA_continuity,Winter2016CMP_tight} (see Lemma~2 of Ref.~\cite{Winter2016CMP_tight}),~\eqref{eq:one-norm-dist-for-entgl-bound} implies that the terms in \textup{(I)} obey the following bounds:
  \begin{subequations}
    \begin{gather}
      \abs`*{ \HH[\rho']{S_k}[A_{k-1}] - \HH[\rho]{S_k}[A_{k-1}] }
      \leq 2\nu\log`\big(d_{S_k}) + `(1+\nu) \, h`*(\frac\nu{1+\nu}) \ ,
      \\
      \abs`*{ \HH[\rho']{S_{k+1}}[B_{k+2}] - \HH[\rho]{S_{k+1}}[B_{k+2}] }
      \leq 2\nu\log`\big(d_{S_{k+1}}) + `(1+\nu) \, h`*(\frac\nu{1+\nu}) \ ,
      \\
      \text{and} \; \; \;
      \abs`*{ \HH[\rho']{S_k S_{k+1}}[A_{k-1}B_{k+2}]
      - \HH[\rho]{S_k S_{k+1}}[A_{k-1}B_{k+2}] }
    \leq 2\nu\log`\big( d_{S_k} d_{S_{k+1}} ) + `(1+\nu) \, h`*(\frac\nu{1+\nu}) \ .
    \end{gather}
  \end{subequations}
  Therefore,
  \begin{align}
    \abs`*{ \II[\rho']{A_k}{B_{k+1}} - \II[\rho]{A_k}{B_{k+1}} }
    &\leq 2\nu\log`\big(d_{S_k}^2 d_{S_{k+1}}^2) + 3`(1+\nu)\, h`*( \frac\nu{1+\nu} )
    \nonumber\\
    &= 8\nu\log(2) + 3`(1+\nu)\, h`*( \frac\nu{1+\nu} ) \ .
    \label{eq:abs-diff-mutual-info---AFW-bound}
  \end{align}
  The equality follows because $S_k$ and $S_{k+1}$ are qubits: $d_{S_k} = d_{S_{k+1}} = 2$.
  Moreover, each term in $(I)$ can vary by at most the range of the term's corresponding conditional entropy, so
  \begin{align}
    \abs`*{ \II[\rho']{A_k}{B_{k+1}} - \II[\rho]{A_k}{B_{k+1}} }
    \leq 2\log`\big(d_{S_k}) + 2\log`\big(d_{S_{k+1}}) + 2\log`\big(d_{S_k} d_{S_{k+1}})
    = 2\log`\big(d_{S_k}^2 d_{S_{k+1}}^2) 
    = 8\log(2) \ .
    \label{eq:abs-diff-mutual-info---range-bound}
  \end{align}
  Combining~\eqref{eq:diff-mutual-info},~\eqref{eq:abs-diff-mutual-info---AFW-bound}, and~\eqref{eq:abs-diff-mutual-info---range-bound} yields
  \begin{align}
    \abs`*{ E(\rho') - E(\rho) } \leq \frac1{n-1} \min`*{ 8\log(2) \ ,\ 8\nu\log(2) + 3`(1+\nu)\, h`*( \frac\nu{1+\nu} ) } \ .
  \end{align}
\end{proof}

It is convenient to define the worst-case upper bound in~\eqref{eq:thm-continuity-bound-Erho-single-gate--bound} among all the gates in $\mathcal{G}$.
Namely, let
\begin{align}
  \mu(\mathcal{G})
  \coloneqq \frac1{n-1}  \min`*{
    8\log(2) \ ,\ 
    8 \nu_\mathcal{G} \log(2) + 3`(1+\nu_\mathcal{G})\, h`*( \frac{ \nu_\mathcal{G} }{ 1 + \nu_\mathcal{G} } ) } \ ,
    \; \; \text{where} \; \;
    \nu_\mathcal{G} \coloneqq \sin `*( \min`*{ e(\mathcal{G}), \frac\pi2 } ) \ .
\end{align}
By definition, $\mu(\mathcal{G})$ always satisfies $\mu(\mathcal{G}) \leq 8 \log(2) / (n-1)$.

\begin{theorem}[Entanglement bound on the complexity entropy]
  \label{thm:Hhypr-bound-entgl-Erho}
  Let $\rho$ denote any quantum state of $S$.
  Let $r\geq 0$ and $\eta\in(0,1]$.
  It holds that
  \begin{align}
    \HHypr[r][\eta]{\rho} \geq \frac1{\eta} `*[ E(\rho) - r \, \mu(\mathcal{G}) + H(\rho) - 2h(\eta) - (1-\eta) \, n\log(2) ] -2\log(\eta) \ .
    \label{eq:thm-lower-bound-on-cplx-entropy}
  \end{align}
  The $h(x) \coloneqq -x\log(x) - `(1-x) \log `(1-x)$ denotes the binary entropy function, and $\HHypr[r][\eta]{}$ is defined with respect to $\Mr[r]$.
\end{theorem}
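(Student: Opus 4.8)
The plan is to unfold the definition~\eqref{eq:defn-HHypr} of the complexity entropy, reduce every candidate measurement effect to a product projector conjugated by a short unitary circuit, and then control two quantities separately: how much a short circuit can disturb the entanglement measure $E$, and how much weight a product projector can capture from a state of a given entanglement and entropy. First I would use the structure of $\Mr[r]$ in~\eqref{eq:setting-defn-Mr}. Since $\mathcal{G}$ consists only of geometrically local unitary gates, every candidate $Q\in\Mr[r]$ for the optimization~\eqref{eq:defn-HHypr} has the form $Q = U^\dagger P U$, where $U = U_r\cdots U_1$ is a circuit of at most $r$ nearest-neighbor two-qubit gates (padded with identities if fewer are used) and $P = \bigotimes_i Q_i$ is a product projector with each $Q_i\in\{\proj0,\Ident_2\}$, as in~\eqref{eq:setting-defn-Mrzero}. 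Writing $\mathcal{W}\coloneqq\{i:Q_i=\Ident_2\}$, $m\coloneqq n-\abs{\mathcal{W}}$ for the number of projected qubits, $\rho'\coloneqq U\rho U^\dagger$, and $p\coloneqq\tr(P\rho')$, one has $\tr(Q)=\tr(P)=2^{\abs{\mathcal{W}}}$ and $\tr(Q\rho)=\tr(P\rho')=p$. Thus each candidate contributes the objective value $\log\big(\tr(Q)/\tr(Q\rho)\big) = (n-m)\log(2) - \log(p)$, subject to $p\geq\eta$, and it suffices to lower-bound $(n-m)\log(2)-\log(p)$ uniformly over all such $(U,P)$.

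Next I would establish the robustness of $E$ under $U$. Each gate $U_i$ acts nontrivially on only two neighboring qubits, so \cref{thm:continuity-bound-Erho-single-gate} bounds the change in $E$ by $\mu(\mathcal{G})$ per gate. Iterating along $\rho\to U_1\rho U_1^\dagger\to\cdots\to\rho'$ gives $E(\rho')\geq E(\rho)-r\,\mu(\mathcal{G})$, while $H(\rho')=H(\rho)$ by unitary invariance of the von Neumann entropy. Hence it remains to prove the key entropic lemma: for any state $\sigma$ (to be taken as $\rho'$) and any product projector $P$ with $p=\tr(P\sigma)\geq\eta$, the quantity $(n-m)\log(2)-\log(p)$ is lower-bounded by $\frac{1}{\eta}\big[E(\sigma)+H(\sigma)-(\text{error terms})\big]-2\log(\eta)$.

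For the lemma I would exploit the identity $E(\sigma)+H(\sigma)=\frac{1}{n-1}\sum_{j=1}^{n-1}\big[\HH[\sigma]{A_j}+\HH[\sigma]{B_{j+1}}\big]$, with $A_j=S_1\ldots S_j$ and $B_{j+1}=S_{j+1}\ldots S_n$, which follows from $\II[\sigma]{A_j}{B_{j+1}}=\HH[\sigma]{A_j}+\HH[\sigma]{B_{j+1}}-\HH[\sigma]{A_jB_{j+1}}$, the equality $\HH[\sigma]{A_jB_{j+1}}=H(\sigma)$, and the definition~\eqref{eq:entgl-measure-1d-chain-mutual-information}. For each cut $j$ I would dephase, in the computational basis, the projected qubits lying on each side of the cut. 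Dephasing cannot decrease entropy, and the resulting classical--quantum state has entropy equal to the Shannon entropy of the measurement outcomes plus the conditional entropy of the kept qubits, the latter bounded by $\log(2)$ times the number of kept qubits on that side. Bounding each outcome distribution by its maximum entropy given that the all-zero outcome carries probability at least the relevant marginal success probability---which is at least $p\geq\eta$, because the global success event implies each marginal one---and then averaging over $j$ produces the $\frac{1}{\eta}$ prefactor together with the binary-entropy error terms. Combining this with the robustness bound and $H(\rho')=H(\rho)$, and taking the infimum over candidates, would complete the argument.

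The main obstacle I anticipate is the error-term bookkeeping in the lemma. The kept and projected qubit sets $\mathcal{W},\mathcal{W}^{\mathrm c}$ need not be contiguous with the cuts used to define $E$, so one must track, cut by cut, how many projected qubits fall on each side and weight them by the corresponding marginal success probabilities; moreover the binary-entropy contributions $h(p_L^{(j)}),h(p_R^{(j)})$ with $p_L^{(j)},p_R^{(j)}\in[\eta,1]$ must be assembled into precisely the stated form $\frac{1}{\eta}\big[\cdots-2h(\eta)-(1-\eta)n\log(2)\big]-2\log(\eta)$, whose error terms all vanish as $\eta\to1$, where the bound collapses cleanly to $E(\rho)-r\mu(\mathcal{G})+H(\rho)$. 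Handling the non-monotonicity of $h$ on $[\eta,1]$ and reconciling the residual $-\log(p)$ with the $-2\log(\eta)$ slack is the delicate part; everything else is routine manipulation of entropies and the elementary dephasing inequality.
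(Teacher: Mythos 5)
Your skeleton coincides with the paper's proof: decompose a near-optimal candidate as $Q=U^\dagger P U$ with $P\in\Mr[r=0]$, iterate \cref{thm:continuity-bound-Erho-single-gate} gate by gate to get $E(\rho')\geq E(\rho)-r\,\mu(\mathcal{G})$ with $H(\rho')=H(\rho)$, and use the identity $E(\sigma)+H(\sigma)=\frac{1}{n-1}\sum_{j}\bigl[\HH[\sigma]{A_j}+\HH[\sigma]{B_{j+1}}\bigr]$ to reduce everything to a per-cut entropic inequality. The gap is precisely the step you defer as ``the delicate part,'' and it is not mere bookkeeping: your dephasing lemma, as sketched, cannot reproduce the stated constants. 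Pinching $\sigma_{A_j}$ with respect to $\{P_L,\Ident-P_L\}$, where $P_L$ is the restriction of $P$ to the left of cut $j$ with $p_L=\tr(P_L\sigma_{A_j})\in[\eta,1]$ and $\tr(P_L)=2^{k_L}$, gives
\begin{align}
  \HH[\sigma]{A_j} \leq h(p_L) + p_L\,k_L\log(2) + (1-p_L)\,j\log(2) \ ,
\end{align}
hence a lower bound on $k_L\log(2)$ with prefactor $1/p_L$ and error term $h(p_L)$. Since $p_L\geq\eta$, one has $1/p_L\leq 1/\eta$, which points the \emph{wrong} way whenever the bracket is positive (the regime where the theorem is nontrivial), and $h(p_L)\leq h(\eta)$ fails for $\eta<1/2$. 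Quantitatively, at $p_L=\eta$ and $\HH[\sigma]{A_j}=j\log(2)$ your per-side bound reads $j\log(2)-h(\eta)/\eta$, whereas the per-side contribution demanded by \eqref{eq:thm-lower-bound-on-cplx-entropy} is $j\log(2)-h(\eta)/\eta+\log(1/\eta)$: a shortfall of $\log(1/\eta)$ per side, i.e., up to $2\log(1/\eta)$ per cut, while the only slack you hold in reserve, $-\log(p)\leq\log(1/\eta)$, covers at most half of it. Closing this requires exploiting the global coupling between the two sides' marginals ($p\leq p_L$ and $p\leq p_R$ cannot both saturate in the worst configurations), an argument your sketch does not contain; note also that the cross-cut factorization itself is lossy, since $\tr(P_L\sigma_{A_j})\,\tr(P_R\sigma_{B_{j+1}})$ is only bounded below by $p\,\eta$, not by $p$, so ``averaging over $j$'' does not spontaneously produce the $1/\eta$ prefactor you hope for.

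The paper sidesteps all of this and never tracks projector ranks or marginal success probabilities by hand. It observes that $P$ is a candidate for $\HHypr[r=0][\eta]{\rho'}$, relaxes via \cref{thm:bound-DHypr-DHyp-tensor-products} to the \emph{unrestricted} hypothesis-testing entropies on the two blocks of each cut, and then invokes the known converse bound (Proposition~4.67 of Ref.~\cite{BookKhatri_communication}),
\begin{align}
  -\DHyp[\eta]{\tau}{\Ident} \geq \frac{1}{\eta}\bigl[ H(\tau) - h(\eta) - (1-\eta)\log(d) \bigr] - \log(\eta) \ ,
\end{align}
which is the sole source of the $1/\eta$, $2h(\eta)$, $(1-\eta)\,n\log(2)$, and $-2\log(\eta)$ terms. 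Your per-candidate objective $\log\bigl(\tr(P_L)/\tr(P_L\sigma_{A_j})\bigr)$ dominates the corresponding unrestricted infimum by \cref{thm:DHyp-alternative-expressions}, so replacing your hand-rolled pinching lemma with this citation closes the gap and leaves the remainder of your argument intact. As it stands, however, the crux inequality is asserted rather than proven, and the route you propose for it demonstrably falls short of the stated form.
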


\begin{proof}[**thm:Hhypr-bound-entgl-Erho]
  Consider any $\zeta>0$.
  There exists a $Q\in \Mr[r]$ such that $\tr`(Q\rho) \geq \eta$ and $\log \bm{(} \tr`(Q) / \tr`(Q\rho) \bm{)} \leq \HHypr[r][\eta]{\rho} + \zeta$.
  $Q = U^\dagger P U$ for some effect $P \in \Mr[r=0]$ and for some unitary $U$ satisfying $C_{\mathcal{G}}(U_0) \leq r$.
  Let $\rho' \coloneqq U \rho U^\dagger$.
  $\tr`(P\rho') = \tr`(Q\rho) \geq \eta$, so $P$ is a candidate for the $\HHypr[r=0][\eta]{\rho'}$ optimization:
  \begin{align}
    \HHypr[r=0][\eta]{\rho'}
    \leq \log `*( \frac{\tr`*(P)}{\tr`*(P\rho')} )
    = \log `*( \frac{\tr`*(Q)}{\tr`*(Q\rho)} )
    \leq \HHypr[r][\eta]{\rho} + \zeta \ .
    \label{eq:jdsa90hieubhfjsdviuakls}
  \end{align}
  For all $j \in `{ 1, 2, \ldots, n-1 }$, we
  write, for short, $A_j \coloneqq S_1 S_2 \ldots S_j$ and $B_{j+1} \coloneqq S_{j+1} S_{j+2} \ldots S_n$.
  Since every POVM effect in $\Mr[r=0]$ is tensor-product, \cref{thm:bound-DHypr-DHyp-tensor-products} implies that, for all $j$,
  \begin{align}
    \HHypr[r=0][\eta]{\rho'} 
    = -\DHypr[r=0][\eta]{\rho'}{\Ident_S}
    \geq - \DHyp[\eta]{ \rho'_{A_j} }{ \Ident_{A_j} } - \DHyp[\eta]{ \rho'_{B_{j+1}} }{ \Ident_{B_{j+1}} } \ .
  \end{align}
  Consequently,
  \begin{align}
    \HHypr[r=0][\eta]{\rho'} 
    \geq \frac1{n-1} \sum_{j=1}^{n-1} `*[ -\DHyp[\eta]{ \rho'_{A_j} }{ \Ident_{A_j} } - \DHyp[\eta]{ \rho'_{B_{j+1}} }{ \Ident_{B_{j+1}} } ] \ .
    \label{eq:averaged-hyp-entropy-lower-bound}
  \end{align}
  By Proposition~4.67 of Ref.~\cite{BookKhatri_communication}, one can bound each hypothesis-testing relative entropy in~\eqref{eq:averaged-hyp-entropy-lower-bound} in terms of a standard (Umegaki) quantum relative entropy: for each $A_j$,
  \begin{align}
    -\DHyp[\eta]{ \rho'_{A_j} }{ \Ident_{A_j} }
    &\geq - \frac1\eta `*[ \DD{ \rho'_{A_j} }{ \Ident_{A_j} } + h(\eta) + `(1-\eta) \log \bm{(} \tr`( \Ident_{A_j} ) \bm{)} ] - \log(\eta)
    \nonumber\\
    &=  \frac1\eta `*[ H`\big( \rho'_{A_j} ) - h(\eta) - j (1-\eta) \log(2) ] -\log(\eta) \ .
  \end{align}
  Similarly, for each $B_{j+1}$,
  \begin{align}
    -\DHyp[\eta]{ \rho'_{B_{j+1}} }{ \Ident_{B_{j+1}} }
    &\geq \frac1\eta `*[ H`\big( \rho'_{B_{j+1}} ) - h(\eta) - (n-j) `(1-\eta) \log(2) ] -\log(\eta)  \ .
  \end{align}
  Thus,
  \begin{align}
    \HHypr[r][\eta]{\rho} + \zeta
    &\geq \HHypr[r=0][\eta]{\rho'}
    \nonumber\\
    &\geq \frac1{n-1} \sum_{j=1}^{n-1} `*{ \frac1\eta `*[ H`\big(\rho'_{A_j}) + H`\big(\rho'_{B_{j+1}})  - 2h(\eta) - `(1-\eta)\,n\log(2) ] -2\log(\eta) }
      \nonumber\\
    &= \frac1\eta `*{ \frac1{n-1} \sum_{j=1}^{n-1}  `*[ \II[\rho']{A_j}{B_{j+1}} + H`*(\rho') ] -  2h(\eta) - `(1-\eta)\,n\log(2) } -2\log(\eta)
      \nonumber\\
    &= \frac1\eta `*[ E(\rho') + H(\rho') - 2h(\eta) - `(1-\eta)\,n\log(2) ] -2\log(\eta)
      \nonumber\\
    &\geq \frac1\eta `*[ E(\rho) - r \mu(\mathcal{G}) + H(\rho) - 2h(\eta) - `(1-\eta)\,n\log(2) ] -2\log(\eta) \ .
    \label{eq:thm-lower-bound-on-cplx-entropy--with-zeta}
  \end{align}
  The first and second equalities follow from the definitions~\eqref{eq:def-quantum-mutual-info} and~\eqref{eq:entgl-measure-1d-chain-mutual-information}, respectively.
  The last inequality follows because $E(\rho') - E(\rho) \geq - r \mu(\mathcal{G})$, by \cref{thm:continuity-bound-Erho-single-gate}, and because $H(\rho') = H(\rho)$, since the von Neumann entropy is unitarily invariant.
  Finally,~\eqref{eq:thm-lower-bound-on-cplx-entropy--with-zeta} implies~\eqref{eq:thm-lower-bound-on-cplx-entropy}, since $\zeta$ is arbitrary.
\end{proof}

\section{Bounds from short-time Hamiltonian dynamics}
\label{appx-sec:Hamiltonian-dynamics}

Here, we apply the bounds in \cref{appx-topic:entanglement-bounds} to short-time Hamiltonian dynamics.
We consider a nearest-neighbor Hamiltonian $H$ acting on a 1D chain $S$ of $n \geq 2$ qubits.
The chain is initially in a state $\rho$.
We show that the entanglement measure $E(\rho)$ can grow only slowly in time.

\begin{corollary}[Continuity bound on $E(\rho)$ under local Hamiltonian evolution]
\noproofref
Let $\rho$ denote any quantum state of $S$.
Let $H$ denote a nearest-neighbor, translationally invariant Hamiltonian governed by local-interaction terms $h$ that can act on only two qubits at a time. 
It holds that
\begin{equation}
  \frac{d}{dt} \abs`*{ E`*( \ee^{-iHt} \rho \ee^{iHt} ) - \rho } \leq C (n-1) \opnorm{h} \ ,
  \label{eq:Hamiltonianbound}
\end{equation}
for some universal constant $C>0$.
\end{corollary}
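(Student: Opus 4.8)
The plan is to bound the instantaneous entanglement rate cut by cut, rather than to Trotterize the evolution. Write $\rho(t) \coloneqq \ee^{-iHt}\rho\,\ee^{iHt}$ and $H = \sum_{k=1}^{n-1} h_{k,k+1}$, where each $h_{k,k+1}$ is the translate of the fixed two-qubit term $h$, so $\opnorm{h_{k,k+1}} = \opnorm{h}$ by translation invariance. Since $\ee^{-iHt}$ is a global unitary, the entropy $H(\rho(t)_{A_jB_{j+1}})$ of the whole chain is constant in $t$, where I abbreviate $A_j \coloneqq S_1\cdots S_j$ and $B_{j+1} \coloneqq S_{j+1}\cdots S_n$. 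Hence, reading the left-hand side of the claim as $\abs[\big]{\tfrac{d}{dt}E(\rho(t))}$,
\begin{align}
  \frac{d}{dt}\,E(\rho(t)) = \frac{1}{n-1}\sum_{j=1}^{n-1}\frac{d}{dt}\,\Bigl[ H(\rho(t)_{A_j}) + H(\rho(t)_{B_{j+1}}) \Bigr] \ .
\end{align}
I would first show that only the single interaction term straddling cut $j$, namely $h_{j,j+1}$, drives the rate across that cut. Using $\tfrac{d}{dt}H(\rho_{A_j}) = -\tr(\dot\rho_{A_j}\log\rho_{A_j})$, any term $h_I$ of $H$ supported entirely in $A_j$ contributes $\dot\rho_{A_j} = -i[h_I,\rho_{A_j}]$, and $\tr([h_I,\rho_{A_j}]\log\rho_{A_j}) = \tr(h_I[\rho_{A_j},\log\rho_{A_j}]) = 0$ since $\rho_{A_j}$ commutes with $\log\rho_{A_j}$; terms supported in $B_{j+1}$ leave $\rho_{A_j}$ untouched under the partial trace. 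Thus each reduced-entropy rate is generated solely by the one boundary term.

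The remaining quantity is then the entropy-generation rate across a bipartition produced by a single bounded two-qubit interaction straddling it. This is exactly the regime of the \emph{small-incremental-entangling} (SIE) bound: the rate at which a boundary interaction $h$ can change the entanglement entropy across a cut is at most $c\,\opnorm{h}\log(d)$, where $d$ is the dimension on the smaller side of the particle that $h$ acts on and $c$ is a universal constant. Because $h$ acts on a single qubit on each side of the cut, $d=2$, giving $\abs[\big]{\tfrac{d}{dt}H(\rho(t)_{A_j})} \leq c\,\opnorm{h}\log(2)$ and hence $\abs[\big]{\tfrac{d}{dt}I(A_j:B_{j+1})} \leq 2c\,\opnorm{h}\log(2) \eqqcolon C'\opnorm{h}$ with $C'$ universal. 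Summing over the $n-1$ cuts with the normalization in $E$ yields $\abs[\big]{\tfrac{d}{dt}E(\rho(t))} \leq C'\opnorm{h}$, which already implies the stated weaker bound $C(n-1)\opnorm{h}$ for every $n\geq 2$ (taking $C = C'$), and the absolute value can be moved outside the derivative since $E(\rho(t))-E(\rho)$ vanishes at the base point.

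The main obstacle, and the reason the paper's single-gate result cannot be used directly, is the non-Lipschitz continuity of \cref{thm:continuity-bound-Erho-single-gate}. The tempting shortcut is to Trotterize $\ee^{-iH\,dt}$ into $n-1$ nearest-neighbor gates $\ee^{-ih_{k,k+1}dt}$, each with potential entangling power $e(\cdot)\leq\opnorm{h}\,dt$, so that $\nu\leq\opnorm{h}\,dt$ via \cref{thm:diamond-norm-unitary-to-ident}, and to sum the per-gate changes. However, the Alicki--Fannes--Winter term in \cref{thm:continuity-bound-Erho-single-gate}, namely $(1+\nu)\,h\!\bigl(\tfrac{\nu}{1+\nu}\bigr) \sim \nu\log(1/\nu)$, is not Lipschitz at $\nu=0$; dividing the accumulated change by $dt$ then diverges like $\log(1/dt)$, so this route yields no finite derivative bound. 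Replacing that $\nu\log(1/\nu)$ behavior by a genuinely linear $O(\nu)$ bound on the entanglement rate is the crux, and is precisely what the SIE bound supplies; establishing (or citing) SIE for qubit boundaries is therefore the one nontrivial ingredient, everything else being the entropy bookkeeping above.
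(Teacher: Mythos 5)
Your proof is correct and takes essentially the same route as the paper's: both reduce the claim, cut by cut, to the small-incremental-entangling bound for a bounded two-qubit interaction straddling the bipartition (the paper cites exactly this result after a brief Trotter remark, whereas you isolate the boundary term by direct differentiation, which is a minor variation, and you correctly note that the single-gate continuity bound with its $\nu\log(1/\nu)$ term could not replace it). In fact your bookkeeping gives the sharper estimate $\lvert \tfrac{d}{dt}E(\rho(t))\rvert \leq C\opnorm{h}$, since the $1/(n-1)$ normalization in $E$ cancels the sum over the $n-1$ cuts; this implies the paper's stated $(n-1)$-dependent bound, which is loose as written.
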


\begin{proof}
  The Hamiltonian has the form
  \begin{equation}
    H = \sum_{j=1}^{n-1} \tau_j(h) \ ,
  \end{equation}
  wherein $h$ denotes a fixed interaction term that acts nontrivially only on two neighboring qubits in $S$.
  For all $j \in `*{ 1, 2, \ldots, n-1 }$, $\tau_j$ is the shift operator that places $\tau_j(h)$ onto qubits $j$ and $j+1$. 
  By the Trotter formula, it is manifest that for suitably small times $t>0$, $\ee^{-iHt} \rho \ee^{iHt}$ can be arbitrarily well approximated by a brickwork circuit involving nearest-neighbor quantum gates, even for noncommuting $\tau_j(h)$ for different $j$. 
  For each cut specified by $j$, the small incremental entangling bound for the von-Neumann entropy \cite{Marien2016, Eisert2021PRL_entangling} gives
  \begin{equation} 
    \frac{d}{dt} I(S_1\ldots S_j : S_{j+1}\ldots S_n)_\rho \leq C \opnorm{h} \ .
  \end{equation}
  Summing this inequality over all cuts yields~\eqref{eq:Hamiltonianbound}.
  The constant $C$ is proven in Ref.~\cite{Marien2016} to be $C = 22 \log(2)$.
  The $2$ originates from the qubits' local dimension.
  In numerical studies, it is found to be rather tightly given by $C = 22 \log(2)$. 
\end{proof}

The linear growth of $E(\rho)$ in time is tight, in the sense that there exist local nearest-neighbor Hamiltonians that exhibit such growth for suitably short times.
This claim largely follows from the results of Ref.~\cite{SchuchQuench}, which build on those in Ref.~\cite{AnalyticalQuench}.
Intuitively speaking, the linear growth originates from a light-cone-like entanglement dynamics governed by Lieb-Robinson bounds.
For free bosons and fermions (called noninteracting bosons and fermions in this context), such bounds are readily computable.

\begin{corollary}[Lower bound on $E(\rho)$ under local Hamiltonian evolution]
\noproofref
Let $\rho$ denote any quantum state of $S$.
There exist local Hamiltonians and product initial conditions $\rho$, with $E(\rho)=0$, such that
  \begin{equation}
    E`*( \ee^{-iHt }\rho \ee^{iHt} ) \geq (n-1) `*[ \frac{4}{3\pi} - \frac{1}{2}\log(t)-1 ] \ . 
  \end{equation}
\end{corollary}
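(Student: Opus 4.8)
The plan is to exhibit an explicit family of quadratic nearest-neighbor Hamiltonians together with unentangled product initial states, and to invoke the exact free-particle quench analysis of Refs.~\cite{SchuchQuench,AnalyticalQuench} to lower-bound the entanglement generated across each cut. First I would take $\rho$ to be a pure product state of the chain, so that $\II[\rho]{S_1\ldots S_j}{S_{j+1}\ldots S_n} = 0$ for every $j$ and hence, by the definition~\eqref{eq:entgl-measure-1d-chain-mutual-information}, $E(\rho)=0$, matching the hypothesis. A convenient choice for $H$ is a critical free-fermionic chain (equivalently an XX or transverse-field-Ising chain) or a harmonic free-boson chain; after a Jordan--Wigner transformation or a normal-mode decomposition the evolved state $\rho_t \coloneqq \ee^{-iHt}\rho\,\ee^{iHt}$ is Gaussian, so every reduced-state entropy is determined exactly by the two-point correlation (covariance) matrix and is amenable to closed-form asymptotic analysis.

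The central step is to lower-bound, for each bulk cut $j$, the entanglement entropy $\HH[\rho_t]{S_1\ldots S_j}$ of the evolved state. Here I would use the quasiparticle (Calabrese--Cardy) picture, made rigorous for free systems via the covariance matrix: the quench emits entangled quasiparticle pairs that propagate ballistically with a bounded Lieb--Robinson velocity, so for suitably short times the entanglement across a cut grows monotonically and is bounded below by an explicit momentum integral of the dispersion relation. Evaluating this integral asymptotically---a stationary-phase or saddle-point estimate of the free-model correlation matrix---produces the universal prefactor $\tfrac{4}{3\pi}$ and the additive and logarithmic corrections appearing in the claimed per-cut contribution $\tfrac{4}{3\pi} - \tfrac12\log(t) - 1$. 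This is precisely the content I would extract from the analytical results of Ref.~\cite{AnalyticalQuench}, as sharpened and controlled in Ref.~\cite{SchuchQuench}.

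Finally, I would assemble the per-cut bounds. Since $\rho_t$ is pure (it is the unitary image of a pure product state), each cut satisfies the purity identity $\II[\rho_t]{S_1\ldots S_j}{S_{j+1}\ldots S_n} = 2\,\HH[\rho_t]{S_1\ldots S_j}$, which converts the entanglement-entropy estimate into a lower bound on the mutual information across that cut. Choosing $H$ translationally invariant (on a ring, say) makes every cut equivalent, so each of the $n-1$ cuts contributes at least the bracketed quantity; summing over all $n-1$ cuts---mirroring the summation used for the upper bound~\eqref{eq:Hamiltonianbound}---yields the overall factor $(n-1)$ and hence the stated lower bound on $E(\rho_t)$.

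The main obstacle I expect is the quantitative, rigorous control of the per-cut entropy: turning the heuristic quasiparticle picture into an honest lower bound with the \emph{displayed} constants, rather than a mere existence-of-growth statement, requires a careful uniform asymptotic evaluation of the Gaussian correlation matrix, a precise delineation of the short-time window in which no quasiparticle has yet recrossed a subsystem boundary, and attention to the gapless (critical) dispersion that is responsible for the logarithmic term. Pinning down the coefficient $\tfrac{4}{3\pi}$ and the logarithmic correction is the delicate part, and it is where I would lean most heavily on the explicit free-boson and free-fermion computations of Refs.~\cite{SchuchQuench,AnalyticalQuench}.
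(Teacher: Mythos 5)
Your proposal is correct and takes essentially the same route as the paper: a product initial state (the paper uses $\rho=\proj{1,1,\ldots,1}$), a critical Ising chain with periodic boundary conditions, and the entanglement lower bound of Theorem~1 in Ref.~\cite{SchuchQuench} (building on Ref.~\cite{AnalyticalQuench}) applied to each of the $n-1$ cuts and summed. The only differences are that the paper does not re-derive the constants $\tfrac{4}{3\pi}$ and $-\tfrac12\log(t)$ via any quasiparticle or stationary-phase analysis---it simply cites the theorem---and that it explicitly verifies the theorem's hypothesis by taking $n\geq 21$ odd so that the relevant subsystem length $L=(n-1)/2\geq 10$, a check your plan leaves implicit.
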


\begin{proof}
  Let $S$ consist of an odd number $n\geq 21$ of qubits.
  Let $H$ be the Ising Hamiltonian with periodic boundary conditions.
  Let the initial state be $\rho = \proj{1,1,\ldots, 1}$, so that $E(\rho)=0$.
  The time-evolved state is $\rho(t) = e^{-iHt} \rho e^{iHt}$.
  For this state, bounds on entanglement entropies for sufficiently large subsystems are known.
  Since we consider only bipartite cuts of $S$, the subsystem of length $L$ considered in Ref.~\cite{SchuchQuench} always satisfies $L=(n-1)/2$.
  Hence, the constraint $L\geq 10$ is always satisfied.
  It follows from Theorem 1 in Ref.~\cite{SchuchQuench}, applied to the $n-1$ cuts of $S$, that
  \begin{equation}
    E \bm{(} \rho(t) \bm{)} \geq (n-1) `*[\frac{4}{3\pi} - \frac{1}{2}\log(t) - 1 ] \ . 
  \end{equation}	 
\end{proof}

\section{Recovering the hypothesis-testing (relative) entropy at high complexities}
\label{appx-sec:Dhypr-asymptotic-properties}

Consider an ``unrestricted'' agent who can render POVM effects of arbitrarily high complexities.
In this Appendix, we formulate conditions under which the complexity entropy approximates the hypothesis-testing entropy.
We first identify the set of measurements that an unrestricted agent can render with arbitrary precision.

\begin{definition}[POVM effects accessible with unbounded complexity]
  Let $\{\Mr[r]\}$ denote any family of POVM-effect-complexity sets (\cref{defn:Mr-sets-general}).
  The set of \emph{POVM effects accessible with unbounded complexity} is
  \begin{align}
    \Mr[\infty] \coloneqq \overline{ \bigcup_{r\geq 0} \Mr[r] } \ .
      \label{eq:POVM-effects-accessible-with-unbounded-complexity}
  \end{align}
  The $\overline{A}$ denotes the topological closure of a set $A$.
\end{definition}

\begin{definition}[``Complexity-unrestricted'' entropy and relative entropy]
  Let $\Mr[\infty]$ denote any set of POVM effects defined by~\eqref{eq:POVM-effects-accessible-with-unbounded-complexity}.
  Let $\rho$ denote any subnormalized state; and $\Gamma$, any positive-semidefinite operator.
  Let $\eta \in (0, \tr(\rho) ]$.
  We define the \textit{``complexity-unrestricted'' relative entropy} $\DHypr[\infty][\eta]{\rho}{\Gamma}$ by replacing $\Mr[r]$ with $\Mr[\infty]$ in the definition of $\DHypr[r][\eta]{\rho}{\Gamma}$ (\cref{def:defn-DHypr}).
  The \textit{``complexity-unrestricted'' entropy} is $\HHypr[\infty][\eta]{\rho} \coloneqq -\DHypr[\infty][\eta]{\rho}{\Ident}$.
\end{definition}

If $\Mr[\infty]$ contains all POVM effects, we recover the hypothesis-testing (relative) entropy exactly:
$\HHypr[\infty][\eta]{\rho} = \HHyp[\eta]{\rho}$ $`[ \DHypr[\infty][\eta]{\rho}{\Gamma} = \DHyp[\eta]{\rho}{\Gamma} ]$.
Yet, for many natural families $\{\Mr[r]\}$ of POVM-effect-complexity sets, $\Mr[\infty]$ contains only a limited portion of all POVM effects.
For instance, let $C$ denote a unitary-complexity measure (\cref{defn:unitary-complexity}), and let $\Psimple$ denote a set of simple POVM effects (\cref{defn:Psimple-general}) consisting only of projectors.
Suppose that $`*{ \Mr[r] } = `*{ \Mr[r](\Psimple,C) }$, as per~\eqref{eq:Mr-general-from-Psimple-and-superop-cplx-measure}.
Since every unitary transformation maps projectors to projectors, $\Mr[\infty]$ is a set of projectors and therefore excludes almost all POVM effects.

However, even when limited in form, $\Mr[\infty]$ may contain enough effects to ensure that $\HHypr[r][\eta]{\rho}$ converges to a value \emph{near} $\HHyp[\eta]{\rho}$ as $r \to \infty$.
We now formulate conditions under which $\HHypr[\infty][\eta]{\rho}$ $`[ \DHypr[\infty][\eta]{\rho}{\Gamma} ]$ lies close to the hypothesis-testing (relative) entropy, $\HHyp[\eta]{\rho}$ $`[ \DHyp[\eta]{\rho}{\Gamma} ]$.

\begin{definition}[$q$-quasiuniversal sets of POVM effects]
  \label{def:quasi-universal-set}
  Let $q \geq 0$, and let $\Gamma$ denote any positive-semidefinite operator.
  Let $\Mr[\infty]$ denote any set of POVM effects defined by~\eqref{eq:POVM-effects-accessible-with-unbounded-complexity}.
  $\Mr[\infty]$ is \emph{$q$-quasiuniversal with respect to $\Gamma$} if, for all POVM effects $Q$ with $\opnorm{Q}=1$ and for all probability weights $g \in (0, 1)$, there exist accessible effects $\tilde{Q}, \tilde{Q}' \in \Mr[\infty]$ such that
  \begin{subequations}
    \label{eq:Mr-quasi-universality-conditions}
    \begin{gather}
      g \tilde{Q} \leq Q \leq (1-g)\tilde{Q}' + g \Ident\ 
      \label{eq:Mr-quasi-universality-conditions--sandwich-POVM-effects}
      \\
      \text{and }\hspace{0.05cm}
      \log \bm{(} \tr`(\tilde{Q}'\Gamma) \bm{)} - \log \bm{(} \tr`(\tilde{Q}\Gamma) \bm{)} \leq q\ .
      \label{eq:Mr-quasi-universality-conditions--entropy-gap}
    \end{gather}
  \end{subequations}
\end{definition}

Consider any $Q$ with $\opnorm{Q}=1$.
Whenever $\Mr[\infty]$ is quasiuniversal (for any $q$), one can ``underestimate'' $Q$ by forming a convex mixture from an accessible effect $\tilde{Q} \in \Mr[\infty]$ and the trivial effect $0$.
Similarly, one can ``overestimate'' $Q$ by mixing an accessible $\tilde{Q}'$ and the trivial effect $\Ident$.
One can estimate $Q$ with such mixtures, no matter the mixing probability $g$.
The variable $g$ continuously parametrizes inequalities that approach $0 \leq Q \leq \tilde{Q}'$ as $g \to 0$ and $\tilde{Q} \leq Q \leq \Ident$ as $g \to 1$.
These boundaries provide intuition about the parametrization: as one requirement (that $\tilde{Q}$ lower-bound $Q$) tightens, the other requirement (that $\tilde{Q}'$ upper-bound $Q$) relaxes and vice versa.
We exclude the probability weight $g=0$ ($g=1$) to avoid trivializing the role of $\tilde{Q}$ ($\tilde{Q}')$.

The estimations $\tilde{Q}$ and $\tilde{Q}'$ lie close to $Q$ because they lie close to one another, by condition~\eqref{eq:Mr-quasi-universality-conditions--entropy-gap}:
their candidate hypothesis-testing-entropy values $\log \bm{(} \tr`( \tilde{Q} \Gamma ) / \eta \bm{)}$ and $\log \bm{(} \tr`( \tilde{Q}' \Gamma ) / \eta \bm{)}$ differ by $\leq q$ for every error intolerance $\eta \in (0 , 1]$.
Without~\eqref{eq:Mr-quasi-universality-conditions--entropy-gap}, one might trivially satisfy~\eqref{eq:Mr-quasi-universality-conditions--sandwich-POVM-effects} with $\tilde{Q} = 0$ and $\tilde{Q}' = \Ident$.
Finally, as expected, $\Mr[\infty]$ is $q$-quasiuniversal for every $q$ and with respect to every $\Gamma$, whenever the set contains all POVM effects.
In this case, one can choose $\tilde{Q} = \tilde{Q}' = Q \in \Mr[\infty]$.

The hypothesis-testing (relative) entropy bounds the complexity-unrestricted (relative) entropy from above and below.

\begin{theorem}[Recoverability of the hypothesis-testing entropy]
  \label{thm:DHypr-recover-DHyp-r-to-infty}
  Let $\rho$ denote any subnormalized state; and $\Gamma$, any positive-semidefinite operator.
  Let $\Mr[\infty]$ denote any set of POVM effects defined by~\eqref{eq:POVM-effects-accessible-with-unbounded-complexity}.
  Let $q \geq 0$, $\eta \in (0, \tr(\rho) ]$, and $g \in (0 , 1)$.
  Suppose that $\Mr[\infty]$ is $q$-quasiuniversal with respect to $\Gamma$ (\cref{def:quasi-universal-set}).
  Then
  \begin{align}
      \DHyp[\eta]{\rho}{\Gamma}
      \geq \DHypr[\infty][\eta]{\rho}{\Gamma}
      \geq \DHyp[\eta_g]{\rho}{\Gamma} - q - \log`*( \frac{\tr`(\rho)}{g\eta} ) \ ,
        \label{eq:thm-DHypr-recover-DHyp-r-to-infty--bounds}
  \end{align}
  wherein $\eta_g \coloneqq (1-g) \eta + g \tr`(\rho) \geq \eta$.
  If $\Mr[\infty]$ is $q$-quasiuniversal with respect to $\Ident$, then
  \begin{align}
      \HHyp[\eta]{\rho}
      \leq \HHypr[\infty][\eta]{\rho}
      \leq \HHyp[\eta_g]{\rho} + q + \log`*( \frac{\tr`(\rho)}{g\eta} ) \ .
  \end{align}
\end{theorem}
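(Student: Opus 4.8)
The plan is to prove the two-sided bound~\eqref{eq:thm-DHypr-recover-DHyp-r-to-infty--bounds} on $\DHypr[\infty][\eta]{\rho}{\Gamma}$ and then deduce the entropy statement by specializing to $\Gamma=\Ident$. The upper bound $\DHyp[\eta]{\rho}{\Gamma}\geq \DHypr[\infty][\eta]{\rho}{\Gamma}$ is immediate: since $\Mr[\infty]$ is (by its definition as a closure of sets of POVM effects) a subset of all POVM effects, the infimum in the definition of $\DHypr[\infty][\eta]{\rho}{\Gamma}$ ranges over fewer candidates than the infimum defining $\DHyp[\eta]{\rho}{\Gamma}$, so the former cannot be smaller; equivalently, $\DHypr[\infty][\eta]{\rho}{\Gamma}\leq\DHyp[\eta]{\rho}{\Gamma}$ is the $r\to\infty$ analog of \cref{thm:DHypr-bounded-by-DHyp}. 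Thus the real content is the lower bound.

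For the lower bound, I would start from an optimal (or near-optimal) effect $Q$ for the \emph{unrestricted} hypothesis-testing relative entropy at the \emph{larger} intolerance $\eta_g$. Using the alternative expression~\eqref{eq:DHyp-alternative-expressions--ratio-trQGamma-trQrho} of \cref{thm:DHyp-alternative-expressions}, and the remark immediately following it, I may assume $\opnorm{Q}=1$, so that $Q$ is eligible for the quasiuniversality hypothesis. Invoking \cref{def:quasi-universal-set}, I obtain accessible effects $\tilde Q,\tilde Q'\in\Mr[\infty]$ satisfying the sandwich~\eqref{eq:Mr-quasi-universality-conditions--sandwich-POVM-effects} and the gap condition~\eqref{eq:Mr-quasi-universality-conditions--entropy-gap}. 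The key move is to use $\tilde Q$ (rescaled, if necessary, to meet the constraint $\tr(\tilde Q\rho)\geq\eta$) as a feasible candidate for the $\DHypr[\infty][\eta]{\rho}{\Gamma}$ optimization. Concretely, from $g\tilde Q\leq Q$ I get $g\,\tr(\tilde Q\rho)\leq\tr(Q\rho)$, and from $Q\leq(1-g)\tilde Q'+g\Ident$ I can lower-bound $\tr(Q\rho)$ in terms of $\tr(\tilde Q'\rho)$ and $\tr(\rho)$; combined with $\tr(Q\rho)\geq\eta_g=(1-g)\eta+g\tr(\rho)$, these inequalities let me certify that $\tilde Q$ (or its rescaling) satisfies the success probability constraint at level $\eta$, up to the advertised factors. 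Then the gap condition~\eqref{eq:Mr-quasi-universality-conditions--entropy-gap} transfers the objective value: $\tr(\tilde Q\Gamma)$ is at most $e^{q}\,\tr(\tilde Q'\Gamma)$, and $\tr(\tilde Q'\Gamma)$ is controlled by $\tr(Q\Gamma)$ through the upper half of the sandwich. Taking logarithms and tracking the constants produces exactly the $-q-\log\!\big(\tr(\rho)/(g\eta)\big)$ correction.

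The main obstacle I anticipate is the careful bookkeeping of the normalization factors in the \emph{ratio} form of the relative entropy. Because $\DHypr[\infty][\eta]{\rho}{\Gamma}$ uses the denominator $\tr(Q\rho)$ rather than $\eta$ (see \cref{def:defn-DHypr}), and because the sandwich~\eqref{eq:Mr-quasi-universality-conditions--sandwich-POVM-effects} converts between $Q$, $\tilde Q$, and $\tilde Q'$ with asymmetric roles for the mixing weight $g$, I must verify that the success-probability constraint $\tr(\tilde Q\rho)\geq\eta$ actually holds for the candidate I construct, possibly after replacing $\tilde Q$ by $\eta\tilde Q/\tr(\tilde Q\rho)$ as in the proof of \cref{thm:DHyp-alternative-expressions}. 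The inflation of $\eta$ to $\eta_g$ is precisely what makes the lower half of the sandwich deliver a strong enough bound on $\tr(Q\rho)$; getting the direction of every inequality right, and confirming that the stray factors collapse into the single logarithmic term $\log\!\big(\tr(\rho)/(g\eta)\big)$, is where the argument is delicate but not deep.

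Finally, the entropy statement follows by setting $\Gamma=\Ident$ throughout and using $\HHypr[\infty][\eta]{\rho}=-\DHypr[\infty][\eta]{\rho}{\Ident}$ together with $\HHyp[\eta]{\rho}=-\DHyp[\eta]{\rho}{\Ident}$ and the monotonicity $\eta_g\geq\eta$ (which, via \cref{thm:DHypr-monotonous-in-r-and-in-eta} applied to the hypothesis-testing entropy, reorders the inequalities appropriately upon negation). Negating~\eqref{eq:thm-DHypr-recover-DHyp-r-to-infty--bounds} flips the chain of inequalities and turns the additive correction into $+q+\log\!\big(\tr(\rho)/(g\eta)\big)$, yielding the stated two-sided bound on $\HHypr[\infty][\eta]{\rho}$.
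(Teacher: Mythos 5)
Your skeleton is the paper's (trivial upper bound by set inclusion; then take $Q$ optimal for $\DHyp[\eta_g]{\rho}{\Gamma}$ in the ratio form of \cref{thm:DHyp-alternative-expressions} with $\opnorm{Q}=1$, invoke quasiuniversality, and transfer feasibility and objective through the sandwich and the gap condition), but in executing it you have swapped the roles of $\tilde{Q}$ and $\tilde{Q}'$, and that is a genuine error, not bookkeeping. The only candidate the hypothesis makes feasible is $\tilde{Q}'$: the \emph{upper} half of~\eqref{eq:Mr-quasi-universality-conditions--sandwich-POVM-effects} gives $(1-g)\tr(\tilde{Q}'\rho) \geq \tr(Q\rho) - g\tr(\rho) \geq \eta_g - g\tr(\rho) = (1-g)\eta$, hence $\tr(\tilde{Q}'\rho)\geq\eta$. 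For your candidate $\tilde{Q}$, nothing in \cref{def:quasi-universal-set} lower-bounds $\tr(\tilde{Q}\rho)$: the lower half $g\tilde{Q}\leq Q$ yields only $\tr(\tilde{Q}\rho)\leq\tr(Q\rho)/g$, and $\tr(\tilde{Q}\rho)$ can sit far below $\eta$, so the constraint can simply fail. Nor can you repair this by rescaling $\tilde{Q}$ to $\eta\tilde{Q}/\tr(\tilde{Q}\rho)$, as in the proof of \cref{thm:DHyp-alternative-expressions}: the rescaled operator generally leaves $\Mr[\infty]$ (e.g., when $\Mr[\infty]$ consists only of projectors, which is the typical case for unitary complexity measures with projector simple effects), and when the factor exceeds one it need not even satisfy $\leq\Ident$. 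This is exactly why the complexity relative entropy is defined with the denominator $\tr(Q\rho)$ rather than $\eta$ — see the remark following \cref{def:defn-DHypr}.

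You also run the gap condition backwards: \eqref{eq:Mr-quasi-universality-conditions--entropy-gap} states $\tr(\tilde{Q}'\Gamma)\leq e^{q}\tr(\tilde{Q}\Gamma)$, whereas you assert $\tr(\tilde{Q}\Gamma)\leq e^{q}\tr(\tilde{Q}'\Gamma)$, which is its converse and produces nothing; and your plan to control $\tr(\tilde{Q}'\Gamma)$ via the upper half of the sandwich goes the wrong way — $Q\leq(1-g)\tilde{Q}'+g\Ident$ only \emph{lower}-bounds $\tr(\tilde{Q}'\Gamma)$ and introduces a stray $\tr(\Gamma)$ absent from the theorem. The correct chain uses $\tilde{Q}'$ as the feasible candidate and $\tilde{Q}$ only to bound the objective:
\begin{align}
  \DHypr[\infty][\eta]{\rho}{\Gamma}
  \;\geq\; -\log\left(\frac{\tr(\tilde{Q}'\Gamma)}{\tr(\tilde{Q}'\rho)}\right)
  \;\geq\; -\log\left(\frac{\tr(\tilde{Q}'\Gamma)}{\eta}\right)
  \;\geq\; -\log\left(\frac{e^{q}\,\tr(\tilde{Q}\Gamma)}{\eta}\right)
  \;\geq\; -\log\left(\frac{\tr(Q\Gamma)}{g\eta}\right) - q \ ,
\end{align}
where the last step uses $g\tilde{Q}\leq Q$; then $\tr(Q\Gamma) = e^{-\DHyp[\eta_g]{\rho}{\Gamma}}\,\tr(Q\rho) \leq e^{-\DHyp[\eta_g]{\rho}{\Gamma}}\,\tr(\rho)$ collapses the factors into the advertised $q + \log\bigl(\tr(\rho)/(g\eta)\bigr)$. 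Your final reduction of the entropy statement — set $\Gamma=\Ident$ and negate — is fine once the lower bound is repaired.
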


\begin{proof}[**thm:DHypr-recover-DHyp-r-to-infty]
  The $\DHyp[\eta]{\rho}{\Gamma}$ optimization ranges over all POVM effects, while the $\DHypr[\infty][\eta]{\rho}{\Gamma}$ optimization ranges over only effects in $\Mr[\infty]$.
  Therefore, $\DHyp[\eta]{\rho}{\Gamma} \geq \DHypr[\infty][\eta]{\rho}{\Gamma}$.
  Now, we prove the second inequality in~\eqref{eq:thm-DHypr-recover-DHyp-r-to-infty--bounds}.
  Let $Q$ denote an optimal POVM effect for $\DHyp[\eta_g]{\rho}{\Gamma}$ in~\eqref{eq:DHyp-alternative-expressions--ratio-trQGamma-trQrho}.
  Without loss of generality, $\opnorm{Q} = 1$, since we could otherwise replace $Q$ by another candidate $Q' \coloneqq Q/\opnorm{Q} \geq Q$ that achieves the same objective value as $Q$.
  Moreover, $\tr`(Q\rho) \geq \eta_g$ and $\tr`(Q\Gamma)/\tr`(Q\rho) = \exp\bm{(} -\DHyp[\eta_g]{\rho}{\Gamma} \bm{)}$.
  Since $\Mr[\infty]$ is $q$-quasiuniversal, some $\tilde{Q},\tilde{Q}' \in \Mr[\infty]$ satisfy the conditions~\eqref{eq:Mr-quasi-universality-conditions}.
  Condition~\eqref{eq:Mr-quasi-universality-conditions--sandwich-POVM-effects} implies that
  \begin{align}
    (1-g)\tr`(\tilde{Q}'\rho)
    \geq \tr`(Q\rho) - g \tr`(\rho)
    \geq \eta_g - g \tr`(\rho) 
    = (1-g)\eta \ .
  \end{align}
  Equivalently, $\tr`(\tilde{Q}'\rho) \geq \eta$.
  Therefore, $\tilde{Q}'$ is a candidate for the $\DHypr[\infty][\eta]{\rho}{\Gamma}$ optimization in~\eqref{eq:defn-DHypr}, so
  \begin{align}
    \DHypr[\infty][\eta]{\rho}{\Gamma}
    &\geq - \log`*( \frac{\tr`*(\tilde{Q}'\Gamma)}{\tr`*(\tilde{Q}'\rho)} )
    \geq - \log`*( \frac{\tr`*(\tilde{Q}'\Gamma)}{\eta} )
    \geq - \log`*( e^q \frac{ \tr`*(\tilde{Q}\Gamma) }{\eta} )
    \geq  - \log`*( \frac{\tr`*(Q\Gamma)}{g\eta} ) - q
      \nonumber\\
    &\geq - \log`*( \frac{\tr`*(Q\Gamma)}{\tr`*(Q\rho)} \; \frac{\tr`*(\rho)}{g\eta} ) - q
    = \DHyp[\eta_g]{\rho}{\Gamma} - q - \log`*( \frac{\tr`*(\rho)}{g\eta} )\ .
  \end{align}
  The third inequality holds because $\tr`(\tilde{Q}'\Gamma) \leq e^q \tr`(\tilde{Q}\Gamma)$, by condition~\eqref{eq:Mr-quasi-universality-conditions--entropy-gap}.
  The fourth inequality follows from condition~\eqref{eq:Mr-quasi-universality-conditions--sandwich-POVM-effects};
  the fifth from $\tr`(Q\rho)\leq \tr`(\rho)$.
\end{proof}

We now provide an example of quasiuniversal set that arises naturally in quantum computation.
\begin{proposition}[Example of $q$-quasiuniversal set]
  \label{thm:example-quasi-universal-set}
  Let $\mathcal{G}$ denote any set of unitary gates that is universal for quantum computation.
  Let $C_{\mathcal{G}}$ be the unitary-circuit-complexity measure associated with $\mathcal{G}$ (\cref{defn:circuit-complexity-measure}).
  Let $\Psimple$ denote a set of simple POVM effects (\cref{defn:Psimple-general}) that consists only of projectors and contains at least one rank-1 projector.
  Let $`*{ \Mr[r] = \Mr[r]`*( \Psimple,C_{\mathcal{G}} ) }$ denote the family of POVM-effect-complexity sets defined by~\eqref{eq:Mr-general-from-Psimple-and-superop-cplx-measure}.
  Let $\Mr[\infty]$ denote the set of POVM effects defined in~\eqref{eq:POVM-effects-accessible-with-unbounded-complexity}, and assume that the effects in $\Mr[\infty]$ act on a Hilbert space of dimensionality $d$.
  $\Mr[\infty]$ is $q$-quasiuniversal with respect to $\Ident$ (\cref{def:quasi-universal-set}).
  The $q$ denotes the largest gap within the set $\mathcal{X} \coloneqq `*{ \log \bm{(} \tr(P) \bm{)} : P \in \Psimple } \subset [0,\log(d)]$:
  \begin{align}
    q \coloneqq \sup`*{ \abs{b-a}: (a,b) \subset [0,\log(d)] \setminus \mathcal{X} } \ .
    \label{eq:thm-Psimple-CU-is-q1-quasi-universal-wrt-Ident--entropy-gap}
  \end{align}
  $q$ is the greatest length of any interval $(a,b) \subset [0,\log(d)]$ disjoint from $\mathcal{X}$.
\end{proposition}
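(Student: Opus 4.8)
The plan is first to characterize $\Mr[\infty]$ concretely and then to build the sandwiching effects $\tilde{Q},\tilde{Q}'$ from the spectral projectors of $Q$. Let $\mathcal{R} \coloneqq \{\tr(P) : P \in \Psimple\}$ be the set of ranks of the projectors in $\Psimple$, so that $\mathcal{X} = \{\log(k) : k \in \mathcal{R}\}$. Since $\Psimple$ contains a rank-$1$ projector and, by \cref{defn:Psimple-general}, the identity $\Ident$, we have $1,d \in \mathcal{R}$, hence $0,\log(d) \in \mathcal{X}$, and $\mathcal{R}$ is a finite subset of $\{1,\ldots,d\}$. I would then show that $\Mr[\infty]$ contains every projector whose rank lies in $\mathcal{R}$: fixing $k \in \mathcal{R}$, a projector $P_0 \in \Psimple$ of rank $k$, and an arbitrary rank-$k$ projector $\Pi = V^\dagger P_0 V$ with $V \in \mathrm{U}(d)$, universality of $\mathcal{G}$ supplies finite-product unitaries $U_j$ with $U_j^\dagger P_0 U_j \to V^\dagger P_0 V = \Pi$ (conjugation being insensitive to global phases). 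Writing $\mathcal{E}_j(\cdot) \coloneqq U_j(\cdot)U_j^\dagger$, each $\mathcal{E}_j^\dagger(P_0) = U_j^\dagger P_0 U_j$ lies in some $\Mr[r_j]$ with $r_j < \infty$, so $\Pi \in \overline{\bigcup_r \Mr[r]} = \Mr[\infty]$.

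Next I would reduce $q$-quasiuniversality to a single rank-approximation step. Fix a POVM effect $Q$ with $\opnorm{Q}=1$ and a weight $g \in (0,1)$, and diagonalize $Q = \sum_{i=1}^d \lambda_i \proj{e_i}$ with $1 = \lambda_1 \geq \cdots \geq \lambda_d \geq 0$. Let $N \coloneqq |\{i : \lambda_i \geq g\}|$, so that $\lambda_N \geq g$ (with $\lambda_{N+1} < g$ when $N < d$), and $N \geq 1$ since $\lambda_1 = 1 \geq g$. Define $k_- \coloneqq \max\{k \in \mathcal{R} : k \leq N\}$ and $k_+ \coloneqq \min\{k \in \mathcal{R} : k \geq N\}$, both well defined because $1 \leq N \leq d$ and $1,d \in \mathcal{R}$. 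Set $\tilde{Q} \coloneqq \sum_{i=1}^{k_-} \proj{e_i}$ and $\tilde{Q}' \coloneqq \sum_{i=1}^{k_+} \proj{e_i}$, the projectors onto the top $k_-$ and top $k_+$ eigenvectors of $Q$. By the first step these lie in $\Mr[\infty]$, since their ranks $k_-,k_+$ belong to $\mathcal{R}$.

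Finally I would verify the two defining conditions. For the sandwich \eqref{eq:Mr-quasi-universality-conditions--sandwich-POVM-effects}, I compute $Q - g\tilde{Q} = \sum_{i \leq k_-}(\lambda_i - g)\proj{e_i} + \sum_{i > k_-}\lambda_i\proj{e_i} \geq 0$, using $\lambda_i \geq \lambda_{k_-} \geq \lambda_N \geq g$ for $i \leq k_- \leq N$; and $(1-g)\tilde{Q}' + g\Ident - Q = \sum_{i \leq k_+}(1-\lambda_i)\proj{e_i} + \sum_{i > k_+}(g - \lambda_i)\proj{e_i} \geq 0$, using $\lambda_i \leq 1$ and, for the (possibly empty) range $i > k_+ \geq N$, $\lambda_i \leq \lambda_{N+1} < g$. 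For the entropy gap \eqref{eq:Mr-quasi-universality-conditions--entropy-gap} with $\Gamma = \Ident$, I note $\tr(\tilde{Q}') = k_+$ and $\tr(\tilde{Q}) = k_-$; by construction no element of $\mathcal{R}$ lies strictly between $k_-$ and $k_+$, so the open interval $(\log(k_-),\log(k_+)) \subseteq [0,\log(d)]$ is disjoint from $\mathcal{X}$ and therefore has length at most $q$, giving $\log(\tr(\tilde{Q}')) - \log(\tr(\tilde{Q})) \leq q$.

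The step I expect to require the most care is the choice of the single target rank $N = |\{i : \lambda_i \geq g\}|$: it is precisely this common threshold that lets both sandwich inequalities hold simultaneously with projectors whose ranks bracket one point of $[0,\log(d)]$, so that the maximal-gap quantity $q$—rather than some larger separation—controls their log-rank distance. The degenerate case $N = d$ forces $\tilde{Q} = \tilde{Q}' = \Ident$, for which both inequalities and the gap bound are immediate, and the density/closure argument of the first step is the only place where universality of $\mathcal{G}$ is used.
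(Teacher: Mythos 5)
Your proof is correct and takes essentially the same route as the paper's: diagonalize $Q$ with decreasing eigenvalues, count the eigenvalues $\geq g$, bracket that count by the nearest available ranks in $\Psimple$ (the paper's $\ell_g^\pm$ are your $k_\pm$), verify the sandwich entrywise, and bound the log-rank gap by $q$ via the interval $(\log k_-,\log k_+)$ being disjoint from $\mathcal{X}$. Your only deviation is to make explicit, via a preliminary closure lemma, that $\Mr[\infty]$ contains every projector whose rank lies in $\mathcal{R}$ — a point the paper compresses into one sentence about finite-complexity approximation of $U$ and $U'$ — which is a welcome tightening rather than a different argument.
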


\begin{proof}[**thm:example-quasi-universal-set]
  Let $Q$ denote any POVM effect with $\opnorm{Q}=1$.
  Let us fix a matrix representation of $Q$ that is diagonal and has decreasing diagonal elements: $Q = \diag`(\lambda_1, \lambda_2, \ldots, \lambda_d)$, with $1 = \lambda_1 \geq \lambda_2 \geq \ldots \geq \lambda_d \geq 0$.
  Consider any $g \in (0, 1)$, and let $\ell_g \coloneqq \max`{ i: \lambda_i \geq g}$ denote the number of $Q$ eigenvalues not less than $g$.
  $\ell_g$ satisfies $1 \leq \ell_g \leq d$, since $\lambda_1 = 1 \geq g$.
  Let
  \begin{align}
    \ell_g^- \coloneqq \max`*{ \rank(P)\,:\; P \in \Psimple,\ \rank(P) \leq \ell_g }
    \; \; \; \; \text{and} \; \; \; \;
    \ell_g^+ \coloneqq \min`*{ \rank(P)\,:\; P \in \Psimple,\ \rank(P) \geq \ell_g } \ .
  \end{align}
  As per the definitions of $\ell_g^-$ and $\ell_g^+$, there exist projectors $\tilde{P}, \tilde{P}' \in \Psimple$ such that $\rank(\tilde{P}) = \ell_g^-$ and $\rank(\tilde{P}') = \ell_g^+$.
  Moreover, $1 \leq \ell_g^- \leq \ell_g^+ \leq d$, since $\Psimple$ contains both a rank-1 projector and $\Ident$.
  Let $U$ denote a unitary such that $\tilde{Q} = U^\dagger \, \tilde{P} \, U$ is diagonal: $\tilde{Q} = \diag(1,\ldots, 1, 0, \ldots, 0)$, with $\rank(\tilde{Q}) = \ell_g^-$.
  Likewise, let $U'$ denote a unitary such that $\tilde{Q} = U'^\dagger \, \tilde{P}' \, U'$ is diagonal, with $\rank(\tilde{Q}') = \ell_g^+$.
  By assumption, $U$ and $U'$ are arbitrarily well-approximated by unitaries of finite complexity, so $\tilde{Q},\tilde{Q}' \in \Mr[\infty]$.
  We now show that $\tilde{Q}$ and $\tilde{Q}'$ satisfy the conditions~\eqref{eq:Mr-quasi-universality-conditions} in the definition of a $q$-quasiuniversal set.

  Consider the operator $g \tilde{Q}$.
  The first $\ell_g \geq \ell_g^-$ diagonal entries of $g \tilde{Q}$ do not exceed $g$ and, hence, do not exceed the first $\ell_g$ diagonal entries of $Q$.
  Moreover, the last $\ell_g$ diagonal entries of $g \tilde{Q}$ equal 0 and, hence, do not exceed the last $\ell_g$ diagonal entries of $Q$.
  Thus, $g \tilde{Q} \leq Q$.
  Now, consider the operator $(1-g) \tilde{Q}' + g \Ident$.
  The first $\ell_g \leq \ell_g^+$ diagonal entries of $(1-g) \tilde{Q}' + g \Ident$ equal 1 and, hence, are not less than the first $\ell_g$ diagonal entries of $Q$.
  Moreover, the last $\ell_g$ diagonal entries of $g \tilde{Q}$ are not less than $g$ and, hence, are not less than the last $\ell_g$ diagonal entries of $Q$.
  Thus, $Q \leq (1-g) \tilde{Q}' + g \Ident$.
  As such, $\tilde{Q}$ and $\tilde{Q}'$ satisfy condition~\eqref{eq:Mr-quasi-universality-conditions--sandwich-POVM-effects}.

  Compare the definitions of $\ell_g^-$ and $\ell_g^+$.
  There exists no projector $P \in \Psimple$ such that $\ell_g^- < \rank(P) < \ell_g^+$.
  Accordingly, the interval $\bm{(} \log`(\ell_g^-) , \log`(\ell_g^+) \bm{)}$ is disjoint from $\mathcal{X}$.
  $[$If $\ell_g^- = \ell_g^+$, then $\bm{(} \log`(\ell_g^-) , \log`(\ell_g^+) \bm{)} = \emptyset$ by convention.$]$
  Thus, $\log \bm{(} \tr`(\tilde{Q}') \bm{)} - \log \bm{(} \tr`(\tilde{Q}) \bm{)} = \log`(\ell_g^+) - \log`(\ell_g^-) \leq q$, with $q$ defined as in~\eqref{eq:thm-Psimple-CU-is-q1-quasi-universal-wrt-Ident--entropy-gap}, so $\tilde{Q}$ and $\tilde{Q}'$ satisfy condition~\eqref{eq:Mr-quasi-universality-conditions--entropy-gap}.
\end{proof}

\section{Data compression under complexity limitations}
\label{appx-topic:data-compression-complexity-limitations}

Here, we establish optimal protocols for data compression under complexity limitations.
We thereby prove the bounds~\eqref{eq:optimal-qubit-compression-number} in \cref{sec:data-compression-under-computational-limitations}.

Let $\rho$ denote any state, and $C$ any unitary-complexity measure, of $n$ qubits.
Let $r \geq 0$.
Let $m \leq n$ denote a number of qubits; and $\epsilon \in [0, 1)$, an error parameter.
We define an \emph{$(m, r,\epsilon)$--data-compression protocol} as a unitary $U$ that satisfies $C(U) \leq r$ and that can compress $\rho$ into $m$ qubits.
The compression succeeds if the other $n-m$ qubits end in a state $\epsilon$-close to $\ket{0^{n-m}}$ in fidelity:
there exists a subset $\mathcal{W} \subset `{ 1, 2, \ldots, n}$ of qubits such that $\abs{\mathcal{W}} = m$ and
\begin{align}
  \tr \bm{(} \tr_\mathcal{W}(U \rho U^\dagger) \,
  \proj{0^{n-m}} \bm{)} \geq 1-\epsilon \ .
  \label{defn-data-compression-protocol--compression-quality}
\end{align}

We first prove that some protocol compresses $\rho$ into a number of qubits proportional to the reduced complexity entropy (\cref{defn:Hhypr}).

\begin{theorem}[Achievability of data compression]
  \label{thm:complexity-entropy-data-compression-achievability}
  Let $\rho$ denote any state, and $C$ any unitary-complexity measure (\cref{defn:unitary-complexity}), of $n$ qubits.
  Let $\Psimple$ denote the set of simple POVM effects (\cref{defn:Psimple-general}) defined in~\eqref{eq:setting-defn-Mrzero}.
  Let $r \geq 0$ and $\epsilon \in [0, 1)$.
  Let $\Mr[r] = \Mr[r](\Psimple,C)$ denote the POVM-effect-complexity set defined in~\eqref{eq:Mr-general-from-Psimple-and-superop-cplx-measure}.
  There exists an $(m, r,\epsilon)$--data-compression protocol for $\rho$ such that
  \begin{align}
      \label{eq:thm-complexity-entropy-data-compression-achievability--limit}
    m = \frac1{\log(2)} \, \Hhypr[r][1-\epsilon]{\rho} \ .
  \end{align}
  The reduced complexity entropy $\Hhypr[r][1-\epsilon]{}$ is defined with respect to $\Mr[r]$.
\end{theorem}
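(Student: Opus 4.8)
The plan is to take an optimal (or arbitrarily near-optimal) candidate POVM effect $Q \in \Mr[r]$ for the reduced complexity entropy $\Hhypr[r][1-\epsilon]{\rho}$ and read off from its structure both the compressing unitary and the subset $\mathcal{W}$ of retained qubits. Because $\{\Mr[r]\}$ here is built from the simple effects $\Psimple$ of~\eqref{eq:setting-defn-Mrzero} via~\eqref{eq:Mr-general-from-Psimple-and-superop-cplx-measure}, any such $Q$ has the form $Q = U^\dagger P U$, where $C(U) \leq r$ and $P = \Ident_{\mathcal{W}} \otimes \proj{0^{n-m}}_{\mathcal{W}^{\rm c}}$ projects onto $\ket0$ each qubit outside $\mathcal{W}$. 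I would set $m \coloneqq \abs{\mathcal{W}} = \log_2 \bm{(} \tr`(Q) \bm{)}$, so that $\tr`(Q) = 2^m$, and propose $U$ itself as the $(m,r,\epsilon)$--data-compression protocol.

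The crux is then a direct computation, mirroring the erasure argument in \cref{sec:main-erasure-Wcost-degenerate-H-simpleprotocol}. The constraint defining the reduced complexity entropy is $\tr`(Q\rho) \geq \eta = 1-\epsilon$, and this constraint unwinds exactly into the compression-quality condition~\eqref{defn-data-compression-protocol--compression-quality}:
\begin{align}
  \tr`(Q\rho)
  = \tr \bm{(} P\, U\rho U^\dagger \bm{)}
  = \tr`*{ \proj{0^{n-m}}_{\mathcal{W}^{\rm c}}\, \tr_{\mathcal{W}} \bm{(} U\rho U^\dagger \bm{)} }
  \geq 1-\epsilon \ .
\end{align}
Thus the unitary $U$ achieving the reduced-complexity-entropy optimum automatically satisfies~\eqref{defn-data-compression-protocol--compression-quality}, so $U$ is a valid $(m,r,\epsilon)$--data-compression protocol. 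The identity $\tr`(Q) = 2^m$ combined with the definition~\eqref{eq:defn-Hhypr} of $\Hhypr[r][1-\epsilon]{\rho}$ as $\log \bm{(} \tr`(Q) \bm{)}$ at the optimum gives $m\log(2) = \Hhypr[r][1-\epsilon]{\rho}$, which is~\eqref{eq:thm-complexity-entropy-data-compression-achievability--limit}.

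The main obstacle I anticipate is a discreteness/attainability issue: the infimum in~\eqref{eq:defn-Hhypr} ranges over effects $Q$ whose traces are powers of $2$, so the quantity $\Hhypr[r][1-\epsilon]{\rho}$ is an integer multiple of $\log(2)$, but I must confirm that the optimum is actually \emph{attained} rather than merely approached, so that the number $m$ of retained qubits is a genuine integer. I would resolve this by invoking the compactness of $\Mr[r]$ (guaranteed by \cref{thm:DHypr-interpretation-hypo-test-Mr-closed}, since $\Psimple$ is finite and one may work with a compact gate set, or argue directly that only finitely many values of $\tr`(Q)$ occur), which lets me replace the infimum by a minimum attained at some $Q$ of the stated form. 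A secondary point to handle carefully is the relabeling so that $\mathcal{W}$ is literally a subset of $\{1,\ldots,n\}$ and $P$ has the canonical tensor-product form after a permutation of qubits; this is routine but should be stated explicitly to justify reading off $m = \abs{\mathcal{W}}$.
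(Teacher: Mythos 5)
Your proposal is correct and follows essentially the same route as the paper's proof: decompose a near-optimal $Q = U^\dagger P U \in \Mr[r]$, read off $\mathcal{W}$ and $m = \log_2\bm{(}\tr(Q)\bm{)}$ from $P$, and unwind the constraint $\tr(Q\rho)\geq 1-\epsilon$ into the compression condition~\eqref{defn-data-compression-protocol--compression-quality}; the paper handles your attainability worry implicitly by taking $Q$ with $\log\bm{(}\tr(Q)\bm{)}\leq \Hhypr[r][1-\epsilon]{\rho}+\zeta$ for arbitrary $\zeta>0$ and letting $\zeta\to0$, which works precisely because of the discreteness you identify. One small caution: the compactness route via \cref{thm:DHypr-interpretation-hypo-test-Mr-closed} needs $C$ to be a circuit-complexity measure with a compact gate set, whereas the theorem allows an arbitrary unitary-complexity measure, so your fallback argument---that $\tr(Q)$ takes only the finitely many values $2^0,\ldots,2^n$, making the infimum of the objective attained---is the one to use in full generality.
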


The above theorem can be reformulated in terms of the complexity entropy $\HHypr[r][1-\epsilon]{\rho}$.
\cref{thm:relation-Dhypr-DHypr} shows that the reduced complexity entropy $\Hhypr[r][1-\epsilon]{\rho}$ differs from the complexity entropy $\HHypr[r][1-\epsilon]{\rho}$ by at most $\log(1/[1-\epsilon])$.
Combining this fact with \cref{thm:complexity-entropy-data-compression-achievability} yields the bounds~\eqref{eq:optimal-qubit-compression-number} in \cref{sec:data-compression-under-computational-limitations}.
If $\epsilon \approx 0$, then the optimal number of qubits in~\eqref{eq:thm-complexity-entropy-data-compression-achievability--limit} is approximately proportional to the complexity entropy: $m \approx \HHypr[r][1-\epsilon]{\rho} / \log(2)$.

We further prove that the protocol in \cref{thm:complexity-entropy-data-compression-achievability} is optimal.

\begin{theorem}[Optimality of data compression]
  \label{thm:complexity-entropy-data-compression-optimality}
  Let $\rho$ denote any state of $n$ qubits.
  Let $r \geq 0$ and $\epsilon \in [0, 1)$.
  Let $\Mr[r]$ denote a POVM-effect-complexity set as in \cref{thm:complexity-entropy-data-compression-achievability}.
  Every $(m, r,\epsilon)$--data-compression protocol satisfies
  \begin{align}
    m \geq \frac1{\log(2)} \, \Hhypr[r][1-\epsilon] {\rho} \ .
  \end{align}
  The reduced complexity entropy $\Hhypr[r][1-\epsilon]{}$ is defined with respect to $\Mr[r]$.
\end{theorem}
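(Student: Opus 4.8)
The plan is to reverse the construction used in the achievability direction (\cref{thm:complexity-entropy-data-compression-achievability}) so that any valid compression protocol yields a feasible candidate for the reduced-complexity-entropy optimization. Concretely, suppose $U$ is an $(m,r,\epsilon)$--data-compression protocol, so $C(U)\leq r$ and there is a subset $\mathcal{W}\subset\{1,\ldots,n\}$ with $|\mathcal{W}|=m$ satisfying the compression-quality condition \eqref{defn-data-compression-protocol--compression-quality}. I would first define the simple effect $P\coloneqq \Ident_{\mathcal{W}}\otimes\proj{0^{n-m}}_{\mathcal{W}^{\rm c}}\in\Psimple$, exactly as in the erasure arguments of \cref{sec:main-erasure-Wcost-degenerate-H} and of \cref{thm:erasure-nontrivial-Ham-Wrstar}. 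Then set $Q\coloneqq U^\dagger P\, U$. Because $\Psimple$ is the family \eqref{eq:setting-defn-Mrzero} and $C(U^\dagger)=C(U)\leq r$ (using, if needed, adjoint-invariance, or simply the circuit-complexity construction in which a unitary and its inverse have equal complexity), the definition \eqref{eq:Mr-general-from-Psimple-and-superop-cplx-measure} gives $Q\in\Mr[r]$.

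Next I would verify that $Q$ is feasible for the $\Hhypr[r][1-\epsilon]{\rho}$ optimization in \eqref{eq:defn-Hhypr} and compute its objective value. The key computation is the chain
\begin{align}
  \tr(Q\rho)
  = \tr\bm{(} P\, U\rho U^\dagger \bm{)}
  = \tr\bm{(} \proj{0^{n-m}}\, \tr_{\mathcal{W}}(U\rho U^\dagger) \bm{)}
  \geq 1-\epsilon \ ,
\end{align}
which is precisely the compression-quality condition \eqref{defn-data-compression-protocol--compression-quality}; hence $Q$ satisfies the constraint $\tr(Q\rho)\geq\eta$ with $\eta=1-\epsilon$. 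Because $U$ is unitary, the trace of $Q$ is
\begin{align}
  \tr(Q) = \tr\bm{(} U^\dagger P\, U \bm{)} = \tr(P) = 2^{m}\ ,
\end{align}
since $P$ has rank $2^m$. Feeding this feasible candidate into the infimum defining the reduced complexity entropy yields
\begin{align}
  \Hhypr[r][1-\epsilon]{\rho}
  \leq \log\bm{(} \tr(Q) \bm{)}
  = m\log(2)\ ,
\end{align}
which rearranges to $m\geq \Hhypr[r][1-\epsilon]{\rho}/\log(2)$, the claimed bound. I would note that this direction is exactly the ``reverse'' half of the erasure argument in \cref{sec:main-erasure-Wcost-degenerate-H-simpleprotocol}, now read as a statement about compression size rather than work cost.

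I do not expect a genuine obstacle here, as the argument is essentially a bookkeeping translation of the erasure correspondence; the only points requiring care are the two conventions implicit in the setup. First, I must be sure that $C(U)\leq r$ translates into $Q=U^\dagger P U\in\Mr[r]$, which relies on the complexity of $U^\dagger$ not exceeding that of $U$ — automatic for circuit-complexity measures built from a gate set closed under inverses, and otherwise covered by adjoint-invariance. Second, I must confirm that $P\in\Psimple$ genuinely has the form admitted by \eqref{eq:setting-defn-Mrzero}, which it does after the harmless relabeling of qubits that places the $m$ identity factors on $\mathcal{W}$ and the $n-m$ projectors on $\mathcal{W}^{\rm c}$. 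With these conventions fixed, the proof is a short direct verification, and I would close by remarking that combining \cref{thm:complexity-entropy-data-compression-achievability,thm:complexity-entropy-data-compression-optimality} pins down the optimal compression size as $\Hhypr[r][1-\epsilon]{\rho}/\log(2)$ exactly, whence \eqref{eq:optimal-qubit-compression-number} follows via \cref{thm:relation-Dhypr-DHypr}.
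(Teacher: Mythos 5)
Your proof is correct and matches the paper's own argument essentially verbatim: define $P=\Ident_2^{\otimes m}\otimes\proj{0^{n-m}}$ (after relabeling qubits so $\mathcal{W}$ comes first), set $Q=U^\dagger P U\in\Mr[r]$, check feasibility via the compression-quality condition~\eqref{defn-data-compression-protocol--compression-quality}, and read off $\tr(Q)=2^m$ to conclude $m\log(2)\geq\Hhypr[r][1-\epsilon]{\rho}$. Your side worry about $C(U^\dagger)$ is moot: the definition~\eqref{eq:Mr-general-from-Psimple-and-superop-cplx-measure} places $\mathcal{E}^\dagger(P)$ in $\Mr[r]$ whenever $C(\mathcal{E})\leq r$, so $C(U)\leq r$ already yields $U^\dagger P U\in\Mr[r]$ without invoking adjoint-invariance.
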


The above data-compression problem maps directly onto the thermodynamic-erasure problem in \cref{sec:main-erasure-Wcost-degenerate-H}.
The proofs of \cref{thm:complexity-entropy-data-compression-achievability,thm:complexity-entropy-data-compression-optimality} are closely related to the arguments in \cref{sec:main-erasure-Wcost-degenerate-H} and to the proof of \cref{thm:erasure-nontrivial-Ham-Wrstar}.

\begin{proof}[*thm:complexity-entropy-data-compression-achievability]
  Consider any $\zeta > 0$.
  There exists a $Q \in \Mr[r]$ such that $\tr`(Q\rho) \geq 1-\epsilon$ and $\log \bm{(}  \tr`(Q) \bm{)} \leq \Hhypr[r][1-\epsilon]{\rho} + \zeta$.
  $Q = U^\dagger P U$ for some unitary $U$ satisfying $C(U) \leq r$ and for some simple POVM effect $P$ acting as $\Ident_2$ on every qubit in a subset $\mathcal{W} \subset `{ 1, 2, \ldots, n}$ and as $\proj{0}$ on every other qubit.
  Without loss of generality, assume that $\mathcal{W}$ consists of the first $m \coloneqq \abs{\mathcal{W}}$ of the $n$ qubits.
  Hence $P = \Ident_2^{\otimes m} \otimes \proj{0^{n-m}}$.

  $U$ is an $(m, r,\epsilon)$--data-compression protocol for $\rho$, since $C(U) \leq r$ and condition~\eqref{defn-data-compression-protocol--compression-quality} holds:
  \begin{align}
    \tr \bm{(} \tr_{\mathcal{W}}(U \rho U^\dagger) \, \proj{0^{n-m}} \bm{)}
    = \tr ( [U \rho U^\dagger] \, [ \Ident_2^{\otimes m} \otimes \proj{0^{n-m}} ] )
    =  \tr ( [U \rho U^\dagger]\, P )
    =  \tr`( Q \rho )
    &\geq 1 - \epsilon \ .
  \end{align}
  Moreover,
  \begin{align}
    m 
    = \log_2 \bm{(}  \tr`(P) \bm{)}
    = \log_2 \bm{(}  \tr`(Q) \bm{)}
    = \frac1{\log(2)} \, \log \bm{(}  \tr`(Q) \bm{)}
    \leq \frac1{\log(2)} \left( \Hhypr[r][1-\epsilon]{\rho} + \zeta \right) \ ,
  \end{align}
  which implies~\eqref{eq:thm-complexity-entropy-data-compression-achievability--limit}, since $\zeta$ is arbitrary.
\end{proof}

\begin{proof}[*thm:complexity-entropy-data-compression-optimality]
  Let $U$ denote any $(m,r,\epsilon)$--data-compression protocol for $\rho$.
  By definition, there exists a subset $\mathcal{W} \subset `{1, 2, \ldots, n}$ such that $\abs{\mathcal{W}} = m$ and condition~\eqref{defn-data-compression-protocol--compression-quality} holds.
  Without loss of generality, assume that $\mathcal{W}$ consists of the first $m$ of $n$ qubits.
  Let $P \coloneqq \Ident_2^{\otimes m} \otimes \proj{0^{n-m}}$ and $Q \coloneqq U^\dagger P U$.
  By construction, $Q$ is in $\Mr[r]$ and satisfies
  \begin{align}
    \tr`(Q \rho)
    = \tr`(P\, U \rho U^\dagger)
    = \tr \bm{(} \tr_{\mathcal{W}}`( U \rho U^\dagger ) \, \proj{0^{n-m}} \bm{)}
    \geq 1 - \epsilon \ .
  \end{align}
  $Q$ is therefore a candidate for the $\Hhypr[r][1-\epsilon]{\rho}$ optimization, so
  \begin{align}
    m = \abs{\mathcal{W}} = \log_2 \bm{(} \tr`(P) \bm{)} = \log_2 \bm{(} \tr`(Q) \bm{)}
    \geq \frac1{\log(2)} \Hhypr[r][1-\epsilon] {\rho}\ .
  \end{align}
\end{proof}

\section{Decoupling from a reference system}
\label{sec:appendix-DHypr-randomness-extraction}

Here, we prove the decoupling bound of \cref{mainthm:DecouplingLowerBound}.
Our proof relies on the following conjectured property of the complexity conditional entropy.

\begin{conjecture}
  \noproofref
  \label{conj:HHyprc-ptrace-upper-bound}
  Let $A$, $B$, and $R$ denote distinct quantum systems.
  Let $\rho_{ABR}$ denote any state of $ABR$.
  Let $r\geq 0$ and $\eta\in(0,1]$.
  It holds that
  \begin{align}
    \HHyprc[r][\eta][\rho]{B}[R]
    \leq
    \HHyprc[r][\eta][\rho]{AB}[R]
    + \log(d_A)\ .
    \label{eq:HHyprc-ptrace-upper-bound--HHyprc}
  \end{align}
  Equivalently,
  \begin{align}
    \DHypr[r][\eta]{\rho_{BR}}{\pi_B\otimes\rho_R}
    \geq
    \DHypr[r][\eta]{\rho_{ABR}}{\pi_{AB}\otimes\rho_R} - 2\log(d_A)\ .
    \label{eq:HHyprc-ptrace-upper-bound--DHypr}
  \end{align}
\end{conjecture}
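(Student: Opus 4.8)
The plan is to prove the $\HHyprc$ form~\eqref{eq:HHyprc-ptrace-upper-bound--HHyprc}; unfolding \cref{defn:complexity-conditional-entropy} turns it into the single statement
\begin{align}
  \DHypr[r][\eta]{\rho_{ABR}}{\Ident_{AB}\otimes\rho_R}
  \leq \DHypr[r][\eta]{\rho_{BR}}{\Ident_B\otimes\rho_R} + \log(d_A) \ ,
  \label{eq:proposal-chainrule-target}
\end{align}
and \cref{thm:DHypr-scaling-2ndarg} then recovers the $\pi$-normalized form~\eqref{eq:HHyprc-ptrace-upper-bound--DHypr}. I would first record that the reverse inequality is easy: tracing out $A$ and applying \cref{thm:cplxrelentropy-monotonicity-partial-trace} together with \cref{thm:DHypr-scaling-2ndarg} yields $\DHypr[r][\eta]{\rho_{ABR}}{\Ident_{AB}\otimes\rho_R} \geq \DHypr[r][\eta]{\rho_{BR}}{\Ident_B\otimes\rho_R} - \log(d_A)$. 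The content of the conjecture is therefore the opposite, harder direction~\eqref{eq:proposal-chainrule-target}, which in terms of the infima defining the two relative entropies in~\eqref{eq:defn-DHypr} asserts that the $ABR$ infimum is at least $d_A^{-1}$ times the $BR$ infimum.

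To attack this, I would take an arbitrary candidate $Q_{ABR}\in\Mr[r]$ with $\tr(Q_{ABR}\rho_{ABR})\geq\eta$ and try to manufacture a $BR$ candidate whose objective value exceeds that of $Q_{ABR}$ by at most a factor $d_A$; since the $BR$ infimum would then be upper-bounded accordingly, \eqref{eq:proposal-chainrule-target} would follow. The natural candidate is the marginalized effect $Q_{BR}\coloneqq d_A^{-1}\tr_A(Q_{ABR})$, which is a valid POVM effect because $\tr_A(Q_{ABR})\leq d_A\Ident_{BR}$. The objective comparison is clean: the numerator satisfies $\tr(Q_{BR}[\Ident_B\otimes\rho_R]) = d_A^{-1}\tr(Q_{ABR}[\Ident_{AB}\otimes\rho_R])$, while the partial-order \cref{thm:state-partial-order-lemma}, in the form $d_A^{-1}\rho_{ABR}\leq\Ident_A\otimes\rho_{BR}$, controls the denominator via $\tr(Q_{ABR}\rho_{ABR}) \leq d_A\tr(Q_{ABR}[\Ident_A\otimes\rho_{BR}]) = d_A^2\,\tr(Q_{BR}\rho_{BR})$. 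Dividing, the objective ratio collapses to $\tr(Q_{ABR}\rho_{ABR})/\tr(Q_{ABR}[\Ident_A\otimes\rho_{BR}])\leq d_A$, exactly the factor we need. So the ``energetics'' of the bound work out, and \cref{thm:state-partial-order-lemma} is the right instrument.

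The hard part will be the two properties this construction fails to supply. First and most seriously, $d_A^{-1}\tr_A(Q_{ABR})$ need not lie in $\Mr[r]$: the family $\{\Mr[r]\}$ consists of effects $\mathcal{E}^\dagger(P)$ assembled from $\leq r$ gates acting on simple projectors, and this set is \emph{not} closed under partial trace—marginalizing a low-complexity effect can produce one of much higher complexity. This missing closure is precisely the structural feature the von Neumann chain rule exploits through additivity of the Umegaki relative entropy but that the complexity relative entropy lacks, and it is the crux of why the statement resists a direct proof. Second, the very partial-order bound that tames the objective degrades the intolerance: the constructed effect obeys only $\tr(Q_{BR}\rho_{BR})\geq \eta/d_A^2$, so it is a candidate at intolerance $\eta/d_A^2$ rather than $\eta$, and the monotonicity in $\eta$ (\cref{thm:DHypr-monotonous-in-r-and-in-eta}) then pushes in the unhelpful direction. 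A complete proof would thus need a marginalization-like map on effects that is implementable within $\leq r$ gates on $BR$ \emph{and} preserves the acceptance probability $\tr(Q\rho)\geq\eta$, rather than the crude $d_A^{-1}\tr_A$. I expect no such map exists for arbitrary $\Mr[r]$ absent a complexity-theoretic hypothesis: as the excerpt notes, a successful argument would rule out efficiently preparable pseudomixed states, so the conjecture is most plausibly settled either under a pseudorandomness assumption or via a genuinely new approximate-closure property of $\Mr[r]$ under marginalization.
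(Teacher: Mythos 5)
The statement you were asked to prove is \cref{conj:HHyprc-ptrace-upper-bound} (stated in the main text as \cref{mainconjecture:HHypr-chain-rule-like}): the paper presents it as an \emph{open conjecture}, offers no proof, and uses it only as a hypothesis in the decoupling bound of \cref{thm:DHypr-randomness-extraction--lower-bound-k}. There is therefore no proof of the paper's to compare against, and your decision to stop short of claiming one is the right call.

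That said, your supporting analysis is correct throughout and matches the paper's framing. The equivalence of the two forms via \cref{thm:DHypr-scaling-2ndarg}, the easy reverse inequality from \cref{thm:cplxrelentropy-monotonicity-partial-trace} plus scaling, and the reduction of the conjecture to the assertion that the $ABR$ infimum is at least $d_A^{-1}$ times the $BR$ infimum are all accurate. Your marginalization attempt is also computed correctly: with $Q_{BR} \coloneqq d_A^{-1}\tr_A(Q_{ABR})$ the $BR$ objective equals $\tr\bigl(Q_{ABR}[\Ident_{AB}\otimes\rho_R]\bigr)/\tr\bigl(Q_{ABR}[\Ident_A\otimes\rho_{BR}]\bigr)$, and \cref{thm:state-partial-order-lemma} bounds this by $d_A$ times the $ABR$ objective, exactly as you say. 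The two failure modes you diagnose are the genuine obstructions. On (i), one can even sharpen your phrasing: for the qubit family of~\eqref{eq:setting-defn-Mr} with unitary $\mathcal{G}$, every element of $\Mr[r]$ is a projector, while $d_A^{-1}\tr_A(Q_{ABR})$ is generically not a projector at all—so the constructed effect lies outside $\Mr[r']$ for \emph{every} $r'$, not merely outside $\Mr[r]$; the issue is non-closure of the effect family under marginalization, not just complexity inflation. On (ii), the degradation $\eta \to \eta/d_A^2$ indeed interacts with \cref{thm:DHypr-monotonous-in-r-and-in-eta} in the unhelpful direction, since lowering the intolerance can only raise the $BR$ relative entropy you are trying to upper-bound the $ABR$ quantity by. Your closing assessment—that a proof would have to rule out efficiently preparable pseudomixed states and hence likely requires a pseudorandomness-type hypothesis or a new approximate closure property of $\Mr[r]$—is precisely the interpretation the paper gives in \cref{sec:decoupling}, where the conjecture is called a no-efficient-pseudomixedness conjecture. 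In short: no gap in what you wrote, because you wrote an obstruction analysis rather than a proof, and the paper's own status for this statement is the same.
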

The equivalence between~\eqref{eq:HHyprc-ptrace-upper-bound--HHyprc} and~\eqref{eq:HHyprc-ptrace-upper-bound--DHypr} holds because $\HHyprc[r][\eta][\rho]{B}[R] = \log(d_B) - \DHypr[r][\eta]{\rho_{BR}}{\pi_B\otimes\rho_R}$ and because $\HHyprc[r][\eta][\rho]{AB}[R] = \log(d_Ad_B) - \DHypr[r][\eta]{\rho_{ABR}}{\pi_{AB}\otimes\rho_R}$, by the argument in~\eqref{eq:alternative-form-of-HHyprc}.
\eqref{eq:HHyprc-ptrace-upper-bound--HHyprc} mirrors a known inequality for the von Neumann conditional entropy $\HH[\rho]{AB}[R]$.
The latter inequality follows from a chain rule and the lower bound $\HH[\rho]{A}[BR]\geq-\log(d_A)$:
\begin{align}
  \HH[\rho]{B}[R]
  = \HH[\rho]{AB}[R] - \HH[\rho]{A}[BR]
  \leq \HH[\rho]{AB}[R] + \log(d_A) \ .
\end{align}

\begin{theorem}[Upper bound on the number of qubits Alice can decouple under complexity limitations]
    \label{thm:DHypr-randomness-extraction--lower-bound-k}
    Let $A$ denote an $n$-qubit system, and let $R$ denote a quantum system distinct from $A$.
    Let $\rho_{AR}$ denote any state of $AR$.
    Let $\Psimple = \Psimple[AR]$ denote any set of simple POVM effects (\cref{defn:Psimple-general}); and $C = C_{AR}$, any adjoint-invariant unitary-complexity measure (\cref{defn:unitary-complexity}).
    Let $`*{ \Mr[r] = \Mr[r](\Psimple,C) }$ denote the family of POVM-effect-complexity sets defined by~\eqref{eq:Mr-general-from-Psimple-and-superop-cplx-measure}.
    Let $r_0 \geq 0$ and $r_1 \geq r_0$.
    Let $\eta\in(0,1]$ and $\delta \in (0,1]$.
    Let $U_0$ denote any unitary on $A$ satisfying $C(U_0) \leq r_0$.
    Let $A_1$ denote a subsystem of $k \geq 0$ qubits in $A$; and $A_2$, the subsystem of the other $n-k$ qubits.
    Let
    \begin{align}
        \rho'_{AR} &\coloneqq (U_0 \otimes \Ident_R) \rho_{AR} (U_0 \otimes \Ident_R)^\dagger \ .
    \end{align}
    Suppose that
    \begin{align}
        \DHypr[r_1][\eta]{\rho'_{A_2R}}{\pi_{A_2}\otimes\rho_R}
        \leq -\log \left( \frac\delta\eta \right) \ ,
        \label{eq:thm-DHypr-randomness-extraction--lower-bound-k-assumption-DHypr}
    \end{align}
    wherein $\DHypr[r_1][\eta]{}{}$ is defined with respect to $\Mr[r_1]$.
    Assume \cref{conj:HHyprc-ptrace-upper-bound} holds.
    It holds that
    \begin{align}
        k \geq \frac12 \left[ n -\frac1{\log(2)}
            \HHyprc[r_1-r_0][\eta][\rho]{A}[R] +\log_2 \left( \frac\delta\eta \right)
        \right] \ ,
      \label{eq:thm-DHypr-randomness-extraction-lower-bound-k}
    \end{align}
    wherein $\HHyprc[r_1-r_0][\eta][]{A}[R]$ is defined with respect to $\Mr[r_1-r_0]$.
\end{theorem}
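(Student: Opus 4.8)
The plan is to derive the decoupling bound~\eqref{eq:thm-DHypr-randomness-extraction-lower-bound-k} by chaining together three ingredients already assembled in the excerpt: the hypothesis-testing success condition~\eqref{eq:thm-DHypr-randomness-extraction--lower-bound-k-assumption-DHypr}, the conjectured partial-trace bound of \cref{conj:HHyprc-ptrace-upper-bound}, and the behavior of the complexity conditional entropy under the complexity-$r_0$ unitary $U_0$. The key observation is that $A_2$ is obtained from the post-$U_0$ state $\rho'_{AR}$ by discarding the $k$ qubits of $A_1$, so the assumption controls $\HHyprc[r_1][\eta][\rho']{A_2}[R]$, which I must then relate back to $\HHyprc[r_1-r_0][\eta][\rho]{A}[R]$, an entropy of the \emph{original} state $\rho_{AR}$ at the reduced complexity scale $r_1-r_0$.

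First I would translate the success condition into entropic language. Writing $A$ as $A_1 A_2$ with $n_{A_1}=k$, the definition~\eqref{eq:main-defn-conditional-complexity-entropy} of the complexity conditional entropy gives $\HHyprc[r_1][\eta][\rho']{A_2}[R] = \log(d_{A_2}) - \DHypr[r_1][\eta]{\rho'_{A_2R}}{\pi_{A_2}\otimes\rho_R}$ (via the rewriting in~\eqref{eq:alternative-form-of-HHyprc}), so~\eqref{eq:thm-DHypr-randomness-extraction--lower-bound-k-assumption-DHypr} becomes
\begin{align}
  \HHyprc[r_1][\eta][\rho']{A_2}[R]
  \geq (n-k)\log(2) + \log\left(\frac\delta\eta\right) \ .
  \label{eq:decoupling-proof-step1}
\end{align}
Next I would apply \cref{conj:HHyprc-ptrace-upper-bound} with $B\to A_2$ and $A\to A_1$ (so $d_{A_1}=2^{k}$), which yields $\HHyprc[r_1][\eta][\rho']{A_2}[R] \leq \HHyprc[r_1][\eta][\rho']{A_1A_2}[R] + k\log(2) = \HHyprc[r_1][\eta][\rho']{A}[R] + k\log(2)$. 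Substituting into~\eqref{eq:decoupling-proof-step1} and rearranging gives a lower bound of the form $2k\log(2) \geq (n-k)\log(2) - \HHyprc[r_1][\eta][\rho']{A}[R] + \dots$; the combinatorial bookkeeping here is routine once the chain rule is invoked.

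The final and most delicate step is to convert the conditional entropy of the evolved state $\rho'_{AR}$ at complexity $r_1$ into one of the original state $\rho_{AR}$ at complexity $r_1-r_0$. Since $\rho'_{AR} = (U_0\otimes\Ident_R)\rho_{AR}(U_0\otimes\Ident_R)^\dagger$ and $\rho'_R=\rho_R$, I would use the unitary-invariance-type bound of \cref{thm:DHypr-arg-U-rp}: applied to the relative entropy $\DHypr[\cdot][\eta]{\cdot}{\pi_A\otimes\rho_R}$ with the complexity-$r_0$ unitary $U_0\otimes\Ident_R$ (note $\pi_A\otimes\rho_R$ is invariant under this conjugation since $U_0\pi_A U_0^\dagger=\pi_A$), this gives $\HHyprc[r_1][\eta][\rho']{A}[R] \leq \HHyprc[r_1-r_0][\eta][\rho]{A}[R]$, because conjugating the argument by a complexity-$r_0$ unitary costs at most $r_0$ in the complexity budget. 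The hypothesis $r_1\geq r_0$ is exactly what makes $r_1-r_0\geq 0$ meaningful. Combining this with the rearranged inequality and dividing by $2\log(2)$ produces~\eqref{eq:thm-DHypr-randomness-extraction-lower-bound-k}. The main obstacle I anticipate is getting the direction and complexity-budget bookkeeping in \cref{thm:DHypr-arg-U-rp} exactly right—specifically confirming that the adjoint-invariance of $C$ and the invariance of $\pi_A\otimes\rho_R$ together let me absorb $U_0$ into the POVM effect without changing the reference operator, so that the entropy of $\rho'$ at scale $r_1$ is bounded by that of $\rho$ at the strictly smaller scale $r_1-r_0$ rather than the other way around.
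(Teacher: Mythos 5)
Your proposal is correct and takes essentially the same route as the paper's proof: the identical three-step chain of (i) rewriting the success condition~\eqref{eq:thm-DHypr-randomness-extraction--lower-bound-k-assumption-DHypr} as a lower bound on $\HHyprc[r_1][\eta][\rho']{A_2}[R]$ via the argument in~\eqref{eq:alternative-form-of-HHyprc}, (ii) applying \cref{conj:HHyprc-ptrace-upper-bound} with $A\to A_1$ and $B\to A_2$, and (iii) passing from $\rho'$ at scale $r_1$ to $\rho$ at scale $r_1-r_0$ via \cref{thm:DHypr-arg-U-rp}, using exactly the observations (invariance of $\Ident_A\otimes\rho_R$ under conjugation by $U_0\otimes\Ident_R$, adjoint-invariance of $C$) that make the direction of that inequality come out right. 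The final rearrangement reproduces~\eqref{eq:thm-DHypr-randomness-extraction-lower-bound-k} exactly as in the paper.
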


\begin{proof}[**thm:DHypr-randomness-extraction--lower-bound-k]
  The following chain of inequalities implies~\eqref{eq:thm-DHypr-randomness-extraction-lower-bound-k}:
  \begin{align}
    -\log \left( \frac\delta\eta \right)
    &\geq \DHypr[r_1][\eta]{\rho'_{A_2R}}{\pi_{A_2}\otimes\rho_R}
    \nonumber\\
    &= \log \left( d_{A_2} \right) - \HHyprc[r_1][\eta][\rho']{A_2}[R]
    \nonumber\\
    &\geq \log \left( d_{A_2} \right) - \HHyprc[r_1][\eta][\rho']{A}[R] - \log \left( d_{A_1} \right)
    \nonumber\\
    &= (n-2k)\log(2) - \HHyprc[r_1][\eta][\rho']{A}[R]
    \nonumber\\
    &\geq (n-2k)\log(2) - \HHyprc[r_1-r_0][\eta][\rho]{A}[R] \ .
  \end{align}
  The first equality follows by the argument in~\eqref{eq:alternative-form-of-HHyprc}; the second inequality, by the assumption that \cref{conj:HHyprc-ptrace-upper-bound} holds; and the last inequality, by \cref{thm:DHypr-arg-U-rp}.
\end{proof}



\def\dotbibolamazi{.bibolamazi}
\def\selectlanguage#1{}
\bibliography{Bibliography\dotbibolamazi}


\onecolumngrid

\appendix

\renewcommand{\thesection}{\Alph{section}}
\renewcommand{\thesubsection}{\Alph{section} \arabic{subsection}}
\renewcommand{\thesubsubsection}{\Alph{section} \arabic{subsection} \roman{subsubsection}}

\makeatletter\@addtoreset{equation}{section}
\def\theequation{\thesection\arabic{equation}}

\end{document}